\definecolor{orcidlogocol}{HTML}{A6CE39}
\tikzset{
	orcidlogo/.pic={
		\fill[orcidlogocol] svg{M256,128c0,70.7-57.3,128-128,128C57.3,256,0,198.7,0,128C0,57.3,57.3,0,128,0C198.7,0,256,57.3,256,128z};
		\fill[white] svg{M86.3,186.2H70.9V79.1h15.4v48.4V186.2z}
		svg{M108.9,79.1h41.6c39.6,0,57,28.3,57,53.6c0,27.5-21.5,53.6-56.8,53.6h-41.8V79.1z M124.3,172.4h24.5c34.9,0,42.9-26.5,42.9-39.7c0-21.5-13.7-39.7-43.7-39.7h-23.7V172.4z}
		svg{M88.7,56.8c0,5.5-4.5,10.1-10.1,10.1c-5.6,0-10.1-4.6-10.1-10.1c0-5.6,4.5-10.1,10.1-10.1C84.2,46.7,88.7,51.3,88.7,56.8z};
	}
}
\newcommand\orcidicon[1]{\raisebox{.05em}{\href{https://orcid.org/#1}{\mbox{\scalerel*{
				\begin{tikzpicture}[yscale=-1,transform shape]
				\pic{orcidlogo};
				\end{tikzpicture}
			}{|}}}}}
\def \be {\begin{equation}}
\def \ee {\end{equation}}
\def \nn {\nonumber}
\def \la {\langle}
\def \ra {\rangle}
\def \C {\mathbb{C}}
\def \om {\bm{\omega}}
\def \Om {\bm{\Omega}}
\def \O {\mathbf{\Omega}}
\def \F {\mathbf{F}}
\def \I {\mathbb{I}}
\def \A {\mathbf{A}}
\def \B {\mathbf{B}}
\def \C {\mathbb{C}}
\def \SL {\mathrm{SL}}
\def \PSL {\mathrm{SL}}
\def \GL {\mathrm{GL}}
\def \CP {\mathbb{CP}}
\def \Z {\mathbb{Z}}
\def \J {\mathbf{J}}
\def \M {\mathcal{M}}
\def \MM {\mathbf{M}}
\def \L {\mathcal{L}}
\def \a {\mathrm{A}}
\def \b {\mathrm{B}}
\def \c {\mathrm{C}}
\def \d {\mathrm{D}}
\def \KN {\mathrm{KN}}
\def \PT {\mathrm{PT}}
\def \FB {\mathrm{FB}}
\DeclareMathOperator{\im}{im}
\DeclareMathOperator{\Res}{Res}
\DeclareMathOperator{\Conf}{Conf}
\DeclareMathOperator{\ord}{ord}
\DeclareMathOperator{\Pf}{Pf}
\newcommand{\overbar}[1]{\mkern 1.5mu\overline{\mkern-1.5mu#1\mkern-1.5mu}\mkern 1.5mu}
\newcommand{\WidestEntry}{$-1$}%
\newcommand{\SetToWidest}[1]{\makebox[\widthof{\WidestEntry}]{$#1$}}%
\newtheorem{theorem}{Theorem}[section]
\newtheorem{proposition}{Proposition}[section]
\newtheorem{lemma}{Lemma}[section]
\newtheorem{definition}{Definition}[section]
\theoremstyle{definition}
\newtheorem{example}{Example}[section]
\newtheorem*{acknowledgements}{Acknowledgements}
\newtheorem*{outline}{Outline}
\title{Aspects {\itshape\LARGE of}\hspace{.1em} Scattering Amplitudes\\ {\itshape\LARGE and} Moduli Space Localization}
\author{Sebastian Mizera \orcidicon{0000-0002-8066-5891}}
\affiliation{Perimeter Institute for Theoretical Physics, Waterloo, ON N2L 2Y5, Canada}
\affiliation{Department of Physics \& Astronomy, University of Waterloo, Waterloo, ON N2L 3G1, Canada}
\emailAdd{smizera@pitp.ca}
\abstract{%
We elaborate on the recent proposal that intersection numbers of certain cohomology classes on the moduli space of genus-zero Riemann surfaces with $n$ punctures, $\mathcal{M}_{0,n}$, compute tree-level scattering amplitudes in quantum field theories with a finite spectrum of particles.
The relevant cohomology groups are twisted by representations of the fundamental group $\pi_1(\mathcal{M}_{0,n})$ that describes how punctures braid around each other on the Riemann surface.
Such a structure can be used to link the space of Riemann surfaces with the space of kinematic invariants.
Intersection numbers of said cohomology classes---whose representatives we call \emph{twisted forms}---can be shown to fully localize on the boundaries of $\M_{0,n}$, which are in one-to-one map with Feynman diagrams.
In this work we develop systematic approaches towards accessing such boundary information.
We prove that when twisted forms are logarithmic, their intersection numbers have a simple expansion in terms of trivalent Feynman diagrams weighted by residues, allowing only for massless propagators on the internal and external lines.
It is also known that in the massless limit intersection numbers have a different localization formula on the support of so-called scattering equations.
Nevertheless, for physical applications one also needs to study non-logarithmic forms as they are responsible for propagation of massive states.
We utilize the natural fibre bundle structure of $\M_{0,n}$---which allows for a direct access to the boundaries---to introduce recursion relations for intersection numbers that ``integrate out'' puncture-by-puncture.
The resulting recursion involves only linear algebra of certain matrices describing braiding properties of $\M_{0,n}$ and evaluating one-dimensional residues, thus paving a way for explicit analytic computations of scattering amplitudes.
Together with the previous reformulation of the tree-level S-matrix of string theory in terms of twisted forms, the results of this work complete a unified geometric framework for studying scattering amplitudes from genus-zero Riemann surfaces.
We show that a web of dualities between different homology and cohomology groups allows for deriving a host of identities among various types of amplitudes computed from the moduli space, which in this setup become a consequence of linear algebra.
Throughout this work we emphasize that algebraic computations can be supplemented---or indeed replaced---by combinatorial, geometric, and topological ones.
}
\begin{document}

\maketitle
\setcounter{page}{2}

\newpage
\section{Introduction}

\textsc{It has long been known} that moduli spaces of Riemann surfaces with marked points play an important role in theoretical physics. For instance, interpreting a Riemann surface physically as a real manifold, observables in two-dimensional quantum gravity are computed in terms of the \emph{intersection theory} on such moduli spaces \cite{Witten:1989ig,Witten:1990hr,Kontsevich:1992ti}, see also \cite{Mirzakhani2007,mirzakhani2007weil,Dijkgraaf:2018vnm}. More specifically, \emph{intersection numbers}---given by integrals of certain cohomology classes over compactified moduli spaces---have a physical interpretation as computing correlation functions in this theory. It is natural to ask whether observables in other theories can admit a similar description. Clearly, if this were to be the case, one needs to endow moduli spaces with an additional structure that enhances the purely topological character of intersection theory.

The natural candidate for such a structure is the so-called \emph{local system}, which allows for a connection between the space of Riemann surfaces and the space of kinematic invariants. This turns out to be rigid enough to allow for  essentially only two ways of constructing intersection theory of cohomology classes ``twisted'' by a local system, at least at genus zero. The first of them is known to compute tree-level scattering amplitudes in string theory \cite{Mizera:2017cqs}, while the second---previously missed---way appears to give rise to amplitudes for point-particle scattering \cite{Mizera:2017rqa}. In this work we elaborate upon this proposal.

A local system is a representation of the fundamental group of a given space, which packages together the information about its monodromy properties. In order to define integration on such spaces (as opposed to their covering) one needs to introduce the notion of homology and cohomology groups with coefficients in a local system. The study of such objects is an old and rich topic of research dating back to the 1930s \cite{reidemeister1935uberdeckungen,10.2307/1968884,10.2307/1969099,leray1946anneau,leray1946structure}, though most results of our interest are considerably more recent \cite{aomoto1975vanishing,aomoto1977structure,aomoto1987gauss,10.1093/qmath/38.4.385,1983113,zbMATH03996010,MANA:MANA19941660122,cho1995,matsumoto1998,aomoto2011theory}. Since our goal is to find practical applications of this theory, we will often simplify its exposition, in the process of which we hope not to lose sight of the inherent beauty of the subject.

Throughout this work we focus on the moduli space $\M_{0,n}$ of genus-zero Riemann surfaces with $n$ punctures. The fundamental group $\pi_1(\M_{0,n})$ is its only non-trivial homotopy group. It is generated by the loops $\circlearrowleft_{ij}$ in which a puncture $z_j$ encircles another $z_i$ and goes back to its original position. Equivalently $\pi_1(\M_{0,n})$ can be described as a braid group of $n$ strands out of which three are fixed due to the $\SL(2,\C)$ redundancy on the Riemann sphere $\Sigma$. In order to construct a local system on $\M_{0,n}$ we introduce the following one-form:
\be
\omega = \frac{1}{\Lambda^2} \sum_{1 \leq i<j \leq n} \!\!\!\! 2 p_i {\cdot} p_j \, d\log(z_i - z_j),
\ee
where $z_i$'s denote inhomogeneous coordinates on each $\Sigma \cong \CP^1$. To each $z_i$ we associate an ingoing momentum $p_i^\mu$ and impose momentum conservation $\sum_i p_i^\mu =0$ in an arbitrary-dimensional Minkowski space. In order to keep $\omega$ dimensionless it is normalized by a mass-scale parameter $\Lambda$. One can check that the requirement of the absence of spurious poles at infinity implies a quantization condition on the masses of each particle, $m_i^2 = -p_i^2$, given by $m_i^2 \in \Lambda^2 \Z /2$. The above one-form gives rise to one of the simplest local systems, which associates a non-zero complex number $\exp \int_\gamma \omega$ to a given path $\gamma$. For example, each loop $\circlearrowleft_{ij}$ is represented by a coefficient $\exp(4\pi i  p_i {\cdot} p_j / \Lambda^2)$.

Let us focus on the question how the above topological properties of $\M_{0,n}$ affect the algebraic structure of differential forms. One can talk about differential forms with coefficients in a local system. For our purposes, however, it is sufficient to work with their infinitesimal version, which is governed by the Knizhnik--Zamolodchikov-like connections:
\be
\nabla_{\pm\omega} = d \pm \omega\wedge,
\ee
labelled by a $\pm$ sign. Because $\omega$ is a closed one-form, both $\nabla_{\pm\omega}$ are integrable connections, meaning that they square to zero. This naturally gives rise to the notion of \emph{twisted forms} $\varphi_\pm$, which are differential forms that are $\nabla_{\pm\omega}$-closed modulo the $\nabla_{\pm\omega}$-exact ones. They belong to the cohomology classes:
\be\label{intro-cohomology}
\varphi_{\pm} \;\in\; \frac{ \nabla_{\pm\omega}\text{-closed forms}}{\nabla_{\pm\omega}\text{-exact forms}}.
\ee
This is nothing but a version of the standard de Rham cohomology twisted by the one-form $\omega$. It is known that $\varphi_\pm$ belonging to the above cohomology classes can only be $(n{-}3)$-forms \cite{aomoto1975vanishing}, which coincides with the complex dimension of $\M_{0,n}$. From now on we will take both $\varphi_\pm$ to be holomorphic.

The requirement of $\SL(2,\C)$-invariance imposes certain transformation properties on the twisted forms. To be concrete, one finds that $\varphi_\pm$ need to transform with M\"obius weights $\pm2m_i^2 / \Lambda^2$ in each puncture $z_i$. Thus their choice is constrained by the masses $m_i$ of the external particles. For instance, when all masses are equal to $m_i = \Lambda$ examples of allowed twisted forms are:
\be\label{intro-example-forms}
\varphi_- = d\mu_n, \qquad \varphi_+ = \frac{d\mu_n}{ (z_{12} z_{23} \cdots z_{n1})^2 },
\ee
where $z_{ij} = z_i {-} z_j$ and the measure equals to $d\mu_n = z_{jk} z_{j\ell} z_{k\ell} \bigwedge_{i\neq j,k,\ell} dz_i$, where $(z_j, z_k, z_\ell)$ are three arbitrary punctures fixed using the action of the $\SL(2,\C)$ redundancy.

The most natural proposal for extracting observables from twisted forms is to integrate them over the moduli space as follows:
\be
\int_{\M_{0,n}} \!\! \varphi_- \wedge \varphi_+.
\ee
Since the wedge product $\varphi_- \wedge \varphi_+$ vanishes for two holomorphic top-forms, it is clear that the integral receives no contributions from the bulk of $\M_{0,n}$. However, as written above the integral is actually not well-defined near the boundaries $\partial \M_{0,n}$, where $\varphi_\pm$ could diverge (it is a consequence of the fact that $\M_{0,n}$ is non-compact). This issue gives rise to a ``$0/0$ problem'' near $\partial \M_{0,n}$. It can be remedied by making the integrand compactly-supported, for example by constructing a form $\varphi_+^c$ which vanishes in the neighbourhood of $\partial \M_{0,n}$. In order for the integral to remain a bilinear between the two cohomology classes, $\varphi_+^c$ needs to be cohomologous to $\varphi_+$, i.e., $\varphi_+^c = \varphi_+ + \nabla_\omega \xi$ for some $\xi$ (such a choice always exists). At the same time as regulating the integration, $\xi$ contains non-holomorphic contributions, which make the integral non-zero, while $\nabla_\omega$ gives it dependence on kinematics. This defines the \emph{intersection number} $\braket{\varphi_- | \varphi_+}_\omega$ of twisted forms $\varphi_\pm$:
\be\label{intro-intersection-number}
\braket{\varphi_- | \varphi_+}_\omega = \frac{1}{(-2\pi i \Lambda^2)^{n-3}} \int_{\M_{0,n}} \!\!\varphi_- \wedge \varphi_+^c,
\ee
which we normalized for later convenience. By the above arguments, intersection numbers necessarily localize near the boundary $\partial \M_{0,n}$ of the moduli space.

Recall that codimension-one boundaries of the moduli space $\M_{0,n}$ correspond to configurations in which two or more punctures coalesce on the Riemann sphere. Equivalently, one can think of the surface stretching into an infinitely-long tube with a subset of punctures on one side of the throat and the complementary set on the other. The target-space interpretation of the throat is a propagator stretching between the two sets of particles associated to each puncture. Higher-codimension boundaries are obtained by developing more throats, and in particular the maximal-codimension ones (vertices of $\partial \M_{0,n}$) correspond to trivalent trees with the total of $n{-}3$ propagators, e.g.,
\be
\begin{aligned}
	\includegraphics[scale=1]{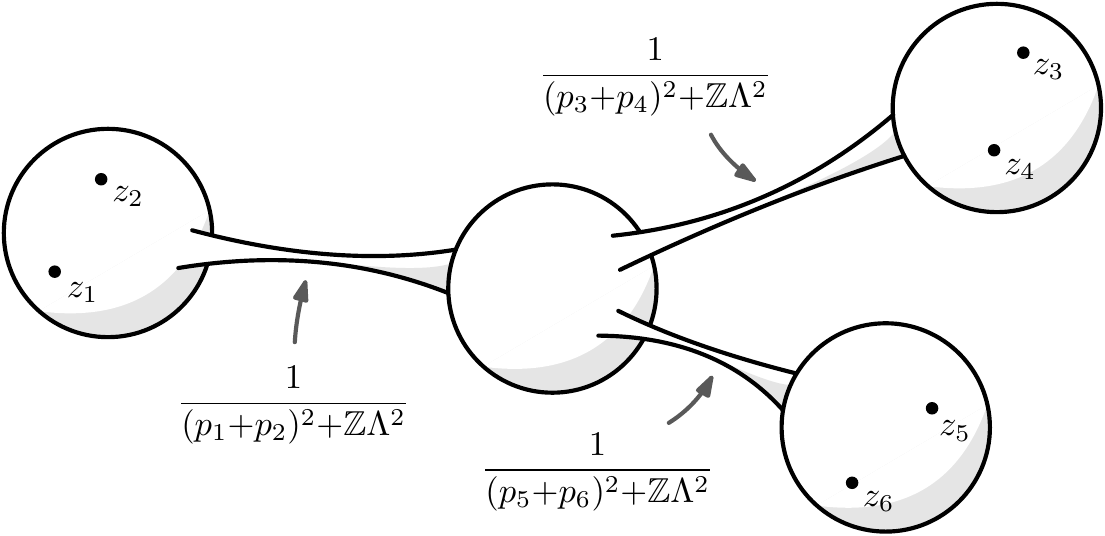}
\end{aligned}
\ee
In the limit when the throat becomes infinitely-long, one can also think of the Riemann surface as bubbling into many spheres touching each other through emergent punctures \cite{PMIHES_1969__36__75_0}, which have the interpretation of particles propagating between the individual spheres.

One can show that the intersection numbers \eqref{intro-intersection-number} are always rational functions of kinematic invariants with simple poles of the form $(p_{i} {+} p_{j} {+}\ldots)^2 + \Z \Lambda^2$, which leads us to propose that they compute tree-level scattering amplitudes in quantum field theories. For example, using the pair of twisted forms from \eqref{intro-example-forms} we find:
\be
\Braket{ d\mu_n | \frac{d\mu_n}{ (z_{12} z_{23} \cdots z_{n1})^2 }}_{\!\omega} = \sum_{\substack{\text{trivalent}\\ \text{planar trees }{\!\cal T}}} \prod_{\substack{\text{internal}\\ \text{edges }e\in{\cal T}}} \,\frac{1}{ p_e^2 + \Lambda^2},
\ee
where $p_e^\mu$ is the momentum flowing through the edge $e$. These are planar amplitudes in the massive $\text{Tr}(\phi^3)$ theory. Of course, amplitudes in the standard $\phi^3$ theory can be obtained by symmetrizing the twisted form $\varphi_+$ from \eqref{intro-example-forms}.

More generally, kinematic singularities of intersection numbers are governed by the behaviour of the twisted forms near the boundaries of the moduli space. For example, a pole in the $(p_i {+} p_j {+} \ldots)^2$-channel can only appear if $\varphi_\pm$ combined have a double pole on the corresponding boundary of $\M_{0,n}$ (when punctures $z_i, z_j, \ldots$ coalesce). Higher-degree poles in $\varphi_+$ shift the kinematic poles towards higher-mass exchanges, and likewise increasing the poles of $\varphi_-$ gives rise to more-tachyonic states. We give examples of intersection numbers that illustrate these rules throughout the paper.

Given the special role of logarithmic forms in the theory of complex manifolds \cite{deligne1970equations}, it is natural to look for simplifications of intersection numbers in those cases. Indeed, we show that logarithmic twisted forms $\varphi_\pm$ can only give rise to scattering amplitudes with massless internal and external states. We make this statement precise by proving that in the logarithmic cases $\la \varphi_- | \varphi_+ \ra_{\omega}$ evaluates to (Theorem~\ref{theorem-21}):
\be
\la \varphi_- | \varphi_+ \ra_{\omega} = \sum_{v \in \partial \M_{0,n}} \!\!\frac{\Res_{v}(\varphi_-) \Res_{v}(\varphi_+)}{\prod_{e \in {\cal T}} p^2_{e}},
\ee
where the sum goes over all $(2n{-}5)!!$ vertices of the boundary $\partial \M_{0,n}$, which are in one-to-one map with trivalent trees $\cal T$ with inverse propagators $p_{e}^2$. Each numerator in the sum factors into two residues around the corresponding vertex, which can be evaluated by a change of variables from $z_i$'s into $\SL(2,\C)$-invariant cross-ratios \cite{Koba:1969kh,Brown:2009qja}. What is more, the above formula shows that intersection numbers become $\Lambda$-independent. This fact can be used to prove that in the logarithmic cases they compute the strict low-energy limit of open- and closed-string amplitudes (Theorem~\ref{theorem-22}). In the non-logarithmic cases one finds a weaker result that the low-energy limits of string amplitudes and intersection numbers coincide.

As a matter of fact, much of the physical intuition about intersection numbers an be obtained by studying their relations to string theory. Moreover, string theory correlation functions can be treated as a factory of twisted forms, which after plugging into intersection numbers compute scattering amplitudes in field theory. For example, we can use the textbook form of the correlation function of massless gauge bosons (modulo the plane-wave contractions) \cite{green1988superstring} and treat it as a twisted form,
\be\label{intro-varphi-gauge}
\varphi^{\text{gauge}}_{\pm,n} = d\mu_n \int \prod_{i=1}^{n} d\theta_i d\tilde{\theta}_i\, \frac{\theta_k \theta_\ell}{z_k {-} z_\ell} \exp\! \left( - \sum_{i \neq j}\frac{\theta_i \theta_j p_i {\cdot} p_j + \tilde{\theta}_i \tilde{\theta}_j \varepsilon_i {\cdot} \varepsilon_j + 2(\theta_i {-} \theta_j) \tilde{\theta}_i \varepsilon_i {\cdot} p_j }{z_i {-} z_j \mp \Lambda^2 \theta_i \theta_j}
\right).
\ee
Here $\varepsilon_i^\mu$ is the polarization vector of the $i$-th boson. The expression is written as an integral over Grassmann variables $\theta_i, \tilde\theta_i$ and is independent of the specific choice of the special labels $k,\ell$. The dependence on the mass scale $\Lambda$ is auxiliary and drops out from the final expression. One can show that after plugging \eqref{intro-varphi-gauge} into the intersection pairing
\be
\Braket{\,\frac{\mathrm{Tr}(T^{c_{1}} T^{c_{2}} \cdots T^{c_{n}} )\, d\mu_n}{z_{12}\, z_{23}\cdots z_{n1}} + \cdots\, | \varphi^{\text{gauge}}_{+,n}}_{\!\!\omega} = {\cal A}^{\text{YM}}_n,
\ee
we obtain tree-level scattering amplitudes of gluons in the Yang--Mills theory. As the other twisted form we used a cyclic combination of the poles $1/(z_i {-} z_{i+1})$ weighted by a $\text{U}(N)$ colour trace, where $c_i$ denotes a colour of the $i$-th particle. The ellipsis stands for a sum over all inequivalent permutations of the labels. Similarly, we find that using another copy of \eqref{intro-varphi-gauge} one obtains amplitudes in Einstein gravity:
\be
\Big< \phantom{\frac{1}{1}}\!\!\! \widetilde{\varphi}^{\text{gauge}}_{-,n}\, \Big|\, \varphi^{\text{gauge}}_{+,n}\, \Big>_{\!\omega} = {\cal A}^{\text{GR}}_n.
\ee
We stress that even though the computations are performed on a worldsheet, they result in field-theory scattering amplitudes without any stringy corrections!

Let us also comment on the massless limit, $\Lambda \to 0$, of intersection numbers of more general twisted forms. Since $\omega$ behaves as ${\sim}\Lambda^{-1}$, we have $\nabla_{\pm\omega} \to \pm\omega\wedge$ and hence the two cohomology classes \eqref{intro-cohomology} for holomorphic forms $\varphi_\pm$ degenerate into a single one,
\be\label{intro-cohomology-massless}
\varphi_\pm \;\in\; \frac{\{\text{holomorphic }(n{-}3)\text{-forms on }\M_{0,n}\}}{\omega \wedge \{\text{holomorphic }(n{-}4)\text{-forms on }\M_{0,n}\}}.
\ee
The numerator simplifies since all holomorphic $(n{-}3)$-forms are $\omega \wedge $-closed. The space \eqref{intro-cohomology-massless} is therefore the space of $(n{-}3)$-forms on the support of the constraint $\omega=0$, often referred to as the \emph{scattering equations} \cite{Cachazo:2013gna}. This suggests that there should exist an alternative formula for intersection numbers localizing on $\omega=0$ (a finite number of points in the bulk of $\M_{0,n}$) instead of the boundary $\partial\M_{0,n}$. In fact, using Morse theory one can prove that $\la \varphi_- | \varphi_+ \ra_\omega$ in the massless limit $\Lambda\to 0$ takes a simple form \cite{Mizera:2017rqa}:
\be\label{intro-scattering-equations-localization}
\lim_{\Lambda\to 0}\la \varphi_- | \varphi_+ \ra_\omega = \Res_{\omega=0} \left( \frac{\lim_{\Lambda \to 0} \varphi_-\, \widehat{\varphi}_+}{\prod_{i\neq j,k,\ell} \omega_i} \right),
\ee
where $\omega = \Lambda^{-2} \sum_{i\neq j,k,\ell} \omega_i dz_i$ and $\widehat{\varphi}_+$ is $\varphi_+$ stripped from the differential $\bigwedge_{i\neq j,k,\ell}dz_i$. The result coincides with a localization formula discovered by Cachazo, He, and Yuan \cite{Cachazo:2013hca} and is known to compute scattering amplitudes in various massless quantum field theories. For instance, sending $\Lambda\to0$ in \eqref{intro-varphi-gauge} gives a scattering equations representation of Yang--Mills and gravity amplitudes in terms of a Pfaffian of a certain matrix. The right-hand side of the formula \eqref{intro-scattering-equations-localization} has a worldsheet interpretation as a correlation function of string theory in the ambitwistor space \cite{Mason:2013sva}. We show that the localization on $\omega=0$ happens only in the massless case. Nevertheless, it does not facilitate explicit computations, since by the Abel's impossibility theorem algebraic solutions of the constraints $\omega=0$ do not exist for $n>5$.

The advantage of working in the cohomological formulation is that it allows us to understand many aspects of scattering amplitudes geometrically and topologically. For instance, once can ask about the dimensions of the twisted cohomology groups \eqref{intro-cohomology}, or equivalently how many linearly-independent twisted forms exist for a given $n$. Since only $(n{-}3)$-forms can be twisted forms (meaning that all other twisted cohomology groups vanish \cite{aomoto1975vanishing}), this question is easily answered by $(-1)^{n-3} \chi(\M_{0,n})$, where $\chi(\M_{0,n})$ is the Euler characteristic of the moduli space.

Let us illustrate how to perform a computation of $\chi(\M_{0,n})$ in the simplest way. We consider the following fibration of the moduli space $\M_{0,n}$, which splits into a ``product'' of $n{-}3$ one-dimensional spaces:
\be
\begin{aligned}
	\includegraphics[scale=1]{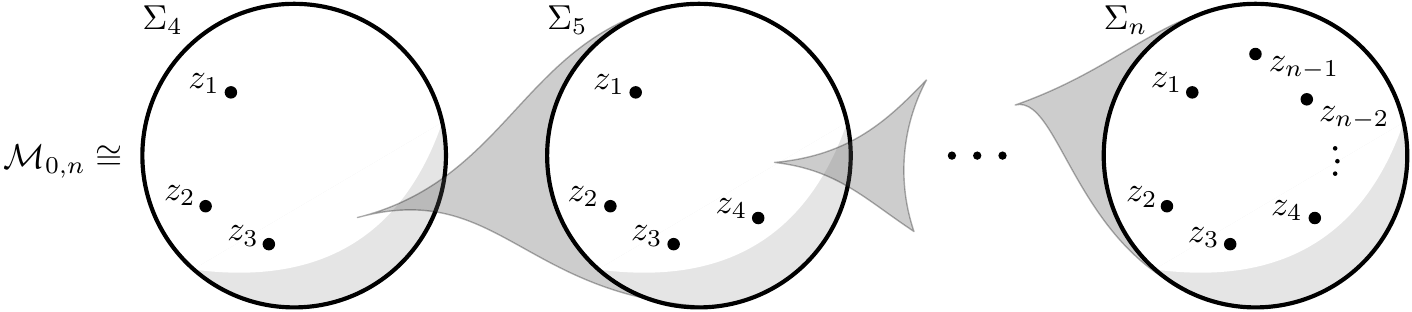}\label{fig-intro}\!
\end{aligned}
\ee
Here we used the $\SL(2,\C)$ redundancy to fix $(z_1, z_2, z_3)$ in some arbitrary positions. The first integrated puncture $z_4$ lives on $\Sigma_4 \cong \CP^1 {\setminus} \{z_1, z_2, z_3\}$. For each point on $\Sigma_4$ we have another punctured Riemann sphere $\Sigma_5 \cong \CP^1 {\setminus} \{z_1,z_2,z_3,z_4\}$ on which $z_5$ is defined. The procedure is repeated until we reach the space of $z_n$, $\Sigma_n \cong \CP^1 {\setminus} \{z_1, z_2, \ldots, z_{n-1}\}$. The order in which particles are being ``peeled out'' is arbitrary and will only affect intermediate steps of our computations, but not the final result.

The Euler characteristic of each surface $\Sigma_p$ is equal to that of a sphere (two) minus one for each removed puncture, i.e., $\chi(\Sigma_p) = 3{-}p$. Since the Euler characteristic of the whole space is a product of those of each fibre, we find
\be
\chi({\cal M}_{0,n}) = \chi(\Sigma_4) \chi(\Sigma_5) \cdots \chi(\Sigma_n) = (-1)^{n-3}(n{-}3)!,
\ee
and hence there are $(n{-}3)!$ linearly-independent twisted forms \cite{aomoto1987gauss}, as first shown by Aomoto.\footnote{The results of Aomoto regarding the basis of twisted homology and cohomology groups \cite{aomoto1975vanishing,aomoto1977structure,aomoto1987gauss,10.1093/qmath/38.4.385} were not widely known in the physics literature at the time and many of their consequences were later found independently by various authors, see, e.g., \cite{Bern:2008qj,BjerrumBohr:2009rd,Stieberger:2009hq,Mafra:2011nv,Mafra:2011nw,Huang:2016tag}.}

Notice that the above fibration also gives a direct access to the boundaries of the moduli space, which are now simply neighbourhoods of the removed points. Let us make use of this simplification. First notice that by forgetting about the Riemann spheres one-by-one from the right in \eqref{fig-intro} we obtain the sequence of lower-puncture moduli spaces:
\be
\M_{0,n} \,\to\, \M_{0,n-1} \,\to\, \ldots \,\to\, \M_{0,4} \,\to\, \M_{0,3},
\ee
which stops at $\M_{0,3}$ equal to a single point. Following this idea we can ``integrate out'' puncture-by-puncture and obtain a local system on each $\M_{0,p}$ starting from that on $\M_{0,n}$. A new feature is the appearance of higher-rank local systems, i.e., those defined by \emph{matrix-valued} one-forms on $\M_{0,p}$,
\be
\om_p = \frac{1}{\Lambda^2} \sum_{1 \leq i<j \leq p} \!\!\!\! \O^{ij}_p\, d\log(z_i - z_j).
\ee
Here each $\om_p$ is an $(n{-}3)!/(p{-}3)! \times (n{-}3)!/(p{-}3)!$ matrix. They give rise to representations of $\pi_1(\M_{0,p})$ given by path-ordered exponentials ${\cal P}\exp \int_\gamma \om_p$ for any $\gamma$. The matrices $\Om_p^{ij}$ contain all the information about how different fibres of $\M_{0,p}$ braid around each other, and hence we call them \emph{braid matrices}. We show that all the cases can be mapped to those with massless external particles, which allows us to focus on the problem with $p_i^2 =0$.

Integrability of the connections constructed out of each $\om_p$ imposes some conditions on the braid matrices, such as the infinitesimal form of the Yang--Baxter relations \cite{AIF_1987__37_4_139_0}. They are however not enough fix the form of $\Om_p^{ij}$ completely. We exploit this freedom to make some simplifying choices of bases on each fibre and show that all $\Om_{p-1}^{ij}$ can be determined recursively from $\Om_{p}^{k\ell}$ starting with the initial condition $\Om_{n}^{ij} = (p_i {+} p_j)^2$ (Lemma~\ref{lemma-3-3}). We find that each braid matrix is a polynomial in the kinematic invariants and their eigenvalues coincide with allowed factorization channels. Therefore they can be thought of as coarse-grained versions of $(p_i {+} p_j)^2$.

Our idea for computing intersection numbers is to take care of one puncture at a time, such that starting with twisted forms $\bm\varphi^\pm_n = \varphi_\pm$ on $\M_{0,n}$ we follow the sequence:
\be
\bm\varphi^{\pm}_n \;\mapsto\; \bm\varphi^{\pm}_{n-1} \;\mapsto\; \ldots \;\mapsto\; \bm\varphi^{\pm}_4 \;\mapsto\; \bm\varphi^{\pm}_3.
\ee
Here each $\bm\varphi^{\pm}_p$ is a vector-valued $(p{-}3)$-form on $\M_{0,p}$ of length $(n{-}3)!/(p{-}3)!$. In particular, $\bm\varphi_3^\pm$ are vectors of functions of length $(n{-}3)!$ that by $\SL(2,\C)$-invariance have to be rational functions of only physical data with all punctures integrated out. We make the above mapping concrete by giving its explicit form:
\be\label{intro-twisted-form-map}
\bm\varphi^\pm_{p-1} = \sum_{q=1}^{p-1} \Res_{z_p = z_q}\!
\begin{pmatrix*}[c]
	\MM_{pq3}^\pm\, \bm\varphi_p^{\pm}\\
	\MM_{pq4}^\pm\, \bm\varphi_p^{\pm} \\
	\vdots\\
	\MM_{p q,p-1}^\pm\,\bm\varphi_p^{\pm} 
\end{pmatrix*},
\ee
where the matrices $\MM_{pqr}^\pm$ have closed-form expression in terms of the braid matrices $\Om_{p}^{ij}$ (Lemmata~\ref{lemma-3-5}--\ref{lemma-3-6}). Each $\MM_{pqr}^\pm$ describes the effect of the puncture $z_p$ approaching $z_q$ on the $r$-th entry (indexed by $r=3,4,\ldots,p{-}1$) of the vector $\bm\varphi_{p-1}^\pm$. Notice that only a few leading orders in the expansion around $z_p {=} z_r$ are needed, depending on the degree of the pole of $\bm\varphi_p^\pm$, as the result enters a residue computation.

\pagebreak
Finally, we show that intersection numbers can be computed simply by a scalar product of the vectors $\bm\varphi_3^\pm$ (Theorem~\ref{theorem-31}):
\be\label{intro-intersection-final}
\braket{ \varphi_- | \varphi_+ }_\omega = \bm\varphi^-_3 \cdot \bm\varphi^+_3.
\ee
Together with the map \eqref{intro-twisted-form-map} this gives recursion relations for intersection numbers on the fibred moduli space. Thus the evaluation of intersection numbers requires only matrix algebra and computing one-dimensional residues. Moreover, we design the above recursion relations such that for planar amplitudes only $1$ out of the $(n{-}3)!$ entries of $\bm\varphi_{3}^\pm$ contributes to the final result \eqref{intro-intersection-final}. They become our main computational tool.

The interpretation of physical observables as pairings between vectors spaces extends beyond the intersection numbers $\braket{\varphi_- | \varphi_+}_\omega$. It was recently understood that also scattering amplitudes in string theory can be written as bilinears of different homology and cohomology groups on the moduli space \cite{Mizera:2017cqs}. This allows us to trivialize the problem of finding relations between scattering amplitudes and other objects defined on $\M_{0,n}$, which can now be understood as a consequence of dualities (existence of pairings) between various isomorphic vector spaces (Proposition~\ref{proposition}). For example, the Kawai--Lewellen--Tye relations \cite{Kawai:1985xq} or basis expansion identities become linear algebra statements \cite{Mizera:2017cqs,Mizera:2017rqa}. Aspects of homology with coefficients in the local system are reviewed in Appendix~\ref{app:aspects}.

\pagebreak

\begin{outline}
	In Section~\ref{sec:intersection-numbers} we introduce the main objects of this work and study their properties. We start by reviewing basic aspects of the moduli space $\M_{0,n}$ in Section~\ref{sec:basics}. In Section~\ref{sec:local-systems} we introduce the notions of local systems and twisted differential forms, followed by the definition of intersection numbers in Section~\ref{sec:definition}. After giving a few simple examples we proceed by studying the relations to other objects, such as string theory integrals in Section~\ref{sec:relations}. In Section~\ref{sec:logarithmic} we turn our focus to logarithmic forms and prove a simple formula for their intersection numbers. We close by commenting on certain shift relations between local systems, which allow us to map computations between massive and massless external states in Section~\ref{sec:shift-relations}.
	
	In Section~\ref{sec:recursion-relations} we discuss recursion relations, starting with a review of the fibre bundle structure of $\M_{0,n}$ in Section~\ref{sec:fibration}. We give the first version of recursion relations in Section~\ref{sec:recursion-relations-first}, before entirely solving for twisted cohomology groups on all fibres in Section~\ref{sec:solving} by introducing a natural set of orthonormal bases on each fibre. This allows us to present the final form of recursion relations of twisted forms in Section~\ref{sec:recursion-relations-reprise}.
	
	In Section~\ref{sec:examples} we give further examples of intersection numbers. We begin with a discussion of gauge and gravity scattering amplitudes in Section~\ref{sec:Yang-Mills-and-gravity}. In Section~\ref{sec:Kac-Moody-currents} we exemplify the use of recursion relations by computing intersection numbers of Kac--Moody correlators, which can serve as building blocks for more complicated twisted forms.
	
	We conclude the main body of the paper in Section~\ref{sec:conclusion}, where we summarize the results and comment on outstanding issues.
	
	To this paper we attach an extensive Appendix~\ref{app:aspects}, in which we give a pedagogical review of various aspects of homology with local coefficients and their applications in string theory. We start with the more familiar case of compact Riemann surfaces in Appendix~\ref{app:compact-Riemann-surfaces}, before discussing non-compact surfaces obtained by adding punctures and in particular the case $\M_{0,4}$ in Appendix~\ref{app:punctures}. We finish with a brief summary of results for higher-$n$ moduli spaces in Appendix~\ref{app:generalization}.
\end{outline}

\vfill
\begin{acknowledgements}\normalfont
	We thank N.~Arkani-Hamed, F.~Cachazo, S.~Caron-Huot, E.~Casali, E.~Delabaere, N.~Early, H.~Frellesvig, D.~Fuchs, F.~Gasparotto, A.~Guevara, S.~He, C.J.~Howls, S.~Laporta, A.~Maloney, M.K.~Mandal, P.~Mastrolia, L.~Mattiazzi, I.~Pesando, A.~Pokraka, O.~Schlotterer, A.~Schwarz, P.~Tourkine, and E.~Witten for useful discussions at various stages of this work. This research was supported in part by Perimeter Institute for Theoretical Physics. Research at Perimeter Institute is supported by the Government of Canada through the Department of Innovation, Science and Economic Development Canada and by the Province of Ontario through the Ministry of Research, Innovation and Science.
\end{acknowledgements}

\pagebreak
\section[Intersection Numbers of Twisted Differential Forms]{\label{sec:intersection-numbers}Intersection Numbers of Twisted Differential Forms}

\textsc{In this section} we give the definition of intersection numbers of twisted differential forms and study their general properties. In order to emphasize how much about these objects can be learned by general arguments, we postpone the bulk of explicit computations until later sections, and here focus mostly on the relations to geometry and string theory amplitudes. Before doing so, we start by reviewing topological aspects of moduli spaces on which these twisted differential forms live.\footnote{Note that in several instances we will depart from the notational conventions used in \cite{Mizera:2017cqs,Mizera:2017rqa}.}

\subsection{\label{sec:basics}Basics of Moduli Spaces of Punctured Spheres}

\textsc{Let us consider} the configuration space of $n \geq 3$ punctures (marked points) on a genus-zero Riemann surface $\Sigma \cong \mathbb{CP}^1$,
\be
{\cal M}_{0,n} := \Conf_n(\mathbb{CP}^1) / \PSL(2,\C),
\ee
where the action of the automorphism group $\text{SL}(2,\C)$ allows us to fix positions of three punctures, by convention $(z_1,z_{n-1},z_n)$, where $z_i$ denotes the position of the $i$-th puncture. Hence the complex dimension of $\M_{0,n}$ is $n{-}3$. More explicitly we can write $\M_{0,n}$ as
\be\label{M0n}
{\cal M}_{0,n} = \{ (z_2, z_3, \ldots, z_{n-2}) \in (\mathbb{CP}^1)^{n-3} \;|\; z_i \neq z_j \text{ for all } i \neq j \}.
\ee
We will refer to it as the \emph{moduli space} of $n$-punctured Riemann spheres. From now on we identify $z_i$'s with inhomogeneous coordinates on each $\CP^1$. One can show that $\M_{0,n}$ is path-connected and aspherical (also called the Eilenberg--MacLane space of type $K(\pi, 1)$), i.e., the fundamental group $\pi_1({\cal M}_{0,n})$ is the only non-trivial homotopy group of ${\cal M}_{0,n}$.

The fundamental group is generated by the loops $\circlearrowleft_{ij}$ given by the $j$-th puncture going around the $i$-th puncture in the counter-clockwise direction and returning to its original position without encircling any other puncture. Clearly they are symmetric, $\circlearrowleft_{ij} = \circlearrowleft_{ji}$, and the inverse of each element $\circlearrowleft_{ij}^{-1}$ corresponds to a loop in the clockwise direction. We define $\circlearrowleft_{ii}$ to be the identity element. In addition, since any $\circlearrowleft_{ij}$ can be deformed into a composition of $n{-}1$ other loops $\circlearrowleft_{kj}^{-1}$ for $k {\neq} i$ and $j$ fixed, the dimension of $\pi_1(\M_{0,n})$ is $n(n{-}3)/2$. Equivalently $\pi_1(\M_{0,n})$ can be described as the braid group of $n$ distinguishable strands out of which three are held fixed. For example
\be
\begin{aligned}
	\includegraphics[scale=1]{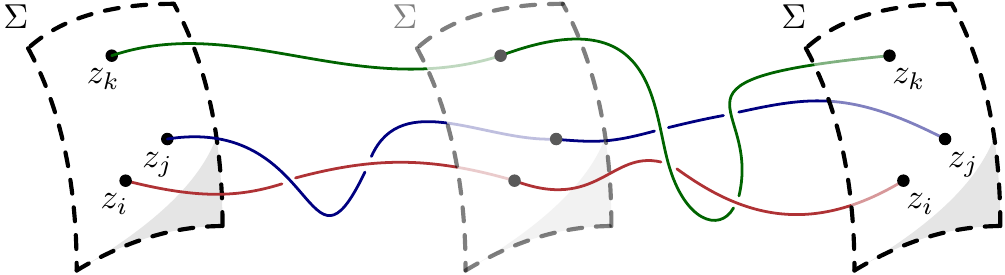}
\end{aligned}
\ee
represents a braid corresponding to $\circlearrowleft_{ij} {\circ} \circlearrowleft_{ik}$, obtained by composing $\circlearrowleft_{ij}$ in the first step with $\circlearrowleft_{ik}$ in the second.\footnote{The fundamental group of $\Conf_{n-1}(\C)$ is called the \emph{pure braid group} \cite{10.2307/1969218} (sometimes also \emph{colored} or \emph{dyed} braid group depending on a translation from Russian \cite{arnold1969cohomology}). Using the $\PSL(2,\C)$ quotient to fix three punctures to $(0,1,\infty)$ we can treat $\M_{0,n}$ as the configuration space $\Conf_{n-3}(\C{\setminus}\{0,1\})$ and use the result on braid groups---such as \cite{Fadell_Neuwirth_1962,arnold1969cohomology}---with minimal modifications, cf. \cite{10.1007/978-1-4612-4264-2_8}.}

One of the main tools in studying topological properties of spaces are homology and cohomology groups. For example, the abelianization of the fundamental group $\pi_1(\M_{0,n})$ is the 1-st singular homology group $H_1(\M_{0,n}, \Z)$. The cohomology ring of ${\cal M}_{0,n}$ with integer coefficients, $H^{\ast}({\cal M}_{0,n}, \Z)$, was essentially studied in the seminal work of Arnold \cite{arnold1969cohomology}, who showed that it is generated by the symmetric elements $\lambda_{ij}$, which are the analogues of $\circlearrowleft_{ij}$, modulo the relations
\be\label{Arnold-relations}
\lambda_{ij} \lambda_{jk} + \lambda_{jk} \lambda_{ki} + \lambda_{ki} \lambda_{ij} = 0.
\ee
An explicit realization as differential forms is given by:
\be\label{lambda}
\lambda_{ij} := d\log(z_i - z_j),
\ee
where the multiplication is given by the exterior product $\wedge$. Here it is understood that for the fixed punctures we have $dz_1 {=} dz_{n-1} {=} dz_n {=} 0$. The Poincar\'e polynomial of ${\cal M}_{0,n}$ is given by
\be\label{Poincare-polynomial}
P(t) := \sum_{k=0}^{2n-6} t^k \dim H^k({\cal M}_{0,n}, \Z) = \prod_{m=1}^{n-3} \big(1 + (m{+}1)t \big).
\ee
In particular, we can read off the dimension of the $1$-st cohomology group,
\be
\dim H^1({\cal M}_{0,n}, \Z) = \frac{1}{2}n(n-3),
\ee
which is in agreement with the previous counting for the fundamental group, and by extension also $H_1(\M_{0,n},\Z)$.

In the sequel we will be mostly interested in the $(n{-}3)$-forms that have correct covariance properties under the M\"obius group transformations. Recall that the action of $\PSL(2,\C)$ on each $z_i$ is given by
\be
\quad z_i \mapsto \frac{\a z_i + \b}{\c z_i + \d} \qquad \text{with} \qquad \a\d - \b\c = 1.
\ee
Hence \eqref{lambda} transform as $\lambda_{ij} \mapsto \lambda_{ij} - d\log(\c z_i + \d) - d\log(\c z_j + \d)$. Let us introduce the following $\PSL(2,\C)$-covariant measure:
\begin{align}
d\mu_n := \frac{\bigwedge_{i=1}^{n} dz_i}{\text{vol }\PSL(2,\C)} = (z_1 {-} z_{n-1})(z_{n-1} {-} z_n)(z_1 {-} z_n) \bigwedge_{i=2}^{n-2} dz_i,
\end{align}
where $\text{vol }\PSL(2,\C)$ is the conventional notation used to denote a quotient by the action of $\PSL(2,\C)$, which we stated explicitly in the gauge where $(z_1,z_{n-1},z_n)$ are fixed. It transforms as
\be
d\mu_n \;\mapsto\; d\mu_n\, \prod_{i=1}^{n} (\c z_i + \d)^{-2}.
\ee
The simplest $\PSL(2,\C)$-invariant differential $(n{-}3)$-form one can write down is the so-called Parke--Taylor form \cite{Parke:1986gb}:
\be\label{Parke-Taylor-form}
\text{PT}(\alpha) := \frac{d\mu_n}{\prod_{i=1}^{n} \left( z_{\alpha(i)} - z_{\alpha(i+1)} \right)},
\ee
where $\alpha$ is a permutation of $n$ labels and we use the cyclic identification $\alpha(n{+}1) = \alpha(1)$. After taking into account reflections, there are $(n{-}1)!/2$ distinct Parke--Taylor forms that can be written down. However, according to \eqref{Poincare-polynomial} the dimension of the $(n{-}3)$-rd cohomology group with integer coefficients is
\be
\dim H^{n-3}({\cal M}_{0,n}, \Z) = (n-2)!,
\ee
which means that over $\Z$ there are exactly $(n{-}2)!$ linearly-independent $(n{-}3)$-forms. A basis of them can be formed by using $(n{-}2)!$ Parke--Taylor forms \eqref{Parke-Taylor-form}. The dimension of the cohomology ring $H^\ast(\M_{0,n},\Z)$ is $P(1)=(n{-}1)!/2$. Note that the same results are valid over $\C$.

The moduli space has a natural fibration obtained by forgetting one puncture, $\M_{0,n} \to \M_{0,n{-}1}$, with the fibre being an $(n{-}1)$-punctured Riemann sphere, as already hinted at by \eqref{Poincare-polynomial}. This recursive nature will be exploited later on for explicit computations and hence we postpone further discussion of fibre bundles until Section~\ref{sec:fibration}.

The boundary divisor $\partial \M_{0,n} := \overbar{\M}_{0,n}{\setminus}\M_{0,n}$ corresponds to nested configurations in which two or more particles collide with each other on the Riemann surface. However, these degenerations are obscured by the coordinate representation \eqref{M0n}, whose divisor has non-normally-crossing components (for example when $z_1 {=} z_2 {=} z_3$). This problem is alleviated by the use of the Deligne--Mumford compactification $\overbar{\M}_{0,n}$ \cite{PMIHES_1969__36__75_0}. We will discuss its specific realization in Section~\ref{sec:logarithmic}.

\subsection{\label{sec:local-systems}From Local Systems to Twisted Forms}

\textsc{Let us introduce} an additional structure on ${\cal M}_{0,n}$ given by a representation---for now Abelian---of its fundamental group,
\be
\L_\omega: \quad \pi_1({\cal M}_{0,n}) \;\to\; \C^\times.
\ee
It is called a \emph{local system} (or a \emph{locally-constant sheaf}) \cite{10.2307/1969099}. For every path $\gamma \in \pi_1({\cal M}_{0,n})$ the local system ${\cal L}_\omega$ assigns a non-zero complex number by integrating a closed one-form $\omega$ over $\gamma$ and exponentiating the result:
\be\label{local-system-map}
\quad \gamma \;\mapsto\; \exp \int_{\gamma} \omega.
\ee
Since $\omega$ is a closed form, its value depends only on the homotopy class of $\gamma$. Let us confirm that the representation \eqref{local-system-map} has the correct group properties. For every pair of paths $\gamma_1, \gamma_2$ we have
\be
\exp \int_{\gamma_1 \circ \gamma_2} \!\!\! \omega = \left(\exp \int_{\gamma_2} \omega \right) \left(\exp \int_{\gamma_1} \omega\right),
\ee
and hence concatenation of paths $\gamma_1 \circ \gamma_2$ corresponds to multiplication of the $\C^\times$ coefficients. In particular, any path contractible to a point has a coefficient $1$. Similarly, the coefficient of a path $\gamma^{-1}$ is the inverse of that associated to $\gamma$.

In this work we consider a particular class of local systems, associated to representations of the fundamental group $\pi_1(\M_{0,n})$, given by \cite{Mizera:2017cqs,Mizera:2017rqa}
\be\label{omega}
\omega := \frac{1}{\Lambda^2}\sum_{1 \leq i<j \leq n} 2p_i {\cdot} p_j\, d \log (z_i - z_j).
\ee
Here to each puncture we associate a Lorentz vector $p_i^\mu$ in an arbitrary space-time dimension, contracted with Minkowski metric in the mostly-plus signature, $p_i {\cdot} p_j := p_i^\mu p_j^\nu \eta_{\mu\nu}$ (we use Einstein notation for the contraction of indices). They satisfy momentum conservation $\sum_{i=1}^{n} p_i^\mu = 0$. In order to make the one-form $\omega$ dimensionless, we normalized it with a mass scale $\Lambda^2$ (it might be possible to introduce multiple mass scales into the problem, which give rise to different local systems, but for simplicity we do not consider them here). Unless specified otherwise, we use \emph{generic} (or non-singular) kinematics, i.e., one for which the kinematic invariants $p_i {\cdot} p_j$ do not satisfy any extra linear relations beyond those implied by momentum conservation. In particular we allow the momenta to be embedded in four space-time dimensions, as it only requires that every $5{\times}5$ minor of the Gram matrix $[p_i {\cdot} p_j]$ vanishes.

Since in a generic gauge-fixing infinity is not a special point on ${\cal M}_{0,n}$, the loop $\circlearrowleft_{\infty i}$ where the $i$-th puncture goes around infinity is contractible to a point, which implies the condition:
\be
1 = \exp \int_{\circlearrowleft_{\infty i}} \!\!\! \omega = \exp \left( -\frac{4\pi i m_i^2}{\Lambda^2} \right),
\ee
where we used the fact that $\sum_{j\neq i } p_i {\cdot} p_j = - p_i^2 =: m_i^2$ is the mass squared of the $i$-th particle. This requirement imposes the quantization condition on the masses, $m_i^2 \in \Lambda^2\Z / 2$ for every $i$. Under these conditions, there are exactly $n(n{-}3)/2$ linearly-independent kinematic invariants $2p_i {\cdot} p_j$ modulo $\Lambda^2 \Z$ and hence the Abelian representation of $\pi_1(\M_{0,n})$ given by \eqref{local-system-map} with \eqref{omega} is faithful. In our applications we will mostly focus on the cases where all the external particles are massless, $m_i =0$, or have the same mass, $m_i = \Lambda$.

Let us consider a generalization of differential forms to forms with coefficients in the local system ${\cal L}_\omega$ (also called \emph{local coefficients}). For a single-valued differential $k$-form $\varphi \in \Omega^{k}(\M_{0,n})$ and a coefficient $\exp \int_\gamma \omega \in \L_\omega$ we write it as
\be\label{cocycle-with-local-coefficient}
\varphi \otimes \exp \textstyle\int_\gamma \omega \;\in\; \Omega^{k}(\M_{0,n}) \otimes \L_\omega,
\ee
where $\Omega^{k}(\M_{0,n})$ denotes the space of smooth $k$-forms on $\M_{0,n}$, and $\gamma \in \pi_1(\M_{0,n})$ is an arbitrary path connecting a fixed point $p$ to the coordinate of $\varphi$.

In fact, correlation functions of vertex operators in string theory have the form \eqref{cocycle-with-local-coefficient} before integrating over the puncture coordinates, see, e.g., \cite{green1988superstring}. The meaning of \eqref{cocycle-with-local-coefficient} is that such correlators are not globally defined, but rather depend on the history of punctures travelling around each other on the Riemann surface, called the \emph{monodromy}. For example, the effect of two punctures $z_i$ and $z_j$ participating in the loop $\circlearrowleft_{ij}$ is multiplying the correlator by the phase $\exp \left( 4\pi i p_i{\cdot} p_j / \Lambda^2 \right)$. The local system is a way of encapsulating monodromy properties of such correlation functions. Of course, one can also define them on the covering space of $\M_{0,n}$, but the local-system description will prove to be more advantageous for explicit computations.

For the form with local coefficients \eqref{cocycle-with-local-coefficient} to be well-defined we impose that any $\varphi \otimes \exp\int_\gamma\! \omega$ is $\PSL(2,\C)$-invariant. Since, using momentum conservation, $\omega$ has the transformation property
\be
\omega \;\mapsto\; \omega - \frac{2}{\Lambda^2} \sum_{i=1}^{n} m_i^2\, d\log (\c z_i + \d),
\ee
it means that $\varphi$ needs to transform in a covariant way:
\be\label{SL2C-covariance}
\varphi \;\mapsto\; \varphi\; \prod_{i=1}^{n} (\c z_i + \d)^{2m_i^2 / \Lambda^2}.
\ee
A form $\varphi$ with the above property is said to carry a \emph{M\"obius weight} $2m_i^2 / \Lambda^2 \in \Z$ in the coordinate $z_i$.

One can define $k$-th cohomology with coefficients in the local system, $H^k(\M_{0,n}, \L_\omega)$ \cite{10.2307/1969099}, which is the space of closed forms with coefficients in ${\cal L}_\omega$ modulo the exact ones (strictly speaking one should also impose $\SL(2,\C)$-invariance, but we leave it implicit from now on),
\be\label{cohomology-with-local-coefficients}
H^k(\M_{0,n}, {\cal L}_{\omega}) := \frac{\{ \varphi \otimes \exp \textstyle\int_\gamma \omega \,\in\, \Omega^{k}(\M_{0,n}) \otimes {\cal L}_\omega \,|\, d( \varphi \otimes \exp\textstyle\int_\gamma \omega ) = 0 \}}{d \left( \Omega^{k-1}(\M_{0,n}) \otimes {\cal L}_\omega \right)}.
\ee
However, for our practical purposes we would like to construct a de Rham analogue of this cohomology group, so that all computations can be done purely on differential forms. Since the definition \eqref{cohomology-with-local-coefficients} only depends on the way the differential $d$ acts on forms and ${\cal L}_\omega$, let us analyze this action more closely. We start by noticing that $d$ acts on a form $\xi \in \Omega^{k-1}(\M_{0,n})$ with a local coefficient in a $\gamma$-independent way:
\begin{align}
d \left( \xi \otimes \exp \textstyle\int_\gamma \omega \right) &= d\xi \otimes \exp \textstyle\int_\gamma \omega + (-1)^{k-1} \xi \otimes \omega \wedge \exp \textstyle\int_\gamma \omega \nn\\
&= \Big( d\xi + \omega \wedge \xi \Big) \otimes \exp \textstyle\int_\gamma \omega.\label{image-map}
\end{align}
Thus it is natural to define the Gauss--Manin connection $\nabla_{\omega} := d + \omega \wedge$, which acts just on the forms. Since $\omega$ is closed, it is straightforward to check that the connection is \emph{integrable} (flat), i.e., $\nabla_{\omega}^2 = 0$. Hence the quotient in \eqref{cohomology-with-local-coefficients} translates to a quotient of $\varphi$ by terms of the form $\nabla_{\omega} \xi$. Repeating the same exercise for the numerator of \eqref{cohomology-with-local-coefficients} we find that $\varphi$ should satisfy $\nabla_\omega \varphi = 0$.

Therefore matching the kernels and images of $\nabla_{\omega}$ we can identify cohomology groups $H^k(\M_{0,n},\nabla_\omega)$ isomorphic to $H^k({\cal M}_{0,n}, \L_{\omega})$, explicitly given by
\be\label{twisted-cohomology}
H^k(\M_{0,n}, \nabla_{\omega}) := \frac{\{ \varphi \in \Omega^{k}(\M_{0,n}) \,|\, \nabla_{\omega} \varphi = 0 \}}{\nabla_\omega \Omega^{k-1}(\M_{0,n})}.
\ee
In other words, it is the space of $\nabla_{\omega}$-closed $k$-forms on $\M_{0,n}$ modulo those which are $\nabla_{\omega}$-exact. This space is called the $k$-th \emph{twisted cohomology}. Its isomorphism to cohomology with coefficients in the local system $\L_\omega$ follows essentially from the twisted version of Grothendieck's comparison theorem \cite{PMIHES_1966__29__95_0} proven by Deligne \cite{deligne1970equations}.

Under some non-resonance assumptions---which in our cases translate to generic kinematics---Aomoto showed that twisted cohomology is concentrated in the middle dimension, i.e., only $H^{k}(\M_{0,n}, \nabla_\omega)$ for $k= \dim_{\C} \M_{0,n}$ is non-trivial \cite{aomoto1975vanishing}. In other words, $k$-forms $\varphi$ which are $\nabla_\omega$-closed are always $\nabla_\omega$-exact unless $k = n{-}3$. This allows us to compute the dimension of the twisted cohomology purely topologically, since the Euler characteristic of $\M_{0,n}$ is related to twisted Betti numbers through
\be\label{vanishing}
\chi(\M_{0,n}) = \sum_{k=0}^{2n-6} (-1)^k \dim H^k(\M_{0,n}, \nabla_{\omega}) = (-1)^{n-3} \dim H^{n-3}(\M_{0,n}, \nabla_\omega).
\ee
Here we used the fact that the twisted and topological Euler characteristics coincide, $\chi(\M_{0,n},\L_\omega)=\chi(\M_{0,n})$, since $\L_\omega$ is a flat line bundle. The Euler characteristic can be easily obtained from the Poincar\'e polynomial in \eqref{Poincare-polynomial} evaluated at $t=-1$, which gives
\be
\chi(\M_{0,n}) = (-1)^{n-3}(n-3)!
\ee
and therefore we conclude
\be\label{dimension}
\dim H^{n-3}(\M_{0,n}, \nabla_\omega) = (n-3)!.
\ee
To our knowledge this statement was first stated explicitly in \cite{aomoto1987gauss}. We will give another derivation in Section~\ref{sec:fibration}.

The elements $[\varphi] \in H^{n-3}(\M_{0,n}, \nabla_\omega)$ are called \emph{twisted cocycles}. Throughout this work we will abuse the notation by writing $\varphi \in H^{n-3}(\M_{0,n}, \nabla_{\omega})$ for any representative of the cohomology class $[\varphi]$, and refer to $\varphi$ as a \emph{twisted differential form}, or simply \emph{twisted form}. Two twisted forms in the same cohomology class are said to be \emph{cohomologous}. 

In order to constructs invariants between twisted forms one needs a dual space. A natural dual is given by the cohomology associated to the local system ${\cal L}_{-\omega}$ obtained by sending $\omega \to -\omega$, since ${\cal L}_{-\omega} \otimes {\cal L}_{\omega}$ is trivial (monodromy-invariant).\footnote{The alternative choice is a pairing between ${\cal L}_{\overbar{\omega}}$ and ${\cal L}_{\omega}$, which gives rise to integrals in closed string theory, see Section~\ref{sec:closed-string-amplitudes}.} The duality induces a pairing between twisted forms of the two types, which will be discussed in the next subsection.

For the sake of clarity let us introduce short-hand notation for the two cohomologies:
\be
H^{n-3}_\omega := H^{n-3}(\M_{0,n}, \nabla_{\omega}),\qquad H^{n-3}_{-\omega} := H^{n-3}(\M_{0,n}, \nabla_{-\omega}).
\ee
For twisted forms to be well-defined we also need that they transform with M\"obius weights $\pm 2m_i^2 / \Lambda^2$ for each $z_i$. Since the twisted cohomologies are concentrated in the middle dimension $n{-}3$, it is natural to take both of them to be holomorphic top forms, i.e., proportional to $d\mu_n$. Note that in those cases the $\nabla_{\pm\omega}$-closedness condition is always satisfied. For physical applications we will consider twisted forms with poles only on the boundary divisor $\partial\M_{0,n}$, i.e., those of the form $1/(z_i {-} z_j)$, as well as poles at the infinities ${\cal I}$. The latter type of poles, happening when $z_i \to \infty$, are \emph{spurious} meaning that they will not contribute to the final result of our computations. The space of rational holomorphic forms with such properties is called $\Omega^{n-3,0}(\ast \partial \M_{0,n} {\cup} {\cal I})$. We will denote generic elements of $H^{n-3}_{\pm \omega}$ which are also in $\Omega^{n-3,0}(\ast \partial \M_{0,n} {\cup} {\cal I})$ by $\varphi_+$ and $\varphi_-$. Finally, it is reasonable to impose that $\varphi_\pm$ themselves do not have kinematic poles involving $p_i {\cdot} p_j$, so that all singularities arise geometrically from the local system.

At this stage it is instructive to give examples of twisted forms. For instance, when all external masses are equal to an integer (as opposed to half-integer) multiple of the mass-scale, $m_i^2 = \mathrm{M}\Lambda^2$ for $\mathrm{M} \in \Z$, we have the following generalization of the Parke--Taylor forms from \eqref{Parke-Taylor-form},
\be\label{massive-PT}
\frac{d\mu_n}{\prod_{m=1}^{1\pm \mathrm{M}} \prod_{i=1}^{n} \left(z_{\alpha_m\! (i)} - z_{\alpha_m\! (i+1)}\right)} \;\in\; H^{n-3}_{\pm\omega},
\ee
where each $\alpha_m$ is a permutation of $n$ elements for $m=1,2\ldots,1{\pm}\mathrm{M}$, where the number of permutations is chosen to assure the correct $\SL(2,\C)$-transformation properties \eqref{SL2C-covariance}. Except for the degenerate case $\pm \mathrm{M} = -1$, $(n{-}3)!$ of such twisted forms give a basis of their respective cohomology groups. For $\pm \mathrm{M} = -1$ a basis has to be constructed out of twisted forms obtained by a product of $\PSL(2,\C)$-invariant cross-ratios times $d\mu_n$. It is straightforward to verify that the above massive Parke--Taylor forms have the correct M\"obius weights. Notice that for $\mathrm{M}=0$ the same twisted forms \emph{can} belong to the two cohomologies $H^{n-3}_{\pm\omega}$, even though they represent distinct cohomology classes.

At this stage let us mention that much of the structure of twisted cohomologies on $\M_{0,n}$ can be understood purely combinatorially through the theory of hyperplane arrangements, see, e.g.,~\cite{SB_1971-1972__14__21_0,Orlik1980,Schechtman1991,Esnault1992,orlik2013arrangements}. For instance, by sending one puncture to infinity we can apply the result that the number of bounded chambers (regions with finite volume) in $\M_{0,n}(\mathbb{R})$ equals to the dimension of $H^{n-3}_\omega$ \cite{aomoto1975vanishing}. Indeed, explicit counting gives $(n{-}3)!$ chambers, cf.~\cite{Cachazo:2016ror,Cachazo:2019ngv}.\footnote{To be more specific, the characteristic polynomial $\chi_{\cal A}(t)$ of the Selberg arrangement ${\cal A}$ defined by the pole locus of $\omega$ (with $z_n$ set to infinity) evaluated at $t {=-} 1$ gives the total number of chambers in $\M_{0,n}(\mathbb{R})$ up to a sign, $|\chi_{\cal A}(-1)| = (n{-}1)!/2$, while $|\chi_{\cal A}(1)| = (n{-}3)!$ is the number of bounded chambers, see, e.g., \cite{orlik2013arrangements} for examples. The characteristic polynomial is related to the Poincar\'e polynomial \eqref{Poincare-polynomial} by $P(t) = (-t)^{n-3}\chi_{\cal A}(-1/t)$ \cite{orlik2001arrangements}.} Based on the combinatorics of the arrangement of hyperplanes $\{ z_i {-} z_j = 0\}$ one can also consistently construct bases of twisted cohomologies known as the $\beta\textbf{nbc}$ bases \cite{falk1997betanbc}.

\subsection{\label{sec:definition}Definition and Properties of Intersection Numbers}

\textsc{With the above definitions} in place we can finally turn to the main object of our study. Given two middle-dimensional forms, their most natural pairing is given by the integral
\be\label{intersection-naive}
\int_{{\cal M}_{0,n}} \!\!\! \varphi_- \wedge \varphi_+
\ee
over the whole moduli space $\M_{0,n}$. However, this integral is not well-defined since ${\cal M}_{0,n}$ is not compact (in particular it does not contain all its limit points and hence is not closed). Moreover, away from the boundaries $\partial \M_{0,n}$ the integrand vanishes, as it is given by a wedge product of two holomorphic top-forms. It means that---one way or another---the appropriate regularization of \eqref{intersection-naive} will necessarily localize on $\partial \M_{0,n}$. Following \cite{cho1995} we give the natural definition of the intersection pairing between $\varphi_-$ and $\varphi_+$ \cite{Mizera:2017rqa}.
\begin{definition}\label{definition-intersection-number}
	The intersection number of two twisted forms $\varphi_\pm \in H^{n-3}(\M_{0,n}, \nabla_{\pm\omega})$ is given by
	\be\label{intersection-definition}
	\braket{ \varphi_- | \varphi_+ }_\omega := \frac{1}{(-2\pi i \Lambda^2)^{n-3}} \int_{\M_{0,n}} \!\!\! \varphi_- \wedge \iota_\omega(\varphi_+),
	\ee
	where $\iota_\omega(\varphi_+) \in H^{n-3}_c(\M_{0,n},\nabla_\omega)$ denotes a form cohomologous to $\varphi_+$ but with compact support.
\end{definition}
Here $H^{n-3}_c(\M_{0,n},\nabla_\omega)$ has the same definition as $H^{n-3}(\M_{0,n},\nabla_\omega)$, but with $\Omega^k_c(\M_{0,n})$, the space of $k$-forms on $\M_{0,n}$ with compact support (that is, those that vanish in the small neighbourhood of $\partial \M_{0,n}$). We will show explicit ways of constructing the map 
\be
\iota_\omega:\; H^{n-3}(\M_{0,n}, \nabla_{\omega}) \,\to\, H^{n-3}_c(\M_{0,n}, \nabla_{\omega})
\ee
throughout this work. Notice that compactly-supported forms cannot be holomorphic, and hence the map $\iota_\omega$ will have to introduce non-holomorphic contributions near $\partial \M_{0,n}$. These are precisely the terms that will make the intersection numbers non-vanishing. Of course, \eqref{intersection-definition} can be also defined by imposing compact support on the form $\varphi_-$ instead (or at the same time), which yields the same answer. It is not implausible that a similar result can be obtained by compactifying the moduli space $\M_{0,n}$, but we will not attempt it here, as the above implementation appears easier to execute technically.

Due to the opposite M\"obius weights of $\varphi_+$ and $\varphi_-$, the intersection number is independent of the $\PSL(2,\C)$ fixing. By definition it is linear in both arguments and satisfies the cohomology relations
\be
\la \varphi_- {+} \nabla_{-\omega} \xi\, | \varphi_+ \ra_{\omega} = \la \varphi_- |\, \varphi_+ {+} \nabla_\omega \xi \ra_{\omega} = \la \varphi_- | \varphi_+ \ra_{\omega}
\ee
for any $(n{-}4)$-form $\xi$. The overall normalization $1/(-2\pi i \Lambda^2)^{n-3}$ is introduced to make the result non-transcendental (in particular not contain factors of $\pi$) and gives it non-trivial mass dimension. From the definition \eqref{intersection-definition}, we have the following symmetry:
\be\label{intersection-number-symmetry}
\braket{ \varphi_- | \varphi_+ }_\omega = \braket{ \varphi_+ | \varphi_- }_{-\omega},
\ee
where on the right-hand side the intersection number is evaluated using $-\omega$ instead of $\omega$. Note that this includes a change of the overall sign by $(-1)^{n-3}$ due to $\Lambda$-dependent normalization in \eqref{intersection-definition}.

Let us understand the reason why \eqref{intersection-definition} provides an invariant associated to the two twisted cohomologies. Modding out by $\nabla_{-\omega}$-exact terms in $\varphi_-$ gives us the equality:
\begin{align}
0 &= \int_{\M_{0,n}} \!\!\! \nabla_{-\omega} \xi \wedge \iota_\omega (\varphi_+)\nn\\
&= (-1)^{n-3}\int_{\M_{0,n}} \xi \wedge \nabla_\omega \iota_\omega (\varphi_+),
\end{align}
where the second line is obtained by expanding $\nabla_{-\omega}\xi = d\xi - \omega \wedge \xi$ followed by integration by parts and commuting $\omega$ through $\xi$. Boundary terms do not contribute since the integrand has compact support. The final result implies that $\nabla_{\omega} \iota_\omega(\varphi_+)$ and hence also $\nabla_{\omega} \varphi_+$ vanish on the level of the twisted cohomology. This is nothing but the $\nabla_\omega$-closedness condition on $\varphi_+$. Similar computation shows that quotient of $\varphi_+$ by $\nabla_\omega$-exactness implies $\nabla_{-\omega}$-closedness of $\varphi_-$. This is precisely the sense in which the two cohomology groups $H^{n-3}_{\pm\omega}$ are dual to each other.

One can show that result of \eqref{intersection-definition} is a rational function of kinematic invariants $p_i {\cdot} p_j$, external masses $m_i^2$, the mass scale $\Lambda^2$, and other physical quantities (such as polarization vectors and colours) that may enter through $\varphi_\pm$. It therefore has the correct ingredients to compute tree-level scattering amplitudes in quantum field theories.

In order to gain intuition about the type of manipulations involved in the evaluation of intersection numbers let us consider the simplest case for $n=4$. In the standard $\SL(2,\C)$ fixing the moduli space $\M_{0,4}$ can be written as $\M_{0,4} = \{z_2 \in \CP^1 \,|\, z_2 \neq z_1, z_3, z_4\}$ and has the boundary divisor consisting of three points, $\partial \M_{0,4} = \{z_1, z_3, z_4\}$. For a given $\varphi_+ \in \Omega^{1,0}(\ast\{z_1, z_3, z_4, \infty\})$, its compactly supported cousin $\iota_\omega(\varphi_+)$ needs to vanish in a small neighbourhood of each of the three removed points. We construct this form explicitly as follows:
\be\label{compactly-supported-form}
\iota_\omega(\varphi_+) = \varphi_+ - \nabla_\omega \Bigg( \sum_{i=1,3,4} \Theta(\varepsilon^2 - |z_2 {-} z_i|^2)\, \nabla_\omega^{-1} \varphi_+ \Bigg),
\ee
where $\Theta(x)$ is the Heaviside step function equal to $1$ for $x \geq 0$ and $0$ otherwise. Therefore each term inside the sum has support only on the infinitesimal disk around $z_2 = z_i$ with radius $\varepsilon$. Notice that the step function is non-holomorphic. The function $\psi := \nabla_\omega^{-1} \varphi_+$ involves a formal inverse of $\nabla_\omega$, which is understood as a solution of the equation $\nabla_\omega \psi = \varphi_+$. Since the result is multiplied by a step function, it is enough to know $\psi$ as a holomorphic expansion around near each $z_2 = z_i$, which is given uniquely by $\psi_i = \sum_{k} \psi_{i,k} (z_2 {-} z_i)^k$. By construction, the compactly-supported form $\iota_\omega(\varphi_+)$ in \eqref{compactly-supported-form} is cohomologous to $\varphi_+$. Evaluating the action of $\nabla_\omega$ on the terms in the brackets we find
\be\label{compactly-supported-form-2}
\iota_\omega(\varphi_+) = \Bigg( 1 - \!\!\! \sum_{i=1,3,4} \!\! \Theta(\varepsilon^2 - |z_2 {-} z_i|^2) \Bigg) \varphi_+ - \sum_{i=1,3,4} \delta(\varepsilon^2 - |z_2 {-} z_i|^2)\, \nabla^{-1}_\omega \varphi_+.
\ee
The first term is obtained by $\nabla_\omega$ acting on $\nabla_\omega^{-1} \varphi_+$ for each term in the sum in \eqref{compactly-supported-form}. It is simply a regularization of $\varphi_+$ which vanishes in an infinitesimal neighbourhood of each component of $\partial \M_{0,4}$ and equal to $\varphi_+$ otherwise. The second term involves a sum over $\nabla_\omega^{-1} \varphi_+$ weighted by Dirac delta functions localizing the result on the circle around each $z_2 = z_i$ with radius $\varepsilon$. Therefore the resulting one-form $\iota_\omega(\varphi_+)$ has compact support, as required.

After plugging into the Definition~\ref{definition-intersection-number}, the terms proportional to $\varphi_+$ in \eqref{compactly-supported-form-2} vanish, since they are wedged with $\varphi_-$. Only the terms involving delta functions survive and hence we find:
\begin{align}
\braket{ \varphi_- | \varphi_+ }_\omega &= \frac{1}{2\pi i \Lambda^2} \sum_{i=1,3,4} \int_{\M_{0,4}} \!\!\!\delta(\varepsilon^2 - |z_2 {-} z_i|^2)\, \varphi_- \nabla_{\omega}^{-1} \varphi_+ \nn\\
&= \frac{1}{\Lambda^2} \sum_{i=1,3,4} \Res_{z_2 = z_i} \!\left( \varphi_- \nabla_{\omega}^{-1} \varphi_+ \right).\label{n-4-result}
\end{align}
As expected, the intersection number localizes as a sum over residues around the boundary components of $\partial \M_{0,4}$.

The above result means that in order to compute $\braket{\varphi_- | \varphi_+}_\omega$ one needs to first find $\nabla^{-1}_\omega \varphi_+$ locally near each $z_2 = z_i$ and then evaluate residues around these points. It is precisely due to the $\nabla_{\omega}^{-1} \varphi_+$ factor that intersection numbers feature combinations of the kinematic invariants $(p_i {+} p_j)^2 {+} \Z \Lambda^2$ in denominators. The evaluation of higher-$n$ intersection numbers is conceptually similar, though the boundary structure of $\M_{0,n}$ becomes more involved and likewise the form of $\iota_\omega(\varphi_+)$ increases in complexity. In Section~\ref{sec:recursion-relations} we will exploit the simplicity of the result \eqref{n-4-result} to construct recursion relations for intersection numbers on a fibration of $\M_{0,n}$ into $n{-}3$ one-dimensional spaces.

In order to illustrate general properties of intersection numbers let us give the result of computing \eqref{n-4-result} for a few examples. We start with the case of massless external kinematics and the Parke--Taylor form $\varphi_+ = \mathrm{PT}(1234)$. Evaluating $\nabla_{\omega}^{-1}\varphi_+$ around $z_2 = z_1$ we find the expansion:
\begin{align}
&\nabla_{\omega}^{-1}\mathrm{PT}(1234) = \frac{\Lambda^2}{(p_1{+}p_2)^2} + \frac{\Lambda ^2 (p_1 {+} p_3)^2}{(p_1 {+} p_2)^2\left((p_1 {+} p_2)^2 + \Lambda ^2\right)}\frac{(z_2{-}z_1) (z_3{-}z_4)}{(z_1{-}z_3) (z_1{-}z_4)} \\
&\qquad\qquad\quad{+} \frac{\Lambda^2 (p_1{+}p_3)^2 (z_2{-}z_1)^2 (z_3{-}z_4)  \left((z_1{-}z_4) \left((p_1{+}p_3)^2{-}\Lambda^2\right) {+} (z_1{-}z_3) \left((p_2{+}p_3)^2{-}\Lambda^2\right)\right)}{(p_1{+}p_2)^2 \left((p_1 {+}p_2)^2 + \Lambda^2\right) \left((p_1 {+}p_2)^2 + 2 \Lambda ^2\right) (z_1{-}z_3)^2 (z_1{-}z_4)^2} {+} \ldots.\nn
\end{align}
Here the leading order has a simple pole in $(p_1 {+} p_2)^2$ corresponding to a propagation of a massless state, while the first subleading term has propagators with mass $0$ and $\Lambda$, the second subleading has $0$, $\Lambda$, $\sqrt{2}\Lambda$, etc. By symmetry, around $z_2=z_3$ the expansion takes a similar form, but with poles in the $(p_2{+}p_3)^2$-channel,
\be
\nabla_{\omega}^{-1} \mathrm{PT}(1234) = -\frac{\Lambda^2}{(p_2{+}p_3)^2}
+\frac{\Lambda^2 (p_1{+}p_3)^2}{(p_2{+}p_3)^2 ((p_2{+}p_3)^2 + \Lambda^2)}\frac{(z_2{-}z_3) (z_1{-}z_4)}{(z_1{-}z_3) (z_3{-}z_4)} + \ldots,
\ee
while around $z_2=z_4$ we have:
\be
\nabla_{\omega}^{-1} \mathrm{PT}(1234) = -\frac{\Lambda^2}{(p_1{+}p_3)^2+\Lambda^2}\frac{(z_2{-}z_{4})(z_1{-}z_3) }{(z_1{-}z_4) (z_4{-}z_3) } + \ldots.\label{expansion-nabla-inverse-3}
\ee
This time the expansion starts at the subleading order as a consequence of $\mathrm{PT}(1234)$ not having a pole at $z_2 = z_4$. Correspondingly, the massless pole in the kinematic variable $(p_2 {+}p_4)^2 = (p_1{+}p_3)^2$ is absent.

Plugging the above expansions into \eqref{n-4-result}, say with $\varphi_-$ equal to the same Parke--Taylor form we find:
\be
\Braket{ \frac{d\mu_4}{z_{12} z_{23} z_{34} z_{41}} | \frac{d\mu_4}{z_{12} z_{23} z_{34} z_{41}} }_{\!\omega} \;=\;  \frac{1}{(p_1 {+} p_2)^2} + \frac{1}{(p_2 {+} p_3)^2}.
\ee
Here we used the notation $z_{ij} := z_i {-} z_j$. This result exemplifies a general feature of intersection numbers: they become rational functions with simple poles in the kinematic variables. Consider another example obtained by changing $\varphi_-$:
\be\label{PT-1243-1234}
\Braket{ \frac{d\mu_4}{z_{12} z_{24} z_{43} z_{31}} | \frac{d\mu_4}{z_{12} z_{23} z_{34} z_{41}} }_{\!\omega} \;=\; -\frac{1}{(p_1 {+} p_2)^2}, 
\ee
which suggests that kinematic poles can appear in the intersection number when the two twisted forms share poles in the moduli space. This is already clear from \eqref{n-4-result} and the above solutions of $\nabla_{\omega}^{-1}\varphi_+$: since $\nabla_\omega^{-1}$ decreases the degree of the pole by one, each residue can be non-zero only if the degrees of the poles of $\varphi_\pm$ add up to at least two. Notice that in the above example there are two poles which are shared: in $z_1 {-} z_2$ and $z_3 {-} z_4$. These are however the same singularities from the point of view of $\M_{0,n}$ and correspond to the Riemann surface becoming an infinitely long tube with punctures $z_1$, $z_2$ on one side of the throat and $z_3$, $z_4$ on the other. In our $\SL(2,\C)$ fixing, we manifested the fact that there are only three types of possible singularities coming from degeneration when $z_2$ approaches one of $\{z_1, z_3, z_4\}$. Thus we find only one Feynman diagram contributing to the intersection number.

Let us consider an example with double poles in the twisted form $\varphi_-$:
\be
\Braket{ \frac{d\mu_4}{z_{13}^2 z_{24}^2} | \frac{d\mu_4}{z_{12} z_{23} z_{34} z_{41}} }_{\!\omega} = \frac{1}{(p_1 {+} p_3)^2 + \Lambda^2}.
\ee
Here the presence of a double pole induced a propagation of a massive state in the $(p_1 {+} p_3)^2$-channel with mass $\Lambda$ coming from \eqref{expansion-nabla-inverse-3}. Using the relation \eqref{intersection-number-symmetry} that exchanges $\varphi_-$ and $\varphi_+$, would produce tachyonic poles. Let us see another example with double poles in $\varphi_+$:
\be\label{example-235}
\Braket{ \frac{d\mu_4}{z_{12} z_{23} z_{34} z_{41}} | \frac{d\mu_4}{z_{12}^2 z_{34}^2} }_{\!\omega} \;=\;  \frac{(p_1 {+} p_3)^2}{(p_1 {+} p_2)^2 \left((p_1 {+} p_2)^2 - \Lambda^2\right)}.
\ee
This time we find not only a massless state, but also a tachyon with mass $i\Lambda$, being exchanged in the $(p_1 {+} p_3)^2$-channel. A general rule of thumb is that introducing higher-degree poles to $\varphi_-$ allows for a propagation of more-massive states, while higher-degree poles in $\varphi_+$ produces more-tachyonic propagators.

When all four external states have mass $\Lambda$ the simplest intersection number reads
\be
\Braket{ d\mu_4 | \frac{d\mu_4}{(z_{12} z_{23} z_{34} z_{41})^2} }_{\!\omega} \;=\; \frac{1}{(p_1 {+} p_2)^2 + \Lambda^2} + \frac{1}{(p_2 {+} p_3)^2 + \Lambda^2}.
\ee
In this case $\varphi_-$ does not have any poles other than the spurious ones at infinity, while $\varphi_+$ has all consecutive double poles of the form $z_i {-} z_{i+1}$. The resulting intersection number has massive states propagating in both planar channels. One the level of the local system the singularities are actually of the same type as in \eqref{example-235}: $2p_1 {\cdot} p_2 - \Lambda^2$, however now $p_1^2 = p_2^2 = -\Lambda^2$, and as a consequence the physical interpretation of this pole changes.

The above examples are by no means exhaustive and were merely supposed to give an intuition about the objects under study. Further examples of intersection numbers of more general twisted forms will be given in Section~\ref{sec:examples}.

\subsection{\label{sec:relations}Relations to Other Objects}

\textsc{One can learn a great deal} about properties of intersection numbers from studying their relations to other objects, most notably scattering amplitudes in string theory and the Cachazo--He--Yuan (CHY) formulation, which is what we consider in this subsection. As our main focus rests on intersection numbers, the exposition will be brief and more details are left until the Appendix~\ref{app:aspects}.

We will make repeated use of the following result, which is elementary but has many important consequences.

\begin{proposition}\label{proposition}
	Let $U\cong V \cong W \cong X$ be four isomorphic complex vector spaces with non-degenerate bilinear pairings denoted by $\la u | v \ra$, $\la u | x \ra$, $\la w | x \ra$, $\la w | v \ra$ for $u\in U, v\in V, w\in W, x\in X$, which are normalized such that for every $w \in W$ there exists $\eta(w) \in U$ for which $\la \eta(w) | v \ra = \la w | v \ra$ and $\la \eta(w) | x \ra = \la w | x \ra$ for all $v \in V$, $x \in X$. Then the bilinears between basis vectors $\{u_a\}_{a=1}^{\dim U} {\in} U$, $\{v_b\}_{b=1}^{\dim V} {\in} V$, $\{w_c\}_{c=1}^{\dim W} {\in} W$, $\{x_d\}_{d=1}^{\dim X} {\in} X$ are related by
	\be\label{proposition-first}
	\la u_a | x_d \ra = \sum_{b=1}^{\dim U} \la u_a | v_b \ra \la w_b^\vee | x_d \ra,
	\ee
	for orthonormal bases $\{ v_b^\vee \}_{b=1}^{\dim V} {\in }V$, $\{ w_c^\vee \}_{c=1}^{\dim W} {\in }W$ such that $\la w_{c}^\vee | v_b \ra = \la w_c | v_b^\vee \ra = \delta_{cb}$. Alternatively, this result can be stated as
	\be\label{proposition-second}
	\la u_a | x_d \ra = \sum_{b,c=1}^{\dim U} \la u_a | v_b \ra \, \mathbf{S}_{bc} \, \la w_c | x_d \ra,
	\ee
	where $\mathbf{S}$ is a $(\dim U){\times}(\dim U)$ matrix with entries $\mathbf{S}_{bc} := \la w_b^\vee | v_c^\vee \ra$, while its inverse $\mathbf{S}^{-1}$ has entries $\mathbf{S}_{cb}^{-1} := \la w_c | v_b \ra$.
\end{proposition}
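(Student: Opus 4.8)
The plan is to recognize this as a purely linear-algebraic statement about four isomorphic vector spaces equipped with non-degenerate pairings, where the hypothesis $\eta$ is essentially the statement that the bilinear forms are "compatible" across the four spaces. The cleanest route is to work through the dual-basis machinery directly. First I would establish that the orthonormal bases $\{v_b^\vee\}$ and $\{w_c^\vee\}$ claimed to exist actually do exist: given a basis $\{v_b\} \in V$ and a basis $\{w_c\} \in W$, the pairing $\la w | v \ra$ is non-degenerate, so the matrix $\mathbf{S}^{-1}$ with entries $\la w_c | v_b \ra$ is invertible. One then \emph{defines} $\{v_b^\vee\}$ and $\{w_c^\vee\}$ as the unique bases satisfying the biorthogonality $\la w_c^\vee | v_b \ra = \la w_c | v_b^\vee \ra = \delta_{cb}$, and checks they are related to the original bases by the matrix $\mathbf{S}^{-1}$ and its transpose-inverse.

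The heart of the argument is a completeness (resolution-of-identity) relation. I would argue that for any $u \in U$ and $x \in X$, one can expand using the biorthogonal pair: since $\{v_b\}$ is a basis of $V$ and $\{w_c^\vee\}$ its dual under $\la w | v \ra$, any element can be reconstructed from its pairings. Concretely, the plan is to write $\la u_a | x_d \ra$ by inserting the identity in the form $\sum_b \la u_a | v_b \ra \la w_b^\vee | x_d \ra$, justified by the compatibility hypothesis on $\eta$. The key use of $\eta$ is this: the hypothesis that for every $w \in W$ there is $\eta(w) \in U$ with $\la \eta(w) | v \ra = \la w | v \ra$ and $\la \eta(w) | x \ra = \la w | x \ra$ is precisely what lets me transfer a pairing computed in $W$ over to one computed in $U$, so that the four separate pairings are consistent enough to be glued. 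I would apply $\eta$ to $w_b^\vee$, so that $\la w_b^\vee | x_d \ra = \la \eta(w_b^\vee) | x_d \ra$, and similarly relate the $V$-pairings, reducing everything to a single consistent pairing structure on the four identified spaces.

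Having established \eqref{proposition-first}, deriving \eqref{proposition-second} is then a routine change of basis: I would substitute the expansion of the dual basis $w_b^\vee = \sum_c \mathbf{S}_{bc}\, w_c$ (where $\mathbf{S}_{bc} = \la w_b^\vee | v_c^\vee \ra$ by the biorthogonality, since expanding $w_b^\vee$ in the $\{w_c\}$ basis and pairing against $v_c^\vee$ picks out the coefficients) into \eqref{proposition-first}. This turns the single sum over $b$ into the double sum over $b,c$ with $\mathbf{S}$ inserted, and the relation $\mathbf{S}^{-1}_{cb} = \la w_c | v_b \ra$ follows by checking that $\sum_c \mathbf{S}_{bc}\, \la w_c | v_{b'} \ra = \la w_b^\vee | v_{b'} \ra = \delta_{bb'}$, confirming the two matrices are genuine inverses.

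The main obstacle I anticipate is being careful about which pairing is being used where, and about the placement of the dual $\eta$, because four spaces with four pairings invite index confusion and it is easy to conflate $v_b^\vee$ with $w_b^\vee$ or to misidentify which non-degeneracy guarantees which invertibility. The genuinely load-bearing step is verifying the completeness relation itself — that inserting $\sum_b |v_b\ra\la w_b^\vee|$ really acts as the identity on the relevant pairing — and making explicit that the compatibility hypothesis $\eta$ is exactly the minimal assumption needed to make this insertion legitimate across distinct spaces rather than within a single inner-product space. Once that is pinned down the rest is bookkeeping.
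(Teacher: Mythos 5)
Your proof is correct and rests on the same mechanism as the paper's: transferring a basis of $W$ into $U$ via $\eta$ (linear and injective by non-degeneracy of the pairings) and then projecting with $\la\,\bullet\,|v_b\ra$ and $\la\,\bullet\,|x_d\ra$. The only difference is organizational: you establish the biorthogonal bases, prove \eqref{proposition-first} directly by expanding $u_a$ in the transferred dual basis $\{\eta(w_b^\vee)\}$ (whose coefficients are exactly $\la u_a|v_b\ra$), and then deduce \eqref{proposition-second} via $w_b^\vee = \sum_c \mathbf{S}_{bc}\, w_c$; the paper runs the same computation in the opposite order, expanding $u_a$ in $\{\eta(w_c)\}$ to get \eqref{proposition-second} first, identifying $\mathbf{S}_{bc}=\la w_b^\vee|v_c^\vee\ra$ by self-applying the result to $U{=}W$, $X{=}V$, and finally specializing $w_c \to w_c^\vee$ to obtain \eqref{proposition-first}.
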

\begin{proof}
	Since $U \cong W$ we can express any $u_a \in U$ in terms of a basis $\{\eta(w_c)\}_{c=1}^{\dim U} \in U$ as
	\be\label{proposition-u}
	u_a = \sum_{c=1}^{\dim U} \alpha_{ac}\, \eta(w_c)
	\ee
	for some coefficients $\alpha_{ac} \in \mathbb{C}$. Projecting with $\la \,\bullet\, | v_b \ra$ and solving for $\alpha_{ac}$ we find
	\be
	\alpha_{ac} = \sum_{b=1}^{\dim U} \la u_a | v_b \ra\, \mathbf{S}_{bc}.
	\ee
	Plugging back into \eqref{proposition-u} and projecting with $\la \,\bullet\, | x_d \ra$ gives the required result \eqref{proposition-second}. Applying this result to $U=W$, $X=V$ with orthonormal sets of bases we obtain
	\be
	\la w_b^\vee | v_c^\vee \ra = \sum_{a,e=1}^{\dim U} \la w_b^\vee | v_a \ra\, \mathbf{S}_{ae} \la w_e | v_c^\vee \ra = \mathbf{S}_{bc}.
	\ee
	Finally, using \eqref{proposition-second} with $w_c \leftrightarrow w_c^\vee$ gives $\mathbf{S}_{bc} = \delta_{bc}$ and hence \eqref{proposition-first}.
\end{proof}

In our applications the vector spaces will be various twisted homology and cohomology groups, whose dualities will imply a web of relations between different physical objects, such as intersection numbers, string theory amplitudes, etc. One can think of the Proposition~\ref{proposition} as describing different ways of inserting identity operators, $\la u_a | x_d \ra = \la u_a | \I | x_d \ra$ \cite{Mizera:2017rqa}, for instance with
\be
\I = \sum_{b,c=1}^{\dim U} | v_b \ra \, \mathbf{S}_{bc} \, \la w_c |.
\ee
The assumption about consistent normalization is only important for the overall constant multiplying the right-hand side to be one. It is satisfied for all the pairings we will consider, except for the intersection number \eqref{intersection-definition} itself, which we normalized by an additional factor of $1/(-2\pi i \Lambda^2)^{n-3}$ for other reasons. It can be easily accounted for by inserting overall factors of $(-2\pi i \Lambda^2)^{n-3}$ wherever necessary.

\subsubsection{\label{sec:open-string-amplitudes}Open-String Amplitudes}

\textsc{Genus-zero scattering amplitudes} in open string theory can be understood in the language of local systems as follows \cite{Mizera:2017cqs}. We introduce the $(n{-}3)$-rd homology with coefficients in the local system (or \emph{twisted homology}), $H_{n-3}(\M_{0,n}, \L_\omega)$. Its elements are given by the topological middle-dimensional cycles, e.g., those corresponding to insertions of punctures on the circle $\mathbb{RP}^1 \subset \CP^1$ with an ordering $\alpha$. For simplicity of notation we can fix $z_n$ to infinity and $\alpha(n){=}n$ to write such integration cycles $\Delta(\alpha)$ as
\be
\Delta(\alpha) := \{ (z_{2}, z_{3}, \ldots, z_{n-2}) \in \mathbb{R}^{n-3} \,|\,  z_{\alpha(1)} < z_{\alpha(2)} < \cdots < z_{\alpha(n-1)}  \}.
\ee
In order to construct elements of the twisted homology one also needs to specify local coefficients $\exp \int_\gamma \omega$. For example, we can pick a continuous family of $\gamma$'s that connect an arbitrary point $p$ with to points in $\Delta(\alpha)$. The choice of $p$ only affects the overall phase of the local coefficient and we fix it be the so-called Koba--Nielsen factor \cite{Koba:1969rw,Koba:1969kh}:
\be\label{twisted-cycle}
\Delta(\alpha) \otimes \KN := \Delta(\alpha) \otimes e^{i\pi \phi(\alpha)} \!\!\!\!\! \prod_{1 \leq i < j \leq n} (z_i - z_j)^{2\alpha'\! p_i {\cdot} p_j}.
\ee
Here the phase $\phi(\alpha)$ is chosen such that the local coefficient is real-valued everywhere on $\Delta(\alpha)$, or in other words, equal to $\prod_{i<j} |z_i {-} z_j|^{2\alpha'\! p_i{\cdot}p_j}$, where the vertical lines denote the absolute value. Notice that we identified the mass scale $1/\Lambda^2$ with inverse string tension parameter $\alpha'$, which is conventional in string theory. The \emph{twisted cycle} given in \eqref{twisted-cycle} is technically an element of the \emph{locally-finite} twisted homology $H_{n-3}^\omega := H_{n-3}^{\text{lf}}(\M_{0,n}, \L_\omega)$ since it is non-compact. More details on homology with coefficients in the local system are given in the Appendix~\ref{app:aspects}.

The above twisted homology is Poincar\'e dual to the twisted cohomology $H^{n-3}_{\omega}$ by the following pairing:
\be\label{open-string-amplitude}
\la \Delta(\alpha) \otimes \KN \,|\, \varphi_+ \ra := \int_{\Delta(\alpha)} \!\!\! \KN \; \varphi_+.
\ee
This is the open string amplitude. Here the Koba--Nielsen factor arises as the plane-wave part of the correlation function of vertex operators, while the remainder of the correlator (together with the integration measure) is called $\varphi_+$. The above integral is only defined formally, as it does not generically converge for physical scattering processes. This is a major challenge for explicit computations of open-string integrals \eqref{open-string-amplitude}, in particular because kinematic limits cannot be commuted with the integration. Aspects of analytic continuation of \eqref{open-string-amplitude} are discussed in the Appendix~\ref{app:aspects}.

The number of linearly-independent integrals of the type \eqref{open-string-amplitude} is exactly $\left((n{-}3)!\right)^2$, given that dimensions of both $H_{n-3}^\omega$ and $H^{n-3}_\omega$ are $(n{-}3)!$ \cite{aomoto1987gauss}. In order to put this fact to use we apply Proposition~\ref{proposition} with $U{=}H_{n-3}^{\omega}$, $V{=}X{=}H^{n-3}_\omega$, and $W{=}H^{n-3}_{-\omega}$, giving a decomposition formula expressing the integral \eqref{open-string-amplitude} in terms of an arbitrary basis of integrals over twisted forms $\varphi_{+,a}$:
\be\label{decomposition}
\int_{\Delta(\alpha)} \!\!\! \KN \; \varphi_+ = \sum_{a=1}^{(n-3)!} \la \Phi_{-,a}^\vee \,|\, \varphi_+ \ra_\omega \int_{\Delta(\alpha)} \!\!\! \KN \; \Phi_{+,a}.
\ee
One can make use of it in efficient computations of open-string amplitudes. In particular, specializing to the massless case and choosing the basis $\{\Phi_{+,a}\}_{a=1}^{(n-3)!} \in H^{n-3}_{\omega}$ on the right-hand side to consists of Parke--Taylor forms, there exist algorithmic methods of expanding the corresponding integral around the low-energy limit \cite{Broedel:2013aza,Mafra:2016mcc}. In this case the dual basis $\{\Phi_{-,a}^\vee\}_{a=1}^{(n-3)!} \in H^{n-3}_{-\omega}$ is given by \cite{Mafra:2011nv,Mafra:2011nw}:
\be
\text{PT}(\alpha)^\vee := \frac{d\mu_n}{(z_{\alpha(1)} {-} z_{\alpha(n-1)})(z_{\alpha(n-1)} {-} z_{\alpha(n)})(z_{\alpha(1)} {-} z_{\alpha(n)})} \prod_{i=2}^{n-2} \sum_{j=1}^{i-1} \frac{( p_{\alpha(i)}{+}p_{\alpha(j)})^2}{z_{\alpha(i)}{-}z_{\alpha(j)}} \frac{z_{\alpha(j)}{-}z_{\alpha(n)}}{z_{\alpha(i)}{-}z_{\alpha(n)}}.
\ee
The two bases are orthonormal in the sense that for $(n{-}3)!$ permutations $\widehat{\alpha}$, $\widehat{\beta}$ of the labels $(2,3,\ldots,n{-}2)$ they satisfy
\be\label{intersection-PT-PTvee}
\braket{ \PT(1,\widehat{\alpha},n{-}1,n)^\vee \,|\, \PT(1,\widehat{\beta},n{-}1,n) }_\omega = \delta_{\widehat{\alpha} \widehat{\beta}},
\ee
which will be confirmed later on by direct computation. Note that the orthonormality might not hold for other choices of permutations. We will construct alternative pairs of orthonormal bases in Section~\ref{sec:solving}. There exists a natural homological counterpart of the decomposition \eqref{decomposition}, which allows for expanding open-string integrals in a basis of twisted cycles \cite{Mizera:2017cqs}.\footnote{In fact, the above consideration can be applied to a wide variety of integrals, such as Feynman multi-loop integrals. Recent explorations into this topic were given in \cite{Mastrolia:2018uzb,Frellesvig:2019kgj}.}

For completeness and later reference, let us mention an alternative way of decomposing the integral \eqref{open-string-amplitude} which follows from Proposition~\ref{proposition},
\be\label{open-string-decomposition}
\int_{\Delta(\alpha)} \!\!\! \KN \, \varphi_+ = \sum_{a,b=1}^{(n-3)!} \left( \int_{\Delta(\alpha)} \!\!\! \KN \, \Phi_{+,a} \right) \mathbf{C}_{ab}\, \la \Phi_{-,b} \,|\, \varphi_+ \ra_\omega,
\ee
where $\mathbf{C}_{ab} := \la \Phi_{-,a}^\vee \,|\, \Phi_{+,b}^\vee \ra_\omega$ or alternatively $\mathbf{C}_{ab}^{-1} = \la \Phi_{-,a} \,|\, \Phi_{+,b} \ra_\omega$. In the special case when the kinematics is massless and all $\Phi_{-,a}$, $\Phi_{+,b}$ are Parke--Taylor forms, the intersection matrix $\mathbf{C}$ is often referred to as the \emph{field-theory} KLT matrix \cite{Bern:1998sv,BjerrumBohr:2010hn}. Various closed-form expressions and recursions were obtained in \cite{Bern:1998sv,BjerrumBohr:2010hn,BjerrumBohr:2010ta,BjerrumBohr:2010yc,Carrasco:2016ldy} by other means. In such cases, Mafra, Schlotterer, and Stieberger found that when the left-hand side of \eqref{open-string-decomposition} is a superstring amplitude, $\la \Phi_{-,b} \,|\, \varphi_+ \ra_\omega$ on the right-hand side coincides with super Yang--Mills amplitudes without any $\alpha'$ corrections \cite{Mafra:2011nv}, see also \cite{Mafra:2011nw,Huang:2016tag,Azevedo:2018dgo}.

Using the definitions given in \eqref{intersection-definition} and \eqref{open-string-amplitude} one can prove directly that the low-energy limits, $\alpha' \to 0$ (or equivalently $\Lambda \to \infty$, in which massive modes decouple), of intersection numbers and open string amplitudes coincide. Note that whenever we talk about low-energy limit we specialize to massless external states. As a shortcut, let use the result of \cite{Mafra:2011nw}, who showed that
\be
\lim_{\alpha' \to 0} \braket{\Delta(1,\widehat{\alpha},n{-}1,n) \otimes \KN \,|\, \PT(1,\widehat{\beta},n{-}1,n)^\vee} = \frac{1}{(\alpha')^{n-3}}\,\delta_{\widehat{\alpha}\widehat{\beta}},
\ee
up to a sign. After plugging it into the right-hand side of \eqref{decomposition} and taking the low-energy limit we obtain the anticipated relation
\be\label{open-string-limit}
\lim_{\alpha' \to 0} (\alpha')^{n-3}\!\! \int_{\Delta(\alpha)} \!\!\! \KN \, \varphi_+ = \lim_{\alpha' \to 0} \la \PT(\alpha) \,|\, \varphi_+ \ra.
\ee
We will improve upon this result in Section~\ref{sec:logarithmic}.

\subsubsection{\label{sec:closed-string-amplitudes}Closed-String Amplitudes}

\textsc{In a similar fashion}, scattering amplitudes of closed strings admit a natural interpretation in terms of the twisted theory \cite{Mizera:2017cqs}. One can introduce the antiholomorphic counterpart of twisted cohomology, denoted by $H^{n-3}_{\overbar\omega} := H^{n-3}(\M_{0,n}, \nabla_{\overbar{\omega}})$, whose generic element we will call $\overbar{\vartheta_+}$. It is isomorphic to $H^{n-3}_\omega$ by the pairing
\be\label{closed-string-amplitude}
\la \overbar{\vartheta_+} | \varphi_+ \ra := \int_{\M_{0,n}} \!\!\!\!\! |\KN|^2\; \varphi_+ \wedge \overbar{\vartheta_+} ,
\ee
where $|\KN|^2$ the modulus squared of the Koba--Nielsen factor, $|\KN|^2 := \prod_{i<j} (|z_i {-} z_j|^2)^{2\alpha' p_i {\cdot} p_j}$, which is single-valued on the whole moduli space. Here we made use of the fact that closed string correlator factorizes holomorphically into left- and right-moving contributions, $\varphi_+$ and $\overbar{\vartheta_+}$ respectively. As with the case of open string amplitudes, the integral \eqref{closed-string-amplitude} is defined only formally and needs to be regularized before its evaluation. Analytic continuation of \eqref{closed-string-amplitude} and more details behind the definition of $H^{n-3}_{\overbar{\omega}}$ are given in the Appendix~\ref{app:aspects}.

Naturally, there are only $((n{-}3)!)^2$ linearly-independent pairings of the form \eqref{closed-string-amplitude}. We can put this statement to use by applying the Proposition~\ref{proposition} twice: first with $U {=} H^{n-3}_{\overbar\omega}$, $V{=}X{=}H^{n-3}_{\omega}$, $W {=} H^{n-3}_{-\omega}$, and then with $U{=}W{=}H^{n-3}_{\overbar\omega}$, $V{=}H^{n-3}_{-\overbar{\omega}}$ (with the obvious definition), $X{=}H^{n-3}_{\omega}$, to obtain the decomposition formula:
\be\label{closed-string-decomposition}
\int_{\M_{0,n}} \!\!\!\!\! |\KN|^2\;  \varphi_+ \wedge \overbar{\vartheta_+} = \sum_{a,b=1}^{(n-3)!} \la \overbar{\vartheta_{+}} | \overbar{\Theta_{-,a}^{\,\vee}} \ra_{\overbar\omega}\, \la \Phi_{-,b}^\vee | \varphi_+ \ra_\omega
\int_{\M_{0,n}} \!\!\!\!\! |\KN|^2\; \Phi_{+,b} \wedge \overbar{\Theta_{+,a}}
\ee
in terms of arbitrary bases $\{\overbar{\Theta_{+,a}}\}_{a=1}^{(n-3)!} \in H^{n-3}_{\overbar{\omega}}$ and $\{\Phi_{+,b}\}_{b=1}^{(n-3)!} \in H^{n-3}_{\omega}$ of twisted forms. Recall that the orthonormality condition means $\la \overbar{\Theta_{+,a}} | \overbar{\Theta_{-,b}^\vee} \ra_{\overbar\omega} = \la \Phi_{-,a}^\vee | \Phi_{+,b} \ra_{\omega} = \delta_{ab}$. Here the antiholomorphic intersection numbers have a definition analogous to that in \eqref{intersection-definition}. Since for real kinematics we have ${\cal L}_{\pm\omega} \cong {\cal L}_{\mp\overbar{\omega}}$, they can be expressed in terms of the holomorphic ones as follows:
\be\label{anti-holo-relation}
\la \overbar{\vartheta_+} | \overbar{\vartheta_-} \ra_{\overbar\omega} = \la \vartheta_- | \vartheta_+ \ra_{\omega}.
\ee
Note the exchange of $\vartheta_+$ and $\vartheta_-$ on the right-hand side. Let us mention that \eqref{closed-string-decomposition} can be alternatively expressed in terms of the intersection matrices (and/or their inverses) in a way analogous to \eqref{open-string-decomposition}.

Closed-string amplitudes can be also \emph{homologically split} using the Proposition~\ref{proposition} with $U{=}H^{n-3}_{\overbar{\omega}}$, $V{=}H_{n-3}^{\overbar{\omega}}$, $W{=}H^{\omega}_{n-3}$, $X{=}H^{n-3}_{\omega}$, in terms of two copies of open-string integrals and intersection numbers of twisted cycles.\footnote{Let us remark that homological splitting is a rather generic property of two-dimensional conformal field theories, see, e.g., \cite{Dotsenko:1984nm,Dotsenko:1984ad,10.1093/qmath/38.4.385,Mimachi2003,Mimachi2004,Vanhove:2018elu}. Another example is the partition function of $\text{SL}(2,\C)$ Chern--Simons theory which can be written as a quadratic combination of the $\text{SU}(2)$ Chern--Simons contributions weighted by intersection numbers \cite{Witten:2010cx}.} One of the most important applications of homological splitting are the KLT relations \cite{Kawai:1985xq}, which are obtained as follows. Specializing to the case of integration cycles $\Delta(\alpha)$ we obtain
\be\label{homological-factorization}
\int_{\M_{0,n}} \!\!\!\!\! |\KN|^2\; \varphi_+ \wedge \overbar{\vartheta_+} = \sum_{\alpha, \beta} \left(\int_{\Delta(\alpha)} \!\!\! \overbar{\KN}\; \overbar{\vartheta_+} \right) \mathbf{H}_{\alpha\beta} \left(\int_{\Delta(\beta)} \!\!\! \KN\; \varphi_+ \right),
\ee
where the sum goes over two sets of $(n{-}3)!$ permutations of $n$ labels. The (homological) intersection matrix, often called the \emph{string-theory} KLT matrix, and its inverse are given by
\be
\mathbf{H}_{\alpha\beta} = \la \Delta(\alpha)^\vee \otimes \KN \,|\, \Delta(\beta)^\vee \otimes \overbar{\KN} \ra, \qquad \mathbf{H}_{\alpha\beta}^{-1} = \la \Delta(\alpha) \otimes \KN \,|\, \Delta(\beta) \otimes \overbar{\KN} \ra.
\ee
Further discussion and explicit examples can be found in \cite{Mizera:2017cqs} and the Appendix~\ref{app:aspects}.

As in the open-string case, the field-theory limit of closed string amplitudes is governed by intersection numbers of twisted forms. One way to make it apparent is to use the results of \cite{Mafra:2011nw}, which imply
\be
\lim_{\alpha' \to 0} \braket{ \overbar{\PT(1,\widehat{\alpha},n{-}1,n)} \,|\, \PT(1,\widehat{\beta},n{-}1,n)^\vee} =  \left(\frac{-2\pi i}{\alpha'}\right)^{\!n-3}\! \delta_{\widehat{\alpha}\widehat{\beta}},
\ee
up to a sign. Plugging it on the right-hand side of \eqref{closed-string-decomposition} and taking the low-energy limit we find
\be\label{closed-string-limit}
\lim_{\alpha' \to 0} \!\left(\frac{-\alpha'}{2\pi i}\right)^{\!\!n-3}\!\!\!\! \int_{\M_{0,n}} \!\!\!\!\!\!\!\! |\KN|^2\;  \varphi_+ \wedge \overbar{\vartheta_+} {=} \lim_{\alpha' \to 0} \sum_{\widehat{\alpha}} \la \PT(1,\widehat{\alpha},n{-}1,n) \,|\, \varphi_+ \ra_\omega\, \la \PT(1,\widehat{\alpha},n{-}1,n)^\vee \,|\, \vartheta_+ \ra_\omega,
\ee
where we also applied \eqref{anti-holo-relation}. Note that the sum does not generically simplify under Proposition~\ref{proposition} since the Parke--Taylor forms and their duals are both on the left-hand side of the intersection pairings.\footnote{An alternative way of computing the low-energy limit of closed string integrals is by using%
	\be
	\lim_{\alpha' \to 0} \left(\frac{-\alpha'}{2\pi i}\right)^{\!n-3}\!\! \int_{\M_{0,n}}\!\!\! \iota_\omega(\varphi_+) \wedge \overbar{\vartheta_+},
	\ee
	where $\iota_\omega(\varphi_+)$ has compact support, which allows us to send $|\KN|^2 \to 1$ inside the integrand. Notice that in contrast with the intersection numbers, the integrand involves both holomorphic and antiholomorphic forms, and hence its singularity structure is different before taking the $\alpha' \to 0$ limit.
}
This result will be reconsidered in Section~\ref{sec:logarithmic}.

\subsubsection{Scattering Equations}

\textsc{Building up} on the work on twistor string theory by Witten and others  \cite{Witten:2003nn,Roiban:2004yf,Berkovits:2004jj,Cachazo:2012da,Cachazo:2012kg}, Cachazo, He, and Yuan proposed formulae for scattering amplitudes of massless quantum field theories in arbitrary space-time dimension \cite{Cachazo:2013hca,Cachazo:2013iea}. They take the form of localization integrals on $\M_{0,n}$ supported on a finite set of solution of the so-called \emph{scattering equations} \cite{Cachazo:2013gna}:
\be\label{scattering-equations}
\sum_{\substack{j=1\\ j\neq i}}^{n} \frac{2p_i {\cdot} p_j}{z_i - z_j} = 0
\ee
for $i=1,2,\ldots,n$ (by $\PSL(2,\C)$-invariance only $n{-}3$ of them are linearly independent). Here we consider only massless kinematics, i.e., $m_i^2 =0$ for all $i$. Scattering equations appeared before in the context of string theory in \cite{Fairlie:1972zz,RobertsThesis,Fairlie:2008dg,Gross:1987ar,Gross:1987kza}. They have the following interpretation. 

\begin{figure}[t]
	\centering
	\includegraphics[width=.8\columnwidth]{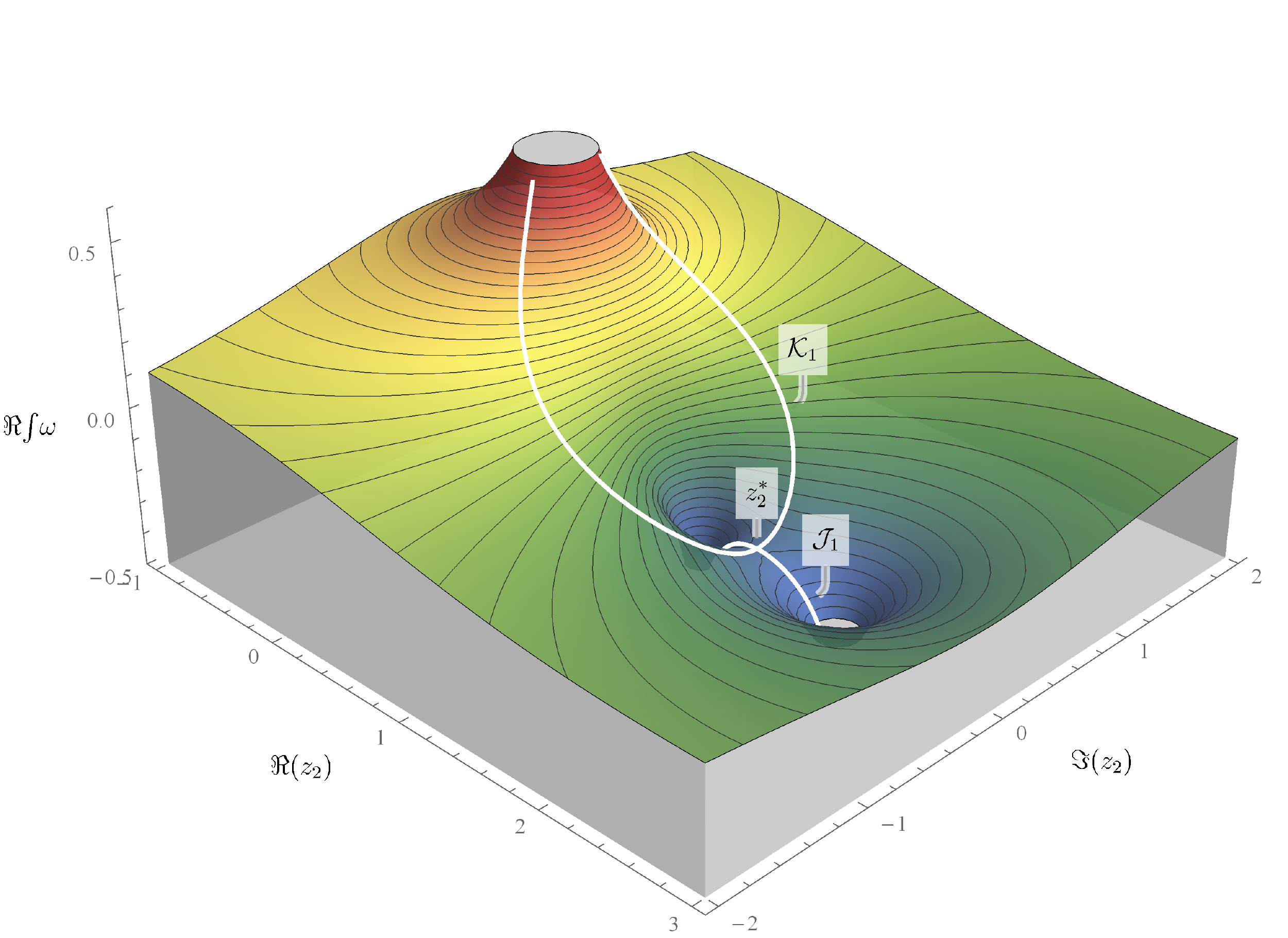}
	\caption{\label{fig:Morse-function}Example Morse function for $n{=}4$. Here we fixed $(z_1,z_3,z_4)=(0,1,2)$ and used massless kinematics with $p_2 {\cdot} p_1 <0$ and $p_2 {\cdot} p_3, p_2 {\cdot} p_4 > 0$. The paths of steepest descent ${\cal J}_1$ and ascent ${\cal K}_1$ (indicated in white) extend from the only critical point located at $z_2^\ast = 2 p_2 {\cdot} p_1 / p_2 {\cdot} (2p_1 {+} p_3)$.}
\end{figure}

Let us consider the moduli space $\M_{0,n}$ furnished with a ``height function'' given by $\Re {\textstyle\int} \omega$, or explicitly
\be\label{Morse-function}
\Re {\textstyle\int} \omega = \Re\, \bigg( \sum_{1 \leq i<j \leq n} \!\!\!\! 2p_i {\cdot} p_j \log(z_i {-} z_j) \bigg),
\ee
which associates a real number to each point in $\M_{0,n}$. It is called a \emph{Morse function}, see Figure~\ref{fig:Morse-function} for an example. Notice that it is single-valued, and hence globally defined, when the kinematics is real. Denoting coordinates on the space $\M_{0,n} {\times} \mathbb{R}$ by $(z_2,z_3,\ldots,z_{n-2},\tau)$, we can consider the following pair of symplectic forms labelled by a sign:
\be
\varpi_\pm = i\, g_{j\overbar{k}}\, dz^j \wedge d\overbar{z}^{k} \,\pm\,
d\left( \Im {\textstyle\int} \omega \right)
\wedge d\tau,
\ee
where the first term is the K\"ahler form associated to the complete positive-definite metric $ds^2 = g_{j\overbar{k}} (dz^j \otimes d\overbar{z}^k +  d\overbar{z}^k \otimes dz^j )$ on $\M_{0,n}$. By varying $\varpi_\pm$ with respect to $dz^j$ and $d\overbar{z}^k$ we obtain the Hamiltonian flow equations:
\be\label{vector-field}
\frac{d\overbar{z}^k}{d\tau} = \pm  g^{j\overbar{k}} \frac{\partial\, \Re\! \int\! \omega}{\partial z^j},
\qquad
\frac{dz^j}{d\tau} = \pm  g^{j\overbar{k}} \frac{\partial\, \Re\! \int\! \omega}{\partial \overbar{z}^k}
\ee
with the conserved Hamiltonian $\Im \!\int\! \omega$ obtained by varying with respect to $d\tau$. Here we used $\partial \Im\! \int\!\omega / \partial z^j = -i\, \partial \Re\! \int\!\omega /\partial z^j$ and $\partial \Im\! \int\!\omega / \partial \overbar{z}^k = i\, \partial \Re\! \int\!\omega /\partial \overbar{z}^k$. In addition, along any trajectory of the vector field defined by \eqref{vector-field} we have:
\be
\frac{d\, \Re \!\int\! \omega}{d \tau} = \sum_{j=2}^{n-2} \frac{\partial \,\Re \!\int\! \omega}{\partial z^j} \frac{d z^j}{d\tau} + \sum_{k=2}^{n-2} \frac{\partial \,\Re \!\int\! \omega}{\partial \overbar{z}^k} \frac{d \overbar{z}^k}{d\tau}  = \pm 2 g^{j\overbar{k}} \frac{\partial \,\Re \!\int\! \omega}{\partial z^j} \frac{\partial \,\Re \!\int\! \omega}{\partial \overbar{z}^k}.
\ee
The right-hand side has a definite sign. Therefore the value of the Morse function $\Re \!\int\! \omega$ is always increasing (decreasing) along any non-constant flow in the $+$ ($-$) case. In particular, the flow reaches its minimum (maximum) when the right-hand side vanishes, i.e., when $\omega=0$, or equivalently \eqref{scattering-equations}. Note that by Cauchy--Riemann equations the critical points of $\int\! \omega$ and its real part \eqref{Morse-function} are the same. Let us assume that each such critical point is isolated and non-degenerate. Since $\int \!\omega$ is holomorphic, a complex version of the Morse lemma implies that there exist local coordinates $(w_2, w_3, \ldots, w_{n-2})$ around each critical point in which it has the expansion
\be
{\textstyle\int} \omega \,=\, {\textstyle\int} \omega \,\big|_{w_i =0} \,+\, \sum_{i=2}^{n-2} w_i^2 \,+\, \ldots.
\ee
Taking its real part and treating $\M_{0,n}$ as a real manifold with coordinates $w_j = x_j + i y_j$ we find
\be
\Re {\textstyle\int} \omega \,=\, \Re {\textstyle\int} \omega\,\big|_{w_i=0} \,+\, \sum_{i=2}^{n-2} x_i^2 \,-\, \sum_{i=2}^{n-2} y_i^2 + \ldots,
\ee
which means that the Morse function has a shape of a saddle near the critical point, i.e., it has exactly $n{-}3$ independent directions in which it decreases and $n{-}3$ in which it increases. One says that it has a \emph{Morse index} $n{-}3$. Denoting the number of critical points with a Morse index $\lambda$ by $\mathrm{M}_\lambda$ we have the following equality \cite{milnor2016morse}:
\be
\chi(\M_{0,n}) = \sum_{\lambda=0}^{2n-6} (-1)^\lambda\, \mathrm{M}_\lambda = (-1)^{n-3}\, \mathrm{M}_{n-3},
\ee
where we used the fact that all $\mathrm{M}_\lambda$ for $\lambda \neq n{-}3$ vanish. Using the value of the Euler characteristic computed before we find that there are exactly $(n{-}3)!$ critical points. In fact, this argument was first given by Aomoto to prove the dimension of twisted cohomology groups and construct their bases in \cite{aomoto1975vanishing,aomoto1987gauss}. The counting was also obtained in \cite{Cachazo:2013iaa,Cachazo:2013gna} by direct computation.

One may wonder what happens upon the inclusion of external masses, $m_i^2 \neq 0$. In this case the Morse function \eqref{Morse-function} is not generic near the infinites of $\M_{0,n}$ and thus the above arguments do not apply (more specifically one should require that the pole locus of $\omega$ coincides with the divisor of $\M_{0,n}$). In the massive case \eqref{Morse-function} is in fact not even well-defined on $\M_{0,n}$ since it is not $\PSL(2,\C)$-invariant. These issues can be alleviated by fixing one of the punctures to infinity. At any rate, we will see shortly that localization of intersection numbers on the critical points will require taking the massless limit.\footnote{In certain restricted cases that exploit planarity one can also treat massive momenta as massless ones in higher space-time dimension, which allows for the use of scattering equations \cite{Dolan:2013isa,Naculich:2015zha,Naculich:2015coa,Cachazo:2015aol}, cf. the discussion in Section~\ref{sec:shift-relations}.}

The middle-dimensional submanifold of $\M_{0,n}$ generated by the downward (upward) flow and passing through the $a$-th critical point is called the path of steepest descent ${\cal J}_a$ (steepest ascent ${\cal K}_a$). Examples of these paths and further discussion of Morse theory is given in Appendix~\ref{app:aspects}. For more general background on Morse (or Picard--Lefschetz) theory see, e.g.,~\cite{milnor2016morse,Witten:2010cx,aomoto2011theory}.

Using the above construction one can prove that CHY formulae are special cases of intersection numbers in the limit where the internal and external kinematics becomes massless, $\Lambda^2 \to 0$ (recall that it also implies $m_i^2 \to 0$ for all $i$) \cite{Mizera:2017rqa}. We start by applying Proposition~\ref{proposition} with $U {=} H^{n-3}_{-\omega}$, $V{=} H^{-\omega}_{n-3}$, $W{=} H^{\omega}_{n-3}$, $X{=}H^{n-3}_{\omega}$ to write:
\be\label{intersection-steepest}
\la \varphi_- | \varphi_+ \ra_\omega = \frac{1}{(2\pi i \Lambda^2)^{n-3}}\sum_{a=1}^{(n-3)!} \left(\int_{{\cal K}_a} \!\!\! \KN^{-1} \varphi_- \right) \left( \int_{{\cal J}_a} \!\!\! \KN\, \varphi_+ \right).
\ee
Here we used the fact that $\{{\cal J}_a\}_{a=1}^{(n-3)!}$ and $\{{\cal K}_a\}_{a=1}^{(n-3)!}$ provide natural orthonormal bases of twisted homologies, namely
\be
\la {\cal J}_a \otimes \KN \,|\,  {\cal K}_b \otimes \KN^{-1} \ra = \delta_{ab},
\ee
since for $a=b$ they intersect at the $a$-th critical point and for $a \neq b$ they do not intersect at all. The orientations and branches of $\KN = \exp \int_\gamma \omega$ and $\KN^{-1} = \exp {-}\!\int_\gamma \omega$ are arbitrary as long as they cancel at the critical points. Note that the precise shape of ${\cal J}_a$ and ${\cal K}_a$ will not be needed in our derivation.

Crucially, the choice of steepest descent and ascent paths means that the right-hand side of \eqref{intersection-steepest} always converges, and hence we can study the asymptotic behaviour as $\Lambda \to 0$ (or equivalently $\alpha' \to \infty$). Note that the role of ${\cal J}_a$ and ${\cal K}_a$ switches between $H^{\omega}_{n-3}$ and $H^{-\omega}_{n-3}$. The leading order comes from the saddle point approximation, see, e.g., \cite{bleistein1986asymptotic}, which up to an overall sign is given by
\begin{align}
\lim_{\Lambda \to 0}\, \la \varphi_- | \varphi_+ \ra_\omega = \frac{1}{(2\pi i \Lambda^2)^{n-3}} \sum_{a=1}^{(n-3)!} & \left( \sqrt{\frac{ (2\pi)^{n-3} }{\det \left[\partial^2 \!\int\! \omega / \partial z_i \partial z_j \right]}} \KN^{-1} \lim_{\Lambda \to 0}\widehat{\varphi}_- \right)_{\!\!z_i = z_i^{(a)}} \nn\\
&\times \left( \sqrt{\frac{ (2\pi)^{n-3} }{\det \left[-\partial^2 \!\int\! \omega / \partial z_i \partial z_j \right]}} \KN \lim_{\Lambda \to 0}\widehat{\varphi}_+ \right)_{\!\!z_i = z_i^{(a)}}.
\end{align}
Here each integral localizes on a saddle point, whose position is denoted by $(z_2^{(a)}, z_3^{(a)}, \ldots, z_{n-2}^{(a)})$. Recall the notation $\widehat{\varphi}_{\pm} d^{n-3}z := \varphi_\pm$. Since on the saddle point we have $\KN {\times} \KN^{-1} = 1$ and the Hessians combine, we can rewrite this limit more concisely as
\begin{align}
\lim_{\Lambda \to 0}\, \la \varphi_- | \varphi_+ \ra_\omega &= \frac{1}{(-\Lambda^2)^{n-3}} \sum_{a=1}^{(n-3)!} \frac{1}{\det \left[\partial^2 \!\int\! \omega / \partial z_i \partial z_j \right]} \lim_{\Lambda \to 0} \widehat{\varphi}_-\, \widehat{\varphi}_+ \nn\\
&= \frac{1}{(-\Lambda^2)^{n-3}} \Res_{\omega = 0} \left(\frac{\lim_{\Lambda \to 0} \varphi_-\, \widehat{\varphi}_+}{\prod_{i=2}^{n-2} \omega_i}\right),\label{CHY-formula}
\end{align}
up to a sign. In the second line we rewrote it as the CHY formula localizing on the solutions of the scattering equations \cite{Cachazo:2013hca}, where $\omega =: \sum_{i=2}^{n-2} \omega_i dz_i$. To emphasize the massless limit, $\Lambda \to 0$, we explicitly kept it acting on the twisted forms, though in practice they usually have homogeneous scaling in $\Lambda$. We will additionally see in Section~\ref{sec:logarithmic} that when the twisted forms are logarithmic, the relation between their intersection number and CHY formula becomes exact.

We warn the reader that for twisted forms $\varphi_\pm$ with non-simple poles the CHY formula generally produces unphysical (non-unitary) kinematic poles, which are not present in intersection numbers, as is already clear from the $\Lambda\to 0$ limit of \eqref{example-235}.

One can restate the above formulae by defining the massless limit, $\Lambda \to 0$, of the twisted cohomologies $H^{n-3}_{\pm\omega} \cap \Omega^{n-3,0}(\ast\partial\M_{0,n})$, which both degenerate to
\be\label{cohomology-limit}
H^{n-3,0}(\M_{0,n},\, \omega\wedge) := \frac{\Omega^{n-3,0}(\ast \partial\M_{0,n})}{\omega \wedge \Omega^{n-4,0}(\ast \partial\M_{0,n})}.
\ee
Self-duality of this cohomology group is given by the CHY integral \eqref{CHY-formula}, understood as a limit of the intersection number from \eqref{intersection-definition}. Cohomology groups of this type are important in the study of combinatorial and algebraic properties of hyperplane arrangements and braid groups \cite{Orlik1980,Esnault1992,Schechtman:1994bb,orlik2001arrangements}.

One can also construct other proofs of the localization on the support of scattering equations. For instance, at $n=4$ we can use the residue formula for intersection numbers \eqref{n-4-result} together with the fact that $\nabla_\omega^{-1} \varphi_+ \to \widehat{\varphi}_{+}/\omega_2$ in the massless limit to find:
\begin{align}
\lim_{\Lambda\to 0} \la \varphi_- | \varphi_+ \ra_\omega &= \frac{1}{\Lambda^2} \sum_{i=1,3,4} \Res_{z_2 = z_i}\! \left( \frac{\lim_{\Lambda\to 0} \varphi_- \, \widehat{\varphi}_+}{\omega_2}  \right)\nn\\
&= - \frac{1}{\Lambda^2} \Res_{\omega =0} \! \left( \frac{\lim_{\Lambda\to 0} \varphi_- \, \widehat{\varphi}_+}{\omega_2}  \right),
\end{align}
where in the second line we used the fact that $\varphi_\pm$ have only poles on $\partial\M_{0,n}$, which allowed us to enclose the remaining pole at $\omega_2 =0$ by the residue theorem. Similar derivations can be repeated at higher-$n$, though the details become considerably more complicated, cf. \cite{Dolan:2013isa}, which is why the proof using Morse theory is the preferred one.

Tempting as it might be, computation of intersection numbers by expanding the right-hand side of \eqref{intersection-steepest} around the saddle points does not seem practical, especially given limited understanding of the asymptotic methods for higher-dimensional integrals at subleading orders \cite{doi:10.1098/rspa.1997.0122,delabaere2002,delabaere2010singular,soton377148}. In fact, already the localization formula \eqref{CHY-formula} is not practical for direct computations beyond $n \leq 5$, since by the Abel--Ruffini theorem positions of the saddle points cannot be determined analytically. There has been considerable progress in solving the scattering equations numerically \cite{Cachazo:2013gna,Weinzierl:2014vwa,Dolan:2014ega,He:2014wua,Huang:2015yka,Sogaard:2015dba,Bosma:2016ttj,Zlotnikov:2016wtk,Du:2016fwe,Cachazo:2016ror,Farrow:2018cqi,Liu:2018brz}, as well as evaluating the CHY formula without solving them explicitly \cite{Dolan:2013isa,Kalousios:2013eca,Dolan:2014ega,Naculich:2014rta,Naculich:2014naa,Kalousios:2015fya,Cachazo:2015nwa,Baadsgaard:2015voa,Baadsgaard:2015ifa,Baadsgaard:2015hia,Cardona:2015eba,Cardona:2015ouc,Dolan:2015iln,Lam:2016tlk,Gomez:2016bmv,Huang:2016zzb,Bjerrum-Bohr:2016juj,Cardona:2016gon,Du:2016wkt,Chen:2016fgi,Chen:2017edo,Huang:2017ydz,Zhou:2017mfj,Gao:2017dek,He:2018pue}. In Section~\ref{sec:recursion-relations} we will present recursion relations for intersection numbers which---in the massless limit---allow to evaluate CHY formulae analytically.

\subsection{\label{sec:logarithmic}Logarithmic Forms, Boundaries, and Dihedral Coordinates}

\textsc{Ever since the work of Deligne} \cite{deligne1970equations} it has been appreciated that differential forms with logarithmic singularities play a special role in the theory of complex manifolds (see also \cite{saito1980theory}), and those relevant for scattering amplitudes are certainly no exception.\footnote{This statement extends to scattering amplitudes computed on spaces other than $\M_{g,n}$, see, e.g., \cite{ArkaniHamed:2012nw,Arkani-Hamed:2013jha}.}

Recall that a form on ${\cal M}_{0,n}$ is called \emph{logarithmic} if it has at most simple poles along the boundary divisor $\partial \M_{0,n}$. The standard way of constructing such logarithmic forms is to consider a wedge of $n{-}3$ building blocks $\lambda_{ij} = d\log(z_i - z_j)$ such that the set of edges connecting the labels $i,j$ forms a tree, see, e.g.,~\cite{aomoto1987gauss,terasoma_2002}. Note that having simple poles in $z_i{-}z_j$ is a necessary but not a sufficient condition for logarithmicity. However, neither the degree nor the position of poles are preserved under cohomology relations. In fact, any (massless) twisted form can be brought into a logarithmic form by the basis decomposition formula \eqref{decomposition}, at a cost of introducing possible explicit dependence on $2p_i {\cdot} p_j$ and $\Lambda^2$. For instance, twisted forms in bosonic string theory are non-logarithmic, but are cohomologous to logarithmic forms with kinematic poles of the type $(p_i {+} p_j)^2 {-} \Lambda^2$, thus signalling propagation of a tachyon.

As discussed in Section~\ref{sec:local-systems}, $\PSL(2,\C)$-invariance implies that logarithmic twisted forms are only allowed when the \emph{external} states are massless, which is the case we will focus on in this subsection. We will see explicitly that this special class of forms also corresponds to all \emph{internal} states being massless.

Since logarithmic forms are defined by their behaviour near $\partial \M_{0,n}$, we first need to study the boundary structure of the moduli space. The standard way of understanding it is through the Deligne--Mumford compactification $\overbar{\M}_{0,n}$ \cite{PMIHES_1969__36__75_0}, which includes nodal surfaces as the limiting configurations in which two or more punctures coalesce on the Riemann surface. Beyond $n{=}4$ this cannot be achieved in a global set of coordinates---such as $(z_2, z_3, \ldots, z_{n-2})$ used so far---and one needs to find local coordinates suitable for studying each degeneration. This procedure is known as an (iterated) \emph{blow-up}. Here we follow the construction of Brown \cite{Brown:2009qja}, who introduced an open covering of the compactified moduli space with certain subspaces $\M_{0,n}^\delta \subset \overbar{\M}_{0,n}$ we will describe now.

For the sake of simplicity let us first consider the case when $\delta$ is the canonical permutation of $n$ labels, $\delta = (12\cdots n)$, and introduce the $\PSL(2,\C)$-invariant cross-ratios:
\be
u_{ij} := \frac{(z_i - z_{j+1})(z_{i+1} - z_j)}{(z_i - z_j)(z_{i+1} - z_{j+1})}.
\ee
These are in fact the variables in which the original Koba--Nielsen integrals were defined \cite{Koba:1969kh}. Let us consider the set $\chi_\delta$ of all possible chords of an $n$-gon with vertices ordered according to $\delta$. There are $n(n{-}3)/2$ of them. To each chord $\{ i,j \} \in \chi_\delta$ extending from the $i$-th to $j$-th vertex we associate the cross-ratio $u_{ij}$. One can show that all identities between such variables are given by constraints of the form:
\be\label{crossing-identities}
\prod_{\substack{p \leq i < q\\ r \leq j < s}}  u_{ij} + \prod_{\substack{q \leq k < r\\ s \leq \ell < p}}  u_{k\ell} = 1,
\ee
for any choice of distinct labels $p,q,r,s$. Note that the inequalities above are defined in the cyclic sense. The identities \eqref{crossing-identities} reduce the number of independent cross-ratios to $n{-}3$ that give coordinates on the space $\M_{0,n}^\delta$ \cite{Brown:2009qja}. Note that there are multiple ways of choosing the independent variables. They can be classified by $C_{n-2} = \frac{1}{n{-}1}\binom{2(n{-}2)}{n{-}2}$ triangulations of an $n$-gon with $n{-}3$ chords, where $C_k$ is the $k$-th Catalan number. The set of $u_{ij}$'s corresponding to such a choice defines \emph{dihedral coordinates} on the moduli space, and the hypersurfaces $\{u_{ij} = 0\}$ correspond to its codimension-one boundaries. Higher-codimension boundaries are obtained by intersection of these hypersurfaces and in particular the maximal-codimension boundaries ($C_{n-2}$ vertices of $\partial \M_{0,n}^\delta$) are given by
\be
\bigcap_{\{i,j\} \in {\cal T}_n} \!\!\{ u_{ij} =0 \},
\ee
each labelled by a triangulation ${\cal T}_n$ of the $n$-gon. The whole boundary divisor $\partial \M_{0,n}$ is obtained by stitching together $\partial \M_{0,n}^\delta$ for $(n{-}1)!/2$ inequivalent permutations $\delta$ (modulo the dihedral group). Note that the definition of $u_{ij}$ and constraints \eqref{crossing-identities} need to be changed for each different choice of ordering $\delta$. There is certain redundancy in covering of $\overbar{\M}_{0,n}$ with copies of $\M_{0,n}^\delta$: for example each of the $(2n-5)!!$ vertices is covered $2^{n-3}$ times. We refer the reader to \cite{Brown:2009qja} for more details on the dihedral extension of the moduli space (see also \cite{Mafra:2011nw,delaCruz:2017zqr,Arkani-Hamed:2017mur,Brown:2018omk} for other relevant literature).

In order to understand physical significance of the dihedral coordinates, let us compute the form of the local system ${\cal L}_\omega$ in these variables. For a given choice of $\delta$ we have:
\be\label{omega-dihedral}
\omega = \frac{1}{\Lambda^2} \sum_{\{i,j\} \in \chi_\delta}\!\!\! (p_{i{+}1} {+} p_{i{+}2} {+} \ldots {+} p_j)^2\, d\log u_{ij},
\ee
where the momentum variable could also be written as $(p_{j+1}{+}p_{j+2}{+}\ldots{+}p_{i})^2$ by momentum conservation and hence it does not depend on the orientation of the chord, similarly to $u_{ij} = u_{ji}$ (all labels are modulo $n$). Here we have written the form on the $n(n{-}3)/2$-dimensional space spanned by all $u_{ij}$'s, which can be projected down to $\M_{0,n}^\delta$ after imposing \eqref{crossing-identities}. Note that since \eqref{crossing-identities} are in general non-linear, this might complicate the form of $\omega$, but preserves its singularity properties.

The local system in those variables has a simple meaning in terms of kinematic poles it can produce. To each chord $\{i,j\}$ of an $n$-gon it associates a kinematic pole in $(p_{i{+}1} {+} p_{i{+}2} {+} \ldots {+} p_j)^2$.
One can repeat this procedure by adding chords that do not cross the previous ones until one obtains a triangulation of the $n$-gon with $n{-}3$ non-crossing chords. The resulting kinematic pole is that of the corresponding trivalent graph:
\be
\begin{aligned}
	\includegraphics[scale=1]{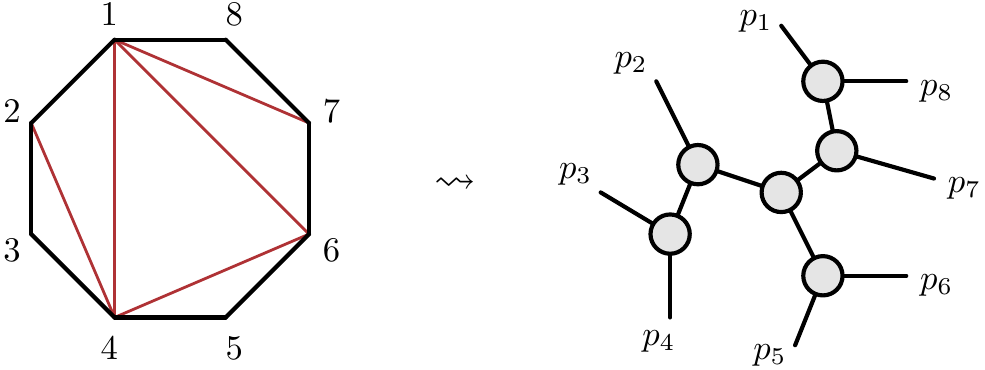}
\end{aligned}
\ee

Let us make the above statements precise by evaluating intersection numbers of logarithmic twisted forms, which will result in an expression summing over residues around each vertex belonging to $\partial \M_{0,n}$, or equivalently all trivalent trees.\footnote{Recall that the residue around $v = \bigcap_{\{i,j\}\in {\cal T}_n} \{ u_{ij} = 0\}$ is given by an integral over the tubular contour $\bigwedge_{\{i,j\}\in {\cal T}_n} \allowbreak \{ |u_{ij}| = \varepsilon\}/(2\pi i)$, which in our case can be obtained by $n{-}3$ consecutive one-dimensional residues, see, e.g.,~\cite{griffiths2014principles}.} Our derivation follows the method of Cho and Matsumoto \cite{cho1995,matsumoto1998}.

\begin{theorem}\label{theorem-21}
	When both twisted forms $\varphi_\pm$ are logarithmic, their intersection number evaluates to
	\be\label{intersection-logarithmic}
	\la \varphi_- | \varphi_+ \ra_\omega = \sum_{v \in \partial \M_{0,n}} \frac{\Res_{v}(\varphi_-) \Res_{v}(\varphi_+)}{\prod_{a=1}^{n-3} p^2_{I_{a}^v}},
	\ee
	where the sum goes over all vertices $v$ in the boundary divisor $\partial \M_{0,n}$. For each $v$, $\{ p_{I_{a}^v}^2 \}_{a=1}^{n-3}$ denotes the set of factorization channels associated to this vertex.
\end{theorem}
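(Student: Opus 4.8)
The plan is to reduce the global integral in Definition~\ref{definition-intersection-number} to a sum of purely local computations near the deepest boundary strata, exploiting that in dihedral coordinates the Deligne--Mumford boundary is normally crossing. First I would cover a neighbourhood of each vertex $v = \bigcap_{\{i,j\}\in{\cal T}_n}\{u_{ij}=0\}$ by the chart $\M_{0,n}^\delta$ of Brown, in which the $n{-}3$ cross-ratios $u_a := u_{i_a j_a}$ associated to the chords of the triangulation ${\cal T}_n$ serve as local coordinates. By \eqref{omega-dihedral}, in this chart
\be
\omega = \frac{1}{\Lambda^2}\sum_{a=1}^{n-3} p_{I_a^v}^2\, \frac{du_a}{u_a} + (\text{regular at } v),
\ee
so the connection $\nabla_\omega$ factorizes, to leading order at $v$, into a product of one-dimensional Gauss--Manin connections $\nabla_{\omega_a} = d + (p_{I_a^v}^2/\Lambda^2)\,d\log u_a$. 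A logarithmic form likewise has, near $v$, the leading behaviour $\varphi_\pm = \Res_v(\varphi_\pm)\bigwedge_{a} d\log u_a + (\text{less singular})$, where $\Res_v$ is the iterated residue along the tube $\bigwedge_a\{|u_a|=\varepsilon\}$.

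Second, I would build the compactly supported representative $\iota_\omega(\varphi_+)$ iteratively, one coordinate at a time, mimicking the $n{=}4$ construction \eqref{compactly-supported-form}: in each $u_a$-direction subtract $\nabla_\omega\big(\Theta(\varepsilon^2-|u_a|^2)\,\nabla_\omega^{-1}\varphi_+\big)$. Because the leading connection is a product, the resulting regulator is built from step functions $\prod_a\Theta(\varepsilon^2-|u_a|^2)$ and the surviving piece of $\varphi_- \wedge \iota_\omega(\varphi_+)$ is, near $v$, a product of $n{-}3$ one-dimensional delta-function contributions. Each factor is exactly the $n{=}4$ computation leading to \eqref{n-4-result}: inverting $\nabla_{\omega_a}$ on a simple pole returns the constant $\Res_v(\varphi_+)\,\Lambda^2/p_{I_a^v}^2$, and pairing against the simple pole of $\varphi_-$ yields $\Res_v(\varphi_-)\,\Res_v(\varphi_+)\,\Lambda^2/p_{I_a^v}^2$ per direction. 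Collecting the $n{-}3$ factors, the $(2\pi i)^{n-3}$ from the iterated residue contour cancels the $2\pi i$ factors in the normalization and the $\Lambda^{2(n-3)}$ from the one-dimensional factors cancels that in $1/(-2\pi i\Lambda^2)^{n-3}$, producing---up to an overall sign fixed by the orientation of the contour---the $\Lambda$-independent summand $\Res_v(\varphi_-)\Res_v(\varphi_+)/\prod_a p_{I_a^v}^2$.

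Third, I would argue that only the vertices contribute. The key point is that $\nabla_{\omega_a}^{-1}$ turns a simple pole into a regular function, so a nonzero iterated residue in all $n{-}3$ transverse directions requires that $\varphi_-$ and $\varphi_+$ together supply a simple pole along every one of the $n{-}3$ divisors meeting at $v$---which for logarithmic forms is precisely the condition that both carry a nonvanishing top residue $\Res_v$. For any boundary stratum of codimension $k<n{-}3$, after taking residues in the $k$ transverse directions the integrand restricted to the stratum is again a wedge of holomorphic top-forms on a lower $\M_{0,m}$, which vanishes pointwise; by induction the pairing collapses entirely onto the codimension-$(n{-}3)$ vertices, i.e.\ the $(2n{-}5)!!$ trivalent trees. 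These cross-terms (those hitting only a proper subset of divisors with delta functions) are exactly the ones that fail to localize and must be shown to cancel.

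The hard part will be controlling the \emph{subleading} data in the local model. The crossing identities \eqref{crossing-identities} are nonlinear, so $\omega$ is only a product of $d\log u_a$ to leading order at $v$ and carries regular corrections; likewise $\varphi_\pm$ and $\nabla_\omega^{-1}\varphi_+$ have less-singular tails. I would need to show that none of these corrections, nor the cross-terms mixing different $u_a$-directions, survive the iterated residue---that taking $n{-}3$ simple-pole residues annihilates everything except the factorized leading term. This is exactly where the logarithmic hypothesis is indispensable, and making the iterative construction of $\iota_\omega(\varphi_+)$ rigorous across several normally crossing divisors (in the spirit of Cho--Matsumoto \cite{cho1995,matsumoto1998}) is the technical crux.
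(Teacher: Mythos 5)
Your proposal follows essentially the same route as the paper's proof: work in dihedral coordinates on Brown's charts, build a compactly supported representative $\iota_\omega(\varphi_+)$ out of local holomorphic solutions $\psi = \nabla_\omega^{-1}\varphi_+$ near the boundary divisors, use logarithmicity to get $\psi = \Lambda^2\,\Res(\varphi_+)/p_I^2 + \ldots$ to leading order, and localize via Stokes' theorem on $(n{-}3)$-tori around the vertices, with the $2\pi i$ and $\Lambda$ factors cancelling the normalization exactly as you say. The one place your sketch deviates from what actually works is the construction of the regulator across several normally crossing divisors: it is not a naive product of one-dimensional regulators applied to $\nabla_\omega^{-1}\varphi_+$. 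The paper (following Cho--Matsumoto) introduces an auxiliary ordering $s \prec \tilde{s}$ of the hypersurfaces and mixed functions $\psi_{s\tilde{s}}$ solving $\nabla_\omega \psi_{s\tilde{s}} = \psi_s - \psi_{\tilde{s}}$ on overlaps $U_s \cap U_{\tilde{s}}$ (with higher analogues $\psi_{H_1 H_2 \cdots H_{n-3}}$ at general $n$), which is what makes the representative simultaneously cohomologous to $\varphi_+$ and compactly supported near \emph{all} boundary components. This device also disposes of your ``cross-terms'' in one stroke: in the resulting expansion, every term except $dh_{H_1}\wedge\cdots\wedge dh_{H_{n-3}}\,\psi_{H_1\cdots H_{n-3}}$ contains a holomorphic factor and therefore dies upon wedging with the holomorphic top-form $\varphi_-$, so no separate analysis of subleading tails or partial-delta terms is needed; the surviving function is computed directly from the leading-order differential equation to be $\Lambda^{2(n-3)}\Res_v(\varphi_+)/\prod_{a=1}^{n-3} p_{I_a^v}^2$, giving the claimed formula. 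So the crux you correctly flagged is real, and it is resolved by this bookkeeping construction rather than by controlling iterated residues of the corrections to the local product model.
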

\begin{proof}
	The proof proceeds by explicitly constructing the form $\iota_\omega(\varphi_+) \in H^{n-3}_c(\M_{0,n},\nabla_{\omega})$ cohomologous to $\varphi_+$ and with compact support along all components of the boundary divisor $\partial\M_{0,n}$. As a consequence, the integral of $\varphi_- \wedge \iota_\omega(\varphi_+)$ over $\overbar{\M}_{0,n}$ will localize on the maximal-codimension components (vertices) of $\partial\M_{0,n}$. We start with the cases $n=4,5$, from which the general pattern will be clear.
	
	Since $\M_{0,4}$ is one-dimensional, it only has zero-dimensional boundaries, $v \in \partial \M_{0,4}$. For the $v$-th boundary component we construct the regulating function $h_v$, which equals to $1$ in the infinitesimal neighbourhood $V_v$ of this boundary, equals to $0$ on the outside of another infinitesimal neighbourhood $U_v \supsetneq V_v$, and smoothly interpolates between the two values in $U_v {\setminus} V_v$ with a topology of an annulus. We then write:
	\begin{align}
	\iota_\omega(\varphi_+) = \varphi_+ - \nabla_{\omega} \Big( \sum_{v} h_{v} \psi_v \Big),
	\end{align}
	which by construction is cohomologous to $\varphi_+$. Here $\psi_v$ denotes the unique holomorphic solution of $\nabla_\omega \psi_v = \varphi_+$ in the small neighbourhood $U_v$ of the $v$-th boundary. After acting with $\nabla_\omega$ on the terms in the bracket we find:
	\be\label{compact-form}
	\iota_\omega(\varphi_+) = \Big(1-\sum_v h_v \Big) \varphi_+ - \sum_v dh_v \psi_v.
	\ee
	Clearly both terms vanish in neighbourhoods $V_v$ for every $v$ and hence the form $\iota_\omega(\varphi_+)$ has compact support. Let us evaluate $\psi_v$ for a logarithmic form $\varphi_+$. Denoting the local coordinate around the $v$-th boundary by $u_I$, the differential equation $\nabla_\omega \psi_v = \varphi_+$ becomes to leading order:
	\be
	d\psi_v + \frac{1}{\Lambda^2} \left( p_I^2\, \frac{du_{I}}{u_{I}} + \ldots \right) \wedge \psi_{v} = \Res_{v} \left(\varphi_+\right) \frac{du_{I}}{u_{I}} + \ldots,
	\ee
	where the ellipsis denotes terms subleading in $u_{I}$, and the residue is around $u_{I}=0$. Hence we find:
	\be
	\psi_{v} = \Lambda^2\, \frac{\Res_{v} \left(\varphi_+\right)}{p_I^2} + \ldots
	\ee
	with subleading terms of order ${\cal O}(u_{I})$. Let us plug in \eqref{compact-form} into the Definition~\ref{definition-intersection-number}:
	\be
	\braket{ \varphi_- | \varphi_+ }_\omega = -\frac{1}{2\pi i \Lambda^2} \int_{\M_{0,4}} \varphi_- \wedge \left( \Big(1-\sum_v h_v \Big) \varphi_+ - \sum_v dh_v \psi_v\right)
	\ee
	The first term vanishes since it is a wedge of two holomorphic top-forms, while the second term has only support on the annuli $U_v {\setminus} V_v$, which yields:
	\be
	\braket{ \varphi_- | \varphi_+ }_\omega = -\frac{1}{2\pi i \Lambda^2} \sum_{v} \int_{U_v {\setminus} V_v} d \Big( \varphi_- \wedge h_v \psi_v \Big),
	\ee
	where we wrote it as a total differential since $d\varphi_- = d\psi_v = 0$. Hence we can use Stokes' theorem to localize on $\partial(U_v{\setminus} V_v) = \partial U_v - \partial V_v$, out of which only the component $\partial V_v$ contributes since $h_v(\partial V_v) =1$, while $h_v(\partial U_v) =0$. Given that $\varphi_-$ is logarithmic we find:
	\be
	\braket{ \varphi_- | \varphi_+ }_\omega = \, \sum_{v} \left( \frac{1}{2\pi i}\oint_{\partial V_v} \!\!\varphi_- \right) \frac{\Res_{v} \left(\varphi_+\right)}{p_I^2},
	\ee
	which, after writing the term in the brackets as $\Res_v \left( \varphi_- \right)$, is the required result for $n=4$.
	
	In the case of $\M_{0,5}$, we have one-dimensional boundaries $s \in \partial\M_{0,5}$, as well as zero-dimensional ones $v \in \partial\M_{0,5}$, which can be written as $v = s \cap \tilde{s}$ subject to the compatibility conditions described above. For each $s$ we introduce the regulating functions $h_s$ with a similar definition to those used in the $n{=}4$ case, where now $U_s$, $V_s$ are two-complex-dimensional neighbourhoods of $s$. We construct the following two-form cohomologous to $\varphi_+$: 
	\be\label{two-form-compact}
	\iota_\omega(\varphi_+) = \varphi_+ - \nabla_\omega \left( \sum_{\tilde{s}} h_{\tilde{s}} \Big( \sum_{s \prec \tilde{s}} dh_{s}\, \psi_{s\tilde{s}} + \prod_{s \prec \tilde{s}} (1-h_{s}) \, \psi_{\tilde{s}} \Big) \right),
	\ee
	where $\psi_s$ denotes the holomorphic one-form obtained by solving $\nabla_\omega \psi_s = \varphi_+$ in the small neighbourhood $U_s$ (and similarly for $\psi_{\tilde{s}}$), while $\psi_{s\tilde{s}}$ denotes the holomorphic function given by the solution of $\nabla_\omega \psi_{s\tilde{s}} = \psi_{s} - \psi_{\tilde{s}}$ on $U_s \cap U_{\tilde{s}}$. Here we made use of some auxiliary ordering of the hypersurfaces, written as $s \prec \tilde{s}$, in order to make the above expression more concise. Let us confirm that \eqref{two-form-compact} has compact support by first evaluating it on every $U_{\tilde{s}} {\setminus} \cup_{s \neq \tilde{s}} U_{s}$, where $h_{s} = dh_{s} = 0$ and hence:
	\be\label{compact-two-form-UmU}
	\iota_\omega(\varphi_+) = (1-h_{\tilde{s}}) \varphi_+ - dh_{\tilde{s}} \wedge \psi_{\tilde{s}},
	\ee
	which indeed vanishes on $V_{\tilde{s}} \setminus \cup_{s \neq \tilde{s}} U_{s}$. Moreover, \eqref{compact-two-form-UmU} does not contribute to the intersection number due to the presence of the holomorphic forms $\varphi_+$ and $\psi_{\tilde{s}}$ which vanish after wedging with $\varphi_-$ in the Definition~\ref{definition-intersection-number}. Since a similar thing happens for $\M_{0,5} {\setminus} \cup_{s} U_s$, where $\iota_{\omega}(\varphi_+) = \varphi_+$, it remains to check the regions $U_s \cap U_{\tilde{s}}$ for $s \neq \tilde{s}$. There \eqref{two-form-compact} reduces to
	\be\label{compact-two-form-UU}
	\iota_\omega(\varphi_+) = (1-h_s)(1-h_{\tilde{s}}) \varphi_+ - (1-h_{\tilde{s}}) dh_s \wedge \psi_s - (1-h_s) dh_{\tilde{s}} \wedge \psi_{\tilde{s}} + dh_{s}\wedge dh_{\tilde{s}}\, \psi_{s\tilde{s}}.
	\ee
	On $V_s \cap V_{\tilde{s}}$ all terms vanish and therefore the form \eqref{two-form-compact} has compact support. The final term in \eqref{compact-two-form-UU} is the only one not containing holomorphic contributions, and hence the intersection number localizes on the product of annuli $(U_s {\setminus} V_s) \wedge (U_{\tilde{s}} {\setminus} V_{\tilde{s}})$ around each vertex $v = s \cap \tilde{s} \in \partial \M_{0,5}$. Therefore we need to evaluate $\psi_s$, $\psi_{\tilde{s}}$, and $\psi_{s\tilde{s}}$. As before, using local coordinates $u_I$ and $u_{\tilde{I}}$ around $s$ and $\tilde{s}$ respectively, we obtain:
	\be
	\psi_{s} = \Lambda^2\, \frac{\Res_{s} \left(\varphi_+\right)}{p_I^2}  + \ldots, \qquad
	\psi_{\tilde{s}} = \Lambda^2\, \frac{\Res_{\tilde{s}} \left(\varphi_+\right)}{p_{\tilde{I}}^2} + \ldots.
	\ee
	Note that they are both one-forms, say in $du_{\tilde{I}}$ and $du_{I}$ respectively. In order to solve $\nabla_{\omega} \psi_{s\tilde{s}} = \psi_s - \psi_{\tilde{s}}$ we expand both sides around $(u_I, u_{\tilde{I}})=(0,0)$, giving
	\be
	d\psi_{s\tilde{s}} + \frac{1}{\Lambda^2} \left( p_I^2 \frac{du_I}{u_I} + p_{\tilde{I}}^2 \frac{du_{\tilde{I}}}{u_{\tilde{I}}} + \ldots \right) \wedge \psi_{s\tilde{s}} =  \Lambda^2\, \frac{\Res_{v} \left(\varphi_+\right)}{p_{\tilde{I}}^2} \frac{du_I}{u_I} + \Lambda^2\, \frac{\Res_{v} \left(\varphi_+\right)}{p_I^2} \frac{du_{\tilde{I}}}{u_{\tilde{I}}} + \ldots,
	\ee
	where the residue $\Res_{v}(\varphi_+)$ denotes $\Res_{\tilde{s}} ( \Res_{s} (\varphi_+)) = -\Res_{s} ( \Res_{\tilde{s}} (\varphi_+))$ in the Leray sense. Thus, we obtain the solution
	\be
	\psi_{s\tilde{s}} = \Lambda^4 \frac{\Res_v (\varphi_+)}{p_I^2\, p_{\tilde{I}}^2}.
	\ee
	Recall that the compatibility condition for $s \cap \tilde{s}$ to be a vertex is that the labels in $I$ and $\tilde{I}$ are either disjoint or one is contained within the other. Finally, we can plug all the ingredients into the Definition~\ref{definition-intersection-number},
	\begin{align}
	\braket{ \varphi_- | \varphi_+ }_\omega &= \frac{1}{(2\pi i \Lambda^2)^2} \sum_{v = s \cap \tilde{s}} \int_{(U_s {\setminus} V_s) \wedge (U_{\tilde{s}} {\setminus} V_{\tilde{s}})} \!\!\!\!\! \varphi_- \wedge dh_{s}\wedge dh_{\tilde{s}}\, \psi_{s\tilde{s}}\nn\\
	&=  \sum_{v = s \cap \tilde{s}} \left( \frac{1}{(2\pi i)^2} \oint_{\partial V_{\tilde{s}}} \oint_{\partial V_{s}} \varphi_- \right) \frac{\Res_v (\varphi_+)}{p_I^2\, p_{\tilde{I}}^2},
	\end{align}
	where we used the same arguments as in the $n{=}4$ case to rewrite the first line as a total derivative and localize on $\partial V_{\tilde{s}} \wedge \partial V_s$ after using Stokes' theorem twice. The term in the brackets equals to $\Res_v (\varphi_-)$ and hence we obtain the required result for $n{=}5$.
	
	The proof strategy in the general-$n$ case is entirely analogous. One starts by constructing the form $\iota_\omega(\varphi_+)$ cohomologous to $\varphi_+$, but has compact support near all components of the boundary divisor $\partial \M_{0,n}$. The explicit expression for $\iota_\omega(\varphi_+)$ can be found in \cite{cho1995} and we will not repeat it here. It suffices to extract its non-holomorphic component, which, as before, has support only near the vertices $v = H_1 \cap H_2 \cap \cdots \cap H_{n-3} \in \partial \M_{0,n}$, given by the $(n{-}3)$-annuli $(U_{H_1} {\setminus} V_{H_1}) \wedge (U_{H_2} {\setminus} V_{H_2}) \wedge \cdots \wedge (U_{H_{n-3}} {\setminus} V_{H_{n-3}})$, where $H_a$ denote the hypersurfaces of $\partial \M_{0,n}$. We find
	\be
	\iota_\omega(\varphi_+) = dh_{H_1} \wedge dh_{H_2} \wedge \cdots \wedge dh_{H_{n-3}}\, \psi_{H_1 H_2 \cdots H_{n-3}} + \ldots,
	\ee
	where the ellipsis denotes holomorphic contributions. In dihedral coordinates $(u_{I_1^v}, u_{I_2^v}, \ldots,\allowbreak u_{I_{n-3}^v})$ around $v$ the function $\psi_{H_1 H_2 \cdots H_{n-3}}$ becomes to leading order
	\be
	\psi_{H_1 H_2 \cdots H_{n-3}} = \Lambda^{2(n-3)} \frac{\Res_v (\varphi_+)}{\prod_{a=1}^{n-3} p_{I_{a}^v}^2} + \ldots.
	\ee
	Recall that a compatibility condition requires that all pairs of labels in $\{I_a^v\}_{a=1}^{n-3}$ are either disjoint or one is contained within the other. After using the Definition~\ref{definition-intersection-number}, the result localizes on the $(n{-}3)$-tori around each vertex $v \in \partial \M_{0,n}$:
	\be
	\braket{ \varphi_- | \varphi_+ }_{\omega} = \sum_{v = H_1 \cap H_2 \cap \cdots \cap H_{n-3}} \left(\frac{1}{(2\pi i)^{n-3}} \oint_{\partial V_{H_{n-3}}} \!\!\!\!\!\!\!\!\!\! \cdots \; \oint_{\partial V_{H_{2}}} \oint_{\partial V_{H_{1}}} \varphi_- \right) \frac{\Res_v (\varphi_+)}{\prod_{a=1}^{n-3} p_{I_{a}^v}^2}.
	\ee
	This is the formula \eqref{intersection-logarithmic}.
\end{proof}

It is expected that a similar derivation can be made for intersection numbers of non-logarithmic forms, though more care is required in ensuring that the Laurent expansion of $\nabla^{-1}_\omega \varphi_+$ near each $H_a$ is valid everywhere along this hypersurface, see, e.g., \cite{Frellesvig:2019kgj} for some examples. 

Notice that for $n=4$ as a consequence of $\varphi_\pm$ not having poles in places other than $\partial\M_{0,4} = \{z_1, z_3, z_4\}$ in the standard $\SL(2,\C)$ fixing, the residues in the numerators of \eqref{intersection-logarithmic} corresponding to the $(p_1{+}p_2)^2$-, $(p_2{+}p_3)^2$-, and $(p_2{+}p_4)^2$-channels add up to zero:
\be
\Res_{z_2 = z_1} (\varphi_\pm) + \Res_{z_2 = z_3} (\varphi_\pm) + \Res_{z_2 = z_4} (\varphi_\pm) = 0.
\ee
This is the moduli space interpretation of the relations found by Bern, Carrasco, and Johansson (BCJ) called the \emph{kinematic Jacobi relations} \cite{Bern:2008qj}. It would be interesting to investigate the interplay between the formula \eqref{intersection-logarithmic} and BCJ relations further.

There are two important features of the intersection numbers \eqref{intersection-logarithmic}: they are symmetric under the exchange $\varphi_- \leftrightarrow \varphi_+$, and are independent of $\Lambda$ (assuming $\varphi_\pm$ are). This straight away gives us the following result.

\begin{theorem}\label{theorem-22}
	For logarithmic twisted forms $\varphi_+$ and $\vartheta_+$ the low-energy theory limit of open-string amplitudes is given by
	\be\label{proof-open}
	\lim_{\alpha' \to 0} (\alpha')^{n-3} \int_{\Delta(\alpha)} \!\!\! \KN\, \varphi_+ \;=\; \la \PT(\alpha) | \varphi_+ \ra_\omega,
	\ee
	while the low-energy limit of closed-string amplitudes is given by
	\be\label{proof-closed}
	\lim_{\alpha' \to 0} \left(\frac{-\alpha'}{2\pi i}\right)^{\!n-3}\!\! \int_{\M_{0,n}} \!\!\!\!\! |\KN|^2\; \varphi_+ \wedge \overbar{\vartheta_+} \;=\; \la \vartheta_+ | \varphi_+ \ra_\omega,
	\ee
	up to overall signs. Intersection numbers of such twisted forms coincide with the Cachazo--He--Yuan formula,
	\be\label{proof-CHY}
	\left(-\alpha'\right)^{n-3} \Res_{\omega = 0} \left( \frac{\vartheta_+\, \widehat{\varphi}_+}{\prod_{i=2}^{n-2} \omega_i} \right) \;=\; \la \vartheta_+ | \varphi_+ \ra_\omega
	\ee
	assuming that $\varphi_+$ and $\vartheta_+$ have homogeneous scaling in $\alpha'$.
\end{theorem}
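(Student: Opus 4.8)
The plan is to read Theorem~\ref{theorem-22} as a direct corollary of the two structural features of the logarithmic intersection formula just established in Theorem~\ref{theorem-21}: that $\la \varphi_- | \varphi_+ \ra_\omega$ is \emph{symmetric} under $\varphi_-\leftrightarrow\varphi_+$ and is \emph{independent of} $\Lambda$ (equivalently of $\alpha'=1/\Lambda^2$) when both forms are logarithmic. The common mechanism is that the three earlier asymptotic statements \eqref{open-string-limit}, \eqref{closed-string-limit}, and \eqref{CHY-formula} all express a low-energy limit in terms of intersection numbers; once those intersection numbers are known to carry no $\alpha'$-dependence, the outer $\alpha'\to0$ limit acts trivially and each limiting relation is promoted to an exact equality.

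For \eqref{proof-open} I would start from \eqref{open-string-limit}, which already identifies $\lim_{\alpha'\to0}(\alpha')^{n-3}\!\int_{\Delta(\alpha)}\KN\,\varphi_+$ with $\lim_{\alpha'\to0}\la \PT(\alpha)\,|\,\varphi_+\ra_\omega$. Since $\PT(\alpha)$ is logarithmic and $\varphi_+$ is logarithmic by hypothesis, Theorem~\ref{theorem-21} applies to this pairing and shows it is $\Lambda$-independent; the remaining limit is therefore redundant and collapses the right-hand side to $\la \PT(\alpha)\,|\,\varphi_+\ra_\omega$, as claimed.

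For \eqref{proof-closed} the input is \eqref{closed-string-limit}, whose right-hand side is the sum $\sum_{\widehat\alpha}\la \PT(1,\widehat\alpha,n{-}1,n)\,|\,\varphi_+\ra_\omega\,\la \PT(1,\widehat\alpha,n{-}1,n)^\vee\,|\,\vartheta_+\ra_\omega$ that, as noted there, does \emph{not} collapse under Proposition~\ref{proposition} in general, precisely because both Parke--Taylor representatives sit on the bra side of their pairings. The key observation is that logarithmicity removes this obstruction: applying the symmetry of Theorem~\ref{theorem-21} to the first factor (legitimate since both $\PT(1,\widehat\alpha,n{-}1,n)$ and $\varphi_+$ are logarithmic) rewrites it as $\la \varphi_+\,|\,\PT(1,\widehat\alpha,n{-}1,n)\ra_\omega$, moving the Parke--Taylor form onto the ket side. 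The sum then becomes a genuine insertion of the identity $\I=\sum_{\widehat\alpha}|\PT(1,\widehat\alpha,n{-}1,n)\ra\,\la \PT(1,\widehat\alpha,n{-}1,n)^\vee|$, expanded in the dual orthonormal bases of \eqref{intersection-PT-PTvee}, so Proposition~\ref{proposition} collapses it to $\la \varphi_+\,|\,\vartheta_+\ra_\omega$; one final use of the symmetry returns $\la \vartheta_+\,|\,\varphi_+\ra_\omega$, which is $\Lambda$-independent and hence unaffected by the outer limit.

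For \eqref{proof-CHY} I would specialize the saddle-point formula \eqref{CHY-formula} to $\varphi_-=\vartheta_+$ and convert $1/(-\Lambda^2)^{n-3}=(-\alpha')^{n-3}$, so that its left-hand side reads $\lim_{\Lambda\to0}\la \vartheta_+\,|\,\varphi_+\ra_\omega$. Theorem~\ref{theorem-21} makes this intersection number $\Lambda$-independent, so the limit drops; the homogeneous-scaling assumption on $\varphi_+,\vartheta_+$ then lets the remaining $\Lambda\to0$ inside the residue be taken trivially, yielding the exact equality $(-\alpha')^{n-3}\Res_{\omega=0}\!\big(\vartheta_+\,\widehat{\varphi}_+/\prod_{i}\omega_i\big)=\la \vartheta_+\,|\,\varphi_+\ra_\omega$. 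I expect the only genuinely delicate step to be the closed-string collapse: one must check carefully that the symmetry of Theorem~\ref{theorem-21} is applicable factor-by-factor—i.e.\ that each $\la \PT(1,\widehat\alpha,n{-}1,n)\,|\,\varphi_+\ra_\omega$ really pairs two logarithmic forms—and that \eqref{intersection-PT-PTvee} supplies the orthonormality in exactly the form Proposition~\ref{proposition} requires, since this is the very step the general theory forbids and which only becomes available in the logarithmic sector. The advertised signs should then be tracked through the normalizations of \eqref{intersection-number-symmetry} and the saddle-point evaluation.
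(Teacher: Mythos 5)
Your proposal is correct and follows essentially the same route as the paper's own proof: \eqref{proof-open} from \eqref{open-string-limit} plus the $\Lambda$-independence of logarithmic intersection numbers (Theorem~\ref{theorem-21}), \eqref{proof-closed} from \eqref{closed-string-limit} by using the logarithmic symmetry of the pairing to move a Parke--Taylor form to the ket side so that Proposition~\ref{proposition} collapses the sum, and \eqref{proof-CHY} from \eqref{CHY-formula} together with $\alpha'$-homogeneity. The only difference is cosmetic—you flip the factor $\la \PT(1,\widehat\alpha,n{-}1,n)\,|\,\varphi_+\ra_\omega$ while the paper flips $\la \PT(1,\widehat\alpha,n{-}1,n)^\vee\,|\,\vartheta_+\ra_\omega$—which leads to the same collapse via the orthonormality \eqref{intersection-PT-PTvee}.
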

\begin{proof}
	We identify $\alpha' = 1/\Lambda^2$. Around each codimension-one boundary of $\M_{0,n}$ the integrals on the left-hand sides of \eqref{proof-open} and \eqref{proof-closed} behave as $\sim 1/\alpha'$, and hence to leading order in $\alpha' \to 0$ the integrals diverge as $\sim 1/(\alpha')^{n-3}$, which after normalizing by $(\alpha')^{n-3}$ is the same scaling as that of the intersection number of twisted forms. Therefore \eqref{open-string-limit} implies \eqref{proof-open}, since $\PT(\alpha)$ is logarithmic. Using \eqref{closed-string-limit}, the symmetry of the intersection pairing, and the fact that $\text{PT}(\alpha)^\vee$ is logarithmic we have:
	\be
	\lim_{\alpha' \to 0} \left(\frac{-\alpha'}{2\pi i}\right)^{\!n-3} \!\!\int_{\M_{0,n}} \!\!\!\!\! |\KN|^2\; \varphi_+ \wedge \overbar{\vartheta_+} = \sum_{\widehat{\alpha}} \la \vartheta_+ | \PT(1,\widehat{\alpha},n{-}1,n)^\vee \ra_\omega\, \la \PT(1,\widehat{\alpha},n{-}1,n) \,|\, \varphi_+ \ra_\omega,
	\ee
	which after using the Proposition~\ref{proposition} with $U{=}W{=}H^{n-3}_{-\omega}$ and $V{=}X{=}H^{n-3}_{\omega}$ implies \eqref{proof-closed}. Finally, \eqref{proof-CHY} follows from \eqref{CHY-formula} and the $\alpha'$-homogeneity property.
\end{proof}

One can also give proofs of the above statements by direct computation of string theory integrals in the $\alpha' \to 0$ limit along the lines of those given in the Appendix~\ref{app:punctures}.\footnote{An alternative---but equivalent---proof of the equality between the low-energy limit of closed-string amplitudes, the CHY formalism, and the intersection numbers in the logarithmic case was given in \cite{Mizera:2017rqa} and can be summarized as follows.
	Consider the homologically-split form of the closed-string amplitudes from \eqref{homological-factorization} with logarithmic forms $\varphi_+$ and $\vartheta_+$. Notice that exchanging $\overbar{\KN} \overbar{\vartheta_+}$ for $\KN^{-1} \vartheta_+$ in the open-string integrals only affects the global sign of the leading order ${\cal O}((\alpha')^{3-n})$ by $(-1)^{n-3}$, but not its value (it can, and does affect the subleading orders). Since monodromies of ${\cal L}_{\overbar{\omega}}$ are also that of ${\cal L}_{-\omega}$ for real-valued kinematics, the intersection matrix $\mathbf{H}_{\alpha\beta}$ remains unchanged. The result is simply the homologically-split form of the intersection number $\la \vartheta_{+} | \varphi_+ \ra_\omega$, which by the $\alpha'$-independence property proves \eqref{proof-closed} and by extension also \eqref{proof-CHY}.} 

The results of Theorem~\ref{theorem-22} imply that in logarithmic cases the low-energy behaviour of string amplitudes is governed by the critical points of the Morse function. It should not be confused with the high-energy limit, which also localizes on the critical points but is exponentially suppressed and depends on the direction in which such limit is taken \cite{Gross:1987ar,Gross:1987kza}.

The above results by no means imply that the low-energy and massless limits \emph{only} agree in the logarithmic cases. In fact, in Section~\ref{sec:Yang-Mills-and-gravity} we will discuss examples of twisted forms that despite being non-logarithmic give rise to intersection numbers homogeneous in $\alpha'$, thus extending the result of the above theorem. As is already clear from dimensional analysis, when intersection numbers become non-homogeneous in $\alpha'$ the relationship between low-energy limit of strings amplitudes and the CHY formalism breaks down. However, following the discussion at the beginning of this section one can always express twisted forms in terms of logarithmic ones first and then apply the above theorem after stripping away non-homogeneous terms.

According to the explicit formula \eqref{intersection-logarithmic}, intersection numbers of logarithmic forms have a simple geometric interpretation as a sum over trivalent graphs represented as vertices in the moduli space. In the special case when $\varphi_\pm$ have unit residues (up to a sign), the intersection number can be determined purely combinatorially. 
For instance, as is evident from their definition, the massless Parke--Taylor forms $\PT(\alpha)$ have $\pm 1$ residues only on the vertices of $\partial\M_{0,n}$ corresponding to planar diagrams with respect to the same permutation $\alpha$ (see, e.g., \cite{Skinner:2010cz,Mizera:2017cqs} for their explicit logarithmic representation). This fact immediately implies that
\be\label{PT-PT}
\braket{ \PT(\alpha) | \PT(\beta) }_\omega \;=\; (-1)^{w(\alpha|\beta)+1}\!\!\!\!\! \sum_{{\cal T} \in {\cal G}(\alpha) \cap {\cal G}(\beta)} \frac{1}{\prod_{e \in {\cal T}}  p_e^2},
\ee
where ${\cal G}(\alpha)$ is the set of all trivalent trees planar with respect to $\alpha$ and the product in the denominator goes over kinematic variables $\{p_e^2\}_{e\in {\cal T}}$ associated to each edge $e$ of the tree ${\cal T}$. The overall sign can be fixed by a more careful computations and it turns out to depend on the relative winding number $w(\alpha|\beta)$ between the two permutations \cite{Cachazo:2013iea,Mizera:2016jhj}. The intersection numbers \eqref{PT-PT} compute colour-ordered amplitudes in the so-called \emph{bi-adjoint} scalar theory \cite{Cachazo:2013iea}.

In order to illustrate how to use the formula \eqref{intersection-logarithmic} in practice, let us finish this subsection by proving that $\PT(1,\alpha,n{-}1,n)$ and $\PT(1,\beta,n{-}1,n)^\vee$ are orthonormal with respect to the intersection pairing at low multiplicity.

\begin{example}
	For $n=4$ we consider the twisted forms:
	\be
	\PT(1234) = \frac{d\mu_4}{z_{12} z_{23} z_{34} z_{41}}, \qquad \PT(1234)^\vee = -\frac{(p_1{+}p_2)^2\, d\mu_4}{z_{13} z_{34} z_{42} z_{21} }.
	\ee
	Their orthonormality already follows from \eqref{PT-1243-1234}. However, let us perform this computation using the prescription from \eqref{intersection-logarithmic} instead. Given that $\PT(1234)$ is planar, it is sufficient to consider $\delta=(1234)$ with $\chi_{(1234)} = \{\{1,3\}, \{2,4\}\}$. The corresponding cross-ratios are:
	\be
	u_{13} = \frac{z_{14}z_{23}}{z_{13}z_{24}}, \qquad u_{24} = \frac{z_{21}z_{34}}{z_{24}z_{31}},
	\ee
	which are responsible for the two planar channels: $(p_2{+}p_3)^2$ and $(p_1{+}p_2)^2$ respectively since
	\be
	\omega = \frac{1}{\Lambda^{2}} \Big( (p_2{+}p_3)^2 d\log u_{13} + (p_1{+}p_2)^2 d\log u_{24} \Big).
	\ee
	Using \eqref{crossing-identities} with $(p,q,r,s)=(1,2,3,4)$ they satisfy the constraint $u_{13} + u_{24} = 1$. We can express $\PT(1234)$ in two ways in terms of these dihedral coordinates:
	\be
	\PT(1234) = -d\log \frac{u_{13}}{u_{13}{-}1} = d\log \frac{u_{24}}{u_{24}{-}1},
	\ee
	where we also used the fact that $d\mu_4 = z_{13}z_{34}z_{14}\, dz_2$. Similarly for $\PT(1234)^\vee$ we have:
	\be
	\PT(1234)^\vee = (p_1{+}p_2)^2\, d\log (u_{13}{-}1) = (p_1{+}p_2)^2 d\log u_{24}.
	\ee
	Therefore the only place where the two forms have non-vanishing residue at the same time is $u_{24}=0$ (recall that we ignored the non-planar channel with respect to the ordering $\delta=(1234)$ since $\PT(1234)$ has a zero residue there):
	\be
	\Res_{u_{24}=0} \Big(\PT(1234)\Big) = 1, \qquad \Res_{u_{24}=0} \Big(\PT(1234)^\vee \Big) = (p_1{+}p_2)^2.
	\ee
	Hence out of the three terms in the sum \eqref{intersection-logarithmic} only one contributes and we find:
	\be
	\braket{\PT(1234)^\vee | \PT(1234)}_\omega = \frac{(p_1 {+} p_2)^2}{(p_1 {+} p_2)^2} = 1.
	\ee
	Therefore the two twisted forms create orthonormal bases of $H^{1}_{\pm\omega}$.
\end{example}

\begin{example}
	For $n=5$ the twisted cohomology groups are two-dimensional and we consider the following basis of $H^{2}_\omega$:
	\be\label{PT-5}
	\PT(12345) = \frac{d\mu_5}{z_{12}z_{23}z_{34}z_{45}z_{51}}, \qquad \PT(13245) = \frac{d\mu_5}{z_{13}z_{32}z_{24}z_{45}z_{51}},
	\ee
	as well as the dual basis of $H^{2}_{-\omega}$, which using the notation $s_{ij} := (p_i {+} p_j)^2$ is given by:
	\begin{gather}
	\PT(12345)^\vee = \frac{ s_{12}\, d\mu_5}{z_{14} z_{45} z_{52} z_{21}} \left( s_{13} \frac{z_{15}}{z_{13}z_{35}} + s_{23} \frac{z_{25}}{z_{23}z_{35}}\right),\\
	\PT(13245)^\vee = \frac{ s_{13}\, d\mu_5}{z_{14} z_{45} z_{53} z_{31}} \left( s_{12} \frac{z_{15}}{z_{12}z_{25}} + s_{23} \frac{z_{35}}{z_{32}z_{25}} \right).
	\end{gather}
	Let us exploit the fact that \eqref{PT-5} are planar with respect to two different orderings, as well as their symmetry under the exchange $2 \leftrightarrow 3$. We start by computing the two intersection numbers $\la \PT(1\widehat{\alpha}45)^\vee | \PT(12345) \ra_\omega$ with $\widehat{\alpha} \in \{(23), (32)\}$ using $\delta = (12345)$, which has the set of chords $\chi_{(12345)} = \{\{1,3\}, \{1,4\}, \{2,4\}, \{2,5\}, \{3,5\}\}$. The cross-ratios $u_{ij}$ are subject to relations $u_{13} + u_{24} u_{25} = 1$ and its cyclic relabellings. The local system is defined by the one-form:
	\be
	\omega = \frac{1}{\Lambda^2} \Big( s_{23}\, d\log u_{13} + s_{51}\, d\log u_{14} + s_{34}\, d\log u_{24} + s_{12} \, d\log u_{25} + s_{45}\, d\log u_{35} \Big),
	\ee
	where we used momentum conservation. There are five triangulations of a pentagon with non-crossing pairs of chords from $\chi_{(12345)}$, which are in one-to-one map with pairs of compatible factorization channels. Let us start by considering the two sets of dihedral coordinates $(u_{13}, u_{35})$ and $(u_{25}, u_{35})$, in terms of which we have:
	\be
	\PT(12345) = - d\log \frac{u_{13}}{u_{13}{-}1} \wedge d\log \frac{u_{35}}{u_{35}{-}1} = d\log \frac{u_{25}}{u_{25}{-}1} \wedge d\log \frac{u_{35}}{u_{35}{-}1},
	\ee
	while for the dual basis:
	\begin{align}
	\PT(12345)^\vee &= -s_{12} \left( s_{23}\, d\log \frac{u_{13}}{u_{13}{-}1} + s_{13}\, d\log \frac{u_{13}u_{35}{-}1}{u_{13}{-}1} \right) \wedge d\log u_{35} \nn\\
	&= s_{12} \left( s_{23}\, d\log \frac{u_{25}}{u_{25}{-}1} + s_{13}\, d\log u_{25} \right) \wedge d\log u_{35}
	\end{align}
	and
	\begin{align}
	\PT(13245)^\vee &= -s_{13} \left( s_{12}\, d\log \frac{u_{13}u_{35}{-}1}{u_{13}{-}1} + s_{23}\, d\log \frac{u_{13}u_{35}{-}1}{u_{13}} \right) \wedge d\log u_{35}\nn\\
	&= s_{13} \Big( s_{12}\, d\log u_{25} + s_{23}\, d\log (u_{25}{-}1)\Big) \wedge d\log u_{35}.
	\end{align}
	Evaluating residues around the origin of $(u_{13},u_{35})$ we obtain:
	\begin{gather}
	\Res_{u_{13}=0} \Res_{u_{35}=0} \Big(
	\PT(12345)
	\Big) = -1,\\
	\Res_{u_{13}=0} \Res_{u_{35}=0} \Big(
	\PT(12345)^\vee
	\Big) = -s_{12}s_{23},\qquad \Res_{u_{13}=0} \Res_{u_{35}=0} \Big(
	\PT(13245)^\vee
	\Big) = s_{13}s_{23}.\nn
	\end{gather}
	Similarly, around the other vertex parametrized by $(u_{25}, u_{35})$ we find
	\begin{gather}
	\Res_{u_{25}=0} \Res_{u_{35}=0} \Big(
	\PT(12345)
	\Big)
	= 1,\\
	\Res_{u_{25}=0} \Res_{u_{35}=0} \!\Big(
	\PT(12345)^\vee
	\Big)
	= s_{12}(s_{13}{+}s_{23}),\;\; \Res_{u_{25}=0} \Res_{u_{35}=0} \!\Big(
	\PT(13245)^\vee
	\Big)
	= s_{12}s_{13}.\nn
	\end{gather}
	Repeating similar exercises for the remaining vertices reveals that for each of them the product of two residues is always zero. Thus there are only two non-zero contributions to the sum in \eqref{intersection-logarithmic} giving
	\begin{gather}
	\la \PT(12345)^\vee | \PT(12345) \ra_\omega = \frac{s_{12}s_{23}}{s_{23}s_{45}} + \frac{s_{12} ( s_{13} {+} s_{23} )}{s_{12}s_{45}} = 1,\\
	\la \PT(13245)^\vee | \PT(12345) \ra_\omega = -\frac{s_{13}s_{23} }{s_{23}s_{45}} + \frac{s_{12}s_{13}}{s_{12}s_{45}} = 0,
	\end{gather}
	where in the first line we used that $s_{12}{+}s_{13}{+}s_{23} = s_{45}$ by momentum conservation. Exchanging $2\leftrightarrow 3$ gives us the remaining two intersection numbers. This leads to the final result:
	\be
	\Braket{ \left(\begin{array}{c}
			\PT(12345)^\vee \\
			\PT(13245)^\vee \\
		\end{array}\right) | \left(\begin{array}{c}
			\PT(12345) \\
			\PT(13245) \\
		\end{array}\right)^{\!\!\intercal} }_{\!\omega} = \left(\begin{array}{cc}
		1 & 0 \\
		0 & 1 \\
	\end{array}\right),
	\ee
	which shows that the two bases of $H^2_{\pm\omega}$ are orthonormal.
\end{example}

\subsection{\label{sec:shift-relations}Comment on Shift Relations for Local Systems}

\textsc{Up until now}, intersection numbers corresponding to different external masses were considered separately, as they are computed with distinct local systems.
We close this section by commenting on shift relations between different local systems that allow us to relate them to each other. Let us consider a local system ${\cal L}_{\omega'}$ defined by the following one-form:
\be
\omega' \,:=\, \omega \,+ \!\!\! \sum_{1 \leq i<j \leq n} \!\!\! n_{ij}\, d\log(z_i - z_j)
\ee
for any $n_{ij} = n_{ji} \in \Z$. Setting $n_{ii}=0$ we have $n(n{-}1)/2$ such integers. We can introduce shifted differential forms:
\be\label{form-shift}
\varphi_\pm' \,:=\, \varphi_\pm \!\!\! \prod_{1\leq i<j \leq n} (z_i - z_j)^{\mp n_{ij}},
\ee
which remain single-valued on $\M_{0,n}$ and hence are elements of $H^{n-3}_{\pm\omega'}$. With these assignments the intersection number of the primed and unprimed twisted forms coincide:
\be\label{shift-relation}
\la \varphi_- | \varphi_+ \ra_{\omega} = \la \varphi_-' | \varphi_+' \ra_{\omega'},
\ee
as is straightforward to show, for example from the representation \eqref{intersection-steepest}.

We can interpret the local system ${\cal L}_{\omega'}$ as that corresponding to shifted kinematics, i.e., $2 p'_i {\cdot} p'_j = 2p_i {\cdot} p_j + \Lambda^2 n_{ij}$.\footnote{We thank F.~Cachazo for this observation.} From here it follows that masses $m_i'{}^{2} := - p_i'{}^2$ of the primed momentum vectors are given by 
\be
m'_i{}^2 = m_i^2 + \frac{\Lambda^2}{2} \sum_{j \neq i} n_{ij}.
\ee
Thus by choosing $n_{ij}$'s appropriately we can map computations with different external masses to each other. A particularly useful application is to map all computations of intersection numbers to the massless case, as it allows to exploit simplifications entailed by logarithmicity of differential forms.

For instance, let us illustrate how to evaluate $\braket{ d\mu_n | \PT(\alpha|\beta) }_\omega$ when all external masses are equal to $m_i = \Lambda$ using the result of the previous subsection. We can map this problem into the massless one, $m_i'{}^2=0$ by choosing $n_{ij} = -1$ when $i$ and $j$ are adjacent, say in the cyclic ordering $\alpha$, and $n_{ij} = 0$ otherwise. With this assignments we find that every propagator appearing in the planar ordering $\alpha$ gets shifted from massive in $\omega$ to massless in $\omega'$:
\be
\big(p_{\alpha(i)} + p_{\alpha(i+1)} + \cdots + p_{\alpha(j)}\big)^2 + \Lambda^2 = \big(p'_{\alpha(i)} + p'_{\alpha(i+1)} + \cdots + p'_{\alpha(j)}\big)^2.
\ee
Notice that according to \eqref{form-shift}, $\varphi_- = d\mu_n$ becomes $\varphi'_- = \PT(\alpha)$, while $\varphi_{+} = \PT(\alpha|\beta)$ becomes $\varphi'_{+} = \PT(\beta)$ in the massless case. Therefore we can use the shift relation \eqref{shift-relation} and the result of \eqref{PT-PT} to obtain
\be\label{massive-biadjoint}
\braket{ d\mu_n | \PT(\alpha|\beta) }_{\omega} \;=\; \braket{ \PT(\alpha) | \PT(\beta) }_{\omega'} \;=\; (-1)^{w(\alpha|\beta)+1}\!\!\!\!\!  \sum_{{\cal T} \in {\cal G}(\alpha) \cap {\cal G}(\beta)} \frac{1}{\prod_{e \in {\cal T}}  \left( p_e^2 + \Lambda^2 \right)},
\ee
which are colour-ordered amplitudes in the massive bi-adjoint scalar theory. We stress that the choice of $n_{ij}$ was auxiliary and the same result can be obtained by other means.

As another example, let us consider setting $\alpha=\beta$ and symmetrizing over all $(n{-}1)!/2$ cyclically-inequivalent permutations $\alpha$:
\be
\bigg< d\mu_n \,\bigg|\, \sum_{\alpha} \PT(\alpha|\alpha) \bigg>_{\!\!\omega} \;=\; 2^{n-3}\, \sum_{\cal T} \frac{1}{\prod_{e \in {\cal T}}  \left( p_e^2 + \Lambda^2 \right)},
\ee
where the right-hand side, obtained from \eqref{massive-biadjoint} and linearity of the intersection pairing, involves a sum over all trivalent trees $\cal T$. The result is the scattering amplitude in the massive cubic scalar theory.

\pagebreak
\section[Recursion Relations from Braid Matrices]{\label{sec:recursion-relations}Recursion Relations from Braid Matrices}

\textsc{In this section} we introduce recursion relations for intersection numbers. We start by explaining the fibre bundle structure of moduli spaces and how twisted cohomologies decompose on their fibres in a recursive way. This leads to a simple way of computing intersection numbers by ``integrating out'' puncture by puncture, which will become our main computational tool.

Since using the shift relations discussed in Section~\ref{sec:shift-relations} one can map any calculation to the one with all external masses vanishing, $m_i {=} 0$, in this section we only consider this case.

\subsection{\label{sec:fibration}Fibred Moduli Spaces and Twisted Cohomologies}

\textsc{Moduli spaces} of Riemann surfaces with punctures have natural fibration structure. For $\M_{0,n}$ it is obtained by the forgetful map
\be\label{M0n-fibration}
\pi: \, \M_{0,n} \to \M_{0,n-1},
\ee
which acts by removing one of the punctures, say $z_n$. For clarity of notation in this section we use the $\PSL(2,\C)$ quotient to fix the positions of $(z_1,z_2,z_3)$. For a generic point $b \in \M_{0,n-1}$ the fibre $\pi^{-1}(b)$ becomes
\be
\pi^{-1}(b) = \{ z_n \in \CP^1 \,|\, z_n \neq z_1, z_2, \ldots, z_{n-1} \},
\ee
that is a Riemann sphere with $n{-}1$ punctures removed. Its Euler characteristic equals to $\chi(\CP^1) - (n{-}1) = 3-n$. Since Euler characteristics on a fibre bundle multiply, we obtain the recursion relation
\be\label{Euler-characteristic-recursion}
\chi(\M_{0,n}) = (3-n)\, \chi(\M_{0,n-1}),
\ee
which, using the fact that $\M_{0,3}$ is a single point with $\chi(\M_{0,3}) = 1$, gives the well-known result $\chi(\M_{0,n}) = (-1)^{n-3}(n-3)!$, cf. \cite{Harer1986,penner1988}.

It is of course not the only fibration of $\M_{0,n}$ available. For example, one can consider maps $\M_{0,n} \to \M_{0,n-k}$ that forget $k \leq n{-}4$ points at the same time. In fact, as will become clear shortly, for our purposes it is the most convenient to consider the case $k = n{-}4$, in which the base space is one-dimensional,
\be
\pi_4:\, \M_{0,n} \,\to\, \Sigma_{4} := \{z_4 \in \CP^1 \,|\, z_4 \neq z_1, z_2, z_3\}
\ee
with the fibre $\pi^{-1}_4(z_4)$, which is the configuration space of of $n{-}4$ points $(z_5,z_6,\ldots,z_n)$ on a Riemann sphere with four points removed, $\CP^{1} {\setminus} \{z_1,z_2,z_3,z_4\}$. We will denote it by $\pi^{-1}_4(z_4) =: {\cal F}_{n,4}$.

Let us turn our attention to the structure of cohomologies with local coefficients on this fibre bundle, which can be studied using spectral sequences,  see, e.g., \cite{bott2013differential,fomenko2016homotopical}.\footnote{We thank D.~Fuchs, A.~Schwarz, and E.~Witten for suggesting this to us.} Specifically, we can understand $H^{n-3}(\M_{0,n}, {\cal L}_\omega)$ on the total space as a cohomology group on the base space $\Sigma_{4}$ with a new local system ${\cal H}_\omega$,
\be\label{cohomology-base}
H^{n-3}(\M_{0,n}, {\cal L}_\omega) \,\cong\, H^{1}(\Sigma_{4}, {\cal H}_{\omega}).
\ee
Here ${\cal H}_\omega$ is of higher rank, meaning that it is a higher-dimensional representation of the fundamental group of the base $\pi_1(\Sigma_4)$ acting on a fibre cohomology. We can write
\be
{\cal H}_\omega \,\cong \bigcup_{z_4 \in \Sigma_4}  H^{n-4}({\cal F}_{n,4},\, {\cal L}_{\imath_4^\ast \omega}),
\ee
where $\imath_4^\ast \omega$ is the pullback of $\omega$ by the inclusion map $\imath_4 \!: {\cal F}_{n,4} \hookrightarrow \M_{0,n}$. In other words, the right-hand side of \eqref{cohomology-base} is a cohomology on $\Sigma_4$ with coefficients in cohomologies on each fibre ${\cal F}_{n,4}$, which in turn have coefficients in the local system ${\cal L}_{\imath^\ast_4 \omega}$ inherited from the original ${\cal L}_\omega$. One can show that both cohomologies are concentrated in the dimension of $\Sigma_4$ and ${\cal F}_{n,4}$, i.e., $1$ and $n{-}4$ respectively. Repeating the arguments that led to \eqref{Euler-characteristic-recursion} we find $\chi({\cal F}_{n,4}) = (-1)^{n-4} (n-3)!$ and hence the dimension of $H^{n-4}({\cal F}_{n,4},\, {\cal L}_{\imath_4^\ast \omega})$ is $(n-3)!$.

In order to understand practical implications of the above result, let us turn to twisted differential forms. The above decomposition means that  $\varphi \otimes \exp\int_\gamma \omega \in H^{n-3}(\M_{0,n}, {\cal L}_\omega)$ can be written as the expansion
\be
\varphi \otimes \exp{\textstyle\int_{\gamma}} \,\omega = \sum_{a=1}^{(n-3)!} \sigma_a \wedge \left( f_a \otimes \exp {\textstyle\int_{\gamma}} \,\omega \right),
\ee
where $\sigma_a \in \Omega^1(\Sigma_4)$ are one-forms on $\Sigma_4$ that do not depend on variables $(z_5,z_6,\ldots,z_n)$, while $f_a \in \Omega^{n-4}({\cal F}_{n,4})$ with local coefficients $\exp \int_\gamma \omega$ form a basis of the fibre cohomology and can depend on $z_4$. The sum goes over the full basis and the path $\gamma \in \pi_1(\M_{0,n})$ is arbitrary.

Following the steps from \eqref{image-map} let us see how differentials acts on $\xi \otimes \exp{\textstyle\int}_{\!\gamma} \omega$ with $\xi \in \Omega^{n-4}(\M_{0,n})$. We first consider the derivative in the direction of the fibre, $d_{{\cal F}} := \sum_{i=5}^{n} dz_i \,\partial/\partial z_i$:
\begin{align}
d_{{\cal F}}\! \left( \xi \otimes \exp{\textstyle\int}_{\!\gamma} \omega \right) &= d_{{\cal F}} \bigg( \sum_{a=1}^{(n-3)!} \sigma_a \wedge \left( g_a \otimes \exp {\textstyle\int_{\gamma}} \,\omega \right) \bigg)\nn\\
&= \sum_{a=1}^{(n-3)!} \sigma_a \wedge \big( d g_a + \imath_4^\ast \omega \wedge g_a \big) \otimes \exp {\textstyle\int_{\gamma}} \,\omega,\label{differential-first}
\end{align}
where $g_a \in \Omega^{n-5}({\cal F}_{n,4})$. We used $d_{{\cal F}} \sigma_a {=} 0$, $d_{{\cal F}} g_a {=} dg_a$, and the fact that $d_{{\cal F}}\! \int_{\gamma} \omega = \imath^\ast_4 \omega$ selects only the components projected on ${\cal F}_{n,4}$. Thus, recognizing the term in the final bracket as a twisted differential acting on $g_a$, we find according to our expectations that the cohomology with local coefficients on the fibre can be understood as a twisted cohomology $H^{n-4}({\cal F}_{n,4}, \nabla_{\imath^\ast_4 \omega})$ with the connection $\nabla_{\imath^\ast_4 \omega} := d + \imath^\ast_4 \omega \wedge$. It is of exactly the same type as the one on the total space $\M_{0,n}$.

Let us turn to the derivative acting in the direction on the base space, $d_{\Sigma_4} := dz_4\, \partial / \partial z_4$:
\begin{align}
d_{\Sigma_4} \left( \xi \otimes \exp{\textstyle\int}_{\!\gamma} \omega \right) &= d_{\Sigma_4} \bigg( \sum_{a=1}^{(n-3)!} \tau_a \wedge \left( f_a \otimes \exp {\textstyle\int_{\gamma}} \,\omega \right) \bigg)\nn\\
&= \sum_{a=1}^{(n-3)!} \!\bigg( d\tau_a \wedge f_a \otimes \exp {\textstyle\int_{\gamma}} \,\omega + \tau_a \wedge d_{\Sigma_4}\! \left( f_a \otimes \exp {\textstyle\int_{\gamma}} \,\omega \right) \bigg),\label{base-space-connection}
\end{align}
where $\tau_a \in {\cal O}(\Sigma_4)$ is a function on $\Sigma_4$ and we used $d_{\Sigma_4} \tau_a = d\tau_a$. In contrast with \eqref{differential-first}, the second term in \eqref{base-space-connection} fails to factorize cleanly into base space and fibre contributions for two reasons: the term $d_{\Sigma_4} f_a$ does not necessarily vanish, and the one-form $d_{\Sigma_4} \int_\gamma \omega$ can still depend on the coordinates on ${\cal F}_{n,4}$. Both of these issues encode the fact that the fibre bundle is non-trivial and describe how $\Sigma_4$ and ${\cal F}_{n,4}$ are braided together. Nevertheless, since $\{f_a\}_{a=1}^{(n-3)!}$ forms a basis, there always exists an expansion of the form:
\be\label{omega-sigma}
d_{\Sigma_4} \!\left( f_a \otimes \exp {\textstyle\int_{\gamma}} \,\omega \right) = \sum_{b=1}^{(n-3)!}\, (\om^{+}_4)_{ab} \wedge f_b \otimes \exp {\textstyle\int_{\gamma}} \,\omega
\ee
for a $(n{-}3)! {\times} (n{-}3)!$ matrix-valued one-form $\om^{+}_{4}$ on $\Sigma_4$. We will construct it explicitly in the following subsections, and for the time being it suffices to know that it exists. Plugging \eqref{omega-sigma} into \eqref{base-space-connection} we find the factorized expression:
\be
d_{\Sigma_4} \left( \xi \otimes \exp{\textstyle\int}_{\!\gamma} \omega \right) = \sum_{a=1}^{(n-3)!} \Big(  d\tau_a + \sum_{b=1}^{(n-3)!} \!\! \tau_b \wedge (\om_4^+)_{ba} \Big) \wedge f_a \otimes \exp {\textstyle\int_{\gamma}} \,\omega.
\ee
Thus we can define ${\bm\nabla}^+_{4} \tau_a := d\tau_a + \sum_b \tau_b \wedge (\om^+_4)_{ba}$. It is a Gauss--Manin connection of rank $(n{-}3)!$. Note that $\om_4^+$ acts from the right and its form depends on the choice of the basis $\{f_a\}_{a=1}^{(n-3)!}$ on the fibre. Hence we should think of the twisted cohomology on $\Sigma_4$ as the space of vector-valued one-forms which are closed but not exact with respect to the connection ${\bm\nabla}^+_4$. This gives us the base space twisted cohomology $H^1(\Sigma_4, {\bm\nabla}^+_4)$.

We can also consider the dual cohomology group $H^{n-3}(\M_{0,n}, {\cal L}_{-\omega})$ and decompose it in an analogous manner. This gives $H^{n-4}({\cal F}_{n,4}, \nabla_{-\imath^\ast_4 \omega})$ on the fibre, as well as $H^{1}(\Sigma_4, {\bm\nabla}_4^-)$ on the base. However, the connection ${\bm\nabla}_4^-$ is in general not related to ${\bm\nabla}_4^+$ by a sign flip. By choosing orthonormal bases on the fibre we will show in the Lemma~\ref{lemma-3-1} that the dual connection ${\bm\nabla}_4^- \tau_a := d\tau_a - \sum_b \tau_b \wedge (\om_4^-)_{ba}$ is obtained by matrix transpose $\om_{4}^- = (\om_{4}^+)^\intercal$, a feature invisible at the level of rank-one twisted cohomologies.

We can carry on by considering a sequence of fibrations in which one strips away puncture-by-puncture until the whole moduli space in decomposed into a ``product'' of $n{-}3$ one-dimensional spaces, cf.~\eqref{fig-intro}. We can summarize it on the diagram:
\be\label{fibration-diagram}
\begin{tikzcd}
{{\cal M}_{0,n}} \arrow[d, "\pi_4"'] & {\cal F}_{n,4} \arrow[l, "\imath_4"', hook'] \arrow[d, "\pi_5"'] & {\cal F}_{n,5} \arrow[d, "\pi_6"'] \arrow[l, "\imath_5"', hook'] & \cdots \arrow[l, "\imath_6"', hook'] & {\cal F}_{n,n-1} \arrow[d, "\pi_n"'] \arrow[l, "\imath_{n-1}"', hook'] & {\cal F}_{n,n} \arrow[l, "\imath_n"', hook'] \\
\Sigma_4  & \Sigma_5 & \Sigma_6 & \cdots & \Sigma_n &
\end{tikzcd}
\ee
Notice a dependence on an auxiliary ordering in which particles are being peeled off, which affects intermediate steps of calculations but not the final result (in fact, considering different orderings allows for strong consistency checks). Here each $\Sigma_p$ is the Riemann sphere on which $z_p$ lives,
\be
\Sigma_p := \{ z_p \in \CP^1 \,|\, z_p \neq z_1, z_2, \ldots, z_{p-1} \},
\ee
while ${\cal F}_{n,p}$ denotes the configuration space of the remaining $n{-}p$ points on $\CP^1 {\setminus} \{z_1, z_2, \ldots, z_p\}$:
\be
{\cal F}_{n,p} := \{ (z_{p+1}, z_{p+2}, \ldots, z_n) \in (\CP^1)^{n-p} \,|\, z_i \neq z_1, z_2, \ldots, z_p, z_j \text{ for all } i \neq j\}.
\ee
In the edge cases ${\cal F}_{n,3} = \M_{0,n}$ and ${\cal F}_{n,n}$ is a single point. Since the Euler characteristic of $\Sigma_p$ is given by $\chi(\Sigma_p) = 3-p$, by previous arguments we find:
\be
\chi({\cal F}_{n,p}) = \prod_{q=p+1}^{n} \!\chi(\Sigma_q) = (-1)^{n-p} \frac{(n-3)!}{(p-3)!}.
\ee
Given that the twisted cohomology on ${\cal F}_{n,p}$ is concentrated in the dimension $n{-}p$ and its dimension is equal to the rank of the local system on $\Sigma_p$, we find that ${\bm\nabla}_{p}^+$ is of rank $(n-3)!/(p-3)!$. For conciseness let us introduce the notation $\kappa_{n,p} := (n-3)!/(p-3)!$.  Therefore the twisted cohomology on the each $\Sigma_p$ is given by
\be\label{base-space-cohomology}
H^1(\Sigma_p, {\bm\nabla}_p^\pm) := \frac{\{ \bm\varphi^\pm_p \in \Omega^1(\Sigma_p) \otimes \C^{\kappa_{n,p}} \,|\, {\bm\nabla}_p^\pm \bm\varphi^\pm_p = 0 \}}{{\bm\nabla}_p^\pm\, {\cal O}(\Sigma_p) \otimes \C^{\kappa_{n,p}}}
\ee
for $p = 4,5,\ldots,n$. That is, it is the space of $\bm\nabla_p^\pm$-closed vector-valued one-forms $\bm\varphi_p^\pm$ modulo the ${\bm\nabla}_p^\pm$-exact forms. As before, we only consider twisted forms that have a M\"obius weight $0$ in the variable $z_p$. We will discuss how to construct the connections ${\bm\nabla}_p^{\pm}$ using relations analogous to \eqref{omega-sigma} in the following subsections. On each fibre we have rank-one twisted cohomologies $H^{n-p}({\cal F}_{n,p}, {\nabla}_{\pm\omega_p})$, where $\omega_{p}$ is simply the projection of the original form $\omega$:
\be
\omega_{p} := \imath^\ast_{p} \cdots \imath_{5}^\ast \imath_{4}^\ast \omega = \omega \big|_{dz_{4} = dz_5 = \cdots = dz_{p} = 0 }.
\ee
Hence they behave in a similar way to $H^{n-3}_{\pm\omega}$ except for the fact that a subset of punctures are locked into some fixed positions.

The reason for considering the sequence of fibrations \eqref{fibration-diagram} is the following. We will follow a strategy of ``integrating out'' punctures one-by-one, starting with $z_n$ and ending with $z_4$. Thus, twisted forms on $\M_{0,n}$ will be projected onto forms on $\M_{0,n-1}$, then on $\M_{0,n-2}$, etc., until the trivial space $\M_{0,3}$ is reached. The result of this procedure will be functions that are independent of the puncture coordinates $z_i$, which will in turn enter the computation of intersection numbers. In the sequel we construct recursion relations that allow for a concrete implementation of this idea.

For previous discussion of fibrations of twisted cohomologies in the context of their intersection theory see~\cite{Ohara98intersectionnumbers,OST2003,MOY2004}.\footnote{It is known that one can construct a rank-$(n{-}2)!$ Gauss--Manin connection in an \emph{auxiliary} variable $z_0 \in \CP^1 {\setminus} \{0,1,\infty\}$, whose horizontal sections compute certain \emph{finite} open-string integrals for $n$ punctures in the asymptotic limit $z_0 {\to} 0$, and for $n{+}1$ punctures at $z_0 {\to} 1$ \cite{terasoma_2002,Broedel:2013aza}. The two limits are connected by an associator \cite{zbMATH04202514,LE1995193}, thus providing a recursion for these integrals in $n$. Using this setup Terasoma proved that coefficients of the Taylor expansion of such open-string integrals around $\alpha' {=} 0$ are $\mathbb{Q}$-linear combinations of multiple zeta values (MZVs) \cite{terasoma_2002}. The conjecture \cite{Stieberger:2013wea} that coefficients of closed-string integrals fall into a special class of \emph{single-valued} MZVs \cite{Brown:2013gia} was recently proven by Brown and Dupont \cite{Brown:2018omk}, see also \cite{Schlotterer:2018zce,Vanhove:2018elu}.}

\subsection{\label{sec:recursion-relations-first}Recursion Relations}

\textsc{Even before knowing} the exact description of twisted cohomologies on all fibres, it is instructive to derive the simplest form of recursion relations for intersection numbers. It will not only teach us about their general structure, but also give an idea of what further aspects of fibred cohomologies will be needed.

Let us consider the fibre ${\cal F}_{n,p-1}$ and twisted forms $\varphi_\pm \in H^{n-p+1}({\cal F}_{n,p-1}, {\nabla}_{\pm \omega_{p-1}})$. We wish to express intersection numbers $\braket{ \varphi_- |\varphi_+}_{\omega_{p-1}}$, given as in Definition~\ref{definition-intersection-number}, recursively in terms of intersection numbers on ${\cal F}_{n,p}$. Let us start by writing the general fibre bundle decomposition of $\varphi_\pm$:
\be\label{form-expansion}
\varphi_- = \sum_{a=1}^{\kappa_{n,p}}  \sigma_{-,a} \wedge f_{-,a}^\vee,\qquad \varphi_+ =  \sum_{a=1}^{\kappa_{n,p}} \sigma_{+,a} \wedge f_{+,a},
\ee
in terms of bases $f_{-,a}^\vee$ and $f_{+,a}$ of $H^{n-p}({\cal F}_{n,p}, \nabla_{\pm \omega_p})$. There always exists a choice of such bases which is orthonormal in the sense $\braket{ f_{-,a}^\vee | f_{+,b} }_{\omega_p} = \delta_{ab}$. Coefficients of the expansion are twisted forms $\sigma_{\pm,a} \in H^{1}(\Sigma_p, {\bm\nabla}^{\pm}_p)$, which are vector-valued one-forms with $a=1,2,\ldots,\kappa_{n,p}$. We can find them simply by projecting both sides of \eqref{form-expansion} with $\braket{ \,\bullet\, | f_{+,b}}_{\omega_p}$ and  $\braket{ f^\vee_{-,b} |  \,\bullet\,}_{\omega_p}$ respectively:
\be\label{sigma-forms}
\braket{ \varphi_- | f_{+,b} }_{\omega_{p}} = \sigma_{-,b},\qquad
\braket{ f^\vee_{-,b} | \varphi_+ }_{\omega_{p}} = \sigma_{+,b},
\ee
where we used the orthonormality condition and the fact that $\sigma_{\pm,a}$ are constants from the perspective of ${\cal F}_{n,p}$. Although the left-hand sides of \eqref{sigma-forms} are one-forms on $\Sigma_p$, we will abuse the terminology and still refer to them as intersection numbers.

We can now use the Definition~\ref{definition-intersection-number} applied to ${\cal F}_{n,p-1}$ in order write out the intersection number $\braket{\varphi_- | \varphi_+ }_{\omega_{p-1}}$ decomposed into the base space and fibre contributions:
\be
\braket{\varphi_- | \varphi_+ }_{\omega_{p-1}} = \frac{1}{(-2\pi i \Lambda^2)^{n-p+1}}  \sum_{a,b=1}^{\kappa_{n,p}} \int_{\Sigma_p} \left( \sigma_{-,a} \wedge \iota_{\om_p^+}(\sigma_{+,b}) \int_{{\cal F}_{n,p}} f_{-,a}^\vee \wedge \iota_{\omega_p}(f_{+,b}) \right),
\ee
where $\iota_{\om_p^+}$ and $\iota_{\omega_p}$ are maps that turn twisted forms into compactly-supported forms in their respective cohomology classes. Recognizing that the second integral is nothing other than $\braket{ f_{-,a}^\vee | f_{+,b} }_{\omega_p}$ weighted by powers of $-2\pi i \Lambda^2$ and plugging in the expression of $\sigma_{\pm,a}$ from \eqref{sigma-forms} we find:
\be
\braket{\varphi_- | \varphi_+ }_{\omega_{p-1}} = -\frac{1}{2\pi i \Lambda^2} \sum_{a=1}^{\kappa_{n,p}} \int_{\Sigma_p} \braket{ \varphi_- | f_{+,a} }_{\omega_{p}} \wedge \iota_{\om_p^+}\!\left(\braket{ f^\vee_{-,a} | \varphi_+ }_{\omega_{p}}\right).
\ee
This expression simplifies further as a localization formula on each of the $p{-}1$ points in the boundary divisor $\partial \Sigma_p$ of $\Sigma_p$.

\begin{lemma}\label{lemma-3-2}
	The intersection number $\braket{\varphi_- | \varphi_+ }_{\omega_{p-1}}\!$ evaluates to
	\be\label{lemma-22}
	\braket{\varphi_- | \varphi_+ }_{\omega_{p-1}} = \frac{1}{\Lambda^2} \sum_{r=1}^{p-1} \sum_{a=1}^{\kappa_{n,p}} \Res_{z_p = z_r} \!\left(  \braket{ \varphi_- | f_{+,a} }_{\omega_{p}} \psi_{+,a}^r \right),
	\ee
	where $\psi_{+,a}^r$ is the unique $\kappa_{n,p}$-vector of holomorphic functions solving $\bm\nabla_{p}^+ \psi_{+,a}^r = \braket{ f^\vee_{-,a} | \varphi_+ }_{\omega_{p}}$ around $z_p = z_r$. 
\end{lemma}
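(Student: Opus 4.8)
The plan is to carry out, on the one-dimensional base space $\Sigma_p$, exactly the computation that produced the residue formula \eqref{n-4-result} at $n=4$, the only new ingredient being that the connection $\bm\nabla_p^+$ is now matrix-valued. I would take as the starting point the representation derived immediately above the statement,
\be
\braket{\varphi_- | \varphi_+}_{\omega_{p-1}} = -\frac{1}{2\pi i\Lambda^2}\sum_{a=1}^{\kappa_{n,p}}\int_{\Sigma_p}\braket{\varphi_-|f_{+,a}}_{\omega_p}\wedge\iota_{\om_p^+}\!\left(\braket{f_{-,a}^\vee|\varphi_+}_{\omega_p}\right),
\ee
abbreviating $\sigma_{-,a}:=\braket{\varphi_-|f_{+,a}}_{\omega_p}$ and $\sigma_{+,a}:=\braket{f_{-,a}^\vee|\varphi_+}_{\omega_p}$, both vector-valued one-forms on $\Sigma_p$. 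Since $\Sigma_p=\{z_p\in\CP^1\,|\,z_p\neq z_1,\ldots,z_{p-1}\}$ is one-dimensional with boundary divisor given by the $p{-}1$ removed points, the entire task reduces to rendering $\sigma_{+,a}$ compactly supported near each of them.

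For the construction of the compactly-supported representative I would set, in direct analogy with \eqref{compactly-supported-form},
\be
\iota_{\om_p^+}(\sigma_{+,a}) = \sigma_{+,a} - \bm\nabla_p^+\!\left(\sum_{r=1}^{p-1}\Theta(\varepsilon^2-|z_p-z_r|^2)\,\psi_{+,a}^r\right),
\ee
where $\psi_{+,a}^r$ is the local holomorphic solution of $\bm\nabla_p^+\psi_{+,a}^r=\sigma_{+,a}$ in a neighbourhood of $z_p=z_r$. Applying $\bm\nabla_p^+$ through the Leibniz rule and using $\bm\nabla_p^+\psi_{+,a}^r=\sigma_{+,a}$ gives, exactly as in \eqref{compactly-supported-form-2},
\be
\iota_{\om_p^+}(\sigma_{+,a}) = \Big(1-\sum_{r=1}^{p-1}\Theta_r\Big)\sigma_{+,a} - \sum_{r=1}^{p-1}\delta(\varepsilon^2-|z_p-z_r|^2)\,\psi_{+,a}^r,
\ee
a form of compact support cohomologous to $\sigma_{+,a}$ by construction.

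Plugging this into the integral, the bulk term proportional to $\sigma_{+,a}$ drops out because $\sigma_{-,a}\wedge\sigma_{+,a}$ is a wedge of two one-forms on a one-dimensional space and therefore vanishes identically; only the delta-function terms survive, localizing the base-space integral onto the circles $|z_p-z_r|=\varepsilon$. Converting each such localized integral (equivalently, the $d\overbar{z}_p$ piece of $d\Theta_r$ wedged against $\sigma_{-,a}$, followed by Stokes' theorem) into a contour integral $\tfrac{1}{2\pi i}\oint$ around $z_p=z_r$ and hence a residue, and tracking the explicit minus sign on the delta term against the overall factor $-\tfrac{1}{2\pi i\Lambda^2}$, produces precisely \eqref{lemma-22}.

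The step that requires the most care is the existence and uniqueness of the local holomorphic solution $\psi_{+,a}^r$ for the matrix-valued connection. Near $z_p=z_r$ the one-form $\om_p^+$ has a simple pole whose residue matrix is built from the braid matrices $\Om_p^{ij}$, so solving $\bm\nabla_p^+\psi_{+,a}^r=\sigma_{+,a}$ order-by-order in $(z_p-z_r)$ amounts to inverting, at each order $k\geq 0$, an operator of the schematic form $k\,\mathbb{I}+\tfrac{1}{\Lambda^2}(\text{residue matrix})$ acting on the Taylor coefficients. Under generic kinematics the eigenvalues of the residue matrix are the factorization channels $(p_i{+}p_j{+}\ldots)^2$, which avoid the resonant values $-k\Lambda^2$, so each operator is invertible and the holomorphic power-series solution is unique. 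This is the higher-rank analogue of the non-resonance condition that made $\nabla_\omega^{-1}\varphi_+$ well-defined at $n=4$, and once it is in place the remainder of the argument is identical to the scalar case.
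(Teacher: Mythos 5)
Your proof is correct and follows essentially the same route as the paper's: starting from the fibre-bundle representation derived just above the lemma, you build a compactly supported representative of $\sigma_{+,a}$ out of the local holomorphic solutions $\psi^r_{+,a}$, discard the bulk term (both $\sigma_{\pm,a}$ being holomorphic, i.e., proportional to $dz_p$, is the precise reason the wedge vanishes on the complex one-dimensional $\Sigma_p$), and localize the integral as residues on $\partial\Sigma_p$. The only cosmetic difference is that you use the Heaviside/delta-function regulator of the paper's $n=4$ computation in Section~\ref{sec:definition}, whereas the paper's own proof of this lemma uses smooth bump functions $h_r$ together with Stokes' theorem---equivalent regularizations---and your order-by-order verification that $\bm\nabla_p^+\psi^r_{+,a}=\sigma_{+,a}$ admits a unique holomorphic solution under generic kinematics is in fact spelled out more explicitly than in the paper, which defers that non-resonance discussion to the remarks following the proof.
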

\begin{proof}
	We use a generalization of the proof of Theorem~\ref{theorem-21} in the $n=4$ case. The starting point is an explicit construction of the form $\iota_{\om_p^+} (\sigma_{+,a})$ with compact support around $\partial \Sigma_p = \{z_1, z_2, \ldots, z_{p-1}\}$, which reads
	\be\label{iota-sigma}
	\iota_{\om_p^+} (\sigma_{+,a}) = \sigma_{+,a} - \bm\nabla_p^+ \left( \sum_{r=1}^{p-1} h_{r}\, \psi_{+,a}^r \right).
	\ee
	By construction it is cohomologous to $\sigma_{+,a}$. The regulator function $h_{r}$ is equal to one in the infinitesimal disk around $z_r$, say $V_r = \{ |z_p {-} z_r| < \varepsilon \}$, it is zero outside of the disk $U_r = \{ |z_p {-} z_r | < \tilde\varepsilon \}$ for $0 < \varepsilon < \tilde\varepsilon \ll 1$, and interpolates smoothly between the two values on $U_r {\setminus} V_r$. The vector of functions $\psi_{+,a}^r$ is the unique holomorphic solution of
	\be
	\bm\nabla_{p}^+ \psi_{+,a}^r = d\psi_{+,a}^r + \sum_{b=1}^{\kappa_{n,p}} \psi_{+,b}^r \wedge (\om_p^+)_{ba} = \sigma_{+,a}
	\ee
	around $z_p = z_r$ (in this step one assumes that the kinematics is generic). Therefore \eqref{iota-sigma} becomes
	\be\label{iota-sigma-2}
	\iota_{\om_p^+} (\sigma_{+,a}) = \left( 1 - \sum_{r=1}^{p-1} h_r \right) \sigma_{+,a} - \sum_{r=1}^{p-1} dh_{r}\, \psi_{+,a}^r,
	\ee
	which vanishes on each $V_q$ since there we have $h_q {=} 1$, $h_{r\neq q} {=} dh_r {=} 0$. Hence $\iota_{\om_p^+} (\sigma_{+,a})$ has compact support. Plugging it into the expression for intersection numbers we find
	\be
	\braket{\varphi_- | \varphi_+ }_{\omega_{p-1}} = \frac{1}{2\pi i \Lambda^2} \sum_{r=1}^{p-1} \sum_{a=1}^{\kappa_{n,p}} \int_{U_{r} {\setminus} V_r} \sigma_{-,a} \wedge dh_r\, \psi_{+,a}^r,
	\ee
	given that the first term in \eqref{iota-sigma-2} vanishes once contracted with the holomorphic forms $\sigma_{-,a}$. We also used the fact that the integrand has support only on the annuli $U_{r} {\setminus} V_r$ where $dh_r$ are non-zero. Rewriting the integrand as $-d(\sigma_{-,a} h_r\, \psi_{+,a}^r)$ we can use Stokes' theorem to localize the integral on $\partial (U_{r} {\setminus} V_r) = \partial U_r - \partial V_r$. Due to the fact that $h_r{=}0$ on $\partial U_r$, while $h_r{=}1$ on $\partial V_r$, the result can be written as a sum of contour integrals around all components of $\partial \Sigma_p$:
	\be
	\braket{\varphi_- | \varphi_+ }_{\omega_{p-1}} = \frac{1}{\Lambda^2} \sum_{r=1}^{p-1} \sum_{a=1}^{\kappa_{n,p}} \frac{1}{2\pi i}\oint_{ |z_p {-} z_r| = \varepsilon} \!\!\! \sigma_{-,a}\, \psi^r_{+,a}.
	\ee
	After substituting the expressions for $\sigma_{\pm,a}$ it gives the required result.
\end{proof}

Since the right-hand side of \eqref{lemma-22} is expressed entirely in terms of residues of quantities computed on ${\cal F}_{n,p}$, this result provides the earliest example of a recursion relation for intersection numbers. It is clear that in order to make explicit computations we need the connections $\bm\nabla_p^+$, which will be constructed in Section~\ref{sec:braid-matrices}. Although not necessary, the choice of orthonormal bases of twisted forms greatly simplifies the recursion and thus we build them in Section~\ref{sec:fibration-bases}. The final simplification will be given in Section~\ref{sec:recursion-relations-reprise}, which will result in a streamlined version of the recursion relations.

Due to the fact that $\psi^{r}_{+,a}$ are evaluated in the neighbourhood of each $z_r$ by taking residues, only a few leading terms of their Laurent expansion around $z_p = z_r$ are actually needed. Thus the differential equation $\bm\nabla_{p}^+ \psi_{+,a}^r = \sigma_{+,a}$ can be solved by a holomorphic expansion around $z_p = z_r$. Equating the left- and right-hand sides order-by-order in $(z_p{-}z_r)$ fixes the holomorphic expansion of $\psi_{+,a}^r$ uniquely. This relies on an important assumption that $\Res_{z_p = z_r}(\om_p^+)$ is generic enough (for example, it does not have non-negative eigenvalues), which physically corresponds to the assumption of generic kinematics.\footnote{The manipulations in Lemma~\ref{lemma-3-2} require \emph{any} solution $\psi_{+,a}^r$ of the Pfaffian system of differential equations $\bm\nabla_{p}^+ \psi_{+,a}^r = \sigma_{+,a}$. However, the homogeneous solution of this system, given simply by $\psi_{+}^r = \mathbf{c}\, {\cal P}\exp ( - \int_{z_r}^{z_p} \om_p^+ )$ for a constant vector $\mathbf{c}$, is multi-valued in the neighbourhood of each $z_p = z_r$ and hence does not have a holomorphic expansion. This is why we only consider inhomogeneous solutions.}

By counting powers of $(z_p {-} z_r)$ it is straightforward to check that the residue can be non-zero only if
\be
\ord_{z_p = z_r} (\sigma_{-,a}) + \ord_{z_p = z_r} (\sigma_{+,a}) \,\leq\, -2,
\ee
where $\ord_{z_p = z_r}$ denotes the order of a zero at $z_p = z_r$, taken as the minimum of orders for each element in a vector. Therefore $\psi_{+,a}^r$ needs to be expanded only up to order $(z_p {-} z_r)^k$ for $k = -\ord_{z_p = z_r} (\sigma_{-,a}) - 1$. Moreover, since $\om_p^+$ are logarithmic, the expansion starts at the order $(z_p {-} z_r)^m$ for $m = \ord_{z_p = z_r} (\sigma_{+,a}) + 1$.

For completeness and later convenience let us also give an alternative formula for the recursion, which can be obtained by imposing compact support on the forms $\sigma_{-,a}$ instead of $\sigma_{+,a}$. In addition, it tells us the relation between the two connections $\bm\nabla^+_p$ and $\bm\nabla^-_p$.

\begin{lemma}\label{lemma-3-1}
	The intersection number $\braket{\varphi_- | \varphi_+ }_{\omega_{p-1}}\!$ evaluates to
	\be\label{intersection-recursion-alternative}
	\braket{\varphi_- | \varphi_+ }_{\omega_{p-1}} = -\frac{1}{\Lambda^2} \sum_{r=1}^{p-1} \sum_{a=1}^{\kappa_{n,p}} \Res_{z_p = z_r} \!\left(  \psi_{-,a}^r \braket{ f^\vee_{-,a} | \varphi_+ }_{\omega_{p}} \right),
	\ee
	where $\psi_{-,a}^r$  is the unique $\kappa_{n,p}$-vector of holomorphic functions solving
	\be
	\bm\nabla_{p}^- \psi_{-,a}^r := d\psi_{-,a}^r - \sum_{b=1}^{\kappa_{n,p}} \psi_{-,b}^r \wedge (\om_{p}^-)_{ba} = \braket{ \varphi_- | f_{+,a} }_{\omega_{p}}
	\ee
	around $z_p = z_r$. Provided that $\la f_{-,a}^\vee | f_{+,b} \ra_{\omega_p} = \delta_{ab}$ the Gauss--Manin connection $\bm\nabla_{p}^-$ is related to $\bm\nabla_{p}^+$ by $(\om_p^{-})_{ba} = (\om_p^{+})_{ab}$.
\end{lemma}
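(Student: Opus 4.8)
The plan is to prove the two assertions in turn, since they are logically independent. For the formula \eqref{intersection-recursion-alternative} I would rerun the argument of Lemma~\ref{lemma-3-2} essentially verbatim, but imposing compact support on the dual factor $\varphi_-$ rather than on $\varphi_+$. This is legitimate because, as remarked after Definition~\ref{definition-intersection-number}, the intersection number is unchanged if the regularizing map $\iota$ is applied to either slot. Concretely, using the fibre decomposition \eqref{form-expansion} together with the projections \eqref{sigma-forms}, I would build the compactly supported base-space representative $\iota_{\om_p^-}(\sigma_{-,a}) = \sigma_{-,a} - \bm\nabla_p^-\big(\sum_{r=1}^{p-1} h_r\,\psi_{-,a}^r\big)$, where $h_r$ is the same annular cutoff as before and $\psi_{-,a}^r$ is the unique holomorphic solution of $\bm\nabla_p^- \psi_{-,a}^r = \sigma_{-,a} = \braket{\varphi_-|f_{+,a}}_{\omega_p}$ near $z_p=z_r$. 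Acting with $\bm\nabla_p^-$ reduces this to $\big(1-\sum_r h_r\big)\sigma_{-,a} - \sum_r dh_r\,\psi_{-,a}^r$, which is compactly supported and whose first term drops out once wedged against the holomorphic $\sigma_{+,a}$.

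Inserting this representative into the intersection pairing and rewriting the integrand over each annulus $U_r\setminus V_r$ as a total derivative, Stokes' theorem localizes the result on $\partial V_r$, producing the contour integrals $\frac{1}{2\pi i}\oint_{|z_p-z_r|=\varepsilon}\psi_{-,a}^r\,\sigma_{+,a}$ summed over $r$ and $a$. Substituting $\sigma_{+,a}=\braket{f_{-,a}^\vee|\varphi_+}_{\omega_p}$ from \eqref{sigma-forms} then yields \eqref{intersection-recursion-alternative}; the overall minus sign relative to \eqref{lemma-22} is accounted for by the reordering of the one-form $\sigma_\pm$ and the function $\psi_\mp$ inside the wedge, in the same spirit as the symmetry \eqref{intersection-number-symmetry}.

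For the transpose relation I would exploit that the fibre intersection number $\braket{f_{-,a}^\vee|f_{+,b}}_{\omega_p}$ is covariantly constant along the base: it is a topological pairing between the flat bundles $H^{n-p}({\cal F}_{n,p},\nabla_{\pm\omega_p})$ fibred over $\Sigma_p$, and by the chosen orthonormality it equals the constant $\delta_{ab}$ at every point $z_p$. Differentiating $\delta_{ab}$ along $\Sigma_p$ and distributing $d_{\Sigma_p}$ across the two slots via \eqref{omega-sigma} applied both to $f_{+,b}$ and to $f_{-,a}^\vee$ gives $0 = \sum_c (\widetilde{\om}_p^-)_{ac}\,\delta_{cb} + \sum_c(\om_p^+)_{bc}\,\delta_{ac}$, where $\widetilde{\om}_p^-$ is the connection matrix read off from $d_{\Sigma_p}(f_{-,a}^\vee\otimes e^{-\int\omega})$. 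This forces $\widetilde{\om}_p^- = -(\om_p^+)^\intercal$. The explicit minus sign built into the definition $\bm\nabla_p^-\tau_a = d\tau_a - \sum_b \tau_b\wedge(\om_p^-)_{ba}$ in the statement then flips this once more, so that the matrix $\om_p^-$ actually appearing in $\bm\nabla_p^-$ satisfies $(\om_p^-)_{ba}=(\om_p^+)_{ab}$, exactly as claimed.

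The main obstacle is the central step of the second part: justifying the Leibniz rule for the fibre intersection pairing with respect to the base connection and pinning down all signs. One must verify that $d_{\Sigma_p}$ commutes with fibre integration of the compactly supported representative, and carefully track the degree-induced signs when commuting the base one-forms $\om_p^\pm$ past the fibre forms $f_\pm$; it is precisely this sign, combined with the chosen convention for $\bm\nabla_p^-$, that produces a clean transpose rather than a negative transpose. Once compatibility of the intersection pairing with $\bm\nabla_p^\pm$ is established, the transpose relation follows immediately.
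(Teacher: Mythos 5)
Your proposal is correct and follows essentially the same route as the paper: the residue formula is obtained by rerunning the proof of Lemma~\ref{lemma-3-2} with compact support imposed on the $\varphi_-$ factor (which the paper notes is legitimate after Definition~\ref{definition-intersection-number}), and the transpose relation follows from differentiating the constant pairing $\la f_{-,a}^\vee | f_{+,b}\ra_{\omega_p}=\delta_{ab}$ along $\Sigma_p$ and using bilinearity. Your explicit bookkeeping of the sign conventions --- the relative minus from anticommuting one-forms in the Stokes step, and the built-in minus in $\bm\nabla_p^-$ converting $-(\om_p^+)^\intercal$ into $(\om_p^-)_{ba}=(\om_p^+)_{ab}$ --- is a more careful rendering of what the paper leaves implicit.
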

\begin{proof}
	The formula \eqref{intersection-recursion-alternative} is proven in the same way as that in Lemma~\ref{lemma-3-2}. To prove the second part of the lemma we use the fact that $\la f_{-,a}^\vee | f_{+,b} \ra_{\omega_p}$ is independent of the variables on the base space $\Sigma_p$ to write
	\be
	0 = d_{\Sigma_p} \la f_{-,a}^\vee | f_{+,b} \ra_{\omega_p},
	\ee
	where $d_{\Sigma_p}$ denotes the differential on $\Sigma_p$. By bilinearity of the intersection pairing we obtain
	\be
	0 = \sum_{c=1}^{\kappa_{n,p}} (\om_p^-)_{ac}\, \la f_{-,c}^\vee | f_{+,b} \ra_{\omega_p} + \sum_{c=1}^{\kappa_{n,p}} (\om_p^+)_{bc}\, \la f_{-,a}^\vee | f_{+,c} \ra_{\omega_p}.
	\ee
	Using the orthonormality condition $\la f_{-,a}^\vee | f_{+,b} \ra_{\omega_p} = \delta_{ab}$ we find $(\om_p^-)_{ab} = (\om_p^+)_{ba}$.
\end{proof}

Finally, let us comment on the fact that even though we are considering the case of moduli spaces $\M_{0,n}$, analogous recursion relations can be written down for intersection numbers defined on any other space admitting a fibration.\footnote{In the cases when orthonormal bases are not known, i.e., $\la f_{-,a}^\vee | f_{+,b}\ra_{\omega_p} =: \mathbf{C}_{ab} \neq \delta_{ab}$, following the above steps gives $\sigma_{-,a} = \sum_b \la \varphi_- | f_{+,b} \ra_{\omega_p} \mathbf{C}^{-1}_{ba}$, $\sigma_{+,a} = \sum_b \mathbf{C}^{-1}_{ab} \la f_{-,b}^\vee | \varphi_+ \ra_{\omega_p}$, and $(\bm\omega_{p}^+)_{ab} = \sum_c \mathbf{C}^{-1}_{bc} \la f_{-,c}^\vee | (d_{\Sigma_{p}} \!{+} \omega_{p-1} {\wedge}) f_{+,a} \ra_{\omega_p}$, $(\bm\omega_{p}^-)_{ab} = -\sum_c \la (d_{\Sigma_{p}} \!{-} \omega_{p-1} {\wedge}) f_{-,a}^\vee | f_{+,c} \ra_{\omega_p} \mathbf{C}_{cb}^{-1}$. The recursion relations keep their form; the functions $\psi_{\pm,a}^r$ are solutions of $\bm\nabla_p^\pm \psi_{\pm,a}^r = \sigma_{\pm,a}$ with $\sigma_{\pm,a}$ and $\bm\omega_p^{\pm}$ given above.}

\subsection{\label{sec:solving}Solving for Twisted Cohomologies on All Fibres}

\textsc{Before being able to perform} explicit computations with the above recursion relations we need to construct the connections $\bm\nabla^\pm_p$ as well as orthonormal sets of bases on each fibre ${\cal F}_{n,p}$. We address these two problems in turn.

\subsubsection{\label{sec:braid-matrices}Braid Matrices}

\textsc{Since by Lemma}~\ref{lemma-3-1} the one-forms $\om_p^-$ are related to $\om_p^+$ by a transpose for appropriate choices of bases, in order to specify $\bm\nabla_{p}^{\pm}$ it is enough to find $\om_p^+$, which is what we focus on now.

In fact, given the natural embedding $\Sigma_p \subset \M_{0,p}$ in the moduli space of $p \geq 3$ punctures on a Riemann sphere, it is more intuitive to do computations on the full space $\M_{0,p}$ first and then project down to $\Sigma_p$ at the end the computation by simply setting $dz_4 = \cdots = dz_{p-1} = 0$. The one-form $\widetilde\om_p^+$ on $\M_{0,p}$ takes the general form
\be\label{braid-matrix-ansatz}
\widetilde\om^+_p := \frac{1}{\Lambda^2} \sum_{1 \leq i<j \leq p} \!\!\!\! \O^{ij}_{p}\, d\log(z_i - z_j),
\ee
which will serve as an ansatz for the following computations. Here $\Om^{ij}_p$ are $\kappa_{n,p} {\times} \kappa_{n,p}$ matrices for each $i,j,p$. We set $\O^{ij}_p = \O^{ji}_p$ and $\O^{ii}_p = 0$. Additionally, imposing that $\widetilde\om_p^+$ does not have poles at infinity gives:
\be
\sum_{j=1}^{p} \O^{ij}_p = 0
\ee
for each $i$. In the edge case $p{=}n$ we have $\Om^{ij}_n = (p_i {+} p_j)^2$ by matching with \eqref{omega} for massless kinematics. In a certain sense $\Om_p^{ij}$ for $p{<}n$ provide a generalization of these kinematic invariants.

The matrix $\widetilde\om_p^+$ can be understood as providing a higher-dimension representation of the fundamental group on $\M_{0,p}$,
\be\label{higher-rank-local-system}
\pi_1(\M_{0,p}) \,\to\, \GL( \kappa_{n,p}, \C) 
\ee
given by the path-ordered exponential ${\cal P}\exp \int_\gamma \widetilde\om^+_p$ for each $\gamma \in \pi_1(\M_{0,p})$. Any path contractible to a point is represented as the identity matrix $\mathbb{I}$, and for every pair of paths $\gamma_1$ and $\gamma_2$ we have the relation:
\be\label{flatness-condition}
{\cal P}\exp \int_{\gamma_1 \circ \gamma_2} \!\!\! \widetilde\om_p^+ = \left({\cal P}\exp \int_{\gamma_2} \widetilde\om_p^+ \right) \left({\cal P}\exp \int_{\gamma_1} \widetilde\om_p^+ \right),
\ee
where the product on the right-hand side is given by matrix multiplication. It follows that the matrices $\Om_p^{ij}$ describe how the punctures $z_i$ and $z_j$ braid around each other on $\M_{0,p}$ and hence we will call them \emph{braid matrices}.

Flatness condition on the local system \eqref{higher-rank-local-system} can be stated more conveniently as integrability of the connection, $({\bm\nabla}^+_p)^2 = 0$. Since $\widetilde\om_p^+$ is a closed one-form, the only non-trivial constraint it imposes on braid matrices is $\widetilde\om_p^+ \wedge \widetilde\om_p^+ = 0$, which in its expanded forms reads:
\be
\sum_{\substack{1 \leq i<j \leq p\\ 1 \leq k<\ell \leq p}} \!\!\!\! \O^{ij}_p\, \O^{k\ell}_p\, d\log(z_i - z_j) \wedge d\log(z_k - z_\ell) = 0.
\ee
Identities between quadratic combinations of $\lambda_{ij} = d\log(z_i - z_j)$ are generated by the Arnold relations \eqref{Arnold-relations} and yield the following conditions on the commutators of braid matrices:
\be\label{infinitesimal-pure-braid-relations}
\left[ \O^{ij}_p,\, \O^{k\ell}_p \right] =0, \qquad
\left[ \O^{ij}_p + \O^{jk}_p,\, \O^{ik}_p \right] = 0
\ee
for distinct $i,j,k,\ell$. The first set of constraints means that two loops $\circlearrowleft_{ij}$ and $\circlearrowleft_{k\ell}$ involving four distinct punctures do not affect each other, while the second is an infinitesimal form of the Yang--Baxter relations.\footnote{Defining $\mathbf{r}_{ij} := \int \Om^{ij}_p d\log(z_i {-} z_j)$ it is straightforward to check that \eqref{infinitesimal-pure-braid-relations} and \eqref{Arnold-relations} imply
	\be
	[ \mathbf{r}_{ij}, \mathbf{r}_{ik} ] +  [ \mathbf{r}_{ij}, \mathbf{r}_{jk} ] + [ \mathbf{r}_{ik}, \mathbf{r}_{jk} ] = 0,
	\ee
	which are the classical Yang--Baxter relations, see, e.g.,~\cite{jimbo1990yang}. They can be understood as the linearized version of the quantum Yang--Baxter relations
	\be
	\mathbf{R}_{ij} \mathbf{R}_{ik} \mathbf{R}_{jk} = \mathbf{R}_{jk} \mathbf{R}_{ik} \mathbf{R}_{ij}
	\ee
	for $\mathbf{R}_{ij} := \mathbb{I} + \alpha' \mathbf{r}_{ij} + \ldots$ in the limit $\alpha' \to 0$.
}
Together, the constraints \eqref{infinitesimal-pure-braid-relations} are known as the \emph{infinitesimal pure braid relations} \cite{aomoto1973theoreme,kohno1985serie,AIF_1987__37_4_139_0}.

The above general properties of braid matrices are not enough to fix their explicit expressions needed for computations of intersection numbers. As remarked before, $\widetilde\om_p^+$ depends on the choice of bases of twisted forms through the relation \eqref{omega-sigma}. Let us make a particular choice of such a basis. We start by introducing the following basis of $(n{-}3)!$ twisted forms on $H^{n-3}_\omega$ called the \emph{fibration basis}:
\be\label{fibration-basis}
\FB(\rho) := \bigwedge_{i=4}^{n} d\log \left(\frac{z_i - z_{\rho(i)}}{z_i - z_1}\right)
\ee
for a word $\rho$ of length $n{-}3$ (indexing starts with $i{=}4$) and $\rho(i) = 3,4,\ldots,i{-}1$. Recall that $dz_1 {=} dz_2 {=} dz_3 {=} 0$. By construction the forms $\FB(\rho)$ are $\PSL(2,\C)$ invariant.

In fact, this gives a natural indexing scheme we will use in the remainder of this section: for a collection of $\kappa_{n,p}$ length-$(n{-}p)$ words $Q$ indexing objects on ${\cal F}_{n,p}$, we construct $\kappa_{n,p-1} = (p-3) \kappa_{n,p}$ length-$(n{-}p{+}1)$ words $qQ$ with $q = 3,4,\ldots,p{-}1$ that index objects on ${\cal F}_{n,p-1}$. To make this more clear, let us introduce vectors of twisted forms called $\F_{p}^+$ for $3 \leq p \leq n$ defined by the recursion:
\be\label{fibration-basis-coefficients}
(\F_{p-1}^+)_q := \left( \frac{dz_p}{z_p - z_q} - \frac{dz_p}{z_p - z_1} \right) \wedge \F_{p}^+,
\ee
where the $q$-th entry of the vector $\F_{p-1}^+$ of length $\kappa_{n,p-1}$ is defined in terms of the length-$\kappa_{n,p}$ vector $\F_p^+$. This gives us $\F_3^+$ that consists of $(n{-}3)!$ twisted forms given in \eqref{fibration-basis} ordered lexicographically. The boundary condition for the recursion is $\F_n^+ = 1$.

The fibration basis owes its name to the fact that by the recursion \eqref{fibration-basis-coefficients} it provides bases for \emph{all} the fibre cohomologies. Note that in \eqref{fibration-basis-coefficients} we wrote out the $d\log$ form explicitly in order to avoid ambiguities in the way its is extended from $\Sigma_p$ to $\M_{0,p}$.

We can translate from twisted forms to forms with local coefficients by simply modifying the initial condition to $\F_n^+ = 1 \otimes \exp \int_\gamma \omega$, which does not affect the form of the recursion \eqref{fibration-basis-coefficients}. We use this setup until the end of this subsection, as it matches the form of the constraint \eqref{omega-sigma}, which now becomes:
\be\label{fibration-basis-DE}
d \mathbf{F}_{p}^+ = \widetilde\om_{p}^+ \wedge  \mathbf{F}_{p}^+.
\ee
The rows and columns of the matrices $\widetilde\om_p^+$ are indexed using our new convention. For $n{=}p$ we find simply $\widetilde\om_n^+ = \omega$ according to our expectations. The idea is to use both \eqref{fibration-basis-coefficients} and \eqref{fibration-basis-DE} in order to find $\Om_p^{ij}$ and hence $\widetilde\om_{p}^+$ recursively in $p$. We have the following result (recall that we are not interested in the cases when both $i,j \leq 3$).

\begin{lemma}\label{lemma-3-3}
	The entries $(\Om_{p-1}^{ij})_{qr}$ of the braid matrix $\Om^{ij}_{p-1}$ for $q,r=3,4,\ldots,p{-}1$ are determined recursively in terms of higher-$p$ matrices $\Om_{p}^{k\ell}$ as follows:

	\be
	( \O^{ij}_{p-1} )_{qr} \,=\, \begin{dcases}
		\O^{ij}_p \qquad &\text{if}\qquad q=r,\quad q\neq i,j,\\
		\O^{pj}_p + \O^{qj}_p \qquad &\text{if}\qquad q=r=i,\quad j\neq 1,2,\\
		\O^{pj}_p + \O^{qj}_p + \O^{pq}_p \qquad &\text{if}\qquad q=r=i,\quad j=1,2,\\
		\O^{ip}_p + \O^{iq}_p \qquad &\text{if}\qquad q=r=j,\\
		-\O^{pj}_p \qquad &\text{if}\qquad q=i,\quad r=j,\\
		-\O^{ip}_p \qquad &\text{if}\qquad q=j,\quad r=i,\\
		\O^{ip}_p \qquad &\text{if}\qquad j=1,\quad r=i, \quad q\neq i,\\
		\O^{pr}_p \qquad &\text{if}\qquad j=2,\quad q=i, \quad r\neq i,\\
		0 \qquad &\text{otherwise},
	\end{dcases}
	\ee
	for $i=4,5,\ldots,p{-}1$, $j=1,2,\ldots,p{-}1$, and $i\neq j$.
\end{lemma}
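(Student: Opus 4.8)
The plan is to pin down $\widetilde{\om}^+_{p-1}$ by computing $d(\F_{p-1}^+)_q$ in two different ways and matching. Writing $\eta_q := \lambda_{pq} - \lambda_{p1} = d\log(z_p-z_q) - d\log(z_p - z_1)$, the recursion \eqref{fibration-basis-coefficients} reads $(\F_{p-1}^+)_q = \eta_q \wedge \F_p^+$. Each $\eta_q$ is closed, so using the defining relation \eqref{fibration-basis-DE} for $\F_p^+$ one finds
\[
d(\F_{p-1}^+)_q = -\,\eta_q \wedge \widetilde{\om}_p^+ \wedge \F_p^+ .
\]
On the other hand, \eqref{fibration-basis-DE} applied at level $p-1$ together with \eqref{fibration-basis-coefficients} gives $d(\F_{p-1}^+)_q = \sum_{r}(\widetilde{\om}_{p-1}^+)_{qr}\wedge \eta_r \wedge \F_p^+$. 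Since the $\F_p^+$ form a fibre basis, the two expressions must coincide after stripping $\F_p^+$, which reduces the lemma to the two-form identity
\[
-\,\eta_q \wedge \widetilde{\om}_p^+ \;=\; \frac{1}{\Lambda^2}\sum_{r=3}^{p-1}\sum_{i<j}(\O_{p-1}^{ij})_{qr}\,\lambda_{ij}\wedge\eta_r ,
\]
to be read off order by order in the braid matrices $\O_p^{k\ell}$ of the ansatz \eqref{braid-matrix-ansatz}.

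First I would split the sum in $\widetilde{\om}_p^+$ into ``base'' pairs $\{k,\ell\}$ with $k,\ell\le p-1$ (no $dz_p$) and ``fibre'' pairs $\{k,p\}$. For a base pair, $\eta_q\wedge\lambda_{k\ell}=-\lambda_{k\ell}\wedge\eta_q$ is already of the required shape and contributes $\O_p^{k\ell}$ to the diagonal block $(\O_{p-1}^{k\ell})_{qq}$; for $q\neq i,j$ this is the only contribution, giving the first listed case. For a fibre pair both factors of $\eta_q\wedge\lambda_{kp}=(\lambda_{pq}-\lambda_{p1})\wedge\lambda_{kp}$ contain $dz_p$, so I would apply the Arnold relations \eqref{Arnold-relations} to the triples $\{p,q,k\}$ and $\{p,1,k\}$. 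A short computation yields the clean identities $\lambda_{pq}\wedge\lambda_{kp}=\lambda_{qk}\wedge(\eta_k-\eta_q)$ and $\lambda_{p1}\wedge\lambda_{kp}=\lambda_{1k}\wedge\eta_k$, in which the reference form $\lambda_{p1}$ drops out, leaving
\[
\eta_q\wedge\lambda_{kp} = -\lambda_{qk}\wedge\eta_q + (\lambda_{qk}-\lambda_{1k})\wedge\eta_k .
\]
Substituting and relabelling distributes each $\O_p^{kp}$ across the blocks $(\O_{p-1}^{ij})_{qr}$, producing the off-diagonal entries $-\O_p^{pj}$, $-\O_p^{ip}$ and the extra diagonal pieces $\O_p^{ip}+\O_p^{iq}$ for $q=r\in\{i,j\}$.

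The one genuine subtlety is that the admissible fibre basis is $\{\eta_r\}_{r=3}^{p-1}$, which excludes $\eta_2$, whereas the reduction above produces $\eta_k$ with $k$ possibly equal to $2$ (from the $\O_p^{2p}$ term). Here I would invoke the no-pole-at-infinity constraint $\sum_{k=1}^{p}\O_p^{kp}=0$ accompanying \eqref{braid-matrix-ansatz}, which lets me trade $\O_p^{2p}$ for $-\sum_{k\neq 2}\O_p^{kp}$ and so re-express the spurious $\eta_2$ contribution over $\{\eta_r\}_{r\ge3}$. This elimination, combined with the choice of $z_1$ as the reference built into $\eta_q$, is exactly what singles out the punctures $z_1$ and $z_2$ and generates the exceptional cases $j=1$ and $j=2$ of the statement; collecting the coefficient of each $\lambda_{ij}\wedge\eta_r$ then reproduces the table.

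I expect the main obstacle to be the bookkeeping in this last step. The two-forms $\lambda_{ij}\wedge\eta_r$ with $i,j\le p-1$ are themselves subject to Arnold relations, so before reading off coefficients I must either fix an honest basis of two-forms or argue directly at the level of base cohomology classes, and verify that every residual $\lambda_{p1}$- and $\eta_2$-term cancels consistently so the answer is unambiguous. Checking a representative entry --- e.g. that $(\O_{p-1}^{ij})_{jj}=\O_p^{ij}+\O_p^{ip}$ is the sum of one base and one fibre contribution --- gives a good sign-and-index sanity check, and confirms that the recursion indeed closes on the initial data $\O_n^{ij}=(p_i+p_j)^2$.
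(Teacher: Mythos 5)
Your Arnold-relation machinery is correct as far as it goes: the identity $\eta_q\wedge\lambda_{kp}=-\lambda_{qk}\wedge\eta_q+(\lambda_{qk}-\lambda_{1k})\wedge\eta_k$ does follow from the Arnold relations, and chasing it through the fibre pairs reproduces the generic diagonal case, the cases $q=r=i$ with $j\neq 1,2$, $q=r=j$, both within-pair off-diagonal entries, and the $j=1$ column --- a genuinely different organization from the paper, which works componentwise with partial fractions. The fatal gap sits exactly in the two steps you flagged as subtleties. First, you cannot ``strip $\F_p^+$'': the relation $d\F_p^+=\widetilde\om_p^+\wedge\F_p^+$, and hence the level-$(p-1)$ relation you want, holds only modulo twisted-exact terms on the fibre, never as an exact identity of forms. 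In fact no exact two-form identity of the kind you write down can hold: the spurious term $-\tfrac{1}{\Lambda^2}\,(\lambda_{q2}-\lambda_{12})\wedge\eta_2\otimes\Om_p^{2p}\F_p^+$ has a nonvanishing residue on $\{z_p=z_2\}$ (equal to $\lambda_{q2}\otimes\Om_p^{2p}\F_p^+$, using $\lambda_{12}=0$ since $z_1,z_2$ are fixed), whereas every admissible target $\lambda_{ij}\wedge\eta_r\wedge\F_p^+$ with $r\geq 3$ has none; Arnold relations are exact form identities and can never trade one for the other.

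Second, and consequently, your proposed elimination of $\eta_2$ fails: substituting $\Om_p^{2p}=-\sum_{k\neq2}\Om_p^{kp}$ changes the matrix coefficient but leaves the form factor $\eta_2$ completely untouched. What is actually needed is the twisted integration-by-parts relation that the paper calls a ``cohomology relation on $\Sigma_p$'': since $dz_p\wedge\bigl(\partial_{z_p}+\tfrac{1}{\Lambda^2}\sum_{r=1}^{p-1}\tfrac{\Om_p^{pr}}{z_p-z_r}\bigr)\F_p^+$ is exact on the level-$(p-1)$ fibre, the sum rule converts it into $\Om_p^{p2}\,\eta_2\wedge\F_p^+\simeq-\sum_{r=3}^{p-1}\Om_p^{pr}\,\eta_r\wedge\F_p^+$ modulo exact terms. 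Feeding this back into your spurious term is precisely what generates the exceptional $j=2$ entries of the table: the off-diagonal entries $(\Om_{p-1}^{i2})_{ir}=\Om_p^{pr}$ for $r\neq i$, and the extra $\Om_p^{pq}$ in the diagonal entry $(\Om_{p-1}^{i2})_{ii}=\Om_p^{p2}+\Om_p^{i2}+\Om_p^{pi}$. With your sum-rule substitution instead, the computation yields $(\Om_{p-1}^{i2})_{ii}=\Om_p^{i2}+\Om_p^{p2}$ and zero for the off-diagonal $j=2$ entries, so the lemma is not reproduced. Note that this needed relation is not ``bookkeeping'' of Arnold-type ambiguities: it involves the braid matrices and can only hold up to twisted-exact terms, which is why no argument at the level of bare two-forms can succeed. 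Once you replace the sum-rule step by this twisted relation (and discard the leftover $\{1,2\}$ block using $\lambda_{12}=0$), your approach does close and becomes a clean alternative to the paper's derivation.
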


\begin{proof}
	The proof proceeds by solving the differential equation $d\F_{p-1}^+ = \widetilde\om_{p-1}^+ \wedge \F_{p-1}^+$ assuming that $\widetilde\om_p^+$ is already determined through $d\F_p^+ = \widetilde\om_p^+ \wedge \F_p^+$. We use the general form of $\widetilde\om_p^+$ from \eqref{braid-matrix-ansatz} and the recursion for the fibration basis \eqref{fibration-basis-coefficients}. We start with the case $i=4,5,\ldots,p{-}1$, $i \neq q$, for which the differential equation for the $q$-th element of $\F_{p-1}^+$ reads:
	\begin{align}
	\frac{\partial (\F_{p-1}^{+})_q}{\partial z_i}  &= \left( \frac{1}{z_p {-} z_q} - \frac{1}{z_p {-} z_1}\right) \frac{\partial \F^+_{p}}{\partial z_i}\nn\\
	&= \frac{1}{\Lambda^2}\left( \frac{1}{z_p {-} z_q} - \frac{1}{z_p {-} z_1}\right) \Bigg( \sum_{\substack{j=1\\ j \neq i,q}}^{p-1} \frac{\O^{ij}_p\, \F^+_{p}}{z_i - z_j} + \frac{\O^{iq}_p\, \F^+_{p}}{z_i - z_q} + \frac{\O_{p}^{ip}\, \F^+_{p}}{z_i - z_p} \Bigg),\label{lemma-30}
	\end{align}
	where we used the differential equation for $\F_p^+$. The first two terms in the sum already have the required form. On the last one we use the following partial fraction identity:
	\be
	\left( \frac{1}{z_p {-} z_q} {-} \frac{1}{z_p {-} z_1}\right) \frac{1}{z_i {-} z_p} = \left( \frac{1}{z_p {-} z_q} {-} \frac{1}{z_p {-} z_1}\right) \frac{1}{z_i {-} z_q} {-} \left( \frac{1}{z_p {-} z_i} {-} \frac{1}{z_p {-} z_1}\right) \left(\frac{1}{z_i {-} z_q} {+} \frac{1}{z_i {-} z_1} \right)\!.
	\ee
	After substituting it back into \eqref{lemma-30} and using the definition \eqref{fibration-basis-coefficients} to express the right-hand side in terms of $\F_{p-1}^+$ we obtain:
	\be\label{lemma-32}
	\frac{\partial (\F_{p-1}^+)_q}{\partial z_i} = \frac{1}{\Lambda^2} \Bigg( \sum_{\substack{j=1\\ j \neq q}}^{p-1} \frac{\O^{ij}_{p}\, (\F_{p-1}^+)_q}{z_i - z_j} + \frac{\left(\O^{ip}_p + \O^{iq}_p\right) (\F_{p-1}^+)_q}{z_i - z_q} - \frac{\O^{ip}_p\, (\F^{+}_{p-1})_i}{z_i - z_q} + \frac{\O^{ip}_p\, (\F^{+}_{p-1})_i}{z_i - z_1}\Bigg).
	\ee
	The right-hand side is expressed in terms of elements of $\F_{p-1}^+$. In the remaining case when $i=q$ we have the differential equation:
	\begin{align}
	\frac{\partial (\F_{p-1}^+)_q}{\partial z_q} &= \frac{(\F_{p-1}^+)_q}{(z_p - z_q)^2} +  \left( \frac{1}{z_p {-} z_q} - \frac{1}{z_p {-} z_1}\right) \frac{\partial \F_p^+}{\partial z_q}\nn\\
	&= \frac{(\F_{p-1}^+)_q}{(z_p - z_q)^2} + \frac{1}{\Lambda^2}\left( \frac{1}{z_p {-} z_q} - \frac{1}{z_p {-} z_1}\right) \Bigg( \sum_{\substack{j=1\\ j\neq q}}^{p-1} \frac{\O^{qj}_p\, \F^+_p}{z_q - z_j} + \frac{\O^{qp}_p\, \F_{p}^+}{z_q - z_p}\Bigg).
	\end{align}
	On the last term we use the identity:
	\be
	\left( \frac{1}{z_p {-} z_q} - \frac{1}{z_p {-} z_1} \right) \frac{1}{z_q {-} z_p} = \left( \frac{1}{z_p {-} z_q} - \frac{1}{z_p {-} z_1}\right)\frac{1}{z_q {-} z_1} - \frac{1}{(z_p {-} z_q)^2},
	\ee
	which gives
	\be\label{lemma-35}
	\frac{\partial (\F^{+}_{p-1})_q}{\partial z_q} = \frac{1}{\Lambda^2} \Bigg( \sum_{\substack{j=1\\ j\neq q}}^{p-1} \frac{\O^{qj}_p\, (\F_{p-1}^+)_q}{z_q - z_j} + \frac{\O^{qp}_p\, (\F^{+}_{p-1})_q}{z_q - z_1} + \frac{\left( \Lambda^{2}\, \I - \O_{p}^{qp} \right) \F^{+}_{p}}{(z_p - z_q)^2} \Bigg),
	\ee
	where all the terms except for the last one are in the required form. We have the following cohomology relation on $\Sigma_p$:
	\be\label{lemma-36}
	0 = \Bigg[\Bigg(\I\frac{\partial}{\partial z_p} + \frac{1}{\Lambda^2} \sum_{j=1}^{p-1} \frac{\O^{pj}_p}{z_p - z_j}\Bigg) \frac{\Lambda^2}{z_p - z_q}\Bigg] \F_{p}^+ = - \frac{\left(\Lambda^2\, \I - \O^{qp}_p\right) \F_{p}^+}{(z_p - z_q)^2} + \frac{1}{z_p - z_q} \sum_{\substack{j=1\\ j\neq q}}^{p-1} \frac{\O^{pj}_p\, \F^+_p }{z_p - z_j}.
	\ee
	Let us use the partial fraction identity on the final term:
	\be
	\frac{1}{(z_p {-} z_q)(z_p {-} z_j)} =  \left[\left( \frac{1}{z_p {-} z_q} - \frac{1}{z_p {-} z_1}\right) - \left(\frac{1}{z_p {-} z_j} - \frac{1}{z_p {-} z_1}\right)\right] \frac{1}{z_q {-} z_j}.
	\ee
	Therefore \eqref{lemma-36} gives the identity:
	\be
	\frac{\left(\Lambda^2\,\I - \O^{qp}_p\right) \F^+_p}{(z_p - z_q)^2} = \sum_{\substack{j=1\\ j\neq q}}^{p-1} \frac{\O^{pj}_p (\F^+_{p-1})_q}{z_q - z_j} - \sum_{\substack{j=3\\ j\neq q}}^{p-1} \frac{\O^{pj}_p (\F^+_{p-1})_j}{z_q - z_j} - \left(\frac{1}{z_p {-} z_2} - \frac{1}{z_p {-} z_1} \right) \frac{\O^{p2}_p\, \F^+_p}{z_q - z_2}.
	\ee
	In order to resolve the final term we use the following cohomology relation on $\Sigma_p$:
	\begin{align}
	0 &= \left[\left(\I\frac{\partial}{\partial z_p} + \frac{1}{\Lambda^2}\sum_{r=1}^{p-1} \frac{\O^{pr}_p}{z_p - z_r}\right) \frac{\Lambda^2}{z_q - z_2}\right] \F^+_p = \left( \frac{\O^{p1}_p}{z_p - z_1} + \frac{\O^{p2}_p}{z_p - z_2} + \sum_{r=3}^{p-1} \frac{\O^{pr}_p}{z_p - z_r}  \right) \frac{\F^+_p}{z_q - z_2} \nn\\
	&= \left(\frac{1}{z_p {-} z_2} - \frac{1}{z_p {-} z_1} \right) \frac{\O^{p2}_p\, \F^+_p}{z_q - z_2} + \sum_{r=3}^{p-1} \frac{\O^{pr}_{p}\, (\F^+_{p-1})_{r}}{z_q - z_2},
	\end{align}
	where in the final transition we used the fact that
	\be
	\O^{p1}_p = -\sum_{r=2}^{p-1} \O^{pr}_p.
	\ee
	Plugging everything back into \eqref{lemma-35} we obtain the final result:
	\begin{align}
	\frac{\partial (\F^+_{p-1})_q}{\partial z_q} = \frac{1}{\Lambda^2} \Bigg(& \sum_{\substack{j=3\\ j\neq q}}^{p-1} \frac{\left(\O^{pj}_p + \O^{qj}_p \right) (\F_{p-1}^+)_{q}}{z_q - z_j} - \sum_{\substack{j=3\\ j\neq q}}^{p-1} \frac{\O^{pj}_p\, (\F_{p-1}^+)_j}{z_q - z_j} + \sum_{\substack{r=3\\ r\neq q}}^{p-1} \frac{\O^{pr}_p\, (\F_{p-1}^+)_r}{z_q - z_2} \nn\\
	&\quad + \frac{\left(\O^{p1}_p + \O^{q1}_p + \O^{pq}_p \right)(\F^+_{p-1})_{q}}{z_q - z_1} + \frac{\left(\O^{p2}_p + \O^{q2}_p + \O^{pq}_p \right) (\F_{p-1}^+)_q }{z_q - z_2} \Bigg).\label{lemma-41}
	\end{align}
	Reading off entries of the matrices $\O^{ij}_{p-1}$ from \eqref{lemma-32} and \eqref{lemma-41} gives the required recursion relations.
\end{proof}

Note that even though the label $n$ is not displayed explicitly, the braid matrices depend on the total number of particles through the boundary condition $\Om_n^{ij} = (p_i {+} p_j)^2$. Introducing the notation $s_{i_1 i_2 \cdots i_k} = (p_{i_1} {+} p_{i_2} {+} \ldots {+} p_{i_k})^2$ let us give several examples of braid matrices at low multiplicity.

\begin{example}
	For $n{=}5$ we have $\Om_{5}^{ij} = s_{ij}$, while the non-vanishing braid matrices at $p{=}4$ are:
	\be
	\O_{4}^{41} = \left(
	\begin{array}{cc}
		s_{14} & s_{45} \\
		0 & s_{145} \\
	\end{array}
	\right),\qquad
	\O_{4}^{42} = \left(
	\begin{array}{cc}
		s_{24} & 0 \\
		s_{35} & s_{245} \\
	\end{array}
	\right),\qquad
	\O_{4}^{43} = \left(
	\begin{array}{cc}
		s_{34} {+} s_{45} & -s_{45} \\
		-s_{35} & s_{34} {+} s_{35} \\
	\end{array}
	\right),
	\ee
	where we used momentum conservation to simplify the entries. Their eigenvalues are $\text{eig}(\Om_4^{4j}) = \{s_{j4}, s_{j45}\}$.
\end{example}

\begin{example}
	For $n{=}6$ apart from $\Om_{6}^{ij} = s_{ij}$, at $p{=}5$ we have the following non-zero matrices:
	\begin{gather}
	\O_{5}^{41} = \left(
	\begin{array}{ccc}
	s_{14} & s_{46} & 0 \\
	0 & s_{146} & 0 \\
	0 & s_{46} & s_{14} \\
	\end{array}
	\right)\!,\;
	\O_{5}^{42} = \left(
	\begin{array}{ccc}
	s_{24} & 0 & 0 \\
	s_{36} & s_{246} & s_{56} \\
	0 & 0 & s_{24} \\
	\end{array}
	\right)\!,\;
	\O_{5}^{43} = \left(
	\begin{array}{ccc}
	s_{34} {+} s_{46} & - s_{46} & 0 \\
	- s_{36} & s_{34} {+} s_{36} & 0 \\
	0 & 0 & s_{34} \\
	\end{array}
	\right)\!,\nn\\
	\O_{5}^{45} = \O_{5}^{54} = \left(
	\begin{array}{ccc}
	s_{45} & 0 & 0 \\
	0 &  s_{45} {+} s_{56} & - s_{56} \\
	0 & - s_{46} & s_{45} {+} s_{46} \\
	\end{array}
	\right),\qquad
	\O_{5}^{51} = \left(
	\begin{array}{ccc}
	s_{15} & 0 & s_{56} \\
	0 & s_{15} & s_{56} \\
	0 & 0 & s_{156} \\
	\end{array}
	\right),\\
	\O_{5}^{52} = \left(
	\begin{array}{ccc}
	s_{25} & 0 & 0 \\
	0 & s_{25} & 0 \\
	s_{36} & s_{46} & s_{256} \\
	\end{array}
	\right),\qquad
	\O_{5}^{53} = \left(
	\begin{array}{ccc}
	s_{35} {+} s_{56} & 0 & -s_{56} \\
	0 & s_{35} & 0 \\
	-s_{36} & 0 & s_{35} {+} s_{36} \\
	\end{array}
	\right).\nn
	\end{gather}
	Their eigenvalues are $\text{eig}(\Om_5^{ij}) = \{s_{ij}, s_{ij}, s_{ij6}\}$. For $p{=}4$ we have only three non-zero braid matrices:
	\begin{gather}
	\O_{4}^{41} = \left(
	\begin{array}{cccccc}
	s_{14} & s_{46} & 0 & s_{45} & 0 & 0 \\
	0 & s_{146} & 0 & 0 & s_{45} {+} s_{56} & -s_{56} \\
	0 & s_{46} & s_{14} & 0 & -s_{46} & s_{45} {+} s_{46} \\
	0 & 0 & 0 & s_{145} & s_{46} & s_{56} \\
	0 & 0 & 0 & 0 & s_{1456} & 0 \\
	0 & 0 & 0 & 0 & 0 & s_{1456} \\
	\end{array}
	\right),\qquad\!\!
	\O_{4}^{42} = \left(
	\begin{array}{cccccc}
	s_{24} & 0 & 0 & 0 & 0 & 0 \\
	s_{36} & s_{246} & s_{56} & 0 & 0 & 0 \\
	0 & 0 & s_{24} & 0 & 0 & 0 \\
	s_{35} {+} s_{56} & 0 & -s_{56} & s_{245} & 0 & 0 \\
	0 & s_{35} & 0 & s_{36} & s_{2456} & 0 \\
	-s_{36} & 0 & s_{35} {+} s_{36} & s_{36} & 0 & s_{2456} \\
	\end{array}
	\right),\nn\\
	\O_{4}^{43} \!=\! \left(
	\begin{array}{cccccc}
	s_{34} {+} s_{45} {+} s_{46} & - s_{46} & 0 & - s_{45} & 0 & 0 \\
	-s_{36} & s_{34} {+} s_{36} {+} s_{45} {+} s_{56} & -s_{56} & 0 & -s_{45} {-} s_{56} & s_{56} \\
	0 & -s_{46} & s_{34} {+} s_{45} {+} s_{46} & 0 & s_{46} & -s_{45} {-} s_{46} \\
	-s_{35} {-} s_{56} & 0 & s_{56} & s_{34} {+} s_{35} {+} s_{46} {+} s_{56}
	& -s_{46} & -s_{56} \\
	0 & -s_{35} & 0 & -s_{36} & s_{34} {+} s_{35} {+} s_{36} & 0 \\
	s_{36} & 0 & -s_{35} {-} s_{36} & -s_{36} & 0 & s_{34} {+} s_{35} {+} s_{36} \\
	\end{array}
	\right)\!.
	\end{gather}
	Eigenvalues of these matrices are $\text{eig}(\Om_{4}^{4j}) = \{ s_{j4}, s_{j4}, s_{j45}, s_{j46}, s_{j456}, s_{j456} \}$.
\end{example}
Notice the factors $s_{ij}$ on the diagonal of each $\O_{p}^{ij}$. As a matter fact, departure from the purely diagonal matrix $s_{ij}\I$ is what measures non-triviality of the fibre bundle. Based on explicit computations for $n \leq 10$ we conjecture that eigenvalues of braid matrices always correspond to physical factorization channels of the form $s_{i_1 i_2 \cdots i_k}$.

Projected onto the subspace $\Sigma_p$ the one-form \eqref{braid-matrix-ansatz} becomes:
\be
\om_p^+ = \frac{1}{\Lambda^2} \sum_{i = 1}^{p-1} \Om_{p}^{ip} \frac{dz_p}{z_p - z_i},
\ee
which is the one entering the definitions of $\bm\nabla_p^\pm$. Note that only $\Om_p^{ip} = \Om_p^{pi}$ appear in this expression. However, the other braid matrices are still needed in order to compute $\Om_{p-1}^{k\ell}$ recursively. It was in fact one of the main reasons to perform computations on the full moduli space $\M_{0,p}$ in the first place.

\subsubsection{\label{sec:fibration-bases}Fibration Bases}

\textsc{Let us complete the description} of bases of twisted forms on each ${\cal F}_{n,p}$ by giving a definition of dual bases. Following the recursive structure of $\F_p^+$ from \eqref{fibration-basis-coefficients} we introduce:
\be\label{dual-fibration-basis}
(\F^-_{p-1})_q := \left(\frac{dz_p}{z_p - z_q} - \frac{dz_p}{z_p - z_2}\right) (\Om_p^{pq})^\intercal \wedge \F^-_p,
\ee
which we call the dual fibration bases. Note the transpose on the braid matrix.  Once again, boundary conditions are $\F_n^- = 1$ for twisted forms and $\F_n^{-} = 1 \otimes \exp \int_\gamma -\omega$ for forms with local coefficients. In order to complete the description of bases needed in our recursion relations we prove the following result.

\begin{lemma}
	The fibration bases $\F_p^\pm$ are orthonormal with respect to the intersection pairing, i.e., $\la \F_p^- | (\F_p^+)^\intercal \ra_{\omega_p} = \I$.
\end{lemma}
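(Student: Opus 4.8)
The plan is to prove $\la \F_p^- | (\F_p^+)^\intercal \ra_{\omega_p} = \I$ componentwise, i.e.\ to show $\la (\F_p^-)_P | (\F_p^+)_{P'} \ra_{\omega_p} = \delta_{PP'}$ for all length-$(n{-}p)$ words $P,P'$, by downward induction on $p$ starting from $p=n$. The base case $p=n$ is immediate: ${\cal F}_{n,n}$ is a single point, $\F_n^\pm = 1$, and the pairing reduces to the product $1\cdot 1 = 1$. For the inductive step I assume that $\{(\F_p^+)_a\}$ and $\{(\F_p^-)_a\}$ are orthonormal dual bases on ${\cal F}_{n,p}$ and apply the recursion of Lemma~\ref{lemma-3-2}, taking $\varphi_- = (\F_{p-1}^-)_{qQ}$ and $\varphi_+ = (\F_{p-1}^+)_{rR}$ with $q,r\in\{3,\dots,p{-}1\}$.

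First I would compute the base-space one-forms $\sigma_{\pm,a}$ on $\Sigma_p$ from \eqref{sigma-forms}. Because the scalar prefactors appearing in the recursions \eqref{fibration-basis-coefficients} and \eqref{dual-fibration-basis} are constant along the fibre, the induction hypothesis collapses every fibre pairing to a Kronecker delta, leaving $\sigma_{+,a} = \bigl(\tfrac{dz_p}{z_p-z_r}-\tfrac{dz_p}{z_p-z_1}\bigr)\delta_{aR}$ and $\sigma_{-,a} = \bigl(\tfrac{dz_p}{z_p-z_q}-\tfrac{dz_p}{z_p-z_2}\bigr)(\Om_p^{pq})_{aQ}$, where the transpose in \eqref{dual-fibration-basis} has been absorbed.

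Next I would identify which terms of \eqref{lemma-22} survive. Since $\sigma_{-,a}$ has only simple poles at $z_p=z_q$ and $z_p=z_2$, while each $\psi_{+,a}^t$ is holomorphic at $z_p=z_t$, the residues can be non-zero only for $t\in\{q,2\}$. To evaluate them I would solve $\bm\nabla_p^+\psi_+^t = \sigma_+$ to leading order near $z_p=z_t$ using $\om_p^+=\tfrac{1}{\Lambda^2}\sum_i\Om_p^{ip}\tfrac{dz_p}{z_p-z_i}$: matching the coefficient of $\tfrac{1}{z_p-z_t}$ gives $(\psi_+^{(0)})^\intercal\Om_p^{tp}=\Lambda^2\,\rho_t^\intercal$, where $\rho_t$ is the residue of $\sigma_+$ at $z_t$. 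Because $q,r\neq 2$, the source $\sigma_+$ is regular at $z_2$, so $\rho_2=0$ and genericity (invertibility of $\Om_p^{2p}$) forces $\psi_+^{(0)}=0$; hence $t=2$ does not contribute. At $t=q$ the source is regular unless $r=q$, so the same argument kills $r\neq q$, whereas for $r=q$ one finds $\psi_{+,a}^{(0)}=\Lambda^2[(\Om_p^{qp})^{-1}]_{Ra}$.

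Finally I would assemble the single surviving residue at $t=q=r$, namely $\tfrac{1}{\Lambda^2}\sum_a(\Om_p^{pq})_{aQ}\,\Lambda^2[(\Om_p^{qp})^{-1}]_{Ra} = [(\Om_p^{qp})^{-1}\Om_p^{qp}]_{RQ} = \delta_{QR}$, where I use the symmetry $\Om_p^{pq}=\Om_p^{qp}$. This yields $\la (\F_{p-1}^-)_{qQ}|(\F_{p-1}^+)_{rR}\ra_{\omega_{p-1}}=\delta_{qr}\delta_{QR}$, the orthonormality at level $p{-}1$, closing the induction. The main obstacle is the leading-order analysis of the Gauss--Manin system near the boundary points: one must argue that the holomorphic solution's constant term is uniquely pinned down by the residue-matching equation, which relies on genericity of kinematics ensuring the residue braid matrices $\Om_p^{tp}$ have no vanishing eigenvalues (the same assumption invoked in Lemma~\ref{lemma-3-2}). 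A secondary point needing care is that subleading terms of $\psi_{+,a}^t$ never enter the final residue, which holds precisely because $\sigma_{-,a}$ has only simple poles.
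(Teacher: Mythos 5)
Your proof is correct and follows essentially the same route as the paper: downward induction on $p$ from the trivial case $p=n$, evaluation of the pairing via the residue formula of Lemma~\ref{lemma-3-2} with the base-space forms $\sigma_{\pm,a}$ collapsed by the inductive hypothesis, and a leading-order solution of the Gauss--Manin system $\bm\nabla_p^+\psi_+ = \sigma_+$ whose residue-matching (using invertibility of the braid matrices under generic kinematics) leaves only the $q=r$ contribution, yielding $(\Om_p^{qp})^{-1}\Om_p^{qp}=\I$. The only cosmetic difference is that you argue componentwise and first restrict the surviving boundary points via the poles of $\sigma_{-}$, whereas the paper computes the matrix-valued $(\bm\Psi^r)_s$ for all $r$ and then discards the vanishing residues; the content is identical.
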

\begin{proof}
	We use recursion in $p$ starting with the trivial case $p=n$, for which $\la 1 | 1\ra_{\omega_n} = 1$. Assuming that the lemma is true for $p$, we evaluate $\la (\F_{p-1}^-)_q | (\F_{p-1}^+)_s^\intercal \ra_{\omega_{p-1}}$ for each $q,s$ using the result of Lemma~\ref{lemma-3-2},
	\be\label{FF-definition}
	\la (\F_{p-1}^-)_q | (\F_{p-1}^+)_s^\intercal \ra_{\omega_{p-1}} = \frac{1}{\Lambda^2} \sum_{r=1}^{p-1} \Res_{z_p = z_r} \!\left(  \braket{ (\F_{p-1}^-)_q | (\F_p^+)^\intercal }_{\omega_{p}} \!(\bm\Psi^r)_s \right).
	\ee
	For each $r$ and $s$, $(\bm\Psi^r)_s$ is a $\kappa_{n,p} {\times} \kappa_{n,p}$ matrix-valued holomorphic function satisfying $\bm\nabla_{p}^+ (\bm\Psi^r)^\intercal_s = \braket{ \F_p^- | (\F^+_{p-1})^\intercal_s }_{\omega_{p}}$ around $z_p = z_r$. Notice that $(\bm\Psi^r)_s$ is transposed here as $\om_p^+$ acts on the same index as the one contracted in \eqref{FF-definition}. Using the definition of $\F_{p-1}^\pm$ and the inductive assumption we have:
	\be\label{lemma-63}
	\braket{ (\F_{p-1}^-)_q | (\F_p^+)^\intercal }_{\omega_{p}} = \left(\frac{dz_p}{z_p - z_q} - \frac{dz_p}{z_p - z_2}\right) (\Om_p^{pq})^\intercal,
	\ee
	as well as
	\be
	\braket{ \F_p^- | (\F^+_{p-1})^\intercal_s }_{\omega_{p}} = \left( \frac{dz_p}{z_p - z_s} - \frac{dz_p}{z_p - z_1} \right) \I.
	\ee
	The equation \eqref{lemma-63} tells us that it is enough to solve for $(\bm\Psi^r)^\intercal_s$ up to the order $(z_p {-} z_r)^{0}$ for any $r$. Expanding the differential equation for $(\bm\Psi^r)^\intercal_s$ we find:
	\be
	\frac{\partial (\bm\Psi^r)_s^\intercal}{\partial z_p} + (\bm\Psi^r)_s^\intercal \frac{1}{\Lambda^2} \sum_{j=1}^{p-1} \frac{\Om_p^{pj}}{z_p - z_j} = \left( \frac{1}{z_p - z_s} - \frac{1}{z_p - z_1} \right) \I.
	\ee
	Therefore to leading order we have:
	\be\label{Psi-expansion}
	(\bm\Psi^r)^\intercal_s = \Lambda^2\, (\Om_p^{pr})^{-1} (\delta_{rs} - \delta_{r1}) + \ldots,
	\ee
	where the ellipsis denotes terms of order ${\cal O}(z_p - z_r)$. Note that braid matrices themselves are independent of the coordinates $z_i$. Plugging \eqref{lemma-63} and \eqref{Psi-expansion} back into \eqref{FF-definition} gives:
	\be
	\la (\F_{p-1}^-)_q | (\F_{p-1}^+)_s^\intercal \ra_{\omega_{p-1}} {=} \sum_{r=1}^{p-1} \Res_{z_p = z_r} \!\left( \! \left(\frac{dz_p}{z_p - z_q} {-} \frac{dz_p}{z_p - z_2}\right) (\Om_p^{pq})^\intercal ((\Om_p^{pr})^{\intercal})^{-1} (\delta_{rs} {-} \delta_{r1}) {+} \ldots\! \right)\!.
	\ee
	Recalling that $q,s \geq 3$, the only non-zero residue comes from $q{=}r{=}s$, giving
	\be
	\la (\F_{p-1}^-)_q | (\F_{p-1}^+)_s^\intercal \ra_{\omega_{p-1}} = \Res_{z_p = z_q} \!\left(  \frac{dz_p}{z_p - z_q} (\Om_p^{pq})^\intercal ((\Om_p^{pq})^{\intercal})^{-1} \delta_{qs} + \ldots \right) = \delta_{qs} \I,
	\ee
	which proves the required result.
\end{proof}

The above lemma shows that \eqref{dual-fibration-basis} defines an orthonormal dual basis $\FB(\rho)^\vee$ of $H^{n-3}_{-\omega}$, for words $\rho$ of length $n{-}3$ and $\rho(i) = 3,4,\ldots,i{-}1$, analogous to that in \eqref{fibration-basis}. The orthonormality can be stated as $\la \FB(\rho)^\vee | \FB(\sigma) \ra_\omega = \delta_{\rho\sigma}$,

Note that while orthonormality on $\M_{0,n}$ can be always achieved for any set of $(n{-}3)!$ twisted forms by an appropriate rotation of bases, the above result shows that fibration bases have a much stronger property: orthonormality on \emph{all} the fibres at once, which greatly simplifies the form of the recursion relations. By contrast, bases of $\PT(\alpha)$ and $\PT(\alpha)^\vee$ we used before do not have this property. The choice of fibration bases is however not unique: for example we made a choice of special punctures $z_1$ and $z_2$ in the definition of $\F^+_p$ and $\F^-_p$ respectively. Changing this choice together with the range of $q$ yields other sets of orthonormal bases on each ${\cal F}_{n,p}$.

We close this subsection by giving a translation between the fibration bases and Parke--Taylor bases. The specific choice of the definition \eqref{fibration-basis} was made such that the final element of the fibration basis $\FB(\rho)$ coincides with the Parke--Taylor form with the canonical ordering, i.e.,
\be\label{FB-PT}
\FB(34\cdots n{-}1) = \PT(12\cdots n).
\ee
In fact, all the elements of the fibration basis are already written in a form that makes them linear combinations of $\PT(\alpha)$ algebraically (in other words, on the level of $H^{n-3}(\M_{0,n}, \Z)$). We can write this relation using a rotation matrix $\mathbf{T}$,
\be\label{bases-rotation}
\FB(\rho) = \sum_{\widehat{\alpha}} \mathbf{T}_{\rho\widehat{\alpha}}\, \PT(3\widehat{\alpha}12), \qquad \FB(\rho)^\vee = \sum_{\widehat{\alpha}} (\mathbf{T}^{\intercal})^{-1}_{\rho\widehat{\alpha}}\, \PT(3\widehat{\alpha}12)^\vee,
\ee
where we used orthonormality of the fibration and Parke--Taylor bases to arrive at the second equation, which relates the dual bases to each other. The sums go over $(n{-}3)!$ permutations $\widehat{\alpha}$ of labels $\{4,5,\ldots,n\}$. Here the choice of fixed labels $\{1,2,3\}$ reflects the $\SL(2,\C)$ fixing employed in this section. It yields the specific orderings of $\PT(\alpha)^\vee$, which (unlike $\PT(3\widehat{\alpha}12) = \PT(123\widehat{\alpha})$) are not cyclic-invariant. Let us give explicit expressions for the rotation matrix $\mathbf{T}$ and its inverse at low multiplicity, which are found to have entries consisting of only $0,\pm 1$.
\begin{example}
	For $n{=}5$ we find that the two sets of bases are related by
	\be
	\left(
	\begin{array}{c}
		\FB(33) \\
		\FB(34) \\
	\end{array}
	\right)=
	\left(
	\begin{array}{cc}
		\SetToWidest{1} & \SetToWidest{1} \\
		1 & 0 \\
	\end{array}
	\right)\left(
	\begin{array}{c}
		\PT(12345) \\
		\PT(12354) \\
	\end{array}
	\right),
	\ee
	while their duals obey
	\be
	\left(
	\begin{array}{c}
		\FB(33)^\vee \\
		\FB(34)^\vee \\
	\end{array}
	\right)=
	\left(
	\begin{array}{cc}
		\SetToWidest{0} & \SetToWidest{1} \\
		1 & -1 \\
	\end{array}
	\right) \left(
	\begin{array}{c}
		\PT(34512)^\vee \\
		\PT(35412)^\vee \\
	\end{array}
	\right).
	\ee
\end{example}

\begin{example}
	For $n{=}6$ the rotation matrix translates between the two bases as follows,
	\be
	\left(
	\begin{array}{c}
		\FB(333) \\
		\FB(334) \\
		\FB(335) \\
		\FB(343) \\
		\FB(344) \\
		\FB(345) \\
	\end{array}
	\right)=
	\left(
	\begin{array}{cccccc}
		\SetToWidest{1} & \SetToWidest{1} & \SetToWidest{1} & \SetToWidest{1} & \SetToWidest{1} & \SetToWidest{1} \\
		1 & 1 & 1 & 0 & 0 & 0 \\
		1 & 0 & 1 & 1 & 0 & 0 \\
		1 & 1 & 0 & 0 & 1 & 0 \\
		1 & 1 & 0 & 0 & 0 & 0 \\
		1 & 0 & 0 & 0 & 0 & 0 \\
	\end{array}
	\right)\left(
	\begin{array}{c}
		\PT(123456) \\
		\PT(123465) \\
		\PT(123546) \\
		\PT(123564) \\
		\PT(123645) \\
		\PT(123654) \\
	\end{array}
	\right),
	\ee
	and similarly using the second equation in \eqref{bases-rotation} we find
	\be
	\left(
	\begin{array}{c}
		\FB(333)^\vee \\
		\FB(334)^\vee \\
		\FB(335)^\vee \\
		\FB(343)^\vee \\
		\FB(344)^\vee \\
		\FB(345)^\vee \\
	\end{array}
	\right)=
	\left(
	\begin{array}{cccccc}
		\SetToWidest{0} & \SetToWidest{0} & \SetToWidest{0} & \SetToWidest{0} & \SetToWidest{0} & \SetToWidest{1} \\
		0 & 0 & 1 & -1 & 0 & 0 \\
		0 & 0 & 0 & 1 & 0 & -1 \\
		0 & 0 & 0 & 0 & 1 & -1 \\
		0 & 1 & -1 & 1 & -1 & 0 \\
		1 & -1 & 0 & -1 & 0 & 1 \\
	\end{array}
	\right)\left(
	\begin{array}{c}
		\PT(345612)^\vee \\
		\PT(346512)^\vee \\
		\PT(354612)^\vee \\
		\PT(356412)^\vee \\
		\PT(364512)^\vee \\
		\PT(365412)^\vee \\
	\end{array}
	\right).
	\ee
\end{example}

We notice that even though in going from the left to the right equation in \eqref{bases-rotation} we used orthonormality with respect to the intersection pairing, in the above examples the equalities hold algebraically, that is, without using any cohomology relations.

\subsection{\label{sec:recursion-relations-reprise}Recursion Relations (Reprise)}

\textsc{We finish this section} by giving a streamlined version of the recursion relations for intersection numbers. The first step is using Proposition~\ref{proposition} with $U{=}W{=}H^{n-3}_{-\omega}$ and $V{=}X{=}H^{n-3}_{\omega}$ and fibration bases to write out $\braket{\varphi_- | \varphi_+}_\omega$ for a pair of twisted forms $\varphi_\pm \in H^{n-3}_{\pm\omega}$ as:
\be\label{recursion-first-line}
\braket{\varphi_- | \varphi_+}_\omega = \braket{\varphi_- | \F_3^{+}}_\omega^\intercal \braket{ \F_3^{-} | \varphi_+ }_\omega.
\ee
Recall that both $\F_{3}^\pm$ are length-$(n{-}3)!$ vectors of twisted forms. We define the vectors:
\be\label{bold-varphi}
\bm\varphi_3^- := \braket{\varphi_- | \F_3^{+}}_\omega, \qquad \bm\varphi_3^+ := \braket{ \F_3^{-} | \varphi_+ }_\omega,
\ee
such that the intersection number is given by their contraction $(\bm\varphi_3^-)^\intercal \bm\varphi_3^+$. At first sight this might not seem like an efficient representation, but since the recursion relations already depend on $(n{-}3)!$-vectors of twisted forms in the intermediate steps, no efficiency is actually sacrificed. Moreover, it lets us treat both forms entering the intersection number separately.

Given that $\varphi_\pm$ provides the input of our computation and $\bm\varphi_3^\pm$ is the output, our idea is to construct intermediate vectors of twisted forms $\bm\varphi_p^\pm$ which interpolate between the two ends:
\be\label{recursion-map}
\bm\varphi_n^\pm \,\to\, \bm\varphi_{n-1}^\pm \,\to\, \cdots \,\to\, \bm\varphi_{4}^\pm \,\to\, \bm\varphi_{3}^\pm,  
\ee
where $\bm\varphi_n^\pm := \varphi_\pm$. Each $\bm\varphi_p^\pm$ is a length-$\kappa_{n,p}$ vector of twisted forms on $\M_{0,p}$, given by a straightforward generalization of \eqref{bold-varphi}: $\bm\varphi_p^- := \la \varphi_- | \F_p^+ \ra_{\omega_p}$ and $\bm\varphi_p^+ := \la \F_p^- | \varphi_+ \ra_{\omega_p}$. The explicit recursion relation realizing \eqref{recursion-map} takes the form
\be\label{recursion}
(\bm\varphi_{p-1}^\pm)_r = \sum_{q=1}^{p-1} \Res_{z_p = z_q} \!\Big( \MM^\pm_{pqr}\, \bm\varphi_{p}^{\pm} \Big).
\ee
The matrices $\MM_{pqr}^\pm$ encode how the behaviour of the puncture $z_p \in \Sigma_p$ around each boundary component $z_q \in \partial\Sigma_p$ affects the $r$-th entry of $\bm\varphi_{p-1}^\pm$. We follow by giving their definition in terms of braid matrices $\Om^{pj}_{p}$.

Let us start by comparing the recursion for $\bm\varphi_{p-1}^-$ from \eqref{recursion} with Lemma~\ref{lemma-3-2} by sending $\varphi_+ \to (\F^+_{p-1})_r$. Equating the two formulae gives us 
\be\label{P-recursion-minus}
\bm\nabla_{p}^+ \MM_{pqr}^- = \frac{1}{\Lambda^2} \braket{ \F_p^- | (\F_{p-1}^+)_r }_{\omega_p} = \frac{1}{\Lambda^2}\left( \frac{dz_p}{z_p - z_r} - \frac{dz_p}{z_p - z_1} \right) \I.
\ee
Similarly, the $\bm\varphi_{p-1}^+$ case can be compared with Lemma~\ref{lemma-3-1} upon the change $\varphi_- \to (\F_{p-1}^-)_r$, which gives the condition
\be\label{P-recursion-plus}
\bm\nabla_p^- \MM_{pqr}^+ = -\frac{1}{\Lambda^2} \braket{ (\F_{p-1}^-)_r | \F_{p}^+ }_{\omega_p} = -\frac{1}{\Lambda^2} \left(\frac{dz_p}{z_p - z_r} - \frac{dz_p}{z_p - z_2}\right) (\Om_p^{pr})^\intercal.
\ee
Therefore finding $\MM_{pqr}^\pm$ amounts to solving these differential equations. Note that they only involve quantities known ahead of time (in particular they are independent of $\varphi_\pm$) and hence the most computationally involved part of the calculation needs to be performed only once.

Since the matrices $\MM_{pqr}^\pm$ enter a residue computation, we only need their expansion around each $z_p = z_q$. Hence we proceed by writing the holomorphic expansion:
\be
\MM_{pqr}^\pm = \sum_{k=0}^{\infty} {}^{k}\MM_{pqr}^\pm\, (z_p - z_q)^k ,
\ee
which starts at the constant term since the right-hand sides of \eqref{P-recursion-minus} and \eqref{P-recursion-plus} are logarithmic. Recall that the ranges of indices are $p=4,5,\ldots,n$, $q=1,2,\ldots,p{-}1$, $r=3,4,\ldots,p{-}1$. We obtain the following solutions.

\begin{lemma}\label{lemma-3-5}
	The leading matrix $^{0} \MM^-_{pqr}$ is given by:
	\be\label{P0-minus}
	{}^0 \MM_{pqr}^- = \begin{dcases}
		-\left(\O^{p1}_p \right)^{-1} & \text{if}\quad q = 1,\\
		\left(\O^{pr}_p\right)^{-1} & \text{if}\quad q = r,\\
		0 & \text{otherwise}.
	\end{dcases}
	\ee
	The subleading matrices ${}^{k} \MM_{pqr}^-$ for $k>0$ are given by the recursion:
	\be\label{Pk-minus}
	{}^{k} \MM_{pqr}^- = \Bigg( (-1)^{k-1}\! \left( \frac{\delta_{q\neq r}}{(z_q {-} z_r)^k} - \frac{\delta_{q\neq 1}}{(z_q {-} z_1)^k}\right) \I - \sum_{\substack{j=1\\ j\neq q}}^{p-1} \sum_{i=0}^{k-1} \frac{ {}^{i}\MM_{pqr}^- (-1)^{k-i-1} }{(z_q{-}z_j)^{k-i}} \,\O^{pj}_p \Bigg)\! \left( \O^{pq}_p + k \Lambda^2 \I \right)^{-1} \!\!\!\!\!\!\!,\;\;
	\ee
	where the matrices $\Om_{p}^{pj}$ are given in Lemma~\ref{lemma-3-3}.
\end{lemma}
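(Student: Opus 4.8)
The plan is to recognize \eqref{P-recursion-minus} as a first-order linear matrix ordinary differential equation in the single variable $z_p$ and to solve it by a holomorphic power-series ansatz about $z_p=z_q$, matching coefficients order by order. Restricting to $\Sigma_p$ (so that $dz_4=\cdots=dz_{p-1}=0$) and inserting the explicit one-form $\om_p^+ = \frac{1}{\Lambda^2}\sum_{i=1}^{p-1}\O_p^{ip}\,dz_p/(z_p-z_i)$ obtained in Section~\ref{sec:braid-matrices}, and reading off from Lemma~\ref{lemma-3-2} that $\bm\nabla_p^+$ acts on $\MM^-_{pqr}$ by right multiplication by $\om_p^+$ (consistent with $\MM^-_{pqr}$ multiplying $\bm\varphi_p^\pm$ from the left in \eqref{recursion}), equation \eqref{P-recursion-minus} becomes the Fuchsian system
\be
\frac{\partial \MM_{pqr}^-}{\partial z_p} + \frac{1}{\Lambda^2}\sum_{j=1}^{p-1}\frac{\MM_{pqr}^-\,\O_p^{jp}}{z_p-z_j} = \frac{1}{\Lambda^2}\left(\frac{1}{z_p-z_r}-\frac{1}{z_p-z_1}\right)\I,
\ee
with regular singular points at each $z_p=z_j$ and at the two inhomogeneous poles $z_p=z_r$, $z_p=z_1$.

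First I would substitute the ansatz $\MM_{pqr}^- = \sum_{k\geq 0}{}^k\MM_{pqr}^-\,(z_p-z_q)^k$ and partial-fraction-expand every simple pole about $z_p=z_q$. The term $j=q$ alone contributes a genuine pole $\MM_{pqr}^-\,\O_p^{qp}/(z_p-z_q)$, whereas for $j\neq q$ one uses $1/(z_p-z_j)=\sum_{m\geq 0}(-1)^m(z_p-z_q)^m/(z_q-z_j)^{m+1}$, and likewise for the right-hand side. The poles $1/(z_p-z_r)$ and $1/(z_p-z_1)$ survive as genuine poles precisely when $q=r$ and $q=1$ respectively, which is the source of the Kronecker factors $\delta_{q\neq r}$ and $\delta_{q\neq 1}$ in \eqref{Pk-minus}. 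These are the same partial-fraction manipulations already used in Lemma~\ref{lemma-3-3}.

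Matching the coefficient of $(z_p-z_q)^{-1}$ then fixes the leading matrix: one obtains ${}^0\MM_{pqr}^-\,\O_p^{qp}=\I$, $-\I$, or $0$ according to whether $q=r$, $q=1$, or neither, and right-multiplying by $(\O_p^{pq})^{-1}$ reproduces \eqref{P0-minus}. For $k\geq 0$, the coefficient of $(z_p-z_q)^k$ collects $(k{+}1){}^{k+1}\MM_{pqr}^-$ from the derivative, the term $\frac{1}{\Lambda^2}{}^{k+1}\MM_{pqr}^-\,\O_p^{qp}$ from the $j=q$ pole, the convolution $\frac{1}{\Lambda^2}\sum_{j\neq q}\sum_{i=0}^{k}{}^i\MM_{pqr}^-\,\O_p^{jp}(-1)^{k-i}/(z_q-z_j)^{k-i+1}$ from the $j\neq q$ expansion, and the inhomogeneous piece; solving for ${}^{k+1}\MM_{pqr}^-$, relabelling $k{+}1\to k$, and using $\O_p^{jp}=\O_p^{pj}$ yields \eqref{Pk-minus} verbatim. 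Since ${}^k\MM_{pqr}^-$ is expressed through ${}^0\MM_{pqr}^-,\ldots,{}^{k-1}\MM_{pqr}^-$, the series is determined uniquely order by order. The computation is entirely mechanical once the expansion is set up.

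The only genuinely non-formal step—and the place I expect the sole obstacle—is the inversion performed at each order: solving for ${}^k\MM_{pqr}^-$ requires $\O_p^{pq}+k\Lambda^2\I$ to be invertible for every $k\geq 0$, i.e.\ that no eigenvalue of the braid matrix $\O_p^{pq}$ equals $-k\Lambda^2$. By the discussion following Lemma~\ref{lemma-3-3} the eigenvalues of $\O_p^{pq}$ are physical factorization channels $s_{i_1\cdots i_\ell}$, so for generic non-resonant kinematics this holds at all orders; this is the same genericity hypothesis underlying the existence of the holomorphic inverse throughout Section~\ref{sec:recursion-relations-first}. Granting it, the holomorphic solution of the Fuchsian equation is unique and the recursion closes, completing the proof.
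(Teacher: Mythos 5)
Your proof is correct and follows essentially the same route as the paper's: you expand the equation \eqref{P-recursion-minus} (with $\om_p^+$ acting by right multiplication on $\MM^-_{pqr}$) as a holomorphic series about $z_p=z_q$, use the simple-pole coefficient to fix ${}^{0}\MM^-_{pqr}$ via ${}^{0}\MM^-_{pqr}\,\O_p^{pq}=(\delta_{qr}-\delta_{q1})\I$, and match the coefficient of each power $(z_p-z_q)^{k-1}$ to obtain the recursion \eqref{Pk-minus}. Your closing observation that each order requires invertibility of $\O_p^{pq}+k\Lambda^2\I$ is the same genericity assumption the paper invokes after Lemma~\ref{lemma-3-2}, so nothing is missing.
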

\begin{proof}
	Expanding the differential equation \eqref{P-recursion-minus} around $z_p = z_q$ we obtain:
	\begin{align}
	&\sum_{k=1}^{\infty} {}^{k}\MM_{pqr}^-\, k\, (z_p{-}z_q)^{k-1} + \frac{1}{\Lambda^2} \sum_{k=0}^{\infty} {}^{k}\MM_{pqr}^-\, (z_p {-} z_q)^{k-1} \, \O^{pq}_p  \nn\\
	&\qquad+ \frac{1}{\Lambda^2} \sum_{\substack{j = 1\\ j\neq q}}^{p-1} \sum_{i=0}^{\infty} \sum_{\ell=1}^{\infty} \frac{{}^{i}\MM_{pqr}^- (-1)^{\ell-1} (z_p {-} z_q)^{i+\ell-1} }{(z_q {-} z_r)^{\ell}}\, \O^{pj}_p \\
	&= \frac{1}{\Lambda^2} \left( \frac{\delta_{qr}}{z_p{-}z_r} - \frac{\delta_{q1}}{z_p{-}z_1} \right)\I + \frac{1}{\Lambda^2} \sum_{k=1}^{\infty} (z_q {-} z_p)^{k-1} \left( \frac{\delta_{q\neq r}}{(z_q {-} z_r)^k} - \frac{\delta_{q\neq 1}}{(z_q {-} z_1)^k} \right) \I,\nn
	\end{align}
	where we used the fact that $(z_p {-} z_j)^{-1} = \sum_{i=1}^{\infty} (z_q {-} z_p)^{i-1} (z_q {-} z_j)^{-i}$. At order $(z_p {-} z_q)^{-1}$ we find
	\be
	{}^{0}\MM_{pqr}^- \, \O^{pq}_p = \left( \delta_{qr} - \delta_{q1} \right)\I,
	\ee
	which gives \eqref{P0-minus}. At order $(z_p {-} z_q)^{k-1}$ for $k>0$ after rearranging terms we obtain
	\be
	{}^{k} \MM_{pqr}^- \left( k \Lambda^2 \I +  \O^{pq}_p \right) + \sum_{\substack{j=1\\ j\neq q}}^{p-1} \sum_{i=0}^{k-1} \frac{{}^{i}\MM_{pqr}^- (-1)^{k-i-1}}{(z_q {-} z_j)^{k-i}}\, \O^{pj}_p = (-1)^{k-1}\! \left( \frac{\delta_{q\neq r}}{(z_q {-} z_r)^k} - \frac{\delta_{q\neq 1}}{(z_q {-} z_1)^k}\right) \I,
	\ee
	from which \eqref{Pk-minus} follows.
\end{proof}

\begin{lemma}\label{lemma-3-6}
	The leading matrix $^{0} \MM^+_{pqr}$ is given by:
	\be\label{P0-plus}
	{}^{0}\MM_{pqr}^+ = \begin{dcases}
		- (\O^{pr}_p)^\intercal \left((\O^{p2}_p)^\intercal \right)^{-1} & \text{if}\quad q = 2,\\
		\I & \text{if}\quad q = r,\\
		0 & \text{otherwise}.
	\end{dcases}
	\ee
	The subleading matrices ${}^{k} \MM_{pqr}^+$ for $k>0$ are given by the recursion:
	\begin{align}\label{Pk-plus}
	{}^{k}\MM_{pqr}^+  \!=\! \Bigg( &\!(-1)^{k-1}\!\left( \frac{\delta_{q\neq r}}{(z_q {-} z_r)^k} {-} \frac{\delta_{q\neq 2}}{(z_q {-} z_2)^k}\right) (\O^{pr}_p)^\intercal {-}\! \sum_{\substack{j=1\\ j\neq q}}^{p-1} \sum_{i=0}^{k-1} \frac{{}^{i}\MM_{pqr}^+ (-1)^{k-i-1}}{(z_q {-} z_j)^{k-i}} (\O^{pj}_p)^\intercal \!\Bigg)\!\! \left( (\O^{pq}_p)^\intercal {-} k \Lambda^2 \I \right)^{-1} \!\!\!\!\!\!\!,\;\;
	\end{align}
	where the matrices $\Om_{p}^{pj}$ are given in Lemma~\ref{lemma-3-3}.
\end{lemma}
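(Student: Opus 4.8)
The plan is to mirror the proof of Lemma~\ref{lemma-3-5} exactly, now expanding the differential equation \eqref{P-recursion-plus} rather than \eqref{P-recursion-minus}. The only structural differences are that the connection acting on $\MM_{pqr}^+$ is $\bm\nabla_p^-$ instead of $\bm\nabla_p^+$ (so the braid matrices enter transposed and with the relative sign governed by $(\om_p^-)_{ba}=(\om_p^+)_{ab}$ from Lemma~\ref{lemma-3-1}), the distinguished puncture is $z_2$ rather than $z_1$, and the source term carries an extra factor $(\Om_p^{pr})^\intercal$. First I would write $\om_p^- = \frac{1}{\Lambda^2}\sum_{i=1}^{p-1}(\Om_p^{ip})^\intercal\, dz_p/(z_p-z_i)$, so that the defining equation \eqref{P-recursion-plus} reads
\be
\frac{\partial \MM_{pqr}^+}{\partial z_p} - \frac{1}{\Lambda^2}\sum_{\substack{j=1\\ j\neq q}}^{p-1} \frac{(\Om_p^{pj})^\intercal}{z_p - z_j}\,\MM_{pqr}^+ - \frac{1}{\Lambda^2}\frac{(\Om_p^{pq})^\intercal}{z_p-z_q}\,\MM_{pqr}^+ = -\frac{1}{\Lambda^2}\left(\frac{1}{z_p-z_r} - \frac{1}{z_p-z_2}\right)(\Om_p^{pr})^\intercal,
\ee
where I have separated the singular $j=q$ term from the regular ones near $z_p=z_q$.

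Next I would substitute the ansatz $\MM_{pqr}^+ = \sum_{k\geq 0}{}^{k}\MM_{pqr}^+\,(z_p-z_q)^k$ and expand each regular denominator as $(z_p-z_j)^{-1} = \sum_{i\geq 1}(z_q-z_p)^{i-1}(z_q-z_j)^{-i}$, and likewise expand the two source terms, being careful that the singular source pieces $1/(z_p-z_r)$ and $1/(z_p-z_2)$ only appear when $q=r$ and $q=2$ respectively (producing the $\delta_{q\neq r}$, $\delta_{q\neq 2}$ refinements at higher order). Matching powers of $(z_p-z_q)$: at order $(z_p-z_q)^{-1}$ the derivative contributes nothing, so only the singular connection term and the singular source survive, giving
\be
-\frac{1}{\Lambda^2}\,{}^{0}\MM_{pqr}^+\,(\Om_p^{pq})^\intercal = -\frac{1}{\Lambda^2}\left(\delta_{qr}-\delta_{q2}\right)(\Om_p^{pr})^\intercal,
\ee
which inverts to \eqref{P0-plus} (noting $(\Om_p^{pr})^\intercal=(\Om_p^{pq})^\intercal$ when $q=r$, yielding $\I$). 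At order $(z_p-z_q)^{k-1}$ for $k>0$ the derivative supplies $k\,{}^{k}\MM_{pqr}^+$, the singular connection supplies $-\frac{1}{\Lambda^2}{}^{k}\MM_{pqr}^+(\Om_p^{pq})^\intercal$, the regular connection terms supply the double sum over $j\neq q$ and $i<k$, and the sources supply the $(-1)^{k-1}$ terms; solving for ${}^{k}\MM_{pqr}^+$ and factoring out $((\Om_p^{pq})^\intercal - k\Lambda^2\I)^{-1}$ reproduces \eqref{Pk-plus}.

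I expect the only genuinely delicate point to be bookkeeping the transposes and signs consistently: because $\bm\nabla_p^-$ carries a minus sign in front of the connection and $\MM_{pqr}^+$ is multiplied on the \emph{right} by the transposed braid matrices, the sign of the $(\Om_p^{pq})^\intercal$ term is opposite to the $\MM_{pqr}^-$ case, which is precisely why the inverse factor becomes $((\Om_p^{pq})^\intercal - k\Lambda^2\I)^{-1}$ here rather than $(\Om_p^{pq}+k\Lambda^2\I)^{-1}$. Invertibility of this factor is guaranteed by the genericity assumption on $\Res_{z_p=z_q}(\om_p^+)$ already invoked in Lemma~\ref{lemma-3-2} (no eigenvalue of $\Om_p^{pq}$ equals $k\Lambda^2$ for the relevant $k$), so the holomorphic expansion of $\MM_{pqr}^+$ is uniquely determined order by order. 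Everything else is the same routine Frobenius-type matching as in Lemma~\ref{lemma-3-5}, so I would present the $k=0$ and $k>0$ matchings and read off the stated formulae.
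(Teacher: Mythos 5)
Your proposal follows the paper's proof exactly: expand the defining equation \eqref{P-recursion-plus} as a holomorphic power series in $(z_p-z_q)$, match the order $(z_p-z_q)^{-1}$ terms to get \eqref{P0-plus}, and match order $(z_p-z_q)^{k-1}$ to get \eqref{Pk-plus}; your identification of why the inverse factor becomes $\left((\O^{pq}_p)^\intercal - k\Lambda^2\I\right)^{-1}$ (the relative minus sign in $\bm\nabla_p^-$) is also the correct and relevant point.

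One error to fix, though it does not propagate to your conclusions: in your displayed differential equation you wrote the connection terms with the transposed braid matrices multiplying $\MM_{pqr}^+$ from the \emph{left}, e.g.\ $\frac{1}{\Lambda^2}\frac{(\O^{pq}_p)^\intercal}{z_p-z_q}\,\MM_{pqr}^+$, whereas by Lemma~\ref{lemma-3-1} the connection $\bm\nabla_p^-$ acts by \emph{right} multiplication, $\bm\nabla_p^-\MM_{pqr}^+ = d\MM_{pqr}^+ - \MM_{pqr}^+\wedge\om_p^-$ with $\om_p^-=(\om_p^+)^\intercal$. Taken literally, your display would force the inverse onto the wrong side and give ${}^{0}\MM_{p2r}^+ = -\left((\O^{p2}_p)^\intercal\right)^{-1}(\O^{pr}_p)^\intercal$, which contradicts \eqref{P0-plus} since braid matrices do not commute in general (only the infinitesimal pure braid relations \eqref{infinitesimal-pure-braid-relations} hold). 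Your subsequent leading-order matching ${}^{0}\MM_{pqr}^+\,(\O^{pq}_p)^\intercal = (\delta_{qr}-\delta_{q2})(\O^{pr}_p)^\intercal$ and your prose ("multiplied on the right") silently revert to the correct convention, so the derivation as executed is sound; just make the displayed equation consistent with it.
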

\begin{proof}
	Expanding the differential equation \eqref{P-recursion-plus} around $z_p = z_q$ we obtain:
	\begin{align}
	&\sum_{k=1}^{\infty} {}^{k}\MM_{pqr}^+\, k\, (z_p{-}z_q)^{k-1} - \frac{1}{\Lambda^2}\sum_{k=0}^{\infty} {}^{k}\MM_{pqr}^+\, (z_{p} {-} z_{q})^{k-1}  (\O^{pq}_p)^\intercal \nn\\
	&\qquad - \frac{1}{\Lambda^2} \sum_{\substack{j = 1\\ j\neq q}}^{p-1} \sum_{i=0}^{\infty} \sum_{\ell=1}^{\infty} \frac{{}^{i}\MM_{pqr}^+\, (-1)^{\ell-1} (z_p {-} z_q)^{i+\ell-1} }{(z_q {-} z_r)^{\ell}}\, (\O^{pj}_p)^\intercal \\
	&= - \frac{1}{\Lambda^2} \left( \frac{\delta_{qr}}{z_p{-}z_r} - \frac{\delta_{q2}}{z_p{-}z_2} \right) (\O^{pr}_p)^\intercal - \frac{1}{\Lambda^2} \sum_{k=1}^{\infty} (z_q {-} z_p)^{k-1} \left( \frac{\delta_{q\neq r}}{(z_q {-} z_r)^k} - \frac{\delta_{q\neq 2}}{(z_q {-} z_2)^k} \right) (\O^{pr}_p)^\intercal,\nn
	\end{align}
	where we used the fact that $(\om_p^-)^\intercal = \om_p^+$. At the leading order $(z_p {-} z_q)^{-1}$ we have
	\be
	{}^{0}\MM_{pqr}^+ (\O^{pq}_p)^\intercal = \left( \delta_{qr} - \delta_{q2} \right) (\O^{pr}_p)^\intercal,
	\ee
	which gives \eqref{P0-plus}. At order $(z_p {-} z_q)^{k-1}$ for $k>0$ after rearranging terms we obtain
	\begin{align}
	{}^{k}\MM_{pqr}^+ \left( k \Lambda^2 \I -  (\O^{pq}_p)^\intercal \right) - \sum_{\substack{j=1\\ j\neq q}}^{p-1} \sum_{i=0}^{k-1} \frac{{}^{i}\MM_{pqr}^+\, (-1)^{k-i-1}}{(z_q {-} z_j)^{k-i}} (\O^{pj}_p)^\intercal = (-1)^{k}\! \left( \frac{\delta_{q\neq r}}{(z_q {-} z_r)^k} - \frac{\delta_{q\neq 2}}{(z_q {-} z_2)^k}\right) (\O^{pr}_p)^\intercal,&
	\end{align}
	from which \eqref{Pk-plus} follows.
\end{proof}

One of the consequences of the above expressions is that kinematic singularities of the intersection numbers can only occur when $(\Om^{pq}_{p} \mp k\Lambda^2 \I)^{-1}$ diverges, or in other words when one of the eigenvalues of $\Om^{pr}_p$ approaches $\mp k\Lambda^2$. Assuming that all of the eigenvalues take the form of kinematic invariants $s_{i_1 i_2 \cdots i_m}$, we find that $\la \varphi_- | \varphi_+ \ra_\omega$ involves a propagation of a particle with mass $\sqrt{k} \Lambda$ whenever $\bm\varphi_{p}^-$ has pole of order $k{+}1$ or higher, and likewise a tachyon with mass $i\sqrt{k} \Lambda$ propagates when $\bm\varphi_{p}^+$ has a pole of order $k{+}1$ or higher (one should of course allow for a possibility of cancellations between poles). In particular, when all the poles are logarithmic there are only massless internal states in agreement with Theorem~\ref{theorem-21}.

The above derivations prove the main result of the section.
\begin{theorem}\label{theorem-31}
	Recursion relations for intersection numbers $\braket{\varphi_- | \varphi_+}_\omega = (\bm\varphi_3^-)^\intercal \bm\varphi_3^+$ of twisted forms $\varphi_\pm \in H^{n-3}_{\pm\omega}$ are given by
	\be\label{theorem-3-1}
	(\bm\varphi_{p-1}^\pm)_r = \sum_{q=1}^{p-1} \Res_{z_p = z_q} \!\Big( \MM^\pm_{pqr}\, \bm\varphi_{p}^{\pm} \Big),
	\ee
	where $r=3,4,\ldots,p{-}1$ and $\bm\varphi_p^\pm \in \Omega^{p-3,0}(\M_{0,p}) \otimes \C^{\kappa_{n,p}}$ for $3 \leq p \leq n$. The initial conditions are given by $\bm\varphi^\pm_n = \varphi_\pm$ and the matrices $\MM_{pqr}^\pm$ are specified in Lemmata \ref{lemma-3-5}--\ref{lemma-3-6}.
\end{theorem}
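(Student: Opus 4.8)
The plan is to assemble the statement from machinery already in place, in two stages: first the terminal scalar-product formula, then the recursion that feeds it. For the terminal formula I would invoke Proposition~\ref{proposition} with $U{=}W{=}H^{n-3}_{-\omega}$ and $V{=}X{=}H^{n-3}_{\omega}$, inserting the fibration bases $\F_3^\pm$. Because these bases are orthonormal on $\M_{0,n}={\cal F}_{n,3}$, i.e.\ $\la\F_3^-|(\F_3^+)^\intercal\ra_\omega=\I$, the matrix $\S$ of the Proposition is the identity and \eqref{recursion-first-line} holds with no intermediate insertions; the definitions \eqref{bold-varphi} then repackage this exactly as the claimed contraction $(\bm\varphi_3^-)^\intercal\bm\varphi_3^+$. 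This settles the endpoint with no further work.

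The substance is the recursion \eqref{theorem-3-1}, which I would prove for each sign by induction downward from $p{=}n$, with base point $\bm\varphi_n^\pm=\varphi_\pm$ and inductive objects $\bm\varphi_p^-:=\la\varphi_-|\F_p^+\ra_{\omega_p}$ and $\bm\varphi_p^+:=\la\F_p^-|\varphi_+\ra_{\omega_p}$. For the minus sign I would compute the $r$-th block $(\bm\varphi_{p-1}^-)_r=\la\varphi_-|(\F_{p-1}^+)_r\ra_{\omega_{p-1}}$ directly from Lemma~\ref{lemma-3-2}, applied to each entry of the block $(\F_{p-1}^+)_r$. The fibre basis there is $\{(\F_p^+)_a\}$ with orthonormal dual $\{(\F_p^-)_a\}$, so the fibre intersection numbers collapse to Kronecker deltas and $\la\varphi_-|(\F_p^+)_a\ra_{\omega_p}=(\bm\varphi_p^-)_a$. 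The solution $\psi$ appearing in Lemma~\ref{lemma-3-2} obeys $\bm\nabla_p^+\psi=\tfrac{1}{\Lambda^2}\la\F_p^-|(\F_{p-1}^+)_r\ra_{\omega_p}$, whose source is---by orthonormality and the definition \eqref{fibration-basis-coefficients}---a scalar multiple of the identity matrix; assembling the Lemma~\ref{lemma-3-2} vector solutions for the various source directions into a single matrix gives precisely the object $\MM^-_{pqr}$ fixed by the defining equation \eqref{P-recursion-minus}. Reinserting this into Lemma~\ref{lemma-3-2} reproduces \eqref{theorem-3-1} for the minus sign.

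The plus case runs identically but through Lemma~\ref{lemma-3-1}, with $\varphi_-\to(\F_{p-1}^-)_r$; here the relation $(\om_p^-)_{ba}=(\om_p^+)_{ab}$ of Lemma~\ref{lemma-3-1} lets me recognize the relevant source as $-\tfrac{1}{\Lambda^2}\la(\F_{p-1}^-)_r|\F_p^+\ra_{\omega_p}$, which by \eqref{dual-fibration-basis} and orthonormality matches the defining equation \eqref{P-recursion-plus} for $\MM^+_{pqr}$. With $\MM^\pm_{pqr}$ thus characterized, their explicit closed forms follow verbatim from Lemmata~\ref{lemma-3-5}--\ref{lemma-3-6}, which solve \eqref{P-recursion-minus}--\eqref{P-recursion-plus} order by order in $(z_p-z_q)$.

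The step I expect to demand the most care is this last identification: matching the matrix-valued $\MM^\pm_{pqr}$ against the vector-valued solutions of Lemmata~\ref{lemma-3-2}--\ref{lemma-3-1}, keeping straight that the dual connection $\bm\nabla_p^-$ acts with transposed braid matrices from the opposite side, and verifying that the residue sum over $q=1,\ldots,p{-}1$ together with the block label $r=3,\ldots,p{-}1$ exactly exhausts the boundary components of $\Sigma_p$ with the correct multiplicities. None of this is conceptually deep once orthonormality of the fibration bases on every fibre is granted, but it is precisely where an index or sign slip would most easily enter.
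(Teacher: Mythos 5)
Your proposal is correct and follows essentially the same route as the paper: the terminal scalar-product formula comes from Proposition~\ref{proposition} with $U{=}W{=}H^{n-3}_{-\omega}$, $V{=}X{=}H^{n-3}_{\omega}$ and orthonormality of $\F_3^\pm$, and the recursion is obtained by specializing Lemma~\ref{lemma-3-2} with $\varphi_+\to(\F_{p-1}^+)_r$ and Lemma~\ref{lemma-3-1} with $\varphi_-\to(\F_{p-1}^-)_r$, using orthonormality of the fibration bases on every fibre to reduce the sources to \eqref{P-recursion-minus}--\eqref{P-recursion-plus}, whose order-by-order solutions are exactly Lemmata~\ref{lemma-3-5}--\ref{lemma-3-6}. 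This is precisely the derivation the paper gives in Section~\ref{sec:recursion-relations-reprise}, including your flagged care point about the transposed action of $\bm\nabla_p^-$ and the absorption of the $1/\Lambda^2$ prefactor into $\MM^\pm_{pqr}$.
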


Let us remark on the fact that just like $\Om_p^{ij}$, the matrices $\MM_{pqr}^\pm$ depend on the choice of bases of twisted forms on the fibres. Hence it is not unlikely that the above recursion can be further simplified by other choices of bases. Since the main computational bottleneck in the above definitions of $\MM_{pqr}^\pm$ comes from the necessity of inverting braid matrices, it is also worthwhile to seek recursions for the inverse matrices $(\Om_{p}^{ij})^{-1}$ themselves. Some computational efficiency can be gained by sending one of the fixed punctures to infinity, which we did not do here to keep the formulae more symmetric. Note that the above recursions were designed no to be $\pm$-symmetric, which is clear from \eqref{recursion-first-line} and the explicit form of the matrices $\MM_{pqr}^\pm$. A symmetric form can be obtained with minor modifications, for example by starting with the representation
\be
\braket{\varphi_- | \varphi_+}_\omega = \braket{\varphi_- | \F_3^{-}}_\omega^\intercal\, \la \F_3^{+} | (\F_3^{+})^\intercal \ra_\omega \braket{ \F_3^{-} | \varphi_+ }_\omega
\ee
instead of \eqref{recursion-first-line}. The resulting recursion becomes symmetric, at a cost of introducing the intersection matrix $\la\F_3^{+} | (\F_3^{+})^\intercal \ra_\omega$ that can be evaluated explicitly in terms of trivalent diagrams. In applications to basis expansion, as in \eqref{decomposition} and \eqref{closed-string-decomposition}, onto the fibration basis $\FB(\rho)$ it is only needed to compute one side of the recursion corresponding to $\bm\varphi_p^+$.

We finish by giving examples of the matrices $\MM_{pqr}^\pm$ at low multiplicity.

\begin{example}
	For $n=4$, using the notation $s=s_{12}$, $t=s_{23}$, $u=s_{13}$ with $s+t+u=0$, expansions of the one-by-one matrices $\MM^-_{pqr}$ around $z_p = z_q$ are given by:
	\begin{gather}
	\MM_{413}^- = -\frac{1}{t} + \frac{u }{t\left(t {+} \Lambda ^2\right)} \frac{(z_4{-}z_1)(z_2{-}z_3)}{(z_1{-}z_2)(z_1{-}z_3)} + \ldots,\nn\\
	\MM_{423}^- = 0 - \frac{1}{u {+} \Lambda^2} 	\frac{(z_4{-}z_2)(z_3{-}z_1)}{(z_1{-}z_2)(z_2{-}z_3)} + \ldots,\\
	\MM_{433}^- = \frac{1}{s} + \frac{u}{s(s{+}\Lambda^2)} 	\frac{(z_4{-}z_3)(z_1{-}z_2)}{(z_1{-}z_3)(z_2{-}z_3)} + \ldots,\nn
	\end{gather}
	while the $\MM^+_{pqr}$ matrices are to leading orders:
	\begin{gather}
	\MM_{413}^+ = 0 - \frac{s}{t {-} \Lambda^2}\frac{(z_4{-}z_1)(z_2{-}z_3)}{(z_1{-}z_2)(z_1{-}z_3)} + \ldots,\nn\\
	\MM_{423}^+ = - \frac{s}{u} + \frac{s t}{u
		(u {-} \Lambda^2)}\frac{(z_4{-}z_2)(z_3{-}z_1)}{(z_1{-}z_2)(z_2{-}z_3)} + \ldots,\\
	\MM_{433}^+ = 1 - \frac{t }{s {-} \Lambda^2} \frac{(z_4{-}z_3)(z_1{-}z_2)}{(z_1{-}z_3)(z_2{-}z_3)} + \ldots.\nn
	\end{gather}
\end{example}

\begin{example}
	For $n=5$ the one-by-one matrices $\MM_{5qr}^-$ to leading order have the expansion:
	\begin{gather}
	\MM_{513}^- = -\frac{1}{s_{15}} + \frac{1}{s_{15} (s_{15} {+} \Lambda^2)} \left( \frac{s_{25}}{z_1{-}z_2} +\frac{s_{15} {+} s_{35}}{z_1{-}z_3} + \frac{s_{45}}{z_1{-}z_4} \right) (z_5 {-} z_1) + \ldots,\nn\\
	\MM_{523}^- = 0 - \frac{1}{s_{25} {+} \Lambda^2} \left( \frac{1}{z_2{-}z_1}-\frac{1}{z_2{-}z_3} \right) (z_5 {-} z_2) + \ldots,\nn\\
	\MM_{533}^- = \frac{1}{s_{35}} - \frac{1}{s_{35}(s_{35} {+} \Lambda^2)} \left( \frac{s_{15}{+}s_{35}}{z_3{-}z_1} +  \frac{s_{25}}{z_3{-}z_2}+\frac{s_{45}}{z_3{-}z_4} \right)(z_5 {-} z_3) + \ldots,\\
	\MM_{543}^- = 0 - \frac{1}{s_{45} {+} \Lambda^2} \left( \frac{1}{z_4 {-} z_1} - \frac{1}{z_4 {-} z_3} \right) (z_5 {-} z_4) + \ldots,\nn
	\end{gather}
	while the $\MM_{5qr}^+$ matrices are:
	\begin{gather}
	\MM_{513}^+ = 0 - \frac{s_{35}}{s_{15} {-} \Lambda^2} \left( \frac{1}{z_1 {-} z_2} - \frac{1}{z_1 {-} z_3} \right) (z_5 {-} z_1) + \ldots,\nn\\
	\MM_{523}^+ = -\frac{s_{35}}{s_{25}} + \frac{s_{35}}{s_{25}(s_{25}{-}\Lambda^2)} \left( \frac{s_{15}}{z_2 {-} z_1} + \frac{s_{25} {+} s_{35}}{z_2 {-} z_3} + \frac{s_{45}}{z_2 {-} z_4} \right)(z_5 {-} z_2) + \ldots,\nn\\
	\MM_{533}^+ = 1 - \frac{1}{s_{35}{-}\Lambda^2} \left( \frac{s_{15}}{z_3 {-} z_1} + \frac{s_{25} {+} s_{35}}{z_3 {-} z_2} + \frac{s_{45}}{z_3 {-} z_4} \right)(z_5 {-} z_3) + \ldots,\\
	\MM_{543}^+ = 0 - \frac{s_{35}}{s_{45} {-} \Lambda^2} \left( \frac{1}{z_4 {-} z_2} - \frac{1}{z_4 {-} z_3} \right)(z_5 {-} z_4) + \ldots.\nn
	\end{gather}
	The $p=5$ matrices $\MM_{4qr}^-$ are two-by-two (with columns and rows labelled by $3,4$). The entries of $\MM_{413}^-$ have the expansion:
	\begin{align}
	(\MM_{413}^-)_{33} &= -\frac{1}{s_{14}} + \frac{s_{23} s_{24}{-}s_{35} s_{45}}{s_{14} s_{23} \left(s_{14}{+}\Lambda^2 \right)} \frac{(z_4 {-} z_1)(z_2 {-} z_3)}{(z_1 {-} z_2)(z_1 {-} z_3)} + \ldots,\nn\\
	(\MM_{413}^-)_{34} &= \frac{s_{45}}{s_{14} s_{23}} - \frac{s_{45} \left(s_{23}s_{24}{-}s_{35}s_{45}{+}s_{13}(s_{14}{+}\Lambda^2) \right)}{s_{14} s_{23} \left(s_{14} {+} \Lambda^2\right) \left(s_{23}{+}\Lambda^2\right)}\frac{(z_4 {-} z_1)(z_2 {-} z_3)}{(z_1 {-} z_2)(z_1 {-} z_3)} + \ldots,\nn\\
	(\MM_{413}^-)_{43} &= 0 + \frac{s_{35}}{s_{23} \left(s_{14}{+}\Lambda^2\right)} \frac{(z_4 {-} z_1)(z_2 {-} z_3)}{(z_1 {-} z_2)(z_1 {-} z_3)} + \ldots,\\
	(\MM_{413}^-)_{44} &= - \frac{1}{s_{23}} + \frac{\left(s_{24}{+}s_{25}\right) \left(s_{14}{+}\Lambda^2\right) {+} s_{45} \left(s_{14}{-}s_{35}{+}\Lambda^2\right)}{s_{23}
		\left(s_{14}{+}\Lambda^2\right) \left(s_{23}{+}\Lambda^2\right)}\frac{(z_4 {-} z_1)(z_2 {-} z_3)}{(z_1 {-} z_2)(z_1 {-} z_3)} + \ldots.\nn
	\end{align}
	The entries of $\MM_{423}^-$ are given by:
	\begin{align}
	(\MM_{423}^-)_{33} &= 0 -\frac{1}{s_{24}{+}\Lambda^2} \frac{(z_4 {-} z_2)(z_3 {-} z_1)}{(z_1 {-} z_2)(z_2 {-} z_3)} + \ldots,\nn\\
	(\MM_{423}^-)_{34} &= 0 + \ldots,\nn\\
	(\MM_{423}^-)_{43} &= 0 + \frac{s_{35}}{\left(s_{13}{+}\Lambda^2\right) \left(s_{24}{+}\Lambda^2\right)} \frac{(z_4 {-} z_2)(z_3 {-} z_1)}{(z_1 {-} z_2)(z_2 {-} z_3)} + \ldots,\\
	(\MM_{423}^-)_{44} &= 0 -\frac{1}{s_{13}{+}\Lambda^2} \frac{(z_4 {-} z_2)(z_3 {-} z_1)}{(z_1 {-} z_2)(z_2 {-} z_3)} + \ldots,\nn
	\end{align}
	and finally for $\MM_{433}^-$ we find:
	\begin{align}
	(\MM_{433}^-)_{33} &= \frac{s_{34}{+}s_{35}}{s_{12} s_{34}} {+} \frac{s_{24}(s_{34}{+}s_{35})(s_{34}{+}\Lambda^2){+}s_{35} \left(s_{12} s_{24}{+}s_{45}(s_{12}{+}s_{25}{+}\Lambda^2)\right)}{s_{12} s_{34} \left(s_{12}{+}\Lambda^2\right) \left(s_{34}{+}\Lambda^2\right)} \frac{(z_4 {-} z_3)(z_1 {-} z_2)}{(z_1 {-} z_3)(z_2 {-} z_3)} {+} \ldots,\nn\\
	(\MM_{433}^-)_{34} &= \frac{s_{45}}{s_{12} s_{34}} + \frac{s_{45} \left(s_{13}(s_{34}{+}\Lambda^2) {-} s_{45} s_{51}{+}s_{12} s_{24}\right)}{s_{12}
		s_{34} \left(s_{12}{+}\Lambda^2\right) \left(s_{34}{+}\Lambda^2\right)} \frac{(z_4 {-} z_3)(z_1 {-} z_2)}{(z_1 {-} z_3)(z_2 {-} z_3)} + \ldots,\nn\\
	(\MM_{433}^-)_{43} &= \frac{s_{35}}{s_{12} s_{34}} {-}\frac{s_{35} \left(s_{14} \left(s_{12}{+}s_{34}{+}\Lambda^2\right){+}\left(s_{12}{+}s_{15}\right)
		\left(s_{34}{+}s_{45}\right)\right)}{s_{12} s_{34} \left(s_{12}{+}\Lambda^2\right) \left(s_{34}{+}\Lambda^2\right)}  \frac{(z_4 {-} z_3)(z_1 {-} z_2)}{(z_1 {-} z_3)(z_2 {-} z_3)} + \ldots,\\
	(\MM_{433}^-)_{44} &= \frac{s_{34}{+}s_{45}}{s_{12} s_{34}} + \frac{s_{13} \left(s_{34}{+}s_{45}\right) \left(s_{34}{+}s_{45}{+}\Lambda^2\right){-}s_{14} s_{35} s_{45}}{s_{12} s_{34} \left(s_{12}{+}\Lambda^2\right) \left(s_{34}{+}\Lambda^2\right)}\frac{(z_4 {-} z_3)(z_1 {-} z_2)}{(z_1 {-} z_3)(z_2 {-} z_3)} + \ldots.\nn
	\end{align}
	Similarly we list the expansions of the entries for each $\MM_{4qr}^+$, starting with $\MM_{413}^+$,
	\begin{align}
	(\MM_{413}^+)_{33} &= 0 - \frac{s_{35}s_{45}{+}(s_{34}{+}s_{45})(s_{23}{-}\Lambda^2)}{(s_{14}{-}\Lambda ^2)(s_{23}{-}\Lambda^2)} \frac{(z_4 {-} z_1)(z_2 {-} z_3)}{(z_1 {-} z_2)(z_1 {-} z_3)} + \ldots,\nn\\
	(\MM_{413}^+)_{34} &= 0 + \frac{s_{35}}{s_{23}{-}\Lambda ^2} \frac{(z_4 {-} z_1)(z_2 {-} z_3)}{(z_1 {-} z_2)(z_1 {-} z_3)} + \ldots,\nn\\
	(\MM_{413}^+)_{43} &= 0 -\frac{s_{45} \left(s_{13}{+}\Lambda^2\right)}{\left(s_{14} {-} \Lambda ^2\right) \left(s_{23} {-} \Lambda^2\right)} \frac{(z_4 {-} z_1)(z_2 {-} z_3)}{(z_1 {-} z_2)(z_1 {-} z_3)} + \ldots,\\
	(\MM_{413}^+)_{44} &= 0 - \frac{s_{34}{+}s_{35}}{s_{23}{-}\Lambda^2}\frac{(z_4 {-} z_1)(z_2 {-} z_3)}{(z_1 {-} z_2)(z_1 {-} z_3)} + \ldots.\nn
	\end{align}
	For $\MM_{423}^+$ we find:
	\begin{align}
	(\MM_{423}^+)_{33} &= -\frac{s_{34}{+}s_{45}}{s_{24}} + \frac{s_{14} \left(s_{35} s_{45}{+}s_{13} \left(s_{34}{+}s_{45}\right)\right)}{s_{13} s_{24} \left(s_{24}{-}\Lambda^2\right)} \frac{(z_4 {-} z_2)(z_3 {-} z_1)}{(z_1 {-} z_2)(z_2 {-} z_3)} + \ldots,\nn\\
	(\MM_{423}^+)_{34} &= -\frac{s_{14} s_{35}}{s_{13} s_{24}} + \frac{s_{14} s_{35}  \left(s_{12} s_{34} {-} s_{23} \left(s_{14}{+}\Lambda^2\right)\right)}{s_{13} s_{24} \left(s_{13}{-}\Lambda ^2\right) \left(s_{24}{-}\Lambda ^2\right)} \frac{(z_4 {-} z_2)(z_3 {-} z_1)}{(z_1 {-} z_2)(z_2 {-} z_3)} + \ldots,\nn\\
	(\MM_{423}^+)_{43} &= \frac{s_{45}}{s_{24}} + \frac{\left(s_{14} s_{23}{-}s_{12} s_{34}\right) s_{45}}{s_{13} s_{24} \left(s_{24}{-}\Lambda^2\right)}\frac{(z_4 {-} z_2)(z_3 {-} z_1)}{(z_1 {-} z_2)(z_2 {-} z_3)} + \ldots,\\
	(\MM_{423}^+)_{44} &= -\frac{s_{24} \left(s_{34}{+}s_{35}\right) {+} s_{35} s_{45}}{s_{13} s_{24}} + \bigg(
	\frac{s_{23} \left(s_{24} \left(s_{34}{+}s_{35}\right){+}s_{35} s_{45}\right)}{s_{13} s_{24} \left(s_{13}{-}\Lambda ^2\right)}\nn\\
	&\qquad\qquad\qquad\quad -\frac{s_{35} s_{45} \left(s_{24} \left(s_{34}{+}s_{35}\right){+}s_{35} s_{45}{-}s_{13} s_{14}\right)}{s_{13} s_{24}
		\left(s_{13}{-}\Lambda ^2\right) \left(s_{24}{-}\Lambda ^2\right)}
	\bigg)\frac{(z_4 {-} z_2)(z_3 {-} z_1)}{(z_1 {-} z_2)(z_2 {-} z_3)} + \ldots.\nn
	\end{align}
	Finally, evaluating $\MM_{433}^+$ yields:
	\begin{align}
	(\MM_{433}^+)_{33} &= 1 -\frac{s_{14} \left(s_{12}{-}s_{45}{-}\Lambda ^2\right)}{\left(s_{12} {-} \Lambda^2\right) \left(s_{34} {-} \Lambda ^2\right)} \frac{(z_4 {-} z_3)(z_1 {-} z_2)}{(z_1 {-} z_3)(z_2 {-} z_3)} + \ldots,\nn\\
	(\MM_{433}^+)_{34} &= 0 -\frac{s_{14} s_{35}}{\left(s_{12}{-}\Lambda ^2\right) \left(s_{34}{-}\Lambda ^2\right)}\frac{(z_4 {-} z_3)(z_1 {-} z_2)}{(z_1 {-} z_3)(z_2 {-} z_3)} + \ldots,\nn\\
	(\MM_{433}^+)_{43} &= 0 + \frac{s_{45} \left(s_{13}{+}\Lambda^2\right)}{\left(s_{12}{-}\Lambda ^2\right) \left(s_{34}{-}\Lambda ^2\right)}\frac{(z_4 {-} z_3)(z_1 {-} z_2)}{(z_1 {-} z_3)(z_2 {-} z_3)} + \ldots,\\
	(\MM_{433}^+)_{44} &= 1 -\frac{s_{23} \left(s_{34}{-}\Lambda^2\right){+}\left(s_{23}{+}s_{35}\right) s_{45}}{\left(s_{12}{-}\Lambda ^2\right)
		\left(s_{34}{-}\Lambda ^2\right)}
	\frac{(z_4 {-} z_3)(z_1 {-} z_2)}{(z_1 {-} z_3)(z_2 {-} z_3)}
	+ \ldots.\nn
	\end{align}
\end{example}

\pagebreak
\section[Further Examples of Intersection Numbers]{\label{sec:examples}Further Examples of Intersection Numbers}

\textsc{In this section} we give several instructive examples of evaluating intersection numbers using the newly-introduced recursion relations, starting with a discussion of the Yang--Mills and gravity scattering amplitudes, whose twisted forms are inherited from string theory. We then move on to computing intersection numbers of Kac--Moody correlators that can serve as building blocks for more complicated twisted forms. Throughout this section we use the notation $s_{ij}:=(p_i{+}p_j)^2$.

It is not presently known what is the full space of scattering amplitudes admitting an intersection number representation. In the massless case there has been considerable interest in constructing S-matrices of various quantum field theories within the scattering equations formalism with much success, see, e.g., \cite{Cachazo:2013hca,Cachazo:2013iaa,Cachazo:2013iea,Mason:2013sva,Geyer:2014fka,Adamo:2014wea,Cachazo:2014nsa,Cachazo:2014xea,Weinzierl:2014ava,Naculich:2015zha,Naculich:2015coa,Casali:2015vta,Adamo:2015gia,delaCruz:2015raa,Lam:2015mgu,He:2016vfi,Cachazo:2016njl,He:2016dol,He:2016iqi,Zhang:2016rzb,Mizera:2017sen,Azevedo:2017lkz,Heydeman:2017yww,Cachazo:2018hqa,Mizera:2018jbh,He:2018pol,Geyer:2018xgb,Heydeman:2018dje,Geyer:2019ayz}. This suggests that the space of quantum field theories whose amplitudes have an interpretation in terms of intersection numbers should be even richer, especially given their direct link to string theory.

\subsection{\label{sec:Yang-Mills-and-gravity}Gauge and Gravity Scattering Amplitudes}

\textsc{Since the low-energy limits} of open-string integrals and intersection numbers coincide according to \eqref{open-string-limit}, it means that the leading $\alpha'$-order of the intersection numbers of the form $\la \PT(\alpha) |  \varphi^{\text{gauge}}_{+,n} \ra_\omega$ computes Yang--Mills amplitudes. Here $\varphi^{\text{gauge}}_{+,n}$ is a twisted form coming from a correlation function of massless gauge boson vertex operators,
either in the bosonic or superstring case, modulo the Koba--Nielsen factor, see \cite{green1988superstring,polchinski1998string} for details.

This result can be strengthened even further. The decomposition formula \eqref{open-string-decomposition} together with the results of \cite{Mafra:2011nv} imply that in the superstring case the intersection number is independent of $\alpha'$ and thus it computes \emph{exclusively} the low-energy limit of string amplitudes:
\be
\Braket{ \PT(\alpha) |  \varphi^{\text{gauge}}_{+,n} }_\omega = {\cal A}^{\text{YM}}(\alpha),
\ee
which are colour-ordered amplitudes in the pure $\text{U}(N)$ Yang--Mills theory with the ordering $\alpha$. Full tree-level amplitudes can be obtained by decorating $\PT(\alpha)$ with the colour trace $\text{Tr}(T^{c_{\alpha(1)}} T^{c_{\alpha(2)}} \cdots T^{c_{\alpha(n)}})$, where $T^{c_i} \in \text{U}(N)$ are the Chan--Paton factors associated to each puncture \cite{Paton:1969je}, and symmetrizing over all permutations $\alpha$. For simplicity here we only consider colour-ordered amplitudes and ignore the overall coupling constant.

There are many ways of representing correlation functions, as they are unique only up to cohomology relations. Let us consider the textbook form of the correlator in the Ramond--Neveu--Schwarz (RNS) formulation \cite{green1988superstring,polchinski1998string}, which after stripping away the Koba--Nielsen factor $\prod_{i<j} |z_i {-} z_j|^{2p_i {\cdot} p_j / \Lambda^2}$ in our normalization reads:
\be\label{varphi-gauge}
\varphi^{\text{gauge}}_{\pm,n} = d\mu_n\! \int \prod_{i=1}^{n} d\theta_i d\tilde{\theta}_i \frac{\theta_k \theta_\ell}{z_k {-} z_\ell} \exp\! \left( - \sum_{i \neq j}\frac{\theta_i \theta_j p_i {\cdot} p_j + \tilde{\theta}_i \tilde{\theta}_j \varepsilon_i {\cdot} \varepsilon_j + 2(\theta_i {-} \theta_j) \tilde{\theta}_i \varepsilon_i {\cdot} p_j }{z_i {-} z_j \mp \Lambda^2 \theta_i \theta_j}
\right).
\ee
It is written as an integral over Grassmann variables $\theta_i$, $\tilde{\theta}_i$ for each puncture $z_i$.\footnote{Recall that Grassmann variables are mutually anticommuting and the integral $\int \!d\theta_i (\alpha + \beta \theta_i) = \beta$ for bosonic $\alpha, \beta$ extracts the linear part in $\theta_i$. Since $\exp(\beta \theta_i) = 1 + \beta \theta_i$ we have $\int\! d\theta_i \exp(\beta \theta_i) = \beta$. See, e.g., \cite{deligne1999quantum} for a reference. Pfaffian $\Pf\mathbf{A}$ of an antisymmetric $2m {\times} 2m$ matrix $\mathbf{A}$ with entries $\mathbf{A}_{ij}$ is defined through $\Pf\mathbf{A} := \int \prod_{i=1}^{2m} d\theta_i \exp(- \frac{1}{2} \sum_{i \neq j}\theta_i \mathbf{A}_{ij} \theta_j)$. The Pfaffian is equal up to a sign to the square root of $\det \A$.} In full string theory $(z_i | \theta_i) \in \CP^{1|1}$ are coordinates on a super Riemann surface, but in our setup they lose this interpretation and together with $\tilde{\theta}_i$, $\theta_i$ serve as auxiliary variables. The factor $\theta_k \theta_\ell / (z_k {-} z_\ell)$ arises from a pair of vertex operators in the $-1$ picture. The choice of the labels $k,\ell$ is auxiliary, meaning that $\varphi^{\text{gauge}}_{\pm,n}$ with different choices of $k,\ell$ are cohomologous to each other. For simplicity we set $(k,\ell) = (1,2)$ in the following manipulations. Since each appearance of the polarization vector $\varepsilon_i^\mu$ is accompanied by $\tilde{\theta}_i$ and the Grassmann integration extracts only linear terms in $\tilde{\theta}_i$, the twisted form is also multi-linear in polarization vectors. Recall that for massless gauge bosons we have the conditions $p_i^2 = \varepsilon_i {\cdot} p_i = 0$. In addition, gauge invariance of the RNS correlator implies that upon the replacement $\varepsilon_i \to p_i$ for any $i$ the twisted form \eqref{varphi-gauge} becomes cohomologous to zero. For later purposes we introduced two types of twisted forms in \eqref{varphi-gauge}, labelled by a $\pm$ sign, depending on the factor of $\mp\Lambda^2$ in the denominators. Note that even though the twisted forms \eqref{varphi-gauge} come from a superstring computation, they give rise to pure Yang--Mills amplitudes without any supersymmetry (though inclusion of supersymmetry is certainly possible).

In order to get a handle on the combinatorics involved in integrating out Grassmann variables in \eqref{varphi-gauge} let us start by symmetrizing its exponent, after which its argument reads
\be\label{exponent-argument}
- \sum_{i \neq j} \left( \frac{(\theta_i p_i {+} \tilde{\theta}_i \varepsilon_i){\cdot} (\theta_j p_j {+} \tilde{\theta}_j \varepsilon_j) +\theta_i \tilde{\theta}_i \varepsilon_i {\cdot} p_j - \theta_j \tilde{\theta}_j \varepsilon_j {\cdot} p_i}{z_i {-} z_j} \pm \Lambda^2 \frac{\varepsilon_i {\cdot} \varepsilon_j \theta_i \theta_j \tilde{\theta}_i \tilde{\theta}_j}{(z_i{-}z_j)^2}\right).
\ee
Here we also used the fact that
\be
\frac{1}{z_i {-} z_j \mp \Lambda^2 \theta_i \theta_j} = \frac{1}{z_i {-} z_j} \pm \Lambda^2 \frac{\theta_i \theta_j}{(z_i{-}z_j)^2},
\ee
where the second term gives non-zero contributions only for summands proportional to $\varepsilon_i {\cdot} \varepsilon_j$ since all the others contain at least one factor of $\theta_i$. We find that in \eqref{exponent-argument} the first fraction is Gaussian in the Grassmann variables and the second gives a quartic perturbation. Hence it is convenient to rewrite the twisted form as
\be
\varphi^{\text{gauge}}_{\pm,n} = d\mu_n\! \int \prod_{i=1}^{n} d\theta_i d\tilde{\theta}_i \frac{\theta_1 \theta_2}{z_1 {-} z_2} \exp\! \left( -\frac{1}{2} \sum_{i \neq j} \left(\left(\begin{array}{c}
	\theta_i \\
	\tilde{\theta}_i \\
\end{array}\right)^{\!\!\intercal}\!\!
\left(\begin{array}{cc}
	\A_{ij} & -{\mathbf C}_{ji}\\
	{\mathbf C}_{ij} & \B_{ij} \\
\end{array}\right)\!\!
\left(\begin{array}{c}
	\theta_j \\
	\tilde{\theta}_j \\
\end{array}\right)
\pm \Lambda^2 \frac{2\varepsilon_i {\cdot} \varepsilon_j }{(z_i{-}z_j)^2}\, \theta_i \theta_j \tilde{\theta}_i \tilde{\theta}_j \right)
\right).
\ee
Here $\A_{ij}$, $\B_{ij}$, and ${\mathbf C}_{ij}$ are elements of $n{\times}n$ matrices given by
\begin{gather}
\A_{ij} := \begin{dcases}
	\phantom{\frac{1}{1}}\; 0 & \text{if}\quad i = j,\\
	\frac{2p_i {\cdot} p_j}{z_i {-} z_j} & \text{otherwise},
\end{dcases} \qquad
\B_{ij} := \begin{dcases}
	\phantom{\frac{1}{1}}\; 0 & \text{if}\quad i = j,\\
	\frac{2\varepsilon_i {\cdot} \varepsilon_j}{z_i {-} z_j} & \text{otherwise},
\end{dcases} \qquad
{\mathbf C}_{ij} := \begin{dcases}
	-\sum_{k\neq i} \frac{2\varepsilon_i {\cdot} p_k}{z_i {-} z_k} & \text{if}\quad i = j,\\
	\frac{2\varepsilon_i {\cdot} p_j}{z_i {-} z_j} & \text{otherwise}.
\end{dcases}
\end{gather}
Let us introduce a $2n {\times} 2n$ antisymmetric matrix $\mathbf\Psi$ constructed from the above blocks:
\be
{\mathbf\Psi} := \left(\begin{array}{cc}
	\A & -{\mathbf C}^\intercal\\
	{\mathbf C} & \B \\
\end{array}\right).
\ee
We can now consider performing the Grassmann integrals as an expansion in terms of the variable $\Lambda^2$, in which the leading term is the Pfaffian of the matrix $\mathbf\Psi$ with columns and rows $1,2$ removed, while the subleading ones involve Pfaffians of the same matrix with more of its entries removed. To be precise, we have up to an overall sign:
\be
\varphi^{\text{gauge}}_{\pm,n} = \frac{d\mu_n}{z_1 {-} z_2}\! \sum_{q=0}^{n/2-1}\,\! (\mp\Lambda^2)^{q}\! \sum_{\substack{\text{distinct}\\ \text{pairs}\\ \{i_k, j_k\}}} \left( \prod_{k=1}^{q} \frac{2 \varepsilon_{i_k}{\cdot}\varepsilon_{j_k}}{(z_{i_k} {-} z_{j_k})^2}\right) \Pf {\mathbf\Psi}_{12}^{i_1 j_1 i_2 j_2 \cdots i_q j_q}.
\ee
The second sum goes over all $q$ distinct unordered pairs $\{i_1, j_1\}, \{i_2, j_2\}, \ldots, \{i_q, j_q\}$ of labels from the set $\{ 3,4,\ldots,n\}$. We used the notation ${\mathbf\Psi}^{i_1j_1\cdots i_q j_q}_{12}$ to denote the matrix $\mathbf\Psi$ with columns and rows $\{1,2,i_1,j_1,\ldots, i_q,j_q,\allowbreak n{+}i_1, n{+}j_1,\ldots, n{+}i_q, n{+}j_q\}$ removed, such that it becomes a $2(n{-}2q{-}1) {\times}\allowbreak 2(n{-}2q{-}1)$ matrix.
Each term in the resulting expansion has the same mass dimension and is multi-linear in polarization vectors. In order to demystify the notation let us spell out a few leading terms:
\begin{align}
\varphi^{\text{gauge}}_{\pm,n} = &\frac{d\mu_n}{z_1 {-} z_2} \Bigg(\Pf {\mathbf\Psi}_{12} \mp \Lambda^2 \sum_{3\leq i_1 < j_1 \leq n} \frac{2\varepsilon_{i_1} {\cdot} \varepsilon_{j_1}}{(z_{i_1} {-} z_{j_1})^2} \Pf {\mathbf\Psi}^{i_1 j_1}_{12} \\
&\quad + \Lambda^4\, \Bigg(\sum_{3 \leq i_1 < j_1 < i_2 < j_2 \leq n} \!{+}\! \sum_{3 \leq i_1 < i_2 < j_1 < j_2 \leq n}\Bigg) \frac{2\varepsilon_{i_1} {\cdot} \varepsilon_{j_1}}{(z_{i_1} {-} z_{j_1})^2} \frac{2\varepsilon_{i_2} {\cdot} \varepsilon_{j_2}}{(z_{i_2} {-} z_{j_2})^2} \Pf {\mathbf\Psi}^{i_1 j_1 i_2 j_2}_{12} + \cdots \Bigg).\nn
\end{align}

In a similar spirit we can consider the low-energy limit of closed-string amplitudes, whose correlator splits into $\varphi^{\text{gauge}}_{+,n} \wedge \overbar{\widetilde{\varphi}^{\text{gauge}}_{+,n}}$, where the antiholomorphic part involves polarization vectors $\tilde{\varepsilon}^{\mu}_i$ in the place of $\varepsilon^{\mu}_i$. Noticing that in such cases the right-hand side of \eqref{closed-string-limit} involves only intersection numbers independent of $\alpha'$, we can perform the same manipulations as in the proof of Theorem~\ref{theorem-22} to arrive at\footnote{The only non-trivial step is realizing that $\la \PT(1,\widehat{\alpha},n{-}1,n)^\vee | \widetilde{\varphi}^{\text{gauge}}_{+,n} \ra_\omega = \la \widetilde{\varphi}^{\text{gauge}}_{-,n} | \PT(1,\widehat{\alpha},n{-}1,n)^\vee \ra_\omega$, where $\widetilde{\varphi}^{\text{gauge}}_{-,n}$ involves a sign change $\Lambda^2 \to - \Lambda^2$ compared to $\widetilde{\varphi}^{\text{gauge}}_{+,n}$ as in \eqref{varphi-gauge}.}
\be\label{intersection-gravity}
\Braket{ \widetilde{\varphi}^{\text{gauge}}_{-,n} | \varphi^{\text{gauge}}_{+,n} }_\omega = {\cal A}^{\text{GR}}_n,
\ee
which coincides with the low-energy limit associated to the above closed-string correlator: Einstein gravity amplitude. To be a bit more precise, it is the amplitude of ``${\cal N}=0$ supergravity'' involving the graviton $h_{\mu\nu}$, Kalb--Ramond field $B_{\mu\nu}$, and the dilaton $\phi$. Amplitudes of gravitons only are obtained by using the symmetrized version of he polarization vectors, $\varepsilon^{\mu\nu}_i := \varepsilon^{(\mu}_i \tilde{\varepsilon}^{\nu)}_i$.

At this stage it is instructive to compute $n=4$ amplitudes in the above theories. Fixing $\alpha = (1234)$ and using recursion relations from Theorem~\ref{theorem-31} it is enough to compute the two functions ${\bm\varphi}^{\pm}_3$ and multiply them together. For the Parke--Taylor form we find:
\be
{\bm\varphi}_3^- [\PT(1234)] = \frac{1}{s_{12}} + \frac{1}{s_{23}},
\ee
which can be obtained using the definition \eqref{bold-varphi}, the equality $\PT(1234) = \FB(3)$, and the formula in terms of trivalent graphs from \eqref{PT-PT}, or by direct computation. The above notation indicates that we are referring to ${\bm\varphi}_3^-$ computed with the boundary condition $\varphi_- = \PT(1234)$. In order to have a better understanding of how the final result arises, let us split the computation of ${\bm\varphi}_3^+[\varphi^{\text{gauge}}_{+,4}]$ into two pieces: these proportional to $\Lambda^0$ and $\Lambda^2$. For the first term we find:
\begin{align}
{\bm\varphi}_3^+[&\varphi^{\text{gauge}}_{+,4}|_{\Lambda^{0}}] = -4s_{12}s_{23} \left( \frac{\varepsilon_1 {\cdot} \varepsilon_2\,\varepsilon_3 {\cdot} \varepsilon_4}{s_{12} - \Lambda^2} + \frac{\varepsilon_2 {\cdot} \varepsilon_3\,\varepsilon_1 {\cdot} \varepsilon_4}{s_{23}} + \frac{\varepsilon_1 {\cdot} \varepsilon_3\,\varepsilon_2 {\cdot} \varepsilon_4}{s_{13}} \right)\\
&- \frac{8}{s_{13}} \Big( s_{12} (\varepsilon_1 {\cdot} \varepsilon_4\, \varepsilon_2 {\cdot} p_4\, \varepsilon_3{\cdot} p_1 + \varepsilon_2 {\cdot} \varepsilon_4\, \varepsilon_1{\cdot} p_4\, \varepsilon_3 {\cdot} p_2 + \varepsilon_1 {\cdot} \varepsilon_3\, \varepsilon_2 {\cdot} p_3\, \varepsilon_4 {\cdot} p_1 + \varepsilon_2 {\cdot} \varepsilon_3\, \varepsilon_1 {\cdot} p_3\, \varepsilon_4 {\cdot} p_2) \nn\\
& \qquad\, +s_{23} (\varepsilon_2 {\cdot} \varepsilon_4\, \varepsilon_1 {\cdot} p_2\, \varepsilon_3 {\cdot} p_4 + \varepsilon_3 {\cdot} \varepsilon_4\, \varepsilon_1 {\cdot} p_3\, \varepsilon_2 {\cdot} p_4 + \varepsilon_2 {\cdot} \varepsilon_1\, \varepsilon_3 {\cdot} p_1\, \varepsilon_4 {\cdot} p_2 + \varepsilon _3 {\cdot} \varepsilon_1\, \varepsilon_2 {\cdot} p_1\, \varepsilon_4 {\cdot} p_3) \nn\\
& \qquad\, +s_{13} ( \varepsilon_1 {\cdot} \varepsilon_4\, \varepsilon_2 {\cdot} p_1\, \varepsilon_3 {\cdot} p_4 + \varepsilon_3 {\cdot} \varepsilon_4\, \varepsilon_1 {\cdot} p_4\, \varepsilon_2 {\cdot} p_3 + \varepsilon_1 {\cdot} \varepsilon_2\, \varepsilon_3 {\cdot} p_2\, \varepsilon_4 {\cdot} p_1 + \varepsilon_3 {\cdot} \varepsilon_2\, \varepsilon_1 {\cdot} p_2\, \varepsilon_4 {\cdot} p_3 ) \Big).\nn 
\end{align}
Notice that as a consequence of double poles in $\varphi^{\text{gauge}}_{+,4}|_{\Lambda^{0}}$ the above computation resulted in a spurious tachyon pole in the $s_{12}$-channel, which ought to disappear in the final expression. The contributions proportional to $\Lambda^2$ give:
\be
{\bm\varphi}_3^+[\varphi^{\text{gauge}}_{+,4}|_{\Lambda^{2}}] = 4\Lambda^2 s_{23} \frac{\varepsilon_1 {\cdot} \varepsilon_2\, \varepsilon_3 {\cdot} \varepsilon_4}{s_{12} - \Lambda^2},
\ee
which are precisely the terms needed to cancel out the spurious pole.
The amplitude is computed by putting the above partial results together. This gives:
\begin{align}
{\cal A}^{\text{YM}}(1234) &= {\bm\varphi}_3^- [\PT(1234)]\, \Big( {\bm\varphi}_3^+[\varphi^{\text{gauge}}_{+,4}|_{\Lambda^{0}}] + {\bm\varphi}_3^+[\varphi^{\text{gauge}}_{+,4}|_{\Lambda^{2}}] \Big)\nn\\
&= -\frac{16}{s_{12} s_{23}} t_{8,\mu_1 \nu_1 \mu_2 \nu_2 \mu_3 \nu_3 \mu_4 \nu_4}\, p_1^{\mu_1} \varepsilon_1^{\nu_1}  p_2^{\mu_2} \varepsilon_2^{\nu_2}  p_3^{\mu_3} \varepsilon_3^{\nu_3} p_4^{\mu_4} \varepsilon_4^{\nu_4},
\end{align}
where for conciseness we introduced the $t_8$ tensor antisymmetric under the exchange of any pair of indices $(\mu_i, \nu_i) \leftrightarrow (\mu_j, \nu_j)$ and under $\mu_i \leftrightarrow \nu_i$, whose explicit form can be found in \cite{Schwarz:1982jn}. Note that the result is independent of $\Lambda$ in our normalization.

Let us move on to evaluating the $n=4$ gravity amplitude, which can be done in several ways. One of them is to evaluate $\bm\varphi_3^-[\widetilde{\varphi}^{\text{gauge}}_{-,4}]$, which is almost identical to the previous computation and yields:
\be
\bm\varphi_3^-[\widetilde{\varphi}^{\text{gauge}}_{-,4}] = -\frac{16}{s_{12} s_{23}} t_{8,\mu_1 \nu_1 \mu_2 \nu_2 \mu_3 \nu_3 \mu_4 \nu_4}\, p_1^{\mu_1} \tilde\varepsilon_1^{\nu_1}  p_2^{\mu_2} \tilde\varepsilon_2^{\nu_2}  p_3^{\mu_3} \tilde\varepsilon_3^{\nu_3} p_4^{\mu_4} \tilde\varepsilon_4^{\nu_4}.
\ee
Note that it is not of the same form as $\bm\varphi_3^+[\varphi^{\text{gauge}}_{+,4}]$ since the recursion relations are not $\pm$-symmetric. It gives us the gravity amplitude:
\begin{align}
{\cal A}^{\text{GR}}_4 &= \bm\varphi_3^-[\widetilde{\varphi}^{\text{gauge}}_{-,4}] \Big(  {\bm\varphi}_3^+[\varphi^{\text{gauge}}_{+,4}|_{\Lambda^{0}}] + {\bm\varphi}_3^+[\varphi^{\text{gauge}}_{+,4}|_{\Lambda^{2}}] \Big)\nn\\
&= -\frac{s_{12} s_{23}}{s_{13}} \widetilde{\cal A}^{\text{YM}}(1234)\, {\cal A}^{\text{YM}}(1234),\label{KLT-relation-4}
\end{align}
where the tilde indicates a replacement $\varepsilon_i^\mu \to \tilde{\varepsilon}_i^\mu$ (as remarked before, the result should be appropriately symmetrized to extract pure graviton contributions). The gravity amplitude is a quadratic combination of Yang--Mills amplitudes, which can be made manifest by inserting a resolution of identity into the intersection number $\la \widetilde{\varphi}^{\text{gauge}}_{-,4} | {\varphi}^{\text{gauge}}_{+,4} \ra_\omega$ as a special case of Proposition~\ref{proposition}:
\begin{align}
{\cal A}^{\text{GR}}_4 &= \Braket{ \widetilde{\varphi}^{\text{gauge}}_{-,4} | \PT(1234) }_\omega \Braket{ \PT(1234)^\vee | \PT(1234)^\vee }_\omega \Braket{ \PT(1234) | {\varphi}^{\text{gauge}}_{+,4} }_\omega.
\end{align}
Here the first and third intersection numbers compute gauge theory amplitudes, $\widetilde{{\cal A}}^{\text{YM}}(1234)$ and ${\cal A}^{\text{YM}}(1234)$ respectively, while the second one is easily evaluated to $-s_{12}s_{23}/s_{13}$, giving the required KLT relation from \eqref{KLT-relation-4}.

The above computations can be repeated at higher-$n$ using computer algebra software. Since this would not illustrate any new principle other than using recursion relations, we will not print the results here. In the following subsection we will consider intersection numbers of Kac--Moody correlators, which can be thought of as building blocks for more general twisted forms, to demonstrate the use of recursion relations.

The above computations already exemplify a general feature of $\varphi^{\text{gauge}}_{\pm,n}$: even though the twisted form itself depends on $\Lambda$, this mass scale decouples after evaluating intersection numbers. Since each order in $\Lambda$ is non-logarithmic, it produces tachyonic (massive) poles in the $+$ ($-$) case, which conspire to cancel out after summing over all terms. It remains an open question whether there exists a more efficient representation of $\varphi^{\text{gauge}}_{\pm,n}$ that makes this fact manifest.\footnote{For practical application it is of course sufficient to use ${\cal A}^{\text{YM}}(\alpha) = \lim_{\Lambda \to 0} \la \PT(\alpha) | \,d\mu_n \Pf {\bm\Psi}_{12} / (z_1{-}z_2)\ra_\omega$ and a similar expression for gravity, however this does not qualify as a satisfactory answer to the above question.} Perhaps the answer will come from formulating intersection theory of twisted forms on supermoduli spaces.

Let us briefly comment on the relation to the scattering equations formalism implied by \eqref{CHY-formula}. Since in the cases of our interest the left-hand side of \eqref{CHY-formula} is independent of $\Lambda$, also its right-hand side computes Yang--Mills and gravity amplitude exactly. Inside the residue we need to use $\lim_{\Lambda\to0} \varphi^{\text{gauge}}_{\pm,n} = d\mu_n \Pf {\bm\Psi}_{12} / (z_1 {-} z_2)$, which is indeed the factor proposed by Cachazo, He, and Yuan \cite{Cachazo:2013hca,Cachazo:2013iea}. An alternative proof of the fact that this formula computes tree-level amplitudes in Yang--Mills theory was given earlier by Dolan and Goddard \cite{Dolan:2013isa}.

In the case of bosonic strings one can also make use of a twisted form inherited from string theory \cite{green1988superstring}, which can be written concisely as:
\be
\varphi^{\text{bosonic}}_{+,n} = \Lambda^{n-2}\, d\mu_n\! \int \prod_{i=1}^{n} d\theta_i d\tilde\theta_i \exp \left( \sum_{j \neq i} \left(\frac{1}{\Lambda}\frac{ 2\theta_j\tilde\theta_j p_i {\cdot} \varepsilon_j}{z_i {-} z_j} + \frac{\theta_i \tilde\theta_i \theta_j \tilde\theta_j \varepsilon_i {\cdot} \varepsilon_j}{(z_i {-} z_j)^2} \right) \right),
\ee
using a pair of auxiliary Grassmann variables $\theta_i, \tilde\theta_i$ for each puncture (the form $\varphi^{\text{bosonic}}_{-,n}$ is obtained by replacing $\Lambda \to i\Lambda$). Unlike in the case of $\varphi^{\text{gauge}}_{\pm,n}$, intersection numbers computed with $\varphi^{\text{bosonic}}_{\pm,n}$ are no longer independent of $\Lambda$. As mentioned before, the low-energy limit, $\alpha' \to 0$ (or equivalently $\Lambda \to \infty$), of $\la \PT(\alpha) | \varphi^{\text{bosonic}}_{+,n}\ra_\omega$, $\la \widetilde{\varphi}^{\text{gauge}}_{-,n} | \varphi^{\text{bosonic}}_{+,n}\ra_\omega$, and $\la \widetilde{\varphi}^{\text{bosonic}}_{-,n} | \varphi^{\text{bosonic}}_{+,n}\ra_\omega$ coincide with gauge and gravity amplitudes. On the other hand, the massless limit, $\Lambda\to 0$, leads to scattering amplitudes in certain non-unitary theories, such as conformal gravity \cite{Berkovits:2004jj,Johansson:2017srf,Azevedo:2017lkz}. For instance, we have
\begin{align}
\Braket{ \PT(1234) | \varphi^{\text{bosonic}}_{+,n} }_\omega {=} {\cal A}^{\text{YM}}(1234) {-} 4 s_{13} \bigg(
\frac{(s_{12}\, \varepsilon_1 {\cdot} \varepsilon_2 {-} 2 p_1{\cdot}\varepsilon_2\, p_2{\cdot}\varepsilon_1)(s_{12}\, \varepsilon_3 {\cdot}\varepsilon_4 {-} 2 p_3{\cdot} \varepsilon_4\, p_4 {\cdot} \varepsilon_3)}{s_{12}^2 (s_{12} {-} \Lambda^2)}&\nn\\
+ \frac{(s_{13}\, \varepsilon_1 {\cdot} \varepsilon_3 {-} 2 p_1{\cdot}\varepsilon_3\, p_3{\cdot}\varepsilon_1)(s_{13}\, \varepsilon_4 {\cdot}\varepsilon_2 {-} 2 p_4{\cdot} \varepsilon_2\, p_2 {\cdot} \varepsilon_4)}{s_{13}^2 (s_{13}{-}\Lambda^2)}&\nn\\
+ \frac{(s_{23}\, \varepsilon_1 {\cdot} \varepsilon_4 {-} 2 p_1{\cdot}\varepsilon_4\, p_4{\cdot}\varepsilon_1)(s_{23}\, \varepsilon_2 {\cdot}\varepsilon_3 {-} 2 p_2{\cdot} \varepsilon_3\, p_3 {\cdot} \varepsilon_2)}{s_{23}^2 (s_{23}{-}\Lambda^2)}&\\
+ \frac{4(p_4{\cdot}\varepsilon_1\, s_{12} {-} p_{2}{\cdot}\varepsilon_1\, s_{23})(p_1{\cdot}\varepsilon_2\, s_{23} {-} p_{3}{\cdot}\varepsilon_2\, s_{12})(p_2{\cdot}\varepsilon_3\, s_{12} {-} p_{4}{\cdot}\varepsilon_3\, s_{23})(p_3{\cdot}\varepsilon_4\, s_{23} {-} p_{1}{\cdot}\varepsilon_4\, s_{12})}{\Lambda^2\, s_{12}^2\, s_{13}^2\, s_{23}^2}&
\bigg),\nn
\end{align}
in agreement with \cite{Huang:2016tag}. Surprisingly, even at finite $\Lambda$ some of the aforementioned intersection numbers compute amplitudes in quantum field theories recently identified by Azevedo et al. \cite{Azevedo:2018dgo}.

\subsection{\label{sec:Kac-Moody-currents}Intersection Numbers of Kac--Moody Correlators}

\textsc{Let us consider correlation functions} of Kac--Moody currents, see, e.g., \cite{doi:10.1142/S0217751X86000149,frenkel1992}.\footnote{To be more precise, we are referring to the correlators $\la {\cal J}^{c_1}(z_1) {\cal J}^{c_2}(z_2) \cdots {\cal J}^{c_n}(z_n) \ra$ of the Kac--Moody currents ${\cal J}^{c_i}(z_i)$ with the operator product expansion
	\be
	\la {\cal J}^{a}(z_i) {\cal J}^{b}(z_j) \ra = \frac{k\, \delta^{a b}}{(z_i {-} z_j)^2} + \frac{{f^{a b}}_{c}\, {\cal J}^{c}(z_j)}{z_i {-} z_j} + \ldots,
	\ee
	where ${f^{ab}}_{c}$ are structure constants of a compact simple Lie algebra and $k$ is the central extension.
}
We will be interested in coefficients of each trace structure $\text{Tr}(T^{c_{\alpha_1(1)}} T^{c_{\alpha_1(2)}} \cdots) \text{Tr}(T^{c_{\alpha_2(1)}}\allowbreak T^{c_{\alpha_2(2)}} \cdots) \cdots\allowbreak \text{Tr}(T^{c_{\alpha_{\text{T}}(1)}} T^{c_{\alpha_{\text{T}}(2)}} \cdots)$, which regarded as a twisted form in $H^{n-3}_{\pm\omega}$ read, up to an overall constant,
\be\label{multi-trace-PT}
\mathrm{PT}(\alpha_1 | \alpha_2 | \cdots | \alpha_{\mathrm{T}}) := \frac{d\mu_n}{\prod_{m=1}^{T} \prod_{i=1}^{|\alpha_m|} \left(z_{\alpha_m\! (i)} - z_{\alpha_m\! (i+1)}\right)}.
\ee
The set $\{\alpha_1, \alpha_2,\ldots, \alpha_{\mathrm{T}}\}$ is a partition of $n$ labels into $\mathrm{T}$ orderings $\{\alpha_m\}_{m=1}^{\mathrm{T}}$ with at least two labels each. The resulting multi-trace Parke--Taylor factors \eqref{multi-trace-PT} are $\SL(2,\C)$-invariant and can serve as building blocks for more complicated twisted forms. In the sequel we consider their intersection numbers.

Following the notation used in the previous subsection we use $\bm\varphi_{p}^{\pm}[\varphi_\pm]$ to denote partial results of the recursion relations from Theorem~\ref{theorem-31} associated to the boundary conditions $\bm\varphi_{n}^{\pm} = \varphi_\pm$. The choice of fibration bases was designed to simplify computation of planar amplitudes: using \eqref{FB-PT} and the definition \eqref{bold-varphi} we find:
\be\label{planar-recursion}
\bm\varphi_3^+[\PT(12\cdots n)] = \left( 0, 0, \ldots, 0, 1 \right)^\intercal,
\ee
which becomes a vector of $(n{-}3)!{-}1$ zeros and a single one in the final entry. This can be exploited when computing $\la \varphi_- | \PT(12\cdots n) \ra_\omega$ as only the final entry of $\bm\varphi_3^-[\varphi_-]$ needs to be computed in such cases.

We close this section by giving examples of intersection numbers for $n=4,5,6$. Computations of the results printed below typically take a fraction of a second using modern symbolic manipulation software.

\begin{example}
	For $n=4$ computing $\bm\varphi_3^-$ for all inequivalent Parke--Taylor forms gives:
	\begin{align}
	\bm\varphi_3^-[ \big( \PT(1234), \PT(1243), \PT(1423), \PT(12|34), \PT(13|24), \PT(14|23) \big) ]&\nn\\
	=\left(
	\dfrac{1}{s}{+}\dfrac{1}{t}\quad -\dfrac{1}{s}\quad  -\dfrac{1}{t}\quad  \dfrac{u}{s \left(s{+}\Lambda ^2\right)}\quad \dfrac{1}{u{+}\Lambda^2}\quad
	\dfrac{u}{t \left(t{+}\Lambda^2\right)}
	\right),&\label{KM-examples-4-minus}
	\end{align}
	where we used $s := s_{12}$, $t := s_{23}$, and $u := s_{13}$. Similarly, for $\bm\varphi_3^+$ we obtain:
	\begin{align}
	\bm\varphi_3^+[ \big( \PT(1234), \PT(1324), \PT(1342), \PT(12|34), \PT(13|24),\PT(14|23) \big) ]&\nn\\
	=\left(
	1\quad \dfrac{t}{u}\quad \dfrac{s}{u}\quad -\dfrac{t}{s{-}\Lambda ^2}\quad -\dfrac{s t}{u \left(u{-}\Lambda ^2\right)}\quad -\dfrac{s}{t{-}\Lambda ^2}
	\right).&\label{KM-examples-4-plus}
	\end{align}
	Computing all combinations of intersection numbers amounts to taking $\eqref{KM-examples-4-minus}^\intercal \eqref{KM-examples-4-plus}$, which gives:
	\be
	\left(
	\begin{array}{*6{>{}c}}
		\frac{1}{s}{+}\frac{1}{t} & -\frac{1}{s} & -\frac{1}{t} & \frac{u}{s \left(s{-}\Lambda ^2\right)} & \frac{1}{u{-}\Lambda ^2} &
		\frac{u}{t \left(t{-}\Lambda ^2\right)} \\
		-\frac{1}{s} & \frac{1}{s}{+}\frac{1}{u} & -\frac{1}{u} & \frac{t}{s \left(s{-}\Lambda ^2\right)} & \frac{t}{u \left(u{-}\Lambda
			^2\right)} & \frac{1}{t{-}\Lambda ^2} \\
		-\frac{1}{t} & -\frac{1}{u} & \frac{1}{t}{+}\frac{1}{u} & \frac{1}{s{-}\Lambda ^2} & \frac{s}{u \left(u{-}\Lambda ^2\right)} &
		\frac{s}{t \left(t{-}\Lambda ^2\right)} \\
		\frac{u}{s \left(s{+}\Lambda ^2\right)} & \frac{t}{s \left(s{+}\Lambda ^2\right)} & \frac{1}{s{+}\Lambda ^2} & -\frac{t u}{s
			\left(s{-}\Lambda ^2\right) \left(s{+}\Lambda ^2\right)} & -\frac{t}{\left(s{+}\Lambda ^2\right) \left(u{-}\Lambda ^2\right)} &
		-\frac{u}{\left(s{+}\Lambda ^2\right) \left(t{-}\Lambda ^2\right)} \\
		\frac{1}{u{+}\Lambda ^2} & \frac{t}{u \left(u{+}\Lambda ^2\right)} & \frac{s}{u \left(u{+}\Lambda ^2\right)} &
		-\frac{t}{\left(s{-}\Lambda ^2\right) \left(u{+}\Lambda ^2\right)} & -\frac{s t}{u \left(u{-}\Lambda ^2\right) \left(u{+}\Lambda
			^2\right)} & -\frac{s}{\left(t{-}\Lambda ^2\right) \left(u{+}\Lambda ^2\right)} \\
		\frac{u}{t \left(t{+}\Lambda ^2\right)} & \frac{1}{t{+}\Lambda ^2} & \frac{s}{t \left(t{+}\Lambda ^2\right)} &
		-\frac{u}{\left(s{-}\Lambda ^2\right) \left(t{+}\Lambda ^2\right)} & -\frac{s}{\left(t{+}\Lambda ^2\right) \left(u{-}\Lambda
			^2\right)} & -\frac{s u}{t \left(t{-}\Lambda ^2\right) \left(t{+}\Lambda ^2\right)} \\
	\end{array}
	\right).
	\ee
\end{example}

\begin{example}
	For $n=5$ we start by considering all $12$ inequivalent single-trace contributions. In the first step of the recursion for $\bm\varphi_p^-$ with $p{=}4$ we find:
	\begin{align}
	&\bm\varphi_4^-[ \big( \PT(12345), \PT(12354), \PT(12435), \PT(12453), \PT(12534), \PT(12543) \nn\\
	&\qquad\,\PT(14235), \PT(14253), \PT(14523), \PT(15234), \PT(15243), \PT(15423) \big) ] \nn\\
	&= \left(
	\begin{array}{*6{>{\displaystyle}c}}
	\frac{z_{13}}{s_{15} z_{14} z_{43}} & \frac{z_{13}}{s_{35} z_{14} z_{43}} & \frac{\left(s_{35}{+}s_{15}\right) z_{23}}{s_{35}
		s_{15} z_{24} z_{34}} & \frac{z_{23}}{s_{35} z_{24} z_{43}} & \frac{z_{31}}{s_{35} z_{14} z_{43}} & 0 \\
	\frac{\left(s_{45}{+}s_{15}\right) z_{13}}{s_{45} s_{15} z_{14} z_{43}} & \frac{z_{31}}{s_{45} z_{14} z_{43}} &
	\frac{z_{32}}{s_{15} z_{24} z_{43}} & \frac{z_{32}}{s_{45} z_{24} z_{43}} & 0 & \frac{z_{23}}{s_{45} z_{24} z_{43}} \\
	\end{array}\right. \nn\\
	&\qquad
	\left. \begin{array}{*6{>{\displaystyle}c}}
	\frac{\left(s_{35}{+}s_{15}\right) z_{12}}{s_{35} s_{15} z_{14} z_{24}} & \frac{z_{12}}{s_{35} z_{14} z_{42}} & 0 &
	\frac{z_{31}}{s_{15} z_{14} z_{43}} & \frac{z_{23}}{s_{15} z_{24} z_{43}} & \frac{z_{12}}{s_{15} z_{14} z_{42}} \\
	\frac{z_{21}}{s_{15} z_{14} z_{42}} & 0 & \frac{z_{21}}{s_{45} z_{14} z_{42}} & \frac{z_{31}}{s_{15} z_{14} z_{43}} &
	\frac{z_{23}}{s_{15} z_{24} z_{43}} & \frac{\left(s_{45}{+}s_{15}\right) z_{12}}{s_{45} s_{15} z_{14} z_{42}} \\
	\end{array}
	\right) dz_4,
	\end{align}
	which in the second step, $p{=}3$, gives:
	\begin{align}
	&\bm\varphi_3^-[ \big( \PT(12345), \PT(12354), \PT(12435), \PT(12453), \PT(12534), \PT(12543) &\nn\\
	&\qquad\,\PT(14235), \PT(14253), \PT(14523), \PT(15234), \PT(15243), \PT(15423) \big) ] \nn\\
	&= \medmath{\left(\!\!
	\begin{array}{*6{>{}c}}
	\frac{\frac{1}{s_{34}}{+}\frac{1}{s_{23}}}{s_{15}}{+}\frac{1}{s_{12} s_{34}} &
	\frac{\frac{1}{s_{35}}{+}\frac{1}{s_{23}}}{s_{14}}{+}\frac{1}{s_{12} s_{35}} &
	-\frac{\frac{1}{s_{35}}{+}\frac{1}{s_{34}}}{s_{12}}{-}\frac{1}{s_{34} s_{15}} & \frac{1}{s_{12} s_{35}} &
	-\frac{\frac{1}{s_{35}}{+}\frac{1}{s_{34}}}{s_{12}}{-}\frac{1}{s_{14} s_{35}} & \!\!\!\!\frac{1}{s_{12} s_{34}} \\
	\frac{\frac{1}{s_{45}}{+}\frac{1}{s_{34}}}{s_{12}}{+}\frac{\frac{1}{s_{15}}{+}\frac{1}{s_{45}}}{s_{23}}{+}\frac{1}{s_{34} s_{15}} &
	-\frac{\frac{1}{s_{23}}{+}\frac{1}{s_{12}}}{s_{45}} & -\frac{\frac{1}{s_{15}}{+}\frac{1}{s_{12}}}{s_{34}} & -\frac{1}{s_{12}
		s_{45}} & -\frac{1}{s_{12} s_{34}} & \!\!\!\!\frac{\frac{1}{s_{45}}{+}\frac{1}{s_{34}}}{s_{12}} \\
	\end{array}
	\right.}\nn\\
	&\qquad
	\left.
	\begin{array}{*6{>{}c}}
	-\frac{\frac{\frac{s_{23}}{s_{35}}+1}{s_{14}}{+}\frac{1}{s_{15}}}{s_{23}} & \frac{1}{s_{14} s_{35}} & \frac{1}{s_{14} s_{23}} &
	-\frac{\frac{1}{s_{34}}+\frac{1}{s_{23}}}{s_{15}}{-}\frac{1}{s_{14} s_{23}} & \frac{1}{s_{34} s_{15}} & \frac{1}{s_{23} s_{15}}
	\\
	-\frac{1}{s_{23} s_{15}} & 0 & -\frac{1}{s_{23} s_{45}} & -\frac{\frac{1}{s_{34}}+\frac{1}{s_{23}}}{s_{15}} & \frac{1}{s_{34}
		s_{15}} & \frac{\frac{1}{s_{15}}+\frac{1}{s_{45}}}{s_{23}} \\
	\end{array}
	\right).
	\end{align}
	Similarly, for $\bm\varphi_p^{+}$ and $p{=}4$ we obtain:
	\begin{align}
	&\bm\varphi_4^+[ \big( \PT(12345), \PT(12354), \PT(12435), \PT(12453), \PT(12534), \PT(12543) \nn\\
	&\qquad\,\PT(14235), \PT(14253), \PT(14523), \PT(15234), \PT(15243), \PT(15423) \big) ] \nn\\
	&= \left(
	\begin{array}{*6{>{\displaystyle}c}}
	0 & \frac{z_{13}}{z_{14} z_{43}} & \frac{z_{32}}{z_{24} z_{43}} & \frac{z_{23}}{z_{24} z_{43}} &
	\frac{\left(s_{25}{+}s_{35}\right) z_{13}}{s_{25} z_{14} z_{34}} & \frac{s_{35} z_{23}}{s_{25} z_{24} z_{43}} \\
	\frac{z_{13}}{z_{14} z_{43}} & \frac{z_{31}}{z_{14} z_{43}} & 0 & \frac{z_{32}}{z_{24} z_{43}} & \frac{s_{45}
		z_{31}}{s_{25} z_{14} z_{43}} & \frac{\left(s_{25}{+}s_{45}\right) z_{23}}{s_{25} z_{24} z_{43}} \\
	\end{array}
	\right.\nn\\
	&\qquad
	\left.
	\begin{array}{*6{>{\displaystyle}c}}
	\frac{z_{21}}{z_{14} z_{42}} & \frac{\left(s_{25}{+}s_{35}\right) z_{12}}{s_{25} z_{14} z_{42}} & \frac{s_{35} z_{21}}{s_{25}
		z_{14} z_{42}} & \frac{s_{35} z_{13}}{s_{25} z_{14} z_{43}} & \frac{s_{35} z_{32}}{s_{25} z_{24} z_{43}} & 0 \\
	0 & \frac{s_{45} z_{12}}{s_{25} z_{14} z_{42}} & \frac{\left(s_{25}{+}s_{45}\right) z_{21}}{s_{25} z_{14} z_{42}} & \frac{s_{45}
		z_{13}}{s_{25} z_{14} z_{43}} & \frac{s_{45} z_{32}}{s_{25} z_{24} z_{43}} & \frac{z_{12}}{z_{14} z_{42}} \\
	\end{array}
	\right) dz_4,
	\end{align}
	which in the final step, $p{=}3$, gives rise to
	\begin{align}
	&\bm\varphi_3^+[ \big( \PT(12345), \PT(12354), \PT(12435), \PT(12453), \PT(12534), \PT(12543) &\nn\\
	&\qquad\,\PT(14235), \PT(14253), \PT(14523), \PT(15234), \PT(15243), \PT(15423) \big) ] \nn\\
	&=\left(
	\begin{array}{*6{>{}c}}
	0 & 1 & \frac{s_{14}}{s_{24}} & -\frac{s_{14} \left(s_{13}{+}s_{35}\right)}{s_{24} s_{13}} & -\frac{s_{35}}{s_{25}}{-}1 &
	-\frac{s_{14} s_{35}}{s_{13} s_{25}} \\
	1 & -1 & \frac{s_{45}}{s_{24}} & \frac{s_{15} s_{45}{-}s_{12} s_{24}}{s_{13} s_{24}}{-}1 & -\frac{s_{45}}{s_{25}} & \frac{s_{12}
		s_{25}{-}s_{14} s_{45}}{s_{25} s_{13}}{+}1 \\
	\end{array}
	\right. \label{example-n-5-1}\\
	&\qquad
	\medmath{\left.
	\begin{array}{cccccc}
	-\frac{s_{14}}{s_{24}}{-}1 &\!\!\! \frac{\frac{s_{14} \left(s_{24}+s_{25}\right) \left(s_{13}+s_{35}\right)}{s_{13}
			s_{24}}-s_{23}}{s_{25}} & \frac{s_{35} \left(s_{12}+s_{15}\right)}{s_{13} s_{25}} & \frac{s_{35}}{s_{25}} & \frac{s_{14}
		\left(s_{24}+s_{25}\right) s_{35}}{s_{13} s_{24} s_{25}} & \!\!\!-\frac{s_{14} s_{35}}{s_{13} s_{24}} \\
	-\frac{s_{45}}{s_{24}} &\!\!\! \frac{\left(\frac{s_{13}+s_{14}}{s_{25}}-\frac{s_{15}}{s_{24}}\right) s_{45}}{s_{13}} & \frac{s_{12}
		s_{25}-\left(s_{13}+s_{14}\right) s_{45}}{s_{13} s_{25}} & \frac{s_{45}}{s_{25}} & \frac{\left(s_{14}
		s_{24}-\left(s_{13}+s_{15}\right) s_{25}\right) s_{45}}{s_{13} s_{24} s_{25}} & \!\!\!\frac{\left(s_{13}+s_{23}\right)
		s_{24}-s_{35} s_{45}}{s_{13} s_{24}} \\
	\end{array}\!\!
	\right)}.\nn
	\end{align}
	Considering all $10$ double-trace contributions, in the first step, $p{=}4$, for $\bm\varphi_p^-$ we obtain:
	\begin{align}
	&\bm\varphi_4^-[ \big( \PT(12|345), \PT(13|245), \PT(14|235), \PT(15|234), \PT(23|145) \nn\\
	&\qquad\,\PT(24|135), \PT(25|134), \PT(34|125), \PT(35|124), \PT(45|123) \big) ] \nn\\
	&= \left(
	\begin{array}{*5{>{\displaystyle}c}}
	\frac{z_{13} z_{32}}{s_{35} z_{12} z_{34}^2} & 0 & \frac{z_{12} z_{13}}{s_{35} z_{14}^2 z_{23}} &
	\frac{\frac{s_{45}}{z_{41}}{+}\frac{s_{35}+s_{15}}{z_{42}}{+}\frac{s_{25}}{z_{43}}}{s_{15} \left(s_{15}{+}\Lambda^2\right)} &
	\frac{z_{12} z_{13}}{s_{15} z_{14}^2 z_{23}} \\
	\frac{z_{13} z_{23}}{s_{45} z_{12} z_{34}^2} & \frac{z_{12} z_{23}}{s_{45} z_{13} z_{24}^2} & 0 &
	\frac{\frac{s_{45}+s_{15}}{z_{41}}{+}\frac{s_{35}}{z_{42}}{+}\frac{s_{25}}{z_{43}}}{s_{15} \left(s_{15}{+}\Lambda^2\right)} &
	\frac{\left(s_{45}{+}s_{15}\right) z_{12} z_{13}}{s_{45} s_{15} z_{14}^2 z_{23}} \\
	\end{array}\right.\\
	&\;\quad\left.
	\begin{array}{*5{>{\displaystyle}c}}
	\frac{\left(s_{35}{+}s_{15}\right) z_{12} z_{23}}{s_{35} s_{15} z_{13} z_{24}^2} & \frac{z_{13}}{z_{14} z_{43} \left(s_{25}{+}\Lambda^2\right)} & \frac{z_{13} z_{23}}{s_{15} z_{12} z_{34}^2} & \frac{s_{45} z_{14} z_{23} {-} s_{25} z_{12} z_{34}}{s_{35}
		z_{14} z_{42} z_{43} \left(s_{35}{+}\Lambda^2\right)} & \frac{z_{31}}{z_{41}z_{34}(s_{45}{+}\Lambda^2)} \\
	\frac{z_{12} z_{23}}{s_{15} z_{13} z_{24}^2} & \frac{z_{23}}{z_{24} z_{43} \left(s_{25}{+}\Lambda^2\right)} & \frac{z_{13}
		z_{23}}{s_{15} z_{12} z_{34}^2} & \frac{z_{23}}{z_{24} z_{43} \left(s_{35}{+}\Lambda^2\right)} &
	\frac{\frac{s_{45}+s_{15}}{z_{41}}{+}\frac{s_{25}}{z_{42}}{+}\frac{s_{35}+2 s_{45}}{z_{43}}}{s_{45} \left(s_{45}{+}\Lambda^2\right)} \\
	\end{array}
	\!\!\right)
	dz_4,\nn
	\end{align}
	which in the second step, $p{=}3$, gives:
	\begin{align}
	&\bm\varphi_3^-[ \big( \PT(12|345), \PT(13|245), \PT(14|235), \PT(15|234), \PT(23|145) \nn\\
	&\qquad\,\PT(24|135), \PT(25|134), \PT(34|125), \PT(35|124), \PT(45|123) \big) ] \nn\\
	&= \left(
	\begin{array}{ccccc}
	\frac{s_{34} \left(s_{45}{-}s_{13}\right)+\left(s_{12}{-}s_{45}\right) s_{25}}{s_{12} s_{34} s_{35} \left(s_{12}{+}\Lambda^2\right)}
	& 0 & \frac{s_{23} s_{24}{-}s_{35} s_{45}}{s_{14} s_{23} s_{35} \left(s_{14}{+}\Lambda^2\right)} & \frac{s_{23} s_{25}{-}s_{34}
		s_{45}}{s_{23} s_{34} s_{15} \left(s_{15}{+}\Lambda^2\right)} & \frac{s_{14} s_{24}{-}s_{25} s_{15}}{s_{14} s_{23} s_{15}
		\left(s_{23}{+}\Lambda^2\right)} \\
	\frac{s_{45} \left(s_{25}-s_{34}\right)+s_{13} s_{34}}{s_{12} s_{34} s_{45} \left(s_{12}{+}\Lambda^2\right)} &
	\frac{1}{s_{45} \left(s_{13}{+}\Lambda^2\right)} & \frac{1}{s_{23} \left(s_{14}{+}\Lambda^2\right)} & \frac{s_{23}
		\left(s_{25}-s_{34}\right)+s_{34} s_{14}}{s_{23} s_{34} s_{15} \left(s_{15}{+}\Lambda^2\right)} &
	\frac{\left(s_{23}-s_{14}\right) s_{24}+s_{25} s_{15}}{s_{23} s_{45} s_{15} \left(s_{23}{+}\Lambda^2\right)} \\
	\end{array}
	\right.\nn\\
	&\qquad\left.
	\begin{array}{ccccc}
	\frac{s_{35}+s_{15}}{s_{35} s_{15} \left(s_{24}{+}\Lambda^2\right)} & \frac{s_{14}+s_{34}}{s_{14} s_{34} \left(s_{25}{+}\Lambda^2\right)} & \frac{s_{12} s_{24}-s_{45} s_{15}}{s_{12} s_{34} s_{15} \left(s_{34}{+}\Lambda^2\right)} & \frac{s_{12}
		s_{25}-s_{14} s_{45}}{s_{12} s_{14} s_{35} \left(s_{35}{+}\Lambda^2\right)} & -\frac{s_{12}+s_{23}}{s_{12} s_{23}
		\left(s_{45}{+}\Lambda^2\right)} \\
	\frac{1}{s_{15} \left(s_{24}{+}\Lambda^2\right)} & \frac{1}{s_{34} \left(s_{25}{+}\Lambda^2\right)} & \frac{s_{12}
		s_{24}-\left(s_{34}+s_{45}\right) s_{15}}{s_{12} s_{34} s_{15} \left(s_{34}{+}\Lambda^2\right)} & \frac{1}{s_{12}
		\left(s_{35}{+}\Lambda^2\right)} & \frac{s_{12} s_{14}-s_{23} \left(s_{34}+s_{45}\right)}{s_{12} s_{23} s_{45} \left(s_{45}{+}\Lambda^2\right)} \\
	\end{array}
	\right).\label{example-n-5-2}
	\end{align}
	Finally, in the case of $\bm\varphi_p^+$ for $p{=}4$ we find:
	\begin{align}
	&\bm\varphi_4^+[ \big( \PT(12|345), \PT(13|245), \PT(14|235), \PT(15|234), \PT(23|145) \nn\\
	&\qquad\,\PT(24|135), \PT(25|134), \PT(34|125), \PT(35|124), \PT(45|123) \big) ] \nn\\
	&= \left(
	\begin{array}{*5{>{\displaystyle}c}}
	\frac{z_{13} z_{32}}{z_{12} z_{34}^2} & \frac{s_{35} z_{12} z_{23}}{s_{25} z_{13} z_{24}^2} & \frac{\left(s_{45}{+}s_{15}\right)
		z_{21} z_{13}}{s_{25} z_{14}^2 z_{23}} & \frac{s_{35} z_{23}}{z_{24} z_{34} \left(s_{15}{-}\Lambda^2\right)} & 0 \\
	\frac{z_{13} z_{23}}{z_{12} z_{34}^2} & \frac{\left(s_{35}{+}s_{15}\right) z_{12} z_{23}}{s_{25} z_{31} z_{24}^2} & \frac{s_{45}
		z_{12} z_{13}}{s_{25} z_{14}^2 z_{23}} & \frac{s_{45} z_{13}}{z_{14} z_{34} \left(s_{15}{-}\Lambda^2\right)} & \frac{z_{12}
		z_{13}}{z_{14}^2 z_{23}} \\
	\end{array}
	\right.\\
	&\qquad
	\left.
	\begin{array}{*5{>{\displaystyle}c}}
	\frac{z_{12} z_{23}}{z_{13} z_{24}^2} & \frac{s_{35} \left(s_{15} z_{13} z_{24}{+}s_{45} z_{12} z_{34}\right)}{s_{25} z_{14}
		z_{42} z_{43} \left(s_{25}{-}\Lambda ^2\right)} & \frac{s_{35} z_{13} z_{32}}{s_{25} z_{12} z_{34}^2} & \frac{s_{45} z_{13}
		z_{24}{+}s_{15} z_{12} z_{34}}{z_{14} z_{42} z_{43} \left(s_{35}{-}\Lambda ^2\right)} & \frac{s_{35} z_{23}}{z_{24} z_{34}
		\left(s_{45}{-}\Lambda ^2\right)} \\
	0 & \frac{s_{45} \left(s_{15} z_{14} z_{23} {-}s_{35} z_{12} z_{34}\right)}{s_{25} z_{14} z_{42} z_{43} \left(s_{25}{-}\Lambda
		^2\right)} & \frac{s_{45} z_{13} z_{32}}{s_{25} z_{12} z_{34}^2} & \frac{s_{45} z_{31}}{z_{14} z_{34} \left(s_{35}{-}\Lambda
		^2\right)} & \frac{s_{15} z_{12} z_{34}{+}s_{35} z_{23} z_{41}}{z_{14} z_{24} z_{34} \left(s_{45}{-}\Lambda ^2\right)} \\
	\end{array}
	\right)
	dz_4,\nn
	\end{align}
	and consequently
	\begin{align}
	&\bm\varphi_3^+[ \big( \PT(12|345), \PT(13|245), \PT(14|235), \PT(15|234), \PT(23|145) \nn\\
	&\qquad\,\PT(24|135), \PT(25|134), \PT(34|125), \PT(35|124), \PT(45|123) \big) ] \nn\\
	&=\! \medmath{\left(\!
	\begin{array}{ccccc}
	\frac{s_{14}}{s_{12}-\Lambda ^2} & \frac{s_{14} s_{35} \left(s_{12} \left(s_{23}+s_{34}\right)+s_{23} s_{15}\right)}{s_{13}
		s_{24} s_{25} \left(s_{13}-\Lambda ^2\right)} & \!\!\!\!\!\frac{s_{12} s_{45}+\left(s_{34}+s_{45}\right) s_{15}}{s_{25}
		\left(s_{14}-\Lambda ^2\right)} & \!\!\!\frac{s_{14} s_{35}}{s_{24} \left(s_{15}-\Lambda ^2\right)} & \!\!\!\frac{s_{35}}{s_{23}-\Lambda
		^2} \\
	\frac{s_{12}+s_{13}}{s_{12}-\Lambda ^2} & \frac{s_{23} \left(s_{13}+s_{23}\right) \left(s_{13}-s_{25}\right)
		s_{25}+\left(s_{14} \left(s_{12}{+}s_{23}\right)+\left(s_{23}{-}s_{12}\right) s_{25}\right) s_{34} s_{45}}{s_{13} s_{24} s_{25}
		\left(s_{13}-\Lambda ^2\right)} & -\frac{s_{45} \left(s_{12}+s_{15}\right)}{s_{25} \left(s_{14}-\Lambda ^2\right)} &\!\!\!
	\frac{\left(s_{12}+s_{14}\right) s_{45}}{s_{24} \left(s_{15}-\Lambda ^2\right)} & \!\!\!\frac{s_{45}-s_{12}}{s_{23}-\Lambda ^2} \\
	\end{array}
	\right.}\nn\\
	&\quad\left.
	\begin{array}{ccccc}
	\frac{s_{14} \left(s_{12} s_{34}+s_{23} \left(s_{34}+s_{45}\right)\right)}{s_{13} s_{24} \left(s_{24}-\Lambda ^2\right)} &
	\frac{s_{35} \left(s_{23} s_{15}+s_{12} \left(s_{45}+s_{15}\right)\right)}{s_{13} s_{25} \left(s_{25}-\Lambda ^2\right)} &
	\frac{s_{14} s_{35}}{s_{25} \left(s_{34}-\Lambda ^2\right)} & \frac{s_{23} s_{45}+s_{34} \left(s_{45}+s_{15}\right)}{s_{24}
		\left(s_{35}-\Lambda ^2\right)} & \frac{s_{14} s_{35}}{s_{13} \left(s_{45}-\Lambda ^2\right)} \\
	\frac{\left(s_{12} s_{34}-s_{14} s_{23}\right) s_{45}}{s_{13} s_{24} \left(s_{24}-\Lambda ^2\right)} & \frac{s_{45}
		\left(s_{23} s_{15}-s_{12} s_{35}\right)}{s_{13} s_{25} \left(s_{25}-\Lambda ^2\right)} & \frac{\left(s_{23}+s_{35}\right)
		s_{45}}{s_{25} \left(s_{34}-\Lambda ^2\right)} & -\frac{\left(s_{23}+s_{34}\right) s_{45}}{s_{24} \left(s_{35}-\Lambda
		^2\right)} & \frac{s_{34} s_{15}-s_{14} s_{35}}{s_{13} \left(s_{45}-\Lambda ^2\right)} \\
	\end{array}\!\!
	\right)\!.
	\end{align}
	Intersection numbers are obtained simply by contracting columns of the relevant $\bm\varphi_3^\pm$. For instance, we can compute $\la \PT(12|345)| \PT(12345) \ra_\omega$ using \eqref{example-n-5-1} and \eqref{example-n-5-2}:
	\be
	\bm\varphi_3^-[\PT(12|345)]^\intercal\, \bm\varphi_3^+[\PT(12345)] = \frac{\frac{s_{25}}{s_{34}}+\frac{s_{13}}{s_{45}}-1}{s_{12}\left(s_{12}{+}\Lambda^2\right)}.
	\ee
\end{example}

\begin{example}
	For $n=6$ the number of inequivalent trace structures proliferates rapidly and hence we only print three examples. For a single trace Parke--Taylor $\PT(135642)$ we find:
	\be
	\bm\varphi_5^-[\PT(135642)] = \left(\def\arraystretch{1.4}
	\begin{array}{*1{>{\displaystyle}c}}
		0 \\
		-\frac{z_{23}}{s_{46} z_{24} z_{35} z_{45}} \\
		\frac{z_{23}}{s_{56} z_{24} z_{35} z_{45}} \\
	\end{array}
	\right) dz_4\wedge dz_5
	\ee
	in the first step of the recursion, while in the following two we have:
	\be
	\bm\varphi_4^-[\PT(135642)] = \left(\def\arraystretch{1.4}
	\begin{array}{*1{>{\displaystyle}c}}
		\frac{z_{23}}{s_{35} s_{124} z_{24} z_{34}} \\
		-\frac{z_{23}}{s_{35} s_{46} z_{24} z_{34}} \\
		\frac{\left(s_{35}{+}s_{56}\right) z_{23}}{s_{35} s_{56} s_{124} z_{24} z_{34}} \\
		0 \\
		\frac{z_{23}}{s_{46} s_{123} z_{24} z_{34}} \\
		-\frac{z_{23}}{s_{56} s_{123} z_{24} z_{34}} \\
	\end{array}
	\right) dz_4,\quad
	\bm\varphi_3^-[\PT(135642)] =
	\left(\def\arraystretch{1.4}
	\begin{array}{*1{>{\displaystyle}c}}
		-\frac{1}{s_{12} s_{35} s_{124}} \\
		\frac{1}{s_{12} s_{35} s_{46}} \\
		-\frac{s_{35}{+}s_{56}}{s_{12} s_{35} s_{56} s_{124}} \\
		0 \\
		-\frac{1}{s_{12} s_{46} s_{123}} \\
		\frac{1}{s_{12} s_{56} s_{123}} \\
	\end{array}
	\right).
	\ee
	For a double-trace $\PT(13|5642)$ the first step, $p{=}5$, the recursion gives:
	\be
	\bm\varphi_5^-[\PT(13|5642)] =
	\left(\def\arraystretch{1.4}
	\begin{array}{*1{>{\displaystyle}c}}
		0 \\
		-\frac{z_{12} z_{23}}{s_{46} z_{13} z_{24} z_{25} z_{45}} \\
		\frac{z_{12} z_{23}}{s_{56} z_{13} z_{24} z_{25} z_{45}} \\
	\end{array}
	\right) dz_4 \wedge dz_5,
	\ee
	and in the two following steps we find:
	\be
	\bm\varphi_4^-[\PT(13|5642)] = \left(\!\!\def\arraystretch{1}
	\begin{array}{*1{>{\displaystyle}c}}
		0 \\
		0 \\
		0 \\
		0 \\
		\frac{z_{12} z_{23}}{s_{46} s_{123} z_{13} z_{24}^2} \\
		-\frac{z_{12} z_{23}}{s_{56} s_{123} z_{13} z_{24}^2} \\
	\end{array}\!\!
	\right)dz_4
	,\quad
	\bm\varphi_3^-[\PT(13|5642)] = \left(\!\!\def\arraystretch{1}
	\begin{array}{*1{>{\displaystyle}c}}
		0 \\
		0 \\
		0 \\
		0 \\
		\frac{1}{s_{46} s_{123} \left(s_{13}{+}\Lambda^2\right)} \\
		-\frac{1}{s_{56} s_{123} \left(s_{13}{+}\Lambda^2\right)} \\
	\end{array}\!\!
	\right).
	\ee
	Finally, for an example triple-trace $\PT(13|56|42)$ one has:
	\be
	\bm\varphi_5^-[\PT(13|56|42)] =
	\left(\def\arraystretch{1.4}
	\begin{array}{*1{>{\displaystyle}c}}
		\frac{z_{12} z_{23}}{z_{15} z_{24}^2 z_{35} \left(s_{56}{+}\Lambda^2\right)} \\
		\frac{z_{12} z_{14} z_{23}}{z_{13} z_{15} z_{24}^2 z_{45} \left(s_{56}{+}\Lambda^2\right)} \\
		\frac{z_{12} z_{23}
			\left(\frac{s_{26}}{z_{52}}+\frac{s_{16}+s_{56}}{z_{51}}+\frac{s_{36}}{z_{53}}+\frac{s_{46}}{z_{54}}\right)}{s_{56} z_{13}
			z_{24}^2 \left(s_{56}{+}\Lambda^2\right)} \\
	\end{array}
	\right) dz_4 \wedge dz_5
	\ee
	in the first step, while the second one gives:
	\be
	\bm\varphi_4^-[\PT(13|56|42)] = \left(\def\arraystretch{1.4}
	\begin{array}{*1{>{\displaystyle}c}}
		-\frac{\left(s_{124}{+}s_{234}\right) z_{12} z_{23}}{s_{124} s_{234} z_{13} z_{24}^2 \left(s_{56}{+}\Lambda^2\right)} \\
		-\frac{z_{12} z_{23}}{s_{234} z_{13} z_{24}^2 \left(s_{56}{+}\Lambda^2\right)} \\
		\frac{\left(s_{15} s_{124}{+}\left(s_{36}{-}s_{124}\right) s_{234}\right) z_{12} z_{23}}{s_{56} s_{124} s_{234} z_{13} z_{24}^2
			\left(s_{56}{+}\Lambda^2\right)} \\
		-\frac{z_{12} z_{23}}{s_{234} z_{13} z_{24}^2 \left(s_{56}{+}\Lambda^2\right)} \\
		-\frac{\left(s_{123}{+}s_{234}\right) z_{12} z_{23}}{s_{123} s_{234} z_{13} z_{24}^2 \left(s_{56}{+}\Lambda^2\right)} \\
		\frac{\left(s_{15} s_{123}{+}\left(s_{46}{-}s_{123}\right) s_{234}\right) z_{12} z_{23}}{s_{56} s_{123} s_{234} z_{13} z_{24}^2
			\left(s_{56}{+}\Lambda^2\right)} \\
	\end{array}
	\right) dz_4,
	\ee
	and in the third one we find:
	\be
	\bm\varphi_3^-[\PT(13|56|42)] = \left(\!\!\def\arraystretch{1.4}
	\begin{array}{*1{>{\displaystyle}c}}
		-\frac{s_{124}{+}s_{234}}{s_{124} s_{234} \left(s_{24}{+}\Lambda^2\right) \left(s_{56}{+}\Lambda^2\right)} \\
		-\frac{1}{s_{234} \left(s_{24}{+}\Lambda^2\right) \left(s_{56}{+}\Lambda^2\right)} \\
		\frac{s_{15} s_{124}{+}\left(s_{36}{-}s_{124}\right) s_{234}}{s_{56} s_{124} s_{234} \left(s_{24}{+}\Lambda^2\right) \left(s_{56}{+}\Lambda^2\right)} \\
		-\frac{1}{s_{234} \left(s_{24}{+}\Lambda^2\right) \left(s_{56}{+}\Lambda^2\right)} \\
		-\frac{s_{234} \left(s_{24}{+}\Lambda^2\right){+}s_{123} \left(s_{13}+s_{234}{+}\Lambda^2\right)}{s_{123} s_{234} \left(s_{13}{+}\Lambda^2\right) \left(s_{24}{+}\Lambda^2\right) \left(s_{56}{+}\Lambda^2\right)} \\
		\frac{s_{15} s_{123} \left(s_{13}{+}\Lambda^2\right){+}s_{234} \left(s_{46} \left(s_{24}{+}\Lambda^2\right){-}s_{123} \left(s_{13}{+}s_{24}{-}s_{246}{+}\Lambda^2\right)\right)}{s_{56} s_{123} s_{234} \left(s_{13}{+}\Lambda^2\right) \left(s_{24}{+}\Lambda^2\right)
			\left(s_{56}{+}\Lambda^2\right)} \\
	\end{array}\!\!
	\right).
	\ee
	Using the result of the recursion $\bm\varphi_3^+[\PT(123456)]$ from \eqref{planar-recursion} we find easily that the intersection number $\la \PT(13|56|42) | \PT(123456)\ra_\omega$ is given by
	\be
	\bm\varphi_3^-[\PT(13|56|42)]^\intercal\, \bm\varphi_3^+[\PT(123456)] = \frac{
		\frac{s_{26}+s_{46}}{\left(s_{13}{+}\Lambda^2\right) \left(s_{24}{+}\Lambda^2\right)}
		+\frac{s_{46}}{s_{123} \left(s_{13}{+}\Lambda^2\right)}
		+\frac{s_{15}}{s_{234} \left(s_{24}{+}\Lambda^2\right)}
		-\frac{1}{s_{24}{+}\Lambda^2}
	}{s_{56} \left(s_{56}{+}\Lambda^2\right)}.
	\ee
\end{example}

It remains an open question whether there exists a quantum field theory whose colour-ordered amplitudes are computed by intersection numbers of Kac--Moody correlators. In the cases $\la \PT(12\cdots n) |\allowbreak \PT(\alpha_1 | \alpha_2 | \cdots| \alpha_{\text{T}}) \ra_\omega$ a Lagrangian for such a theory was proposed in \cite{Azevedo:2018dgo}.

\pagebreak
\section{\label{sec:conclusion}Conclusion}

\textsc{In this work we introduced} a formulation of the quantum field theory S-matrix at tree-level in terms of intersection theory of twisted forms on the moduli space $\M_{0,n}$. It not only challenges our basic understanding of where scattering amplitudes come from, but also allows for their explicit computations using the newly-introduced recursion relations. Together with the earlier reformulation of the string theory S-matrix in terms of pairings of twisted homology and cohomology groups \cite{Mizera:2017cqs}, it completes a unified geometric framework for computation of scattering amplitudes from genus-zero Riemann surfaces. Let us summarize these findings on the following diagram:
\be
\begin{tikzcd}
\varphi_- {\in} H_{-\omega}^{n-3} \arrow[rrr, "\langle \varphi_- | \varphi_+ \rangle_\omega"{inner sep=4pt}, leftrightarrow] \arrow[ddd, "\int_{\Delta_-} \!\!\KN^{-1} \varphi_-"{rotate=90,anchor=south}, leftrightarrow] &  &  & \varphi_+ {\in} H_{\omega}^{n-3} \arrow[rrr, "\int_{\M_{0,n}}\!\! |\KN|^2\, \varphi_+ \wedge \overbar{\vartheta_+}", leftrightarrow] &  &  & \overbar{\vartheta_+} {\in} H_{\overbar{\omega}}^{n-3} \arrow[rrr, "\la \overbar{\vartheta_+} | \overbar{\vartheta_-} \ra_{\overbar{\omega}}"{inner sep=4pt}, leftrightarrow]  &  &  & \overbar{\vartheta_-} {\in} H_{-\overbar{\omega}}^{n-3} \arrow[ddd, "\int_{\Gamma_{\!-}} \!\!\overbar{\KN}^{-1} \, \overbar{\vartheta_-}"{rotate=-90,anchor=south}, leftrightarrow] \\
&  &  &  &  &  &  &  &  &  \\
&  &  &  &  &  &  &  &  &  \\
\Delta_- {\in} H^{-\omega}_{n-3} \arrow[rrr, "\la \Delta_+ \otimes \KN | \Delta_- \otimes \KN^{-1} \ra"'{inner sep=5pt}, leftrightarrow]  &  &  & \Delta_+ {\in} H^{\omega}_{n-3} \arrow[uuu, "\int_{\Delta_+} \!\!\KN\, \varphi_+"{rotate=90,anchor=south}, leftrightarrow] \arrow[rrr, "\la \Delta_+ \otimes \KN | \Gamma_+ \otimes \overbar{\KN} \ra"'{inner sep=5pt}, leftrightarrow] &  &  & \Gamma_+ {\in} H^{\overbar{\omega}}_{n-3} \arrow[uuu, "\int_{\Gamma_{\!+}} \!\!\overbar{\KN}\, \overbar{\vartheta_+} "{rotate=-90,anchor=south}, leftrightarrow] \arrow[rrr, "\la \Gamma_- \otimes \overbar{\KN}^{-1} | \Gamma_+ \otimes \overbar{\KN} \ra"'{inner sep=5pt}, leftrightarrow] &  &  & \Gamma_- {\in} H^{-\overbar{\omega}}_{n-3}
\end{tikzcd}\nn
\ee
The bottom and top rows feature homology and cohomology groups with different twistings. Each arrow represents a duality between two groups by a pairing: the top ones are intersection numbers of twisted forms, as well as closed-string integrals; the vertical arrows correspond to different types of open-string integrals; while the ones at the bottom are homological intersection numbers. As special cases of the Proposition~\ref{proposition}, each quadruple of such vector spaces (with possible duplicates) induces linear algebra relations between their pairings. For example, the square in the middle of the above diagram manifests the KLT relations \cite{Kawai:1985xq,Mizera:2017cqs}.

The key outstanding problem remains to find a worldsheet model reproducing intersection numbers. It is known that in the massless case the CHY formula \eqref{CHY-formula} has an interpretation as coming from a string theory in the ambitwistor space \cite{Mason:2013sva,Berkovits:2013xba,Ohmori:2015sha}, which suggests that its generalization to intersection numbers might exist.\footnote{Worldsheet models with a finite spectrum of fields were previously constructed by Ademollo et al. in the context of ${\cal N}=2$ string \cite{Ademollo:1975an,Ademollo:1976pp,Ooguri:1990ww}.} Such a model would have to involve two left-moving (holomorphic) modes, most likely in the formulation with local coefficients \eqref{cocycle-with-local-coefficient}, in which the correlation function of vertex operators would involve plane wave contributions $\prod_{i<j} |z_i {-} z_j|^{\pm 2\alpha' p_i {\cdot} p_j}$ cancelling between the $\pm$ forms. Despite some progress made in similar directions \cite{Siegel:2015axg,Huang:2016bdd,Leite:2016fno,Li:2017emw,Lee:2017utr,Casali:2017mss,Lee:2017crr}, consistent formulation of a worldsheet action giving rise to intersection numbers remains an open problem. We leave it as a challenge for the reader.

Since in this work we focused primarily on general properties of intersection numbers and ways of evaluating them, we barely scratched the surface of the space of quantum field theories whose amplitudes admit an intersection number interpretation. We hope that the computational techniques introduced in this work will enable further exploration of this space. In particular, it would be interesting to formulate a purely four-dimensional version of intersection numbers along the lines of \cite{Witten:2003nn,Roiban:2004yf}.

Finally, it is natural to wonder about generalizations of twisted geometries to supermoduli spaces and higher genera. We leave these open problems for future investigations.

\vfill

\pagebreak
\appendix

\titleformat{name=\section}[display]
{\normalfont}
{\footnotesize\textsc{Appendix \thesection}}
{0pt}
{\Large\bfseries}
[\vspace{-10pt}\color{Maroon}\rule{\textwidth}{.6pt}]

\section[Primer on Homology with Coefficients in a Local System]{\label{app:aspects}Primer on Homology with Coefficients in a Local System}

\textsc{In this appendix} we informally introduce relevant aspects of homologies with coefficients in a local system (for textbook treatment see \cite{aomoto2011theory}). In particular, we focus on their application to moduli spaces of punctured Riemann spheres, which allows us to study geometric and topological properties of integrals appearing in scattering amplitudes of string theory \cite{Mizera:2017cqs}, such as their singularity structure, analytic continuation, and behaviour in different kinematic limits.

Many of these results depend on constructing relations between string theory amplitudes, which can be thought of as a generalization of the famous Riemann bilinear relations \cite{riemann1857theorie} to the twisted case. For this reason we start by reviewing homology with constant coefficients on compact Riemann surfaces.

\subsection{\label{app:compact-Riemann-surfaces}Compact Riemann Surfaces}

\textsc{Originally written down in 1857} \cite{riemann1857theorie}, bilinear relations of Riemann give an identity between period integrals on compact Riemann surfaces. Here we present their more modern formulation in terms of homology theory. Consider a compact Riemann surface $\Sigma_g$ of genus $g$:

\be
\begin{aligned}
	\includegraphics[scale=1]{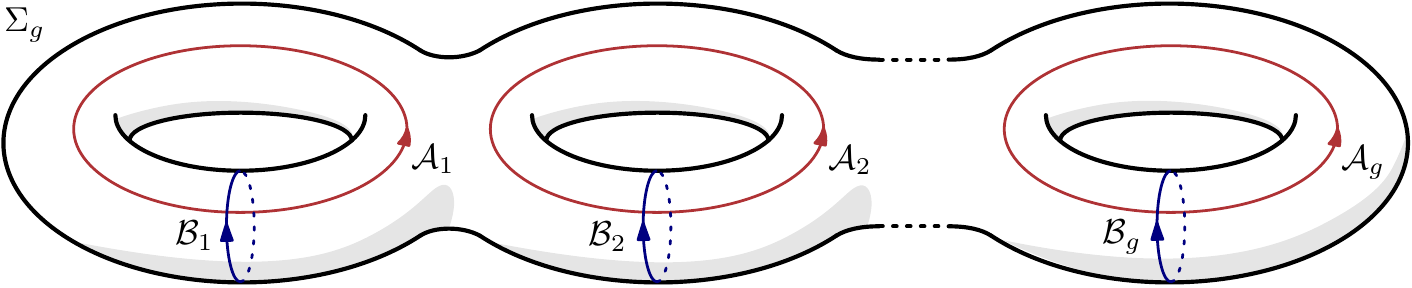}
\end{aligned}
\ee
On this surface, we indicated a canonical basis of \emph{cycles}, called ${\cal A}_i$ and ${\cal B}_i$ for $i=1,2,\ldots,g$. They are elements of the $1$-st singular homology group $H_1(\Sigma_g, \Z) := \ker \partial / \im \partial$ with integer coefficients:
\be
{\cal A}_1,\, {\cal A}_2, \,\ldots,\, {\cal A}_g,\, {\cal B}_1,\, {\cal B}_2,\, \ldots,\, {\cal B}_g \;\in\; H_1(\Sigma_g, \Z),
\ee
for the boundary operator $\partial$.
Similarly, we have the $1$-st de Rham cohomology group, $H^{1}(\Sigma_g, \Z) := \ker d / \im d$, with the differential $d$, whose canonical basis consists of 
\be
dz,\, z dz,\, \ldots,\, z^{g-1} dz,\, d\overbar{z},\, \overbar{z} d\overbar{z},\, \ldots,\, \overbar{z}^{g-1} d\overbar{z} \;\in\; H^{1}(\Sigma_g, \Z).
\ee
Element of this group are called \emph{cocycles}. It admits a Hodge decomposition $H^1(\Sigma_g, \Z) = H^{1,0}(\Sigma_g, \Z) \oplus H^{0,1}(\Sigma_g, \Z)$ into spaces of $(1,0)$- and $(0,1)$-forms. There are also other non-vanishing homology and cohomology groups, but they will not play a role here.

We can construct three types of pairings between elements of the above spaces. For instance, pairing a cycle $\cal C$ and a cocycle $\varphi$ yields the contour integral:
\be\label{cycle-cocycle-pairing}
\la{\cal C} | \varphi \ra := \oint_{\cal C} \varphi.
\ee
Similarly, a pairing of two cocycles $\varphi,\widetilde\varphi$ is given by the integral over the whole surface,\footnote{Note that this is not an inner product of $\widetilde\varphi$ and $\varphi$ on $L^2(\Sigma_g)$, which in this notation would be $\la\widetilde\varphi | \!\star\!\overbar{\varphi}\ra$ for a Hodge star operator $\star$.}
\be
\la \widetilde\varphi | \varphi \ra := \int_{\Sigma_g} \widetilde\varphi \wedge \varphi.
\ee
If both forms are (anti)holomorphic then it vanishes identically. Lastly, a pairing between two cycles $\cal C, \widetilde{\cal C}$ is the so-called \emph{intersection number},
\be\label{intersection-constant-coefficients}
\la {\cal C} | \widetilde {\cal C} \ra := \!\!\!\sum_{p \,\in\, {\cal C} \,\cap\, \widetilde {\cal C}} \pm 1.
\ee
It is an integer computed as a sum over all intersection points $p \in {\cal C} \cap \widetilde{\cal C}$, where each point is weighted with a sign depending on the relative orientation of the two cycles at the intersection point:
\be
\begin{aligned}\label{intersection-signs}
	\includegraphics[scale=1]{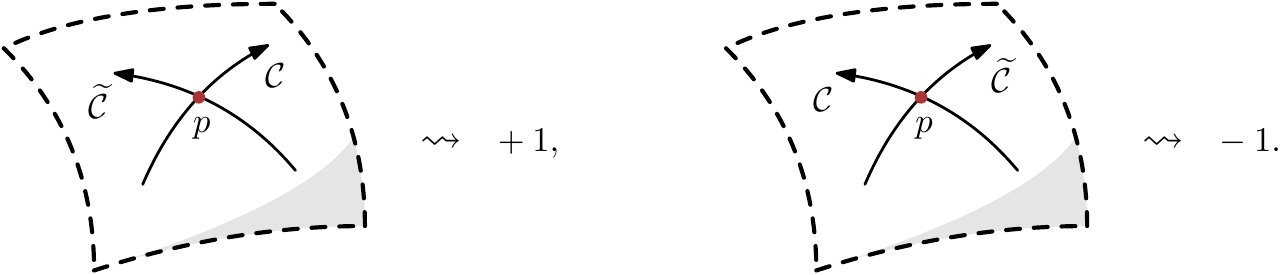}
\end{aligned}
\ee
Before using the above formula one ought to deform the cycles such that there is a finite number of transverse intersection points. It is straightforward to check that intersection numbers are topological invariants. They are antisymmetric in $\cal C$ and $\widetilde {\cal C}$, as are the other bilinears.

As a consequence of the Poincar\'e duality between the homology and cohomology groups,
\be
H_1(\Sigma_g, \Z) \cong H^1(\Sigma_g, \Z),
\ee
the above pairings are not mutually independent. Other than allowing for the integration \eqref{cycle-cocycle-pairing}, it means that for each cycle ${\cal C} \in H_1(\Sigma_g, \Z)$ there exists a cocycle $\eta({\cal C}) \in H^1(\Sigma_g, \Z)$ with support along $\cal C$ such that, in our conventions,
\be
\la {\cal C} | \varphi \ra = \la \eta({\cal C}) | \varphi \ra, \qquad \la {\cal C} | \widetilde{\cal C} \ra = \la \eta({\cal C}) | \widetilde{\cal C}\ra
\ee
for every $\varphi$ and $\widetilde{\cal C}$. Therefore, for a cocycle basis $\{\varphi_j\}_{j=1}^{2g}$ we can write
\be
\varphi_j = \sum_{k=1}^{2g} \alpha_{jk}\, \eta(\widetilde{\mathcal{C}}_{k})
\ee
in terms of a cycle basis $\{\widetilde{\mathcal{C}}_k\}_{k=1}^{2g}$ with some coefficients $\alpha_{jk}$. We can fix them by pairing up both sides with a basis of cycles, $\la {\cal C}_\ell | \,\bullet\, \ra$, which gives,
\be
\la {\cal C}_\ell | \varphi_j \ra = \sum_{k=1}^{2g} \alpha_{jk}\, \la {\cal C}_\ell | {\widetilde{\cal C}}_k \ra,
\ee
for $\ell=1,2,\ldots,2g$. It allows us to solve for $\alpha_{i\ell}$, which involves inverse of the intersection matrix $\mathbf{J}_{\ell k} := \la {\cal C}_\ell | {\widetilde{\cal C}}_k \ra$. We find
\be\label{cocycle-cycle-relation}
\varphi_j = \sum_{k,\ell=1}^{2g} \eta(\widetilde{\cal C}_k) \, \mathbf{J}^{-1}_{k\ell}\, \la {\cal C}_\ell | \widetilde\varphi_j \ra.
\ee
Finally, pairing up with a cocycle basis, $\la \widetilde\varphi_i| \,\bullet\, \ra$ yields
\be\label{quadratic-relation}
\la \widetilde\varphi_i | \varphi_j \ra = \sum_{k,\ell=1}^{2g} \la \widetilde\varphi_i | \widetilde{\cal C}_k \ra\, \mathbf{J}^{-1}_{k\ell}\, \la \mathcal{C}_\ell | \varphi_j \ra,
\ee
for every $i,j=1,2,\ldots,2g$. Hence we obtained relations that homologically split the complex integral into a quadratic combination of middle-dimensional integrals weighted by intersection numbers.

It is straightforward to evaluate the intersection matrix directly by specializing to the canonical homology basis,
\be
{\cal C}_i = \widetilde{\cal C}_i = \begin{dcases}
	{\cal A}_i \qquad & \text{if}\qquad i = 1,2,\ldots, g,\\
	{\cal B}_{i-g} \qquad & \text{if} \qquad i=g{+}1,g{+}2,\ldots, 2g.
\end{dcases}
\ee
Clearly, cycles of the same type never intersect and hence we have:
\be
\la \mathcal{A}_i | \mathcal{A}_j \ra = 0, \qquad \la \mathcal{B}_i | \mathcal{B}_j \ra = 0.
\ee
In order to evaluate the self-intersection numbers $\la \mathcal{A}_i | \mathcal{A}_i \ra = \la \mathcal{B}_i | \mathcal{B}_i \ra = 0$ it is sufficient to deform one copy of the cycle such that they do not overlap. For the remaining cycles we have:
\be
\la \mathcal{A}_i | \mathcal{B}_j \ra = \delta_{ij}, \qquad \la \mathcal{B}_i | \mathcal{A}_j \ra = -\delta_{ij},
\ee
since only the cycles associated to the same hole intersect. Note different signs coming from orientations of the cycles. Hence the intersection matrix reads:
\be
\J = \left(\begin{array}{rr}
	0 & \phantom{+}\I \\
	-\I & 0
\end{array}\right),
\ee
where $\I$ is the $g{\times} g$ identity matrix. Therefore the relation \eqref{quadratic-relation} in this basis becomes
\be
\int_{\Sigma_g} \widetilde\varphi_i \wedge \varphi_j = \sum_{k=1}^{g} \left( \oint_{{\cal A}_k} \!\widetilde \varphi_i \; \oint_{{\cal B}_k} \!\varphi_j - \oint_{{\cal B}_k} \!\widetilde\varphi_i\; \oint_{{\cal A}_k} \!\varphi_j \right).
\ee
Here we used antisymmetry $\la \widetilde\varphi_i | \widetilde{\cal C}_k \ra = - \la \widetilde{\cal C}_k | \widetilde\varphi_i \ra$. There are two special cases originally considered by Riemann: when both $\widetilde\varphi_i$ and $\varphi_j$ are holomorphic and hence the left-hand side vanishes, as well as when $\varphi_j = \star \overbar{\widetilde\varphi}_i$ and hence the left-hand side is always non-negative. These two identities are called Riemann bilinear relations \cite{riemann1857theorie}.

Another useful identity is obtained by contracting \eqref{cocycle-cycle-relation} with $\la {\cal C}| \,\bullet\, \ra$ for some fixed ${\cal C}$, giving:
\be
\la {\cal C} | \varphi_j \ra = \sum_{k,\ell=1}^{2g} \la {\cal C} | \widetilde{\cal C}_k\ra \, \mathbf{J}^{-1}_{k\ell}\, \la {\cal C}_\ell | \varphi_j \ra,
\ee
which expresses a single middle-dimensional integral in terms of a basis of $2g$ periods. This boils down to a counting problem, as one only needs to compute intersection numbers. Choosing the canonical homology basis yields
\be
\oint_{\cal C} \varphi_j = \sum_{k=1}^{g} \left( \la {\cal C}| {\cal B}_k \ra \oint_{{\cal A}_k} \!\varphi_j - \la {\cal C}| {\cal A}_k \ra \oint_{{\cal B}_k} \!\varphi_j \right).
\ee
Note that this is simply a relation between cycles, so the pairing with $\varphi_j$ is not strictly necessary. For completeness, let us mention that there exists an analogous formula for expressing an arbitrary cocycle in a cohomology basis. The above relations are special cases of Proposition~\ref{proposition}.

For more background on compact Riemann surfaces see, e.g., \cite{farkas2012riemann,eynard2018lectures}.

\subsection{\label{app:punctures}Adding Punctures}

\textsc{In order to compute} scattering amplitudes in string theory, one needs Riemann surfaces with punctures (removed marked points), which correspond to insertions of vertex operators. Here we consider the simplest case at genus zero, in which three points have been fixed, by convention to $0,1,\infty$, using the $\PSL(2,\C)$ redundancy and the fourth one lives on the surface $\Sigma_{0,3}$ with three punctures,
\be
\Sigma_{0,3} := \CP^1 \setminus \{ 0, 1, \infty \}.
\ee
It is in fact the simplest example of a moduli space, which are studied more comprehensively in the main body of this work.

We endow $\Sigma_{0,3}$ with an additional structure, which will allow us to describe its global properties using local data. It is called a \emph{local system} (or a locally-constant sheaf) ${\cal L}$. We refer the reader to Section~\ref{sec:intersection-numbers} for a more thorough discussion of local systems, while here we focus on the basic case of $\Sigma_{0,3}$. In the simplest case ${\cal L}$ is a one-dimensional Abelian representation of the fundamental group $\pi_1(\Sigma_{0,3})$, which associates a non-zero complex number to every path,
\be
{\cal L}:\; \pi_1(\Sigma_{0,3}) \;\to\; \C^\times.
\ee
If two paths $\gamma_1$ and $\gamma_2$ are homotopic then their \emph{local coefficients} are equal, ${\cal L}(\gamma_1) = {\cal L}(\gamma_2)$. Another property is that for every pair of paths $\gamma_1$ and $\gamma_2$, their local coefficients multiply to that of the composite path $\gamma_1 {\circ} \gamma_2$, i.e.,
\be\label{local-coefficients}
{\cal L}(\gamma_2)\, {\cal L}(\gamma_1) = {\cal L}(\gamma_1 {\circ} \gamma_2).
\ee
Hence paths contractible to a point have a unit local coefficient. 

One can describe a local system in terms of a one-form $\omega$, such that local coefficients are assigned via the integral
\be
{\cal L}_\omega : \; \gamma \;\mapsto\; \exp \int_\gamma \omega. 
\ee
Here exponentiation removes the zero complex number and gives the required multiplication property \eqref{local-coefficients}. To be more concrete, on $\Sigma_{0,3}$ we construct a form with logarithmic singularities at the positions of the punctures,
\be
\omega := s\, d\log z + t\, d\log(1{-}z),
\ee
where $s,t \notin \Z$ are constants. Notice that it also has a pole at infinity with residue $u := -s{-}t$. Therefore to every positively-oriented loop $\circlearrowleft_p$ around $p\in \{0,1,\infty\}$ the local system ${\cal L}_\omega$ associates the following numbers
\be\label{eq:appendix-local-system}
{\cal L}_\omega \left(\circlearrowleft_0, \circlearrowleft_1, \circlearrowleft_\infty\right) \;=\; \left(e^{2\pi i s}, e^{2\pi i t}, e^{2\pi i u}\right)
\ee
called \emph{monodromies} (or holonomies). For example, the fact that the three monodromies multiply to one, $e^{2\pi i (s+t+u)} =1$, is the same as saying that the sum of the three loops, $\circlearrowleft_0 {+} \circlearrowleft_1{+}\circlearrowleft_\infty =0$, can be deformed to a point.

In practical terms one can introduce the following picture, in which with dashed lines we denote branch cuts, crossing which corresponds to multiplying by a given monodromy,
\be
\begin{aligned}
	\includegraphics[scale=1]{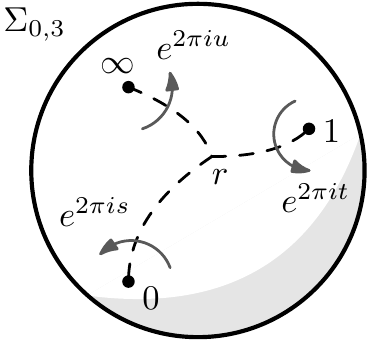}
\end{aligned}
\ee
Here the position of the point $r$ is arbitrary and it can in principle coincide with one of $0,1,\infty$. The above branch cuts can be thought of as those associated to the multi-valued function
\be\label{eq:appendix-u}
u(z) := z^s (1-z)^t,
\ee
and in fact an oriented path $(p,q)$ connecting two points $p$ and $q$ will have a local coefficient $\exp \int_p^q \omega = u(q)/u(p)$, which is a phase of $u(z)$.\footnote{Equivalently one can describe the local system ${\cal L}_\omega$ in the following way. Consider a (locally-finite) open cover $\Sigma_{0,3} = \bigcup_i U_i$. The space of horizontal sections of $\nabla_{-\omega} := d - \omega\wedge$, i.e., functions $u_i(z)$ satisfying $\nabla_{-\omega} u_i(z)=0$ on $U_i$, such that $u_i(z)$ and $u_j(z)$ agree on each intersection $U_i \cap U_j$ is called a rank-one locally-constant sheaf $\L_\omega$.} We can talk about paths (chains) with coefficients in the local system ${\cal L}_\omega$, denoted by
\be
(p,q) \otimes u(z),
\ee
which to every point $z$ along the path associate a coefficient $u(z)$. Notice that this coefficient changes as one crosses a branch cut. The advantage of assigning local coefficients to paths is that one can work directly in the space $\Sigma_{0,3}$, as opposed to its universal cover.

We can define a natural boundary operator, which here acts as
\be\label{eq:appendix-boundary}
\partial_{\omega} \Big( (p,q) \otimes u(z) \Big) = -p\otimes u(p) +  q \otimes u(q).
\ee
It satisfies $\partial_\omega^2 = 0$ and hence defines homology groups with coefficients in the local system ${\cal L}_\omega$, or simply \emph{twisted homologies}, $H_k(\Sigma_{0,3}, {\cal L}_\omega) := \ker \partial_{\omega} / \im \partial_{\omega}$. Its elements are called \emph{twisted cycles}. For example, $\circlearrowright_{p=0,1,\infty}$ are \emph{not} twisted cycles, since their boundary is non-zero (the phases of $u(z)$ at the beginning and the end of the loop are not the same, since a branch cut is crossed). Let us illustrate the topological part of three examples of twisted cycles below:
\be
\begin{aligned}
	\includegraphics[scale=1]{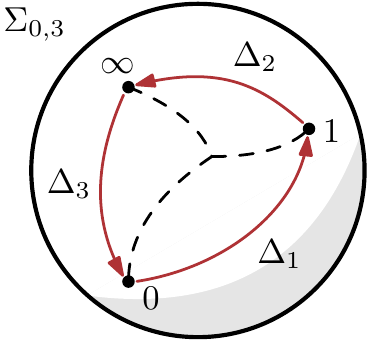}
\end{aligned}
\ee
These cycles are non-compact, since they have endpoints on the ramification points. We specify their local coefficients by selecting an arbitrary phase of $u(z)$, referred to as the \emph{loading}, as follows
\be
{\cal C}_1 := \Delta_1 \otimes u(z), \qquad {\cal C}_2 := \Delta_2 \otimes e^{-\pi i t} u(z), \qquad {\cal C}_3 := \Delta_3 \otimes e^{\pi i s}u(z).
\ee
Here the choice is made such that the loading is real-valued along each $\Delta_i$ if it lies on the real axis, i.e., we have $e^{-\pi i t} u(z) = z^s (z{-}1)^t \in \mathbb{R}$ on $(1,\infty)$ and $e^{\pi i s} u(z) = ({-}z)^s (1{-}z)^t \in \mathbb{R}$ on $(-\infty,0)$. Since the points $0,1,\infty$ are not in $\Sigma_{0,3}$, the cycles have zero boundary and hence belong to the $1$-st locally-finite twisted homology, $H^{\text{lf}}_1(\Sigma_{0,3},{\cal L}_\omega)$, isomorphic to $H_1(\Sigma_{0,3}, {\cal L}_\omega)$. Here the qualifier \emph{locally-finite} refers to the fact that its elements are non-compact.

It remains to find out what is the dimension of $H^{\text{lf}}_1(\Sigma_{0,3}, {\cal L}_\omega)$. Using the result that the twisted homology is concentrated in the middle dimension, i.e., $H^{\text{lf}}_k(\Sigma_{0,3},{\cal L}_\omega)$ vanishes for $k \neq 1$ \cite{aomoto1975vanishing}, we have:
\be
\chi(\Sigma_{0,3}) = \sum_{k=0}^{2} (-1)^{k} \dim  H^{\text{lf}}_k(\Sigma_{0,3}, {\cal L}_\omega) = -\dim H^{\text{lf}}_1(\Sigma_{0,3}, {\cal L}_\omega)
\ee
for the Euler characteristic $\chi(\Sigma_{0,3})$. It can be computed, for instance, by considering a surface of a tetrahedron with three points removed. As a CW complex it has one $0$-cell, six $1$-cells, and four $2$-cells and hence the Euler characteristic is $\chi(\Sigma_{0,3}) = 1-6+4 = -1$.

Another, more pedestrian, way of finding out that there is only one independent twisted cycle is by considering the following two cycles homologous to zero (deformable to a point):
\be
\begin{aligned}
	\includegraphics[scale=1]{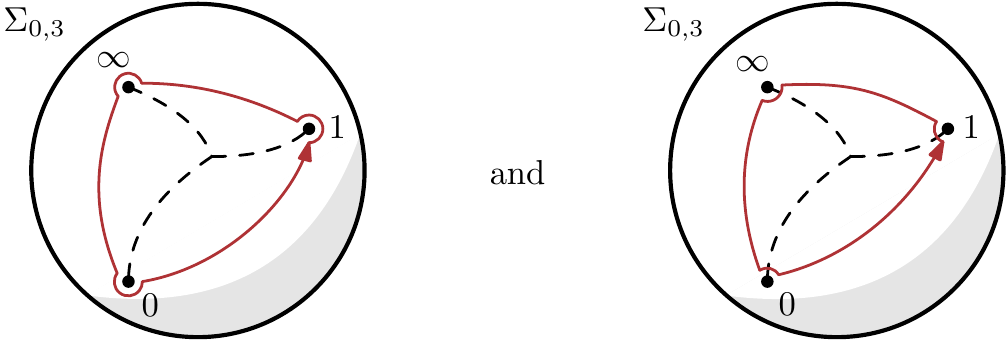}
\end{aligned}
\ee
They give two explicit linear relations between the twisted cycles of our interest,
\begin{align}
0 &= (\Delta_1 + \Delta_2 + \Delta_3) \otimes u(z) = {\cal C}_1 + e^{\pi i t}\, {\cal C}_2 + e^{-\pi i s}\, {\cal C}_3,\\
0 &= (\Delta_1 + e^{-2\pi i t} \Delta_2 + e^{2\pi i s} \Delta_3) \otimes u(z) = {\cal C}_1 + e^{-\pi i t}\, {\cal C}_2 + e^{\pi i s}\, {\cal C}_3,
\end{align}
which confirms that $\dim H^{\text{lf}}_1(\Sigma_{0,3}, {\cal L}_\omega) = 1$, as one can solve the above system for one twisted cycle. Here we used the fact that for a constant complex number $c$ we can freely move it as $c\Delta \otimes u_\Delta = \Delta \otimes c u_\Delta$.

It is natural to ask about the cohomological analogue of the above construction. We first introduce the connection $\nabla_{\omega} := d + \omega\wedge$ twisted by the one-form $\omega$. Since $\omega$ is closed, the connection is integrable, i.e., $\nabla_\omega^2 = 0$, and hence we can define \emph{twisted cohomology groups} $H^k(\Sigma_{0,3}, \nabla_{\omega}) := \ker \nabla_{\omega} / \im \nabla_{\omega}$. Its elements, called \emph{twisted cocycles}, are simply equivalence classes of $\nabla_\omega$-closed differential forms on $\Sigma_{0,3}$ modulo $\nabla_\omega$-exact terms. We will sometimes refer to the specific representative of this cohomology class as a \emph{twisted form}.

Twisted cohomology groups are isomorphic to their homology counterparts (and hence also concentrated in the middle dimension, $k{=}1$) by the following pairing. For $\Delta \otimes u_\Delta \in H_{1}^{\text{lf}}(\Sigma_{0,3},{\cal L}_\omega)$ and $\varphi \in H^1(\Sigma_{0,3}, \nabla_\omega)$ we have
\be\label{integral-pairing}
\Braket{ \Delta \otimes u_\Delta | \varphi } := \int_{\Delta \otimes u_\Delta} \varphi = \int_\Delta u_\Delta\, \varphi,
\ee
which is simply an integral of $\varphi$ along $\Delta$ weighted by the local coefficients. Upon identification of $s,t$ with Mandelstam invariants for massless kinematics, $s=(p_1 {+}p_2)^2$, $t=(p_2 {+} p_3)^2$, this pairing gives rise to four-point scattering amplitudes of open strings, which can be written formally as, e.g.,
\be\label{contour-integral}
\Braket{ (0,1) \otimes u(z) | d\log \frac{z}{z{-}1} } = \int_{0}^{1} z^{s-1} (1-z)^{t-1}\, dz = \frac{\Gamma(s)\Gamma(t)}{\Gamma(s+t)},
\ee
which is the famous Veneziano amplitude \cite{Veneziano:1968yb}. For the purposes of this appendix we specialize to massless external momenta, $p_i^2=0$, though it is not strictly necessary. The above integral is formal, because it does not converge in the physical kinematic region. We will address this issue shortly.

To make further progress we specialize to $s,t\in \mathbb{R}$ and introduce antiholomorphic versions of the above structures, namely $H^{\text{lf}}_k(\Sigma_{0,3}, \L_{\overbar\omega})$ and $H^k(\Sigma_{0,3}, \nabla_{\overbar{\omega}})$, where $\nabla_{\overbar{\omega}} := d + \overbar{\omega}\wedge$. The corresponding local system is defined by $\overbar{\omega}$ and hence the monodromies are inverses of those in ${\cal L}_\omega$. For example, going around a positively-oriented loop $\circlearrowleft_0$ has a monodromy $\overbar{e^{2\pi i s}} = e^{-2\pi i s}$ in ${\cal L}_{\overbar{\omega}}$. One can show that all of the above groups are isomorphic, see, e.g., \cite{aomoto2011theory},
\be\label{isomorphisms}
H^{\text{lf}}_k(\Sigma_{0,3}, \L_{\omega}) \cong H^k(\Sigma_{0,3}, \nabla_{{\omega}}) \cong
H^{\text{lf}}_k(\Sigma_{0,3}, \L_{\overbar\omega}) \cong H^k(\Sigma_{0,3}, \nabla_{\overbar{\omega}}).
\ee
We will often refer to the antiholomorphic versions as \emph{dual} twisted homologies and cohomologies. There is a natural counterpart of \eqref{integral-pairing}, written as $\braket{ \overbar{\widetilde{\varphi}} | \widetilde{\Delta}\otimes \overbar{u_{\widetilde{\Delta}}} }$ for $\widetilde{\Delta}\otimes \overbar{u_{\widetilde{\Delta}}} \in H_1^{\text{lf}}(\Sigma_{0,3}, \L_{\overbar{\omega}})$ and $\overbar{\widetilde{\varphi}} \in H^1(\Sigma_{0,3}, \nabla_{\overbar{\omega}})$.

The isomorphisms \eqref{isomorphisms} also give a way of defining bilinears of the holomorphic and antiholomorphic objects. The pairing between $\varphi$ and $\overbar{\widetilde{\varphi}}$, is the integral over the whole surface $\Sigma_{0,3}$ weighted by the monodromy-invariant function $|u(z)|^2$,
\be\label{complex-integral}
\braket{ \overbar{\widetilde\varphi} | \varphi } := \int_{\Sigma_{0,3}} |u(z)|^2\; \varphi \wedge \overbar{\widetilde\varphi}.
\ee
Formally it gives rise to four-point scattering amplitudes of closed strings, e.g.,
\be
\Braket{ d\log \frac{\overbar{z}}{\overbar{z}{-}1} | d\log \frac{z}{z{-}1} } = \int_{\mathbb{C}\setminus \{0,1\}} \!\!\!\!\!\! |z|^{2s-2} |1{-}z|^{2t-2}\, dz \wedge d\overbar{z} = -2\pi i  \frac{\Gamma(s)\Gamma(t)\Gamma(1{+}u)}{\Gamma(1{-}s)\Gamma(1{-}t)\Gamma({-}u)}
\ee
is the Virasoro--Shapiro amplitude \cite{Virasoro:1969me,Shapiro:1970gy}.

In a similar spirit, one can define intersection numbers of two types of twisted cycles, $\Delta \otimes u_\Delta$ and $\widetilde\Delta \otimes \overbar{u_{\widetilde\Delta}}$ via the following expression \cite{MANA:MANA19941660122,doi:10.1002/mana.19941680111}:
\be\label{twisted-intersection-pairing}
\Braket{ \Delta \otimes u_\Delta | \widetilde{\Delta} \otimes \overbar{u_{\widetilde\Delta}} } \;:= \sum_{p \in \Delta \cap \widetilde\Delta} \pm\, \frac{u_{\Delta}(p)\, \overbar{u_{\widetilde\Delta}(p)}}{|u(p)|^2}. 
\ee
Here the sum goes over all intersection points $p$, each contributing an orientation-dependent sign, as in \eqref{intersection-signs}, weighted by local coefficients. This definition agrees with \eqref{intersection-constant-coefficients} in the special case when all the intersection points lie on the same branch, but in general is a rational function of the monodromies.
For the sake of clarity, whenever the loading of a twisted cycle is exactly $u(z)$ or $\overbar{u(z)}$ in the holomorphic and antiholomorphic case respectively, we will omit it from the notation. For example, \eqref{contour-integral} will be written as $\braket{ (0,1) | d\log z/(z{-}1) }$, which should not be confused with \eqref{cycle-cocycle-pairing}.

There is an important caveat in that $\Sigma_{0,3}$ is non-compact. Therefore the above pairings are only well-defined if at least one of the objects involved is compact (in the case of cycles) or has compact support (in the case of cocycles). We discuss this topic next.

\subsubsection{\label{app:regularization}Regularization of Integrals}

\textsc{Let us illustrate} how to compactify a given (locally-finite) twisted cycle, $\Delta \otimes u_\Delta$, near one of its endpoints denoted by $p$. This can be achieved by attaching an infinitesimal, positively-oriented loop $\circlearrowleft_p$ around $p$, starting and ending at some point $p_\varepsilon$, as follows:
\be
\begin{aligned}
	\includegraphics[scale=1]{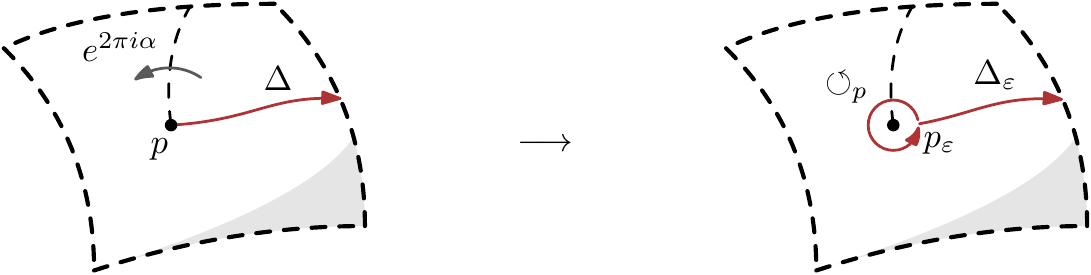}
\end{aligned}
\ee
Here we also illustrated the monodromy around $p$, equal to, say, $e^{2\pi i \alpha}$. The resulting twisted cycle is
\be\label{compactified-cycle}
\circlearrowleft_p \otimes\, u_{\circlearrowleft_p} + \Delta_\varepsilon \otimes u_\Delta,
\ee
where $\Delta_\varepsilon$ is a version of $\Delta$ with an endpoint moved from $p$ to $p_\varepsilon$. The loading of $\circlearrowleft_p \otimes\, u_{\circlearrowleft_p}$ can be determined simply by imposing that \eqref{compactified-cycle} is a twisted cycle, and in particular has no boundary, i.e.,
\be
0 = \partial_{\omega} \left( \circlearrowleft_p \otimes\, u_{\circlearrowleft_p} + \Delta_\varepsilon \otimes u_\Delta \right) = p_\varepsilon \otimes \left( - u_{\circlearrowleft_p} + e^{2\pi i \alpha} u_{\circlearrowleft_p} - u_\Delta \right).
\ee
The first two contributions are due to the beginning and end of the loop $\circlearrowleft_p$ respectively, while the third one comes from an endpoint of $\Delta_\varepsilon$ (there are no other boundaries since $\Delta \otimes u_\Delta$ was a twisted cycle to begin with). Solving for $u_{\circlearrowleft_p}$ we find
\be
u_{\circlearrowleft_p} = \frac{1}{e^{2\pi i \alpha}-1} u_\Delta.
\ee

Applying this procedure for all endpoints of a locally-finite twisted cycle gives a map called a \emph{regularization}:
\be
\text{reg}: \, H^{\text{lf}}_1(\Sigma_{0,3}, \L_\omega) \,\to\, H_1(\Sigma_{0,3}, \L_\omega).
\ee
For example, we have
\be\label{reg01}
\text{reg}\left( (0,1) \otimes u(z) \right) = \frac{\circlearrowleft_0}{e^{2\pi i s}-1} \otimes u(z) + (\varepsilon,1{-}\varepsilon) \otimes u(z) - \frac{\circlearrowleft_1}{e^{2\pi i t}-1} \otimes u(z) .
\ee
Here $\circlearrowleft_0$ and $\circlearrowleft_1$ are small loops around $0,1$ joining with the endpoints of the interval $(\varepsilon, 1{-}\varepsilon)$. Note that the coefficient of the loop $\circlearrowleft_1$ comes with a minus sign because the orientation of $(0,1)$ is opposite at $0$ and $1$. The only singularities that this contour can support are at $s,t\in \Z$ and come from the small neighbourhoods of the ramification points. The above twisted cycle is in fact equivalent to the Pochhammer contour \cite{Pochhammer1890}, see, e.g., Section~4.1 of \cite{Mizera:2017cqs}.\footnote{Let us clarify that in our conventions \eqref{reg01} is \emph{not} the same as
	\be\label{reg01-wrong}
	\left( \frac{\circlearrowleft_0}{e^{2\pi i s}-1} + (\varepsilon,1{-}\varepsilon) - \frac{\circlearrowleft_1}{e^{2\pi i t}-1} \right) \otimes u(z),
	\ee
	which is not a twisted cycle itself. For example, the left endpoint of $(\varepsilon,1{-}\varepsilon)$ is loaded with $e^{2\pi i s}u(z)$ in \eqref{reg01-wrong}, as opposed to just $u(z)$ in \eqref{reg01}.
}

Similarly, we can impose compact support on a given twisted cocycle $\varphi$ near each point $p$, i.e., impose that it vanishes in a small neighbourhood of this point. To be specific, we construct:
\be
\varphi - \nabla_{\omega} \Big( h_p\, \nabla^{-1}_\omega \varphi \Big),
\ee
where $h_p := \Theta(\varepsilon^2 - |z{-}p|^2)$ is the Heaviside step function having support on the small circle of radius $\varepsilon$ around $z=p$ and $\nabla^{-1}_\omega \varphi =: \psi_p$ denotes the unique holomorphic solution of the equation $\nabla_\omega \psi_p = \varphi$ near the point $p$. The resulting one-form is in the same cohomology class as $\varphi$, since they differ by a $\nabla_\omega$-exact term. It evaluates to:
\be
\left(1- h_p\right)\varphi - dh_p\, \nabla^{-1}_\omega \varphi.
\ee
Since $1{-}h_p = dh_p = 0$ in the small neighbourhood of the point $p$, the above form has compact support near $p$. In particular, the second term only has support on the loop $\circlearrowleft_p$.
Applying this procedure to all points removed from $\Sigma_{0,3}$ gives a map from the twisted cohomology to the twisted cohomology with compact support,
\be
\iota_\omega :\, H^1(\Sigma_{0,3}, \nabla_\omega) \to H^1_\text{c}(\Sigma_{0,3}, \nabla_\omega).
\ee
For instance, when $\varphi = d\log z/(z{-}1)$, we have:
\begin{align}\label{dlog-compactification}
\iota_\omega \!\left( d\log \frac{z}{z{-}1} \right) = &\,\Big( 1 - h_0 - h_1 - h_\infty\Big)\, d\log \frac{z}{z{-}1} - dh_0 \left(\frac{1}{s} + \frac{s{+}t}{s(s{+}1)}z + {\cal O}(z^2) \right)\nn\\
&- dh_1 \left(-\frac{1}{t}
+ \frac{s{+}t}{t(t{+}1)}(z{-}1) + {\cal O}((z{-}1)^2) \right) - dh_\infty \left( \frac{1}{u{+}1} \frac{1}{z} + {\cal O}(1/z^2) \right).
\end{align}
Here we expanded $\nabla_\omega^{-1} \varphi$ explicitly near each $p=0,1,\infty$. Note that all the poles in $s,t \in \Z_{\leq 0}$ and $u \in \Z_{<0}$ (whose exact positions depend on the choice of $\varphi$) have support on the loops $\circlearrowleft_p$ around their respective points.

Using the above compactification procedures one obtains well-defined integrals \eqref{contour-integral} and \eqref{complex-integral}, as well as the intersection number \eqref{twisted-intersection-pairing}. The former essentially corresponds to the ``finite part'' of an integral in the sense of Hadamard \cite{zbMATH02599971}. We can make use of these tools to study analytic properties of string integrals. For example, using \eqref{reg01} let us check the residues in the $s$-channel of the open-string amplitude $\braket{ (0,1) | \varphi }$, which only receive contributions from $\circlearrowleft_0$:
\begin{align}
\Res_{s=-n} \int_{0}^{1} z^s (1{-}z)^t\, \varphi &= \Res_{s=-n} \left( \frac{1}{e^{2\pi i s}-1} \oint_{\circlearrowleft_0} z^s (1{-}z)^t\, \varphi\right)\nn\\
&= \Res_{z=0} \left( z^{-n} (1{-}z)^t\, \varphi \right)\label{residue-01}
\end{align}
for $n \in \mathbb{Z}$ and hence the lowest resonance happens at $s = {-}1{-}\ord_{z=0}(\varphi)$, where $\ord_{z=p}(\varphi)$ denotes the order of the zero of $\varphi$ at $z{=}p$. Applying this formula to the case with $\varphi = d\log z/(z{-}1)$ reproduces the classic result \cite{green1988superstring}
\be
\Res_{s=-n} \int_{0}^{1} z^{s-1} (1{-}z)^{t-1} dz = \begin{dcases}
	0 & \text{if}\quad n < 0,\\
	\frac{1}{n!}\prod_{k=1}^{n} (k-t) & \text{if}\quad n \geq 0.
\end{dcases}
\ee
Analogous computation for closed strings will be described briefly, after we learn how to homologically split such integrals.

On top of manifesting their singularity structure in $s,t,u$, another useful application is in studying the low-energy limit of string theory integrals, given by the leading order as $\alpha' \to 0$ after rescaling $(s,t,u) \to \alpha'(s, t, u)$.\footnote{Strictly speaking here we extract the leading $1/\alpha'$ behaviour of string amplitudes. In cases when this order vanishes, low-energy limit is governed by subleading terms which can be extracted using an analogous method.} For example, the integral $\braket{ (0,1) | \varphi }$ behaves as
\begin{align}\label{open-string-low-energy}
\lim_{\alpha' \to 0} \int_{0}^1 z^{\alpha'\! s} (1{-}z)^{\alpha'\! t}\, \varphi &= \frac{1}{2\pi i \alpha' s} \oint_{\circlearrowleft_0} \lim_{\alpha' \to 0} u(z) \,\varphi - \frac{1}{2\pi i \alpha' t} \oint_{\circlearrowleft_1} \lim_{\alpha' \to 0} u(z) \,\varphi\nn\\
&= \frac{1}{\alpha' s} \Res_{z=0} \!\left( \lim_{\alpha' \to 0} \varphi\right) - \frac{1}{\alpha' t} \Res_{z=1} \!\left( \lim_{\alpha' \to 0}\varphi \right).
\end{align}
In the first line we regularized the contour as in \eqref{reg01}, which also allowed us to commute the limit with the integration. Since $u(z) \to 1$, the result localizes as a sum of residues around $0$ and $1$ of the form $\varphi$.

As another example, we can consider the complex integral $\braket{ \overbar{\varphi} | d\log z/(z{-}1) }$, which at the leading order equals
\begin{align}
\lim_{\alpha' \to 0} \int_{\C \setminus \{0,1\}} \!\!\!\!\!\!\!\!\!\!\!\! |z|^{2\alpha' s} |1{-}z|^{2\alpha' t} \, d\log \frac{z}{z{-}1} \wedge \overbar{\varphi} &= \int_{\C \setminus \{ 0,1\}}\lim_{\alpha' \to 0} |u(z)|^2 \; \left( - dh_0 \frac{1}{\alpha' s} + dh_1 \frac{1}{\alpha' t} \right) \overbar{\varphi} \nn\\
&= -2\pi i \left(\frac{1}{\alpha' s} \Res_{z=0} \!\left( \lim_{\alpha' \to 0} \varphi\right) - \frac{1}{\alpha' t} \Res_{z=1} \!\left( \lim_{\alpha' \to 0} \varphi \right) \right)\!.
\end{align}
In the first equality we imposed compact support on the twisted form $d\log z/(z{-}1)$ according to \eqref{dlog-compactification}, which allowed us to commute the limit with the integration. Since $|u(z)|^2 \to 1$, the right-hand side then becomes a total differential, which localizes on the two loops $\circlearrowleft_0$ and $\circlearrowleft_1$,\footnote{The sign on the right-hand side can be confirmed by writing a smooth version of $h_p$ such that it is equal to one inside the loop $\circlearrowleft_p$, to zero outside of a larger, but still infinitesimal, loop $\widetilde{\circlearrowleft}_p$ and interpolates between the two values on the annulus $D_p$ bounded by $\circlearrowleft_p$ and $\widetilde{\circlearrowleft}_p$. Then a two-form $ dh_p \wedge \overbar{\varphi}$ has support only on $D_p$ and hence
	\be
	\int_{D_p} \!\!\! dh_p \wedge \overbar{\varphi} = \int_{D_p} d\left( h_p\, \overbar{\varphi} \right) = \int_{\partial D_p} \!\!\! h_p\, \overbar{\varphi} = -\int_{\circlearrowleft_p} \overbar{\varphi},
	\ee
	where we used Stokes' theorem and the fact that $\partial D_p = \widetilde{\circlearrowleft}_p - \circlearrowleft_p$.
}
\be
\int_{\C \setminus \{ 0,1\}} d\left( \lim_{\alpha' \to 0} \left( -\frac{h_0}{\alpha' s} + \frac{h_1}{\alpha' t}\right)\, \overbar{\varphi} \right) = \frac{1}{\alpha' s}\oint_{\circlearrowleft_0} \lim_{\alpha' \to 0} \overbar{\varphi} - \frac{1}{\alpha' t}\oint_{\circlearrowleft_1} \lim_{\alpha' \to 0} \overbar{\varphi},
\ee
which can be written as a residue. To obtain the final result we also used the fact that
\be
\Res_{z=p} \!\left( \lim_{\alpha' \to 0} \overbar{\varphi}\right) = -\Res_{z=p} \!\left( \lim_{\alpha' \to 0} \varphi\right).
\ee
It agrees with \eqref{open-string-low-energy} up to the factor of $-2\pi i$.

\subsubsection{Analytic Continuation of Open-String Amplitudes}

\textsc{Compactifications discussed up to now} provide a way of making string theory integrals well-defined, though they might not be the most convenient ones to use in practice as they involve an infinitesimal procedure. Let us consider a more conventional way of performing analytic continuation of integrals, which uses Morse (or Picard--Lefschetz) theory, see, e.g., \cite{milnor2016morse,Witten:2010cx}.

We start by defining the so-called \emph{Morse function} on $\Sigma_{0,3}$, given by
\be\label{Morse-function-4pt}
\Re(\log u(z)) = \Re(s \log z + t \log (1{-}z)).
\ee
It is single-valued if $s,t\in \mathbb{R}$. This Morse function has a unique non-degenerate critical point, given by $\omega=0$, at the position
\be
z_\ast := \frac{s}{s{+}t},
\ee
which has index one, i.e., the Morse function has a shape of a saddle near $z_\ast$. Therefore we can define two flows from $z_\ast$ parametrized by a ``time'' variable $\tau$: a downward $(-)$ and an upward $(+)$ one, which satisfy 
\be
\frac{dz}{d\tau} = \mp \frac{1}{2}\frac{\partial \overbar{\log u(z)}}{\partial \overbar{z}}, \qquad \frac{d\overbar{z}}{d\tau} = \mp \frac{1}{2}\frac{\partial \log u(z)}{\partial z}.
\ee
It is straightforward to show that $d \Im (\log u(z))/d\tau = 0$, which is the stationary phase condition along the flow. In addition we have
\be
\frac{d \Re(\log u(z))}{d\tau} =  \frac{\partial \Re (\log u(z))}{\partial z} \frac{dz}{d\tau} + \frac{\partial \Re (\log u(z))}{\partial \overbar{z}} \frac{d\overbar{z}}{d\tau} = \mp 2\left|\frac{d \Re(\log u(z))}{dz}\right|^2\!\!\!,
\ee
which confirms that the Morse function is strictly decreasing (increasing) along the flow in the $-$ ($+$) sign case. Solutions to these equations can be found by evolving them directly such that $z=z_\ast$ at $\tau = \pm \infty$.

In the specific case \eqref{Morse-function-4pt}, the flows can be found uniquely (up to homotopy) without solving the gradient flow equations. Without loss of generality let us focus on the case of physical kinematics for massless external states, $s < 0$ and $t,u>0$, for which $1 < z_\ast < \infty$. In addition, in the neighbourhood of $z = 0$ the Morse function approaches $+\infty$, while in the neighbourhoods of $z = 1,\infty$, it goes to $-\infty$. Therefore, aside from passing through $z_\ast$, the path of steepest descent ${\cal J}$ has to have endpoints on $z=1,\infty$, while the path of steepest ascent ${\cal K}$ has both endpoints on $z=0$:
\be
\begin{aligned}
	\includegraphics[scale=1]{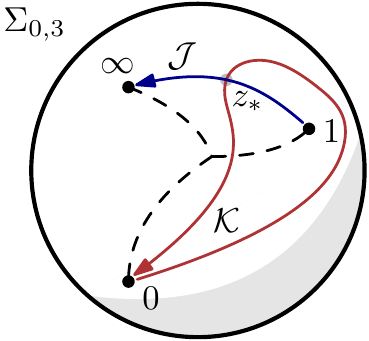}
\end{aligned}
\ee
cf. Figure~\ref{fig:Morse-function}. The orientations and loading of contours can be chosen arbitrarily. For example, we construct
\be\label{steepest-cycles}
{\cal J}\otimes u(z) \in H_1^{\text{lf}}(\Sigma_{0,3},\L_\omega), \qquad {\cal K}\otimes \overbar{u(z)} \in H_1^{\text{lf}}(\Sigma_{0,3},\L_{\overbar{\omega}}).
\ee
Since they intersect only once at the critical point $z_\ast$, their loading cancels and we have the intersection number
\be\label{J-K-intersection}
\braket{ {\cal J} | {\cal K} } = 1.
\ee
Analytic continuation of an integral over a given twisted cycle $\Delta \otimes u(z)$ boils down to expressing it in terms of ${\cal J} \otimes u(z)$. Given \eqref{J-K-intersection} we have simply
\be\label{Delta-J}
\Delta \otimes u(z) = \braket{ \Delta | {\cal K} } \, {\cal J} \otimes u(z).
\ee
For instance, let us consider $\Delta = (0,1)$ and compute the required intersection number. Regularizing $(0,1)\otimes u(z)$ near $z=0$ we have:
\be
\begin{aligned}
	\includegraphics[scale=1]{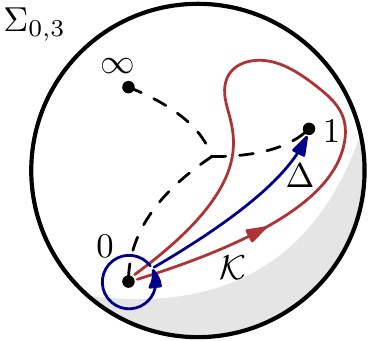}
\end{aligned}
\ee
There are two intersection points in the neighbourhood of $z=0$, which give:
\begin{align}
\braket{ (0,1) | {\cal K} } &= \Braket{ \frac{\circlearrowleft_0}{e^{2\pi i s}-1} | {\cal K} } \nn\\
&= -\frac{e^{2\pi i s}}{e^{2\pi i s} -1} + \frac{e^{-2\pi i t}}{e^{2\pi i s}-1},
\end{align}
where the denominators come from the normalization of the loop $\circlearrowleft_0$ and the numerators are obtained by keeping track of the branch cuts that each contour crosses before arriving at the intersection points. Simplifying this result and plugging back into \eqref{Delta-J} we find that an open-string integral can be analytically continued as
\be\label{analytic-continuation}
\int_{0}^{1} z^s (1{-}z)^t \, \varphi = e^{-\pi i t}\, \frac{\sin (\pi u)}{\sin (\pi s)} \int_{\cal J} z^s (1{-}z)^t \, \varphi
\ee
in the kinematic region of our interest.\footnote{Strictly speaking, the right-hand side of \eqref{analytic-continuation} converges only when $\Re(t) > -1{-}\ord_{z=1} (\varphi)$ and $\Re(u) > -1{-}\ord_{z=\infty} (\varphi)$. In order to further extend this region to $\Re(t),\Re(u) > 0$ one can express the right-hand side of \eqref{analytic-continuation} in terms of an integral of $d \log (z{-}1)$ using intersection numbers of twisted forms, as discussed in Section~\ref{sec:open-string-amplitudes}. Explicitly we have:
	\be
	\int_{0}^{1} z^s (1{-}z)^t \, \varphi = e^{-\pi i t}\, \frac{\sin (\pi u)}{\sin (\pi s)} \, \braket{ u\, d\log z | \varphi }_\omega \int_{\cal J} z^s (1{-}z)^t \, d\log (z{-}1).
	\ee
	Since we want to focus our discussion on the homological aspects, we will ignore this issue in the remainder of the appendix.}
Notice that the denominator $\sin (\pi s)$ reintroduces the $s$-channel poles that are absent from the integral over ${\cal J}$, while the numerator $\sin (\pi u)$ cancels the $u$-channel poles that are not present in the original integral over $(0,1)$.

The result of analytic continuation \eqref{analytic-continuation}, and in particular convergence of the right-hand side, allows us to study the high-energy limit $\alpha' \to \infty$ using a saddle-point expansion. We find
\be\label{analytic-continuation-result}
\lim_{\alpha' \to \infty} \int_{0}^{1} z^{\alpha'\!s} (1{-}z)^{\alpha' \!t} \, \varphi = \sqrt{\frac{-2\pi s t}{\alpha' u^3}} \frac{\sin (\pi \alpha' u)}{\sin (\pi \alpha' s)} e^{\alpha'\! \left(s\log (-s) + t\log t + u \log u \right)} \!\lim_{\alpha' \to \infty}\!\! \widehat{\varphi}(z_\ast),
\ee
where $\widehat{\varphi} dz := \varphi$ is evaluated at the critical point $z_\ast$. String theory amplitudes have an interesting high-energy limit \cite{Gross:1987kza,Gross:1987ar} with Stokes lines (places where ${\cal J}$ and ${\cal K}$ change discontinuously) when either of $\Re(s), \Re(t), \Re(u)$ changes sign, similar to those of the gamma function \cite{10.2307/51844}.

Similar computations are certainly possible on the covering space of $\Sigma_{0,3}$, though they involve an infinite number of saddles that resum to give the sine factors in front of \eqref{analytic-continuation-result}. The advantage of working with cycles with coefficients in a local system is that sines are automatically taken into account and the asymptotics is governed by a single saddle point.

Before discussing analytic continuation of complex integrals of the type \eqref{complex-integral} we need to introduce a way of homologically splitting such integrals.

\subsubsection{Homological Splitting}

\textsc{Originally introduced in 1985}, Kawai--Lewellen--Tye (KLT) relations \cite{Kawai:1985xq} give identities between open and closed string theory amplitudes. Similar relations were found independently by Fateev and Dotsenko \cite{Dotsenko:1984nm,Dotsenko:1984ad} in the context of conformal field theory, as well as Aomoto \cite{10.1093/qmath/38.4.385} for Selberg integrals. Here we review their modern generalization in terms of twisted homology theory \cite{cho1995,Mizera:2016jhj,Mizera:2017cqs}. Their derivation is entirely analogous to the one given in Appendix~\ref{app:compact-Riemann-surfaces}.

We start by using the twisted Poincar\'e duality and isomorphism between the two homology groups with local systems ${\cal L}_\omega$ and ${\cal L}_{\overbar{\omega}}$, 
\be
H^1(\Sigma_{0,3}, \nabla_{\omega}) \cong H_1^{\text{lf}}(\Sigma_{0,3}, {\cal L}_{\omega}) \cong H_1^{\text{lf}}(\Sigma_{0,3}, {\cal L}_{\overbar{\omega}}),
\ee
which allows us to construct $\eta(\widetilde{\cal C}) \in H^1(\Sigma_{0,3}, \nabla_\omega)$ for every $\widetilde{\cal C} \in H_1^{\text{lf}}(\Sigma_{0,3}, {\cal L}_{\overbar{\omega}})$ such that
\be
\braket{ {\overbar{\widetilde\varphi}} | \widetilde{\cal C} } = \braket{ {\overbar{\widetilde\varphi}} | \eta(\widetilde{\cal C}) }, \qquad
\braket{ {\cal C} | \widetilde{\cal C} } = \braket{ {\cal C} | \eta(\widetilde{\cal C}) }.
\ee
Similar expressions are valid for other pairings, but they will not be needed here. We write a given twisted cocycle $\varphi$ in terms of a basis of a dual twisted cycle $\widetilde{\cal C}$ as $\varphi = \alpha\, \eta(\widetilde{\cal C})$. Contracting both sides with $\braket{ {\cal C} |\, \bullet\, }$ and solving for $\alpha$ we have $\varphi = \eta(\widetilde{\cal C}) \braket{ {\cal C} | \varphi } / \braket{ {\cal C} | \widetilde{\cal C} }$. Finally evaluating $\braket{ \overbar{\widetilde{\varphi}} |\, \bullet\,}$ we find the identity:
\be\label{twisted-period-relation}
\braket{ \overbar{\widetilde{\varphi}} |\varphi } = \frac{\braket{ \overbar{\widetilde{\varphi}} | \widetilde{\cal C} } \braket{ {\cal C} | \varphi }}{\braket{ {\cal C} | \widetilde{\cal C}}}.
\ee
This is one of the simplest examples of the so-called \emph{twisted period relations} \cite{cho1995}, which generalize the Riemann bilinear relations \eqref{quadratic-relation} to the twisted case.

For example, let us apply the formula \eqref{twisted-period-relation} with $\mathcal{C} = (0,1) \otimes u(z)$ and $\widetilde{\cal C} = (0,1) \otimes \overbar{u(z)}$ by evaluating the required intersection number $\braket{ {\cal C} | \widetilde{\cal C}}$. Regularizing the first twisted cycle and deforming the second one into a sine-like curve we find three intersection points:
\be
\begin{aligned}
	\includegraphics[scale=1]{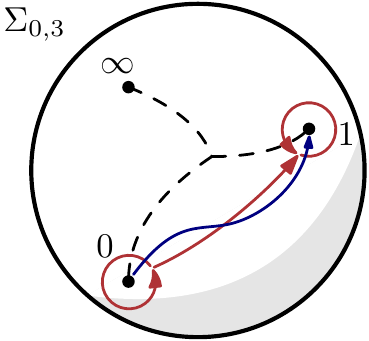}
\end{aligned}
\ee
Since all of them lie on the same branch of $u(z)$, after taking into account orientations we have simply:
\be
\braket{ (0,1) | (0,1) } = -\frac{1}{e^{2\pi i s}-1} - 1 - \frac{1}{e^{2\pi i t}-1}.
\ee
Simplifying and plugging back into the relation we find the identity
\be\label{KLT-relation-1}
\int_{\C \setminus \{0,1\} } |z|^{2s} |1{-}z|^{2t}\, \varphi \wedge \overbar{\widetilde{\varphi}} \,=\, 2i \frac{\sin (\pi s) \sin (\pi t)}{\sin (\pi u)} \left( \int_{0}^1 z^s (1{-}z)^t\, \varphi \right) \left( \int_{0}^1 \overbar{z}^s (1{-}\overbar{z})^t\, \overbar{\widetilde\varphi} \right).
\ee
This is a relation between a four-point scattering amplitudes of closed and open strings. We can use this representation to study factorization of the closed string amplitude. For example, in the $s$-channel both integrals receive only contributions from the loop $\circlearrowleft_0$ and we find:
\begin{align}
\Res_{s=-n}\!\! \int_{\C \setminus \{0,1\} } \!\!\!\!\!\!\!\!\!\!\!\! |z|^{2s} |1{-}z|^{2t}\, \varphi \wedge \overbar{\widetilde{\varphi}} &= \Res_{s=-n}\! \left(\! 2i \frac{\sin(\pi t)}{\sin(\pi u)} \frac{\sin(\pi s)}{(e^{2\pi i s}{-}1)(e^{-2\pi i s}{-}1)} \oint_{\circlearrowleft_0} \!\!\! z^s (1{-}z)^t \varphi \oint_{\circlearrowleft_0} \!\!\! \overbar{z}^s (1{-}\overbar{z})^t \overbar{\widetilde{\varphi}} \right)\nn\\
&= -2\pi i \Res_{z=0} \left( z^{-n} (1{-}z)^{t} \varphi \right)  \Res_{z=0} \left( z^{-n} (1{-}z)^{t} \widetilde\varphi \right)
\end{align}
for $n \in \mathbb{Z}$. In fact, this result is a consequence of a KLT relation between three-points amplitudes and \eqref{residue-01}.

Let us illustrate an alternative way of setting up the problem. We use ${\cal C} = (0,1)\otimes u(z)$ and $\widetilde{\cal C} = (1,\infty)\otimes e^{\pi i t} \overbar{u(z)}$, as well as construct a pair of (quasi-)orthonormal twisted cycles, such that
\be\label{orthonormality}
\braket{ {\cal C} | \widetilde{\cal C}^\vee } = 1, \qquad \braket{ {\cal C}^\vee | \widetilde{\cal C} } = -1.
\ee
In this way we have $\braket{ {\cal C} | \widetilde{\cal C} }^{-1} = -\braket{ {\cal C}^\vee | {\widetilde{\cal C}}^{\vee}}$ and the relation \eqref{twisted-period-relation} becomes a polynomial. Finding orthonormal cycles is straightforward:
\be
\begin{aligned}
	\includegraphics[scale=1]{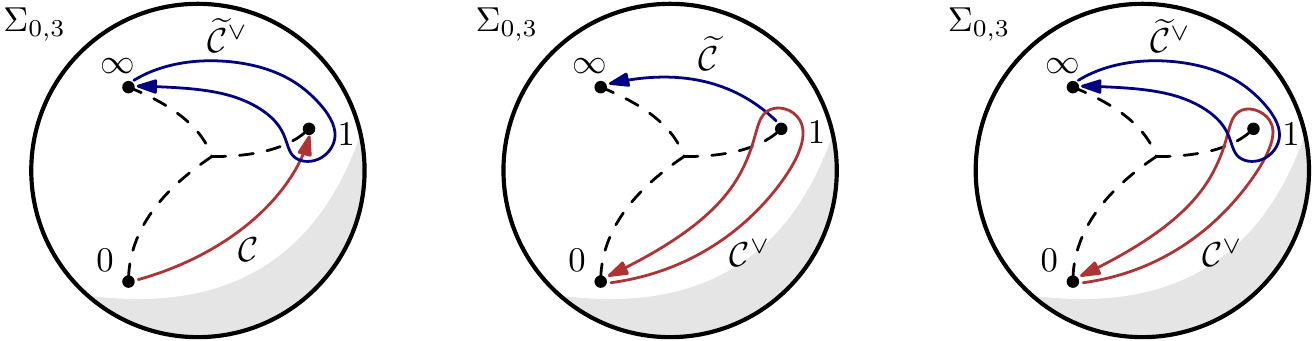}
\end{aligned}
\ee
Here ${\cal C}^\vee$ and $\widetilde{\cal C}^\vee$ are loaded with $e^{-\pi i t} u(z)$ and $\overbar{u(z)}$ respectively, so that \eqref{orthonormality} is satisfied. As indicated above, these two twisted cycles intersect at exactly two points, and hence we have
\be
\braket{ {\cal C}^\vee | {\widetilde{\cal C}}^{\vee} } = e^{-\pi it} - e^{\pi i t}.
\ee
The first contribution is simply the relative phase $e^{-\pi i t}$ between the two loadings, while the second is additionally multiplied by $e^{2\pi i t}$ since $\widetilde{\cal C}^\vee$ crosses a branch cut in the clockwise direction before reaching the second intersection point. This leads to the KLT relation
\be
\int_{\C \setminus \{0,1\} } |z|^{2s} |1{-}z|^{2t}\, \varphi \wedge \overbar{\widetilde{\varphi}} \,=\, 2i \sin (\pi t) \left( \int_{0}^1 z^s (1{-}z)^t\, \varphi \right) \left( \int_{1}^{\infty} \overbar{z}^s (\overbar{z}{-}1)^t\, \overbar{\widetilde\varphi} \right).
\ee
The difference to \eqref{KLT-relation-1} is in the poles that need to be cancelled on the right-hand side: here both open string integrals have poles in the $t$-channel, which are compensated by the zeros of $\sin(\pi t)$, while in \eqref{KLT-relation-1} both integrals have $s$- and $t$-channel poles, which requires the numerator of $\sin (\pi s) \sin (\pi t)$, as well as denominator $\sin(\pi u)$ to reintroduce the $u$-channel resonances.

Similar computations on the covering space of $\Sigma_{0,3}$ were considered before in \cite{Gaiotto:2013rk,Witten:2013pra}.

\subsubsection{Analytic Continuation of Closed-String Amplitudes}

\textsc{Having learned how to} homologically split complex integrals, analytic continuation of the four-point closed string amplitude becomes straightforward. Utilizing the steepest descent and ascent twisted cycles from \eqref{steepest-cycles} we can write
\be
\braket{ \overbar{\widetilde\varphi} | \varphi } = -\braket{ \overbar{\widetilde\varphi} | {\cal J} } \braket{ {\cal K} | {\cal K} } \braket{ {\cal J} | \varphi },
\ee
where the minus comes about because of the orthonormality conditions $\braket{ {\cal J}| {\cal K} } = - \braket{ {\cal K} | {\cal J} } = 1$. It remains to evaluate the intersection number $\braket{ {\cal K} | {\cal K} }$. To do so, we regularize ${\cal K} \otimes u(z)$ as
\be
\text{reg} \left( {\cal K} \otimes u(z) \right) = -\frac{e^{2\pi i t}-1}{e^{2\pi i s}-1} \circlearrowleft_0 \otimes\, u(z) + {\cal K}_\varepsilon \otimes u(z),
\ee
where $\circlearrowleft_0$ is an infinitesimal loop around $z=0$ starting and ending at the point $\varepsilon$, while ${\cal K}_\varepsilon$ is a small deformation of ${\cal K}$ which starts at $\varepsilon$, loops around $z=1$ and goes back to the point $\varepsilon$. The coefficient in front of $\circlearrowleft_0$ can be obtained in the same way as that in \eqref{compactified-cycle} by requiring that the result is a twisted cycle. Hence we have
\be
\begin{aligned}
	\includegraphics[scale=1]{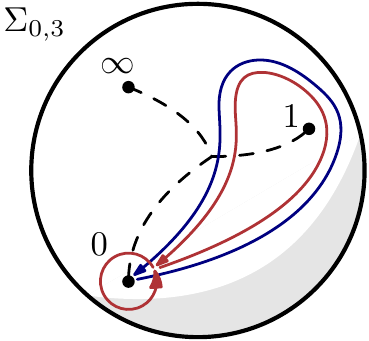}
\end{aligned}
\ee
The two cycles intersect at two points near $z=0$, giving
\be
\braket{ {\cal K} | {\cal K} } = \Braket{ -\frac{e^{2\pi i t}-1}{e^{2\pi i s}-1} \circlearrowleft_0 | {\cal K} } = -\frac{e^{2\pi i t}-1}{e^{2\pi i s}-1} \left( - e^{2\pi i s} + e^{-2\pi i t}\right).
\ee
Note that the phase factors come with different signs in the exponents because of the way ${\cal K}\otimes u(z)$ and ${\cal K}\otimes \overbar{u(z)}$ cross branch cuts before arriving at the intersection points. This computation gives us the analytic continuation of a closed-string amplitude:
\be
\int_{\C \setminus \{0,1\} } \!\!\!\! |z|^{2s} |1{-}z|^{2t}\, \varphi \wedge  \overbar{\widetilde{\varphi}} \,=\, 2i \frac{\sin (\pi t) \sin (\pi u)}{\sin (\pi s)} \left( \int_{\cal J} z^s (1{-}z)^t\, \varphi \right) \left( \int_{\cal J} \overbar{z}^s (1{-}\overbar{z})^t\, \overbar{\widetilde\varphi} \right).
\ee

Since the integrals on the right-hand side converge in the kinematic region of our interest, we can take the high-energy limit, which gives:
\begin{equation}
\lim_{\alpha' \to \infty} \int_{\C \setminus \{0,1\} } \!\!\!\!\!\!\!\!\!\! |z|^{2\alpha'\! s} |1{-}z|^{2\alpha'\! t}\, \varphi \wedge \overbar{\widetilde{\varphi}} = -\frac{4\pi i st}{\alpha' u^3}  \frac{\sin (\pi\alpha' t) \sin (\pi \alpha' u)}{\sin (\pi \alpha' s)} e^{2\alpha'\! \left(s\log (-s) + t\log t + u \log u \right)} \!\!\lim_{\alpha' \to \infty}\!\!  \widehat{\varphi}(z_\ast) \widehat{\overbar{\widetilde{\varphi}}}(z_\ast),
\end{equation}
where the forms on the right-hand side are evaluated at the critical point $z_\ast = s/(s{+}t)$.

\subsection{\label{app:generalization}Generalization to Moduli Spaces of Punctured Spheres}

\textsc{The space} $\Sigma_{0,3}$ we have been working with is an example of a moduli space of genus-zero Riemann surfaces with punctures,
\be
{\cal M}_{0,n} := \Conf_n(\CP^1) / \PSL(2,\C).
\ee
Using the $\PSL(2,\C)$ redundancy to fix positions of three punctures $(z_1, z_{n-1}, z_n)$ we can write it as
\be
{\cal M}_{0,n} = \{ (z_2, z_3, \ldots, z_{n-2}) \in (\mathbb{CP}^1)^{n-3} \;|\; z_i \neq z_j \text{ for } i \neq j \},
\ee
where we will use the same notation $z_i$ to denote inhomogeneous coordinates on each Riemann sphere $\CP^1$. In the simplest cases $\M_{0,3}$ is a point, while in the next-simplest one we have $\M_{0,4} \cong \Sigma_{0,3}$. Most of the results discussed in the previous subsections generalize in a natural way; we briefly outline some of them here. We start by endowing the moduli space with a rank-one local system ${\cal L}_\omega$,
\be
{\cal L}_\omega:\; \pi_1({\cal M}_{0,n}) \;\to\; \C^\times,
\ee
which associates a non-zero complex number $\exp \int_\gamma \omega$ to every path $\gamma \in \pi_1(\M_{0,n})$. The one-form $\omega$ is given by:
\be
\omega := \alpha'\!\!\! \sum_{1 \leq i<j \leq n} \!\!\! 2 p_i {\cdot} p_j \, d\log (z_i - z_j),
\ee
where we attached a momentum $p_i^\mu$ to each puncture $z_i$. We assume the momentum conservation condition $\sum_{i=1}^{n} p_i^\mu = 0$.

The fundamental group $\pi_1(\M_{0,n})$ is generated by the loops $\circlearrowleft_{ij}$ in which the puncture $z_j$ goes around $z_i$ in the positive direction and comes back to its original position. To each of them the above local system associates the number $\exp(4\pi i \alpha' p_i {\cdot} p_j)$. Since in generic $\PSL(2,\C)$-fixing infinity is not a special place, the loop $\circlearrowleft_{\infty i}$ where the puncture $z_i$ goes around infinity can be contracted to a point and hence should be represented by the coefficient $1$. This gives a constraint on the possible masses $m_i^2 := - p_i^2$ of external states:
\be
1 = \exp \int_{\circlearrowleft_{\infty i}} \!\!\omega = \exp (-4\pi i \alpha' m_i^2 ),
\ee
which implies that only $m_i^2 \in \Z/(2\alpha')$ are allowed. 

Twisted homology and cohomology groups, $H_{k}(\M_{0,n}, {\cal L}_\omega)$ and $H^{k}(\M_{0,n}, \nabla_\omega)$, as well as their locally-finite, compactly-supported, and complex-conjugated versions can be defined in the straightforward way. It can be shown that they are all isomorphic and concentrated in the middle dimension $k=n{-}3$ \cite{aomoto1975vanishing}.

Let us start the discussion with the case of homology. Organizing each puncture along the circle $\mathbb{RP}^1 \subset \CP^1$ on the Riemann surface in a given cyclic ordering $\alpha$ we obtain the topological cycles:
\be\label{Delta-disk}
\Delta(\alpha) := \{ (z_2, z_3, \ldots, z_{n-2}) \in \mathbb{R}^{n-3} \,|\, z_{\alpha(1)} < z_{\alpha(2)} < \cdots < z_{\alpha(n-1)} \},
\ee
where without loss of generality we fixed $z_n {=} \infty$ and $\alpha(n) {=} n$ for simplicity of notation. Twisted cycles are obtained by loading each $\Delta(\alpha)$ with a local coefficient $\exp \int_\gamma \omega$, where $\gamma$ starts at an arbitrary point $p$ and ends at the coordinate $(z_2, z_3, \ldots, z_{n-2})$. Since the choice of $p$ amounts to an overall phase, we can write twisted cycles as:
\be
\Delta(\alpha) \otimes \KN := \Delta(\alpha) \otimes e^{i\pi \phi(\alpha)} \!\!\!\!\! \prod_{1 \leq i < j \leq n} (z_i - z_j)^{2\alpha'\! p_i {\cdot} p_j},
\ee
where for physical applications we choose $\phi(\alpha)$ such that the coefficient is real-valued on $\Delta(\alpha)$, i.e., equal to the Koba--Nielsen factor $\prod_{i<j} |z_i {-} z_j|^{2\alpha' p_i \cdot p_j}$ with absolute values. By considering a single fibre of $\M_{0,n}$, say in the variable $z_2 \in \Sigma_{0,n-1}$, while the remaining punctures are held fixed, we find two twisted cycles homologous to zero:
\be
\begin{aligned}
	\includegraphics[scale=1]{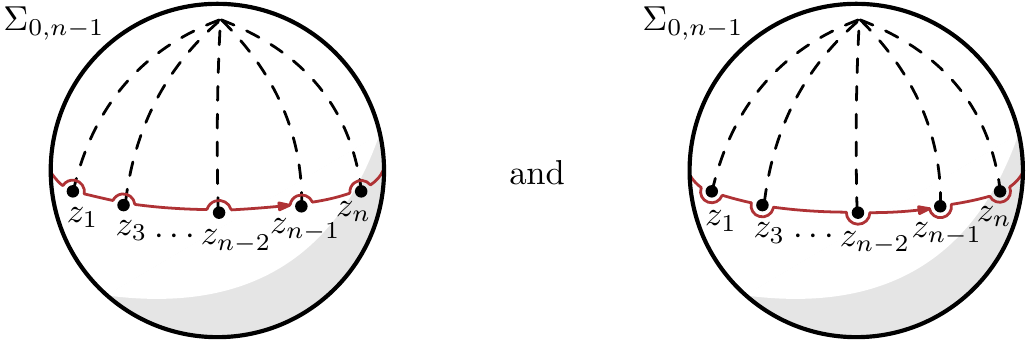}
\end{aligned}
\ee
since they can be deformed into a point near the north and south pole respectively. Dashed lines correspond to phases $e^{4\pi i \alpha' p_2 \cdot p_j}$ obtained when $z_2$ passes above $z_j$. Taking into account loading of each twisted cycle gives us two relations:
\begin{align}
0 = \bigg( \Delta(1234\cdots n) + e^{\pm 2\pi i \alpha' p_2 \cdot p_3 } \Delta(1324\cdots n ) &+ e^{\pm 2\pi i \alpha' p_2 \cdot (p_3 + p_4) } \Delta(1342\cdots n ) \nn\\
& + \ldots + e^{\pm 2\pi i \alpha' p_2 \cdot \sum_{j=3}^{n} p_j} \Delta(134\cdots n2)\bigg)  \otimes \text{KN}\label{cycle-relations}
\end{align}
involving $n{-}1$ terms each, where the signs $\pm$ are always the same in each exponent. Aomoto showed that similar identities for other fibres generate all relations between twisted cycles and the dimension of twisted homology group $H_{n-3}^{\text{lf}}(\M_{0,n},{\cal L}_\omega)$ becomes $(n{-}3)!$ \cite{aomoto1977structure,10.1093/qmath/38.4.385,aomoto1987gauss}, see also \cite{Plahte:1970wy}. The relations \eqref{cycle-relations} can be given some physical interpretation: for example their leading $\alpha'$-order is the photon decoupling identity in the field-theory limit \cite{Stieberger:2009hq,BjerrumBohr:2009rd}.

Alternatively, one can compute the dimension of the twisted homology and cohomology groups topologically, since we have
\be
\chi(\M_{0,n}) = (-1)^{n-3} \dim H^{\text{lf}}_{n-3}(\M_{0,n}, {\cal L}_\omega).
\ee
The Euler characteristic $\chi(\M_{0,n})$ can be determined by considering a fibration $\M_{0,n} \to \M_{0,n-1}$ with a fibre being a Riemann sphere with $n{-}1$ points removed, $\Sigma_{0,n-1}$. Given that the Euler characteristic of $\Sigma_{0,n-1}$ is equal to that of a sphere, $\chi(\CP^1) = 2$, minus one for each removed point, i.e., $\chi(\Sigma_{0,n-1}) = 3{-}n$, we find the recursion
\be
\chi(\M_{0,n}) = (3{-}n)\, \chi(\M_{0,n-1}),
\ee
since the Euler characteristics on a fibre bundle multiply. Together with the boundary condition $\chi(\M_{0,3})=1$, it gives us the required answer, $\chi(\M_{0,n}) = (-1)^{n-3} (n{-}3)!$.

Let us consider pairings between twisted homology and cohomology groups on $\M_{0,n}$. As a natural extension of the ones used in the previous subsections, we have the middle-dimensional integrals:
\be
\braket{ \Delta \otimes \KN | \varphi } := \int_{\Delta} \KN\, \varphi, \qquad \braket{ \overbar{\widetilde{\varphi}} | \widetilde{\Delta} \otimes \overbar{\KN} } := \int_{\widetilde{\Delta}} \overbar{\KN}\, \overbar{\widetilde{\varphi}},
\ee
where $\Delta\otimes \KN \in H_{n-3}^{\text{lf}}(\M_{0,n}, {\cal L}_\omega)$ and $\varphi \in H^{n-3}(\M_{0,n}, \nabla_\omega)$, while the tilded versions belong to the antiholomorphic counterparts. These are open-string integrals (after embedding $\M_{0,n} {\hookrightarrow} \CP^{n(n-3)/2}$ it is sometimes convenient to think of them as Mellin transforms from the extended moduli space to the kinematic space). Similarly, a bilinear between the two twisted cohomology groups is given by
\be
\braket{ \overbar{\widetilde{\varphi}} | \varphi } := \int_{\M_{0,n}} \!\!\!\!\! |\KN|^2\, \varphi \wedge \overbar{\widetilde{\varphi}},
\ee
which is a closed-string integral, involving a single-valued combination $|\KN|^2 = \allowbreak \prod_{i<j}\allowbreak (|z_i {-} z_j|^2)^{2\alpha' p_i {\cdot} p_j}$. One can make sense of these formal integrals by regularizing them in a manner similar to that discussed in Appendix~\ref{app:regularization}, namely by constructing compact cycles or forms with compact support (see also \cite{Brown:2018omk,Sen:2019jpm} for recent related approaches). Note that a new problem arising for $n{>}4$ is the existence of exceptional divisors, i.e., boundaries of $\M_{0,n}$ corresponding to a collision of three or more punctures. One can deal with this issue either using local blow-ups \cite{PMIHES_1969__36__75_0,Brown:2009qja}, or a fibration of the moduli space.

Intersection numbers of twisted cycles have a definition analogous to \eqref{twisted-intersection-pairing}, as a sum over all intersection points weighted by signs and phase factors,
\be
\Braket{ \Delta \otimes \KN_\Delta | \widetilde{\Delta} \otimes \overbar{\KN_{\widetilde{\Delta}}} } \;:= \sum_{p \in \Delta \cap \widetilde\Delta} \pm\, \frac{\KN_{\Delta}(p)\, \overbar{\KN_{\widetilde\Delta}(p)}}{|\KN(p)|^2}, 
\ee
where we indicated explicitly loading by branches of the Koba--Nielsen to avoid ambiguities. In evaluation of such intersection numbers one can exploit the fact that the calculation localizes on points $p {\in} \Delta {\cap} \widetilde{\Delta}$. In an infinitesimal neighbourhood of each $p$ one can treat $\M_{0,n}$ as a product space of $n{-}3$ one-dimensional spaces. The contribution from around $p$ becomes a product of intersection numbers on each of these one-dimensional spaces computed in the same way as in the previous subsections. This problem was addressed systematically in \cite{Mizera:2017cqs}. In particular, intersection numbers associated to $\Delta(\alpha)$ from \eqref{Delta-disk} have a simple combinatorial structure in terms of Feynman-like diagrams \cite{Mizera:2016jhj}. For instance, the diagonal entries are known to take the form:
\be
\Braket{\Delta(\alpha) \otimes \KN | \Delta(\alpha) \otimes \overbar{\KN} } = \left(\frac{i}{2}\right)^{n-3} \sum_{{\cal T}} \frac{\prod_{v \in {\cal T}} C_{(|v|-3)/2}}{\prod_{e \in {\cal T}}{ \tan(\pi \alpha' p_e^2)}},
\ee
where the sum goes over all trees $\cal T$ with odd-valent vertices, which are planar with respect to the ordering $\alpha$. For each tree the denominator involves ``propagators'' of the form $1/\tan(\pi \alpha p_e^2)$, where $p_e^\mu$ is the momentum flowing through the edge $e \in {\cal T}$, while the numerator is a product of Catalan numbers $C_{(|v|-3)/2}$ for each internal vertex $v \in {\cal T}$ of valency $|v|$ \cite{Mizera:2016jhj,Mizera:2017cqs}. Other intersection numbers have similar combinatorial expansions that can be computed to high multiplicity using the computer code attached to \cite{Mizera:2016jhj}.

Other than analytic continuation, intersection numbers find applications in basis expansion of integrals and KLT relations. For example, using the Proposition~\ref{proposition} the latter can be written as\footnote{%
	A more symmetric version of KLT relations was given in \cite{Kawai:1985xq,10.1093/qmath/38.4.385}, which in our language can be stated as the following resolution of identity:
	\be
	\I = \sum_{\alpha,\beta} e^{i\pi F(\alpha|\beta)}\, |\Delta(\beta) \otimes \overline{\text{KN}} \ra\, \la \Delta(\alpha) \otimes \text{KN} |.
	\ee
	The sum goes over all $(n{-}1)!/2$ inequivalent permutations $\alpha,\beta$ (such that, say, $\{1,n{-}1,n\}$ appear in the same cyclic order in all $\alpha,\beta$). Fixing $\alpha(n){=}\beta(n){=}n$, the phase $F(\alpha|\beta)$ is the sum of $2\alpha' p_i {\cdot} p_j$ for all pairs $\{i,j\}$ in which $i,j \in \{1,2,\ldots,n{-}1\}$ appear in different order in $\left(\alpha(1),\alpha(2),\ldots,\alpha(n{-}1)\right)$ than in $\left(\beta(1),\beta(2),\ldots,\beta(n{-}1)\right)$.}
\be
\int_{\M_{0,n}} \!\!\!\!\! |\KN|^2\; \varphi \wedge \overbar{\widetilde{\varphi}} \;=\; \sum_{\alpha, \beta} \left(\int_{\Delta(\alpha)} \!\!\! \overbar{\KN}\; \overbar{\widetilde{\varphi}} \right) \mathbf{H}_{\alpha\beta} \left(\int_{\Delta(\beta)} \!\!\! \KN\; \varphi \right),
\ee
where the sum goes over two $(n{-}3)!$ bases of twisted homologies, which we chose to consist of twisted cycles of the form \eqref{Delta-disk}, and the intersection matrix is given by
\be
\mathbf{H}_{\alpha\beta} = \la \Delta(\alpha)^\vee \otimes \KN \,|\, \Delta(\beta)^\vee \otimes \overbar{\KN} \ra, \qquad \mathbf{H}_{\alpha\beta}^{-1} = \la \Delta(\alpha) \otimes \KN \,|\, \Delta(\beta) \otimes \overbar{\KN} \ra.
\ee
Here $\{\Delta(\alpha)^\vee \otimes \KN\}$ denote twisted cycles orthonormal to $\{\Delta(\beta) \otimes \overbar{\KN}\}$ with respect to the intersection pairing and similarly for the other basis.\footnote{For instance, a natural choice of such orthonormal bases is given by paths of steepest descent ${\cal J}_\alpha$ and ascent ${\cal K}_\beta$. Using the \emph{positive kinematics} $\{\,p_i {\cdot} p_j >0 \;|\; i,j=1,2,\ldots,n{-}1,\; \{i,j\}\neq \{1,n{-}1\}\}$ and the $\SL(2,\C)$ fixing $(z_1, z_{n-1}, z_n)=(0,1,\infty)$, the paths of steepest descent ${\cal J}_\alpha$ coincide with $\Delta(1,\alpha,n{-}1,n)$ for $(n{-}3)!$ permutations $\alpha$ \cite{Cachazo:2016ror}. Loading each ${\cal J}_\alpha$ and ${\cal K}_\beta$ with the same branch of the Koba--Nielsen factor we have $\la {\cal J}_{\alpha} \otimes \KN | {\cal K}_\beta \otimes \overbar{\KN} \ra = \delta_{\alpha\beta}$.} The above KLT formula is nothing but a generalization of the Riemann bilinear relations to the twisted case.

Finally, let us comment on a correspondence between twisted forms and twisted cycles. The real section of the moduli space, $\M_{0,n}(\mathbb{R})$, is decomposed into $(n{-}1)!/2$ chambers, equal to $\Delta(\alpha)$ for all cyclically-inequivalent permutations $\alpha$. Their compactification can be described combinatorially as a polytope called the \emph{associahedron} \cite{10.2307/1993608,10.2307/1993609}. One can construct a differential form with logarithmic singularities along the boundaries of $\Delta(\alpha)$, called the \emph{Parke--Taylor form}:
\be
\text{PT}(\alpha) := \frac{d\mu_n}{\prod_{i=1}^{n} \left( z_{\alpha(i)} - z_{\alpha(i+1)} \right)},
\ee
where $d\mu_n := (z_1 {-} z_{n-1})(z_{n-1} {-} z_n)(z_1 {-} z_n) \bigwedge_{i=2}^{n-2} dz_i$ is a measure making the resulting form $\SL(2,\C)$-invariant. Such forms can be treated as elements of $H^{n-3}(\M_{0,n}, \nabla_\omega)$. They satisfy cohomology relations similar to \eqref{cycle-relations}:
\be
0 = p_2 {\cdot} p_3\, \PT(1324\cdots n ) + p_2 {\cdot} (p_3 {+} p_4)\, \PT(1342\cdots n ) 
+ \ldots + p_2 {\cdot} \left(\textstyle\sum_{j=3}^{n} p_j\right)\, \PT(134\cdots n2),
\ee
since for massless kinematics the right-hand side is proportional to $\nabla_\omega \PT(134\cdots n)$ and thus cohomologous to zero.

The chambers $\Delta(\alpha)$ and twisted forms $\PT(\alpha)$ are natural counterparts, in the sense that the leading $\alpha'$-order of all their possible pairings coincides (up to an overall normalization) with the intersection number of twisted forms $\braket{ \PT(\alpha) | \PT(\beta) }_\omega$ studied in the main text of this paper, namely:
\begin{align}
\lim_{\alpha' \to 0} (\alpha')^{n-3} \Braket{ \Delta(\alpha) \otimes \KN | \PT(\beta)}
&= \lim_{\alpha' \to 0} (-2\pi i \alpha')^{n-3} \Braket{ \Delta(\alpha) \otimes \KN | \Delta(\beta) \otimes \overbar{\KN}} \nn\\
&= \lim_{\alpha' \to 0} \left(\frac{-\alpha'}{2\pi i}\right)^{n-3}\!\! \Braket{ \PT(\alpha) | \PT(\beta)}\\
&= \Braket{\PT(\alpha) | \PT(\beta)}_\omega.\nn
\end{align}
For more discussion on homological aspects of string-theory integrals at genus zero, see, e.g., \cite{Mizera:2017cqs,Brown:2018omk}.

\pagebreak
\bibliographystyle{JHEP}
\bibliography{references}

\end{document}